\documentclass[12pt,reqno]{amsart}
\usepackage{amssymb}
\usepackage{cases}
\usepackage{bbm}
\usepackage{mathrsfs}
\usepackage{graphicx}
\usepackage{amsfonts}
\usepackage{enumerate}
\usepackage{hyperref}
\usepackage{bm}
 \usepackage[dvipsnames]{xcolor}

\usepackage{tikz}
\usetikzlibrary{calc}
\usepackage{caption}

\usepackage{amssymb, amsmath, amsfonts, latexsym}
\usepackage{enumerate}
\usepackage{color}

\usetikzlibrary{shapes.geometric, calc, decorations.pathmorphing}

\usepackage{bbm}

\newtheorem{theorem}{Theorem}

\newtheorem{lemma}{Lemma}
\newtheorem{proposition}{Proposition}

\newtheorem{remark}{Remark}

\newtheorem{definition}{Definition}

\numberwithin{equation}{section}

\date{}

\def\<{\langle}
\def\>{\rangle}

\def\d"{^{\prime\prime}}

\usepackage{geometry}
\begin{document}

 \title[{\tiny Kinetic Langevin equations driven by Lévy processes}]
 {{\small On some topological and spectral properties of  kinetic Langevin processes driven by  Lévy noises}}

\author[T. Batisse]{\textbf{\quad {T.} Batisse$^{\dag}$    }}
\address{{\bf T. Batisse}. Universit\'e Clermont Auvergne, CNRS, LMBP, F-63000 CLERMONT-FERRAND, FRANCE}
 \email{thomas.batisse@uca.fr}

\author[A. Guillin]{\textbf{\quad {A.} Guillin$^{\dag}$    }}
\address{{\bf A. Guillin}. Universit\'e Clermont Auvergne, CNRS, LMBP, F-63000 CLERMONT-FERRAND, FRANCE}
 \email{arnaud.guillin@uca.fr}

\author[B. Nectoux]{\textbf{\quad B. Nectoux$^{\dag}$  }}
\address{{\bf B. Nectoux}.Universit\'e Clermont Auvergne, CNRS, LMBP, F-63000 CLERMONT-FERRAND, FRANCE}
 \email{boris.nectoux@uca.fr}

\author[L. Wu]{\textbf{\quad L. Wu$^{\dag}$ \, \, }}
\address{{\bf L. Wu}. Universit\'e Clermont Auvergne, CNRS, LMBP, F-63000  CLERMONT-FERRAND, FRANCE, and,
Institute for Advanced Study in Mathematics, Harbin Institute of Technology, Harbin 150001, China}
\email{Li-Ming.Wu@uca.fr}

\begin{abstract}
We investigate several fundamental properties of  kinetic Langevin processes in $\mathbb{R}^{2d}$, defined as solutions to the following system:
$$dx_t = v_t \, dt, \qquad dv_t = \mathbf{B}(x_t, v_t) \, dt + dL_t$$
where $(L_t, t \ge 0)$ is a pure-jump Lévy process. Our analysis covers both the original process and its killed counterpart, where killing occurs upon exiting domains of the form $\mathscr{D} = \mathscr{O} \times \mathbb{R}^d$ for an arbitrary open  set $\mathscr{O} \subset \mathbb{R}^d$. Operating within a low-regularity framework - where the drift $\mathbf{B}$ is not assumed to be continuous - we establish key structural and spectral properties for both the associated non-killed and killed semigroups. These include: the strong Feller property, weak continuity of trajectories with respect to initial conditions, topological irreducibility and the existence of a spectral gap. Furthermore, we prove, in this low-regularity framework,  the existence and uniqueness of a weak solution when the driving noise is a rotationally invariant $\alpha$-stable process, when $\alpha \in (1,2)$. For this specific case, we show  that the aforementioned properties hold and further establish the existence of densities within certain $L^m$ spaces as well as the Feller $C_0(\mathbb R^{2d})$-semigroup  property.
Finally, we address the existence and uniqueness of stationary and quasi-stationary distributions, proving exponential ergodicity for the non-killed process and exponential convergence to the quasi-stationary limit for the conditioned process. We show that these results extend to every   $\alpha \in (0,1]$ when the drift is smooth. 
\end{abstract}
\maketitle
{\small
\tableofcontents
}

\vskip 20pt\noindent {\it AMS 2020 Subject classifications.}  60J76, 60H10, 60J25, 47D07.

\vskip 20pt\noindent {\it Key words and Phrases.}  Kinetic Langevin equation,  killed semigroup, spectral gap, irreducibility, strong Feller, weak well-posedness,  stationary and quasi-stationary distributions, exponential ergodicity and quasi-ergodicity.

\section{Introduction}

\subsection{Setting}

\subsubsection{Kinetic Langevin process driven by jump processes} 
Let  $(\Omega, \mathcal{F}, (\mathcal{F}_t)_{t\ge 0}, \mathbb{P})$ be a  filtered probability space  satisfying the usual conditions. 
In this work, we consider the position-velocity process  $(X_t=(x_t,v_t),t\ge 0)$   defined as the  solution to     the kinetic Langevin equation in $\mathbb R^d \times \mathbb R^d $:
\begin{equation}\label{eq.Lan}
 dx_t=v_tdt, \ dv_t=\mathbf B(x_t,v_t)dt   +   dL_t,
 \end{equation} 
 driven by a pure jump Lévy process $L=(L_t,t\ge 0)$ (i.e. a   Lévy process with Lévy triplet $(0,0,\nu)$ with  $\int_{\mathbb{R}^d} (|z|^2 \wedge 1) \nu(dz) < \infty$),  and  where $\mathbf B:\mathbb R^{2d} \to \mathbb R^d$ is a  vector field.   Some   results in this work require $(L_t,t\ge 0)$ to be a rotationally invariant $\alpha$-stable process where we recall that for $\alpha \in (0,2)$ (the stability index), a 
  rotationally invariant $\alpha$-stable process (RI$\alpha$S for short)  is a   Lévy process,  denoted by $(L^\alpha_t,t\ge 0)$,  with characteristic exponent
 $$\Psi_\alpha(\xi)=    |\xi|^\alpha,$$
  see e.g.~\cite[Examples 1.3.14 and 3.3.8,  and Section 4.3.4]{applebaum2009levy}. Its  Lévy measure  is   $\nu_\alpha (dz)=c_{\alpha,d}/|z|^{d+\alpha}$ for some constant $c_{\alpha,d}>0$.

 From a physical perspective, $(X_t=(x_t,v_t),t\ge 0)$ describes a particle subject to external   forces, linear dissipation, and impulsive, discontinuous random forces generated by the environment. Each jump of the Lévy process corresponds to a sudden velocity change induced by  violent collisions or interactions, providing a natural interpretation as a Langevin equation with discontinuous forcing. 
Kinetic Langevin-type equations play a central role in mathematical physics (e.g. in statistical physics and molecular dynamics) and the natural sciences, as they describe the evolution of matter in motion. In its classical formulation, the interaction with the heat bath is modeled by linear friction and Gaussian noise.
The motivation comes from the physical picture in which the forcing results from adding up many independent random pulses with the same law. Provided the variance is finite, the central limit theorem shows that this sum becomes asymptotically Gaussian as the number of pulses increases without bound. Brownian motion is, however,   only a special case within the broader class of stochastic processes that serve as natural models of random noise.

The central limit theorem also extends to i.i.d.\ random variables with infinite variance. 
In this setting, normalized sums may converge to a stable law in the sense of 
Lévy--Khintchine. Such stable distributions are characterized by a parameter  
$\alpha\in(0,2)$, called the L\'evy stability index, which determines moment finiteness: 
moments of order $p$ are finite only when $p<\alpha$, while higher-order moments diverge. 
The Gaussian distribution is the special case $\alpha=2$, for which all moments are finite, 
see~\cite{yanovsky2000levy} and references therein. 
In many applications, substituting Brownian forcing with Lévy-stable noise in kinetic Langevin equations better reflects the empirically observed dynamics.
Several kinetic models featuring fractional derivatives in space and time have emerged in recent years and are now widely studied as a way to capture anomalous diffusion and nonstandard relaxation behavior~\cite{saichev1997fractional,metzler2000random,chechkin2002stationary,eliazar}.   
 Further examples of stable distributions in physical applications can be found in the review articles \cite{coleman1992fractal,bouchaud,brockmann2002levy,chechkin2006fundamentals} and the references therein.

\subsubsection{Non-killed and killed semigroup}  
Throughout this work, we write $(X_t(\mathsf x),t\ge 0)$ for the solution to~\eqref{eq.Lan} when $X_0=\mathsf x\in \mathbb R^{2d}$ (or in short, on some occasions, $X(\mathsf x)$).   
 Let us denote by  $(P_t,t\ge 0)$ the non-killed semigroup of~\eqref{eq.Lan}, i.e.
  $$ P_tf(\mathsf x)= \mathbb E_{\mathsf x}[f(X_t)], \ f \in bB(\mathbb R^{2d}), \mathsf x\in \mathbb R^{2d},
 $$   
 where $bB(\mathbb R^{2d})$ is the space of   all bounded and Borel measurable functions $f: \mathbb R^{2d} \to  \mathbb R$. 
 We also recall that    $(P_t,t\ge 0)$ is a Feller $C_0(\mathbb R^{2d})$-semigroup  if by definition, for all $t\ge 0$, $P_t(C_0(\mathbb{R}^{2d})) \subset C_0(\mathbb{R}^{2d})$ and $\lim_{t \to 0} \|P_t f - f\|_\infty = 0$, $ \forall f \in C_0(\mathbb{R}^{2d})$, where $C_0(\mathbb R^{2d})= \{f: \mathbb R^{2d}\to \mathbb R   \text{ continuous and } \lim_{|\mathsf x|\to +\infty} f(\mathsf x)=0\}$.
In all this work, $\mathscr O$ is  a nonempty  open subset of     $\mathbb R^d$, $d\ge 1$.   
We set 
\begin{equation}\label{eq.DD}
\mathscr D=\mathscr O\times \mathbb R^d \subset \mathbb R^{2d}.
\end{equation}
In all this work, we make no assumption on the regularity of the boundary of $\mathscr O$.  
 Let $\sigma_{\mathscr D}$ be the first time the process $(X_t,t\ge 0)$ exits $\mathscr D$, namely
 $$\sigma_{\mathscr D}:=\inf\{t\ge 0: X_t\notin \mathscr D\}.$$
Let us also denote by   $(  P^{\mathscr D}_t,t\ge 0)$ the   killed (outside of $\mathscr D$) semigroup of~\eqref{eq.Lan}:
\begin{equation}\label{eq.PTD}
 P_t^{\mathscr D}f(\mathsf x)= \mathbb E_{\mathsf x}[f(X_t)\mathbf 1_{t<\sigma_{\mathscr D}}], \ f \in b B(\mathscr D), \mathsf x \in \mathscr D.
\end{equation}   
The choice of such domains $\mathscr{D}$ is   motivated by the study of the metastable behavior of kinetic processes. In the small-noise regime, the presence of energetic barriers causes the position process $(x_t, t \ge 0)$ to remain confined within specific regions $\mathscr{O}$ of the state space $\mathbb{R}^d$ for long  periods of time before eventually escaping (see, e.g., \cite{di-gesu-lelievre-le-peutrec-nectoux-17, guillinqsd, ramilarxiv2}). Consequently, the metastable sets associated with kinetic processes are naturally represented by domains of the form $\mathscr{D}$ as defined in \eqref{eq.DD}. Furthermore, within a metastable set the process reaches a local equilibrium (a local ergodic measure) long before it exits~\cite{DLLN,IHPLLN}. Quasi-stationary distributions (see Definition~\ref{de.QSD}) that we study below in Section~\ref{sec.QSD-Sd} for the process \eqref{eq.Lan}  play a central role in statistical physics, since they describe these local equilibria attained by the process inside metastable sets $\mathscr{D}$ prior to exiting $\mathscr{D}$~\cite[Section 6.3]{lelievre2016partial}.

 We finally recall     some standard  definitions we will need for    a non-negative sub-Markovian kernel $Q_t(\mathsf x,d\mathsf y)$ over a nonempty open subset $\mathscr M$ of $\mathbb R^{2d}$:
 \begin{enumerate}
 \item[-]   $(Q_t,t> 0)$ has the strong Feller  property  over $\mathscr M$ if for  every $t>0$, $Q_t(b B(\mathscr M))\subset C_b(\mathscr M)$ (the space of bounded continuous functions $f: \mathscr M \to  \mathbb R$). 
  \vspace{0.1cm}
  \item[-]  $(Q_t,t> 0)$ is topologically irreducible over $\mathscr M$ if for all $t>0$, all $\mathsf x\in \mathscr M$ and all nonempty open subset $\mathscr V$  of $\mathscr M$, $Q_t (\mathsf x,\mathscr V)>0$.
    \vspace{0.1cm}
  \item[-]  The operator $Q_t$ has a spectral gap   $b B(\mathscr M)$ if 
 $   \mathsf r_{ess}(Q_t|_{b B(\mathscr M)})< \mathsf r_{sp}(Q_t|_{b B(\mathscr M)})$, 
   where     $\mathsf r_{ess}(Q_t|_{b B(\mathscr M)})$ is the Wolf essential spectrum of $Q_t$ over $b B(\mathscr M)$ (see Section~\ref{sec.ress} below for the definition of $\mathsf r_{ess}$) and $\mathsf r_{sp}(Q_t|_{b B(\mathscr M)})$ is its spectral radius over $b B(\mathscr M)$. 
 \end{enumerate}

\subsubsection{Purpose of this work} 
\label{sec.pu}
In this work, in addition to drifts $\mathbf B$ which are $C^\infty$, we consider  drifts $\mathbf B$ belonging to two classes of locally bounded but not continuous   vector  fields:
\begin{enumerate}
 \item[-] Drifts $\mathbf B$ that are merely measurable and have at most linear growth, see Assumption {\rm \textbf{[{\footnotesize B$_{\text{LG}}$}]}} below.
 \vspace{0.2cm}
 
 \item[-] Drifts of the form $\mathbf B(\mathsf x) =- \nabla \mathbf U(x) + \boldsymbol \Theta( \mathsf x)$ (perturbed gradient field), which may exhibit superlinear growth in the position variable $x$ (where $\mathsf x=(x,v)\in \mathbb R^d\times \mathbb R^d$),  see Assumption {\rm \textbf{[{\footnotesize B$_{\text{P-Grad}}$}]}} below.
 \end{enumerate}
In this low-regularity setting, we study both non-killed and killed (outside of $\mathscr D$)  kinetic Langevin processes~\eqref{eq.Lan}  driven by pure jump Lévy noises, focusing on:
\begin{enumerate}
 
\item[-]  The topological irreducibility  of the non-killed and killed  semigroups (Theorem~\ref{th.C5-infty} and Theorem~\ref{th.sta}).
 \vspace{0.2cm}
 
\item[-]  The existence of a spectral gap for the associated killed   semigroup (Theorem~\ref{th.bw} and Proposition~\ref{pr.Rsp}) as well as the  compactness of this semigroup. 
 \vspace{0.2cm}
 
\item[-]  When the noise is a rotationally invariant $\alpha$-stable Lévy process with $\alpha\in(1,2)$:
\begin{enumerate}
 \vspace{0.1cm}
\item[-] Weak existence and uniqueness of solutions to \eqref{eq.Lan}. 
 \vspace{0.1cm}
 
 \item[-] Existence of a density in some $L^m$ spaces, $m>1$. 
  \vspace{0.1cm}
  
\item[-]   
The strong Feller property of the non-killed and killed  semigroups.
 \vspace{0.1cm}
 
\item[-] The weak continuity of the trajectories w.r.t. initial conditions.
 \vspace{0.1cm}
 
 \item[-] 
  The Feller $C_0(\mathbb R^{2d})$-semigroup property of the non-killed process. 
    \vspace{0.1cm} 
    
\item [-] The existence of a spectral gap for the non-killed semigroup.
 \end{enumerate}
  \vspace{0.1cm}
These results are the contents of Theorems \ref{th.FF} and \ref{th.SFD} and their notes,   and Theorems \ref{th.ex}, \ref{th.uni},  \ref{th.ex2}, and~\ref{th.sta}.   All these results are extended to stability index $\alpha\in (0,1]$ when the drift is $C^\infty$. 
\end{enumerate}  
  \vspace{0.1cm} 
  
Furthermore, still considering     rotationally invariant $\alpha$-stable  process as  driving noise  with $\alpha\in(1,2)$ we also establish the following results. 
\begin{enumerate}
\item[-]   When $\mathbf B$ is  measurable with  at most linear growth, and  when $\mathscr O$ is bounded, we prove the existence and uniqueness of a quasi-stationary distribution for the corresponding killed process in the space of probability measures on  $\mathscr D$, together with exponential convergence of the conditioned process toward this distribution (see Theorem~\ref{th.main}). These results are extended to the perturbed gradient field setting in Theorem~\ref{co.main}. 
\vspace{0.2cm}
\item[-] In the   perturbed gradient field setting and under additional dissipativity assumptions on $\mathbf U$ and $\Theta$ (see Assumption {\rm \textbf{[{\footnotesize B$_{\text{P-Grad}}$}]}} below), we prove the existence and uniqueness of a stationary distribution for the corresponding non-killed process in suitable weighted spaces, together with exponential convergence of the non-killed process toward this distribution  (Theorem~\ref{th.sta}). 
  \end{enumerate} 
Finally, we show that these results regarding stationary and quasi-stationary distributions remain valid for the stability index $\alpha \in (0,1]$, provided the drift coefficient is smooth (i.e. $C^\infty$). The distinction between the stability indices $\alpha \in (0,1]$ and $\alpha \in (1,2)$ is explained in Section \ref{sec.As}. Before defining our  assumptions in Section \ref{sec.As} and addressing the primary challenges in Section \ref{sec.MC}, we first introduce the necessary notation and some definitions.

\subsection{Notation}
\label{sec.Nota}
Let $\mathscr{M}$ be an open subset of $\mathbb{R}^{2d}$. We write $K\Subset \mathscr{M}$ to indicate that $K$ is a compact subset of $\mathscr{M}$. The Borel $\sigma$-algebra on $\mathscr{M}$ is denoted by $B(\mathscr{M})$, and $bB(\mathscr{M})$ stands for the space of all bounded Borel measurable functions $f:\mathscr{M}\to\mathbb{R}$, endowed with the supremum norm
$$
\|f\|_\infty=\sup_{x\in\mathscr{M}} |f(x)|.
$$
 Given an initial distribution $\nu$ on $ {\mathscr M}$, we write $\mathbb P_\nu[\cdot]=\int_{{\mathscr M}} \mathbb P_{x}[\cdot]\nu(d{x}) $.   The indicator function of a measurable set $\mathscr A$ is  denoted by  $\mathbf 1_{\mathscr A}$. We denote by  $\mathbf d_{\mathbb R^d}$   the standard distance in $\mathbb R^d$.  The space of probability measures over $\mathscr M$ is denoted by $\mathcal P(\mathscr M)$. The open ball of $\mathbb R^k$ centered at $m\in \mathbb R^{k}$ of radius $r>0$ is denoted by $ \mathscr B_{\mathbb R^{k}}(m,r)$. 
For $T\in [0,+\infty]$, the space of càdlàg functions from $[0,T]$ in $\mathbb R^{2d}$    is denoted by    $ D([0,T], \mathbb R^{2d})$ (the so-called Skorokhod  space) and its metric by  $\mathbf d_T$,  see~\cite[Section 12]{billingsley2013}.  
 For $p\in [1,+\infty]$, we denote by $L^p(\mathbb R^{2d})$ the standard $L^p$-space over $\mathbb R^{2d}$ of Borel functions $f:\mathbb R^{2d}\to \mathbb R$ satisfying  
 $$\Vert f\Vert_{L^p}:=  \Big [\int_{\mathbb R^{2d}}|f(x,v)|^p\,dx\,dv \Big ]^{1/p}, $$
 with obvious change if $p=+\infty$.
 Finally, we  recall that for $\beta \in (0,1)$, the anisotropic Hölder-Zygmund space $C_{\mathbf a}^\beta(\mathbb R^{2d})$ is the set of functions $f:\mathbb R^{2d}\to \mathbb R$ such that (see \cite[Proposition 2.5]{RocknerZhang}):
$$ \Vert f \Vert_{C_{\mathbf a}^\beta}:=\Vert f \Vert_\infty+ \sup_{\mathsf x\neq \mathsf x'\in \mathbb R^{2d}}\frac{|f(\mathsf x)-f(\mathsf x')|}{|\mathsf x-\mathsf x'|_{\mathbf a}^\beta}<+\infty,$$
 where $|\mathsf x-\mathsf x'|_{\mathbf a}= |v-v'|+ |x-x'|^{\frac{1}{1+\alpha}}$.

\subsection{Weak solution and martingale problem}
\label{sec.We}
In this section,    we recall the  definition of a weak solution to \eqref{eq.Lan} and give a proposition which links this notion with solutions to the associated martingale problem for SDEs driven by Lévy processes. 
Throughout this work, $ \mathbf B$ is measurable and locally bounded, and the Lévy measure $\nu$ is non-atomic. Such basic assumptions allow us to use the results of~\cite{Lepeltier} in what follows.  The following definition is standard (see~\cite[Définition \textbf{II}$_6$]{Lepeltier}).

\begin{definition}\label{de.w}  
A  weak solution to~\eqref{eq.Lan}  driven by a pure jump Lévy process with Lévy   measure $\nu$   and with initial   law $\mu \in \mathcal P (\mathbb R^{2d})$ is a $2d$-dimensional adapted process $(X_t=(x_t,v_t),t\ge 0)$ on some filtered probability space $(\Omega, \mathcal{F}, (\mathcal{F}_t)_{t\ge 0}, \mathbb{P})$ satisfying the usual conditions, with càdlàg  sample paths,  such that:
\begin{itemize}
    \item[-] The law of $X_0$ is $\mu$.
         \item[-] A.s for all $t\ge 0$, $\int_0^t( |v_s|+ |\mathbf B(X_s)|)ds<+\infty$. 
    \item[-] There exists  a pure jump  Lévy process  $(L_t,t\ge 0)$ with Lévy   measure $\nu$ on this filtered probability space  such that $\mathbb P$-a.s.
$$
      dx  _t=v _tdt, \ dv _t=\mathbf B(x _t,v_t)dt +   dL_t, \ t\ge 0.   
$$  
\end{itemize}
\end{definition}
\noindent
Let us denote by  $\mathcal L$     the infinitesimal generator associated with~\eqref{eq.Lan} whose expression is:
\begin{equation}\label{eq.gene}
\mathcal L= \mathcal S_v^{\nu} +v\cdot \nabla_x+ \mathbf B\cdot \nabla_v,\end{equation}
where   for $\mathsf y\in \mathbb R^{2d}$ and a function $\varphi:\mathbb R^{2d}\to \mathbb R$, 
 $$\mathcal S_v^{\nu}\varphi(\mathsf y)= \int_{\mathbb R^d}\big[ \varphi(\mathsf y+ (0,z))-\varphi(\mathsf y)-  \mathbf 1_{|z|\le 1} \nabla_v \varphi(\mathsf y)\cdot z\big] \nu (dz)  .$$
 We recall the following definition (see~\cite[Sections 3 and 4 in Chapter 4]{EK}).


\begin{definition}
A probability measure $\mathbf{P}$ on   $D([0,+\infty), \mathbb{R}^{2d})$  is a solution to the martingale problem for $\mathcal L$ starting from $\mu\in \mathcal P(\mathbb R^{2d})$      if the   coordinate process $(Y_t,t\ge 0)$, defined by $Y_t(\ell )=\ell _t$ for $\ell \in D([0,+\infty), \mathbb{R}^{2d})$, satisfies: 
\begin{itemize}
    \item[-]  The law of  $Y_0$ under $\mathbf{P}$ is $\mu$, i.e.  $\mathbf{P} \circ Y_0^{-1}=\mu$. 
    \item[-] For any $\varphi\in C_c^2(\mathbb R^{2d})$, the process $(M_{t}(\varphi),t\ge 0)$ defined by:
    \begin{equation*}
        M_t(\varphi)= \varphi(Y_t) - \varphi(Y_0) - \int_0^t \mathcal L \varphi(Y_s) \, ds  
    \end{equation*}
    is a $\mathbf{P}$-martingale with respect to  the natural filtration $(\mathcal F^Y_t)_{t\ge 0}$ of  the process $(Y_t,t\ge 0)$. 
\end{itemize}
\end{definition} 
\noindent
\textbf{Note}. Let  $(\mathcal{F}^a_t)_{t\ge 0}$ be   the $\sigma$-algebra $(\mathcal F_{t+}^Y)_{t\ge 0}$ completed by all $\mathbf P$-null sets of $\mathcal F^Y_\infty$. Then,  $(\mathcal{F}^a_t)_{t\ge 0}$ satisfies the usual conditions and $(M_t(\varphi),t\ge 0)$ is also a $\mathbf{P}$-martingale with respect to  $(\mathcal{F}^a_t)_{t\ge 0}$ (see e.g.~\cite[Remarque 4]{Lepeltier}). 
\medskip

\noindent
We denote by $(\mathcal L ,\mu)$ the  martingale problem  for $\mathcal L $ with initial condition $ \mu$. By extension, a process $(X_t,t\ge 0)\in D([0,+\infty), \mathbb{R}^{2d})$ satisfies the  martingale problem $(\mathcal L ,\mu)$ if its law satisfies it.   We also recall that uniqueness of the weak solution for~\eqref{eq.Lan} with given initial distribution  $\mu$  holds if, by definition,  the laws of the trajectories of two weak solutions are equal (see e.g.~\cite[Definition 129]{situ2005theory} or \cite[Définition \textbf{II}$_3$]{Lepeltier}). In the same way, uniqueness of the martingale problem $(\mathcal L ,\mu)$ holds if, by definition,  
$\mathbf P = \mathbf Q $ for any  two solutions $\mathbf P,\mathbf Q\in \mathcal P(D([0,+\infty), \mathbb{R}^{2d}))$ of this problem. 
By Itô's formula~\cite[Theorem 4.4.7]{applebaum2009levy}, 
the law of any weak solution provides a solution to the martingale problem. The reverse is also true, see e.g.~\cite[Théorème II$_{10}$]{Lepeltier} or~\cite[Theorem 2.3]{kurtz2010equivalence}. 
\begin{proposition}
\label{pr.Pre}
Let  $ \mu\in \mathcal P(\mathbb R^{2d})$. Then, if  $(X_t,t\ge 0)$ is a weak solution to \eqref{eq.Lan} with initial law $\mu$, the law of $(X_t,t\ge 0)$ is  a solution to the martingale
problem  $(\mathcal L ,\mu)$. Conversely, if $\mathbf P$    is  a solution to the martingale
problem  $(\mathcal L ,\mu)$, then there is a weak solution with law $\mathbf P$  to~\eqref{eq.Lan}  with initial condition~$\mu$. Hence, weak uniqueness holds for~\eqref{eq.Lan}  with initial condition $\mu$ if and only if uniqueness holds for the martingale problem $(\mathcal L ,\mu)$.
\end{proposition}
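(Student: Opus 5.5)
The plan has three steps: the direct implication via Itô's formula, the converse via a reconstruction of the driving Lévy process, and the equivalence of uniqueness notions as a formal consequence.

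\textbf{Weak solution $\Rightarrow$ martingale problem.} Let $(X_t)$ be a weak solution with driving process $L$ and $\varphi\in C_c^2(\mathbb R^{2d})$. Write the Lévy--Itô decomposition
$$L_t=\int_0^t\int_{|z|\le 1} z\,\tilde N(ds,dz)+\int_0^t\int_{|z|>1} z\,N(ds,dz).$$
Applying Itô's formula \cite[Theorem 4.4.7]{applebaum2009levy} to $\varphi(X_t)$ gives
$$\varphi(X_t)-\varphi(X_0)-\int_0^t\mathcal L\varphi(X_s)\,ds=\int_0^t\int_{\mathbb R^d}\big[\varphi(X_{s-}+(0,z))-\varphi(X_{s-})\big]\tilde N(ds,dz).$$
The integrand is bounded by $C(|z|^2\wedge 1)$ after the usual split, because $\varphi$ has bounded first and second derivatives. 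Hence the right-hand side is a true martingale with respect to $(\mathcal F_t)$. The drift term is well defined since $\nabla\varphi$ has compact support and $\mathbf B$ is locally bounded. The condition $\int_0^t(|v_s|+|\mathbf B(X_s)|)ds<\infty$ ensures the absolutely continuous part is handled. Since $M_t(\varphi)$ is adapted to the smaller natural filtration of $X$, it remains a martingale for that filtration by the tower property. Transferring to the canonical space then shows that the law of $X$ solves $(\mathcal L,\mu)$.

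\textbf{Martingale problem $\Rightarrow$ weak solution.} This is the main step. I would invoke \cite[Théorème II$_{10}$]{Lepeltier}, which applies because $\mathbf B$ is measurable and locally bounded and $\nu$ is non-atomic, or alternatively \cite[Theorem 2.3]{kurtz2010equivalence}. The idea is as follows. From the martingale problem, by localizing test functions in $C_c^2$ and approximating the coordinate functions, one shows that $Y=(y,w)$ is a semimartingale. Its first component satisfies $y_t-y_0=\int_0^t w_s\,ds$; this is obtained from $\varphi(y,w)=\chi(y)$ with cutoffs, since $\mathcal L$ acting on functions of $x$ alone gives $v\cdot\nabla_x$. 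Its second component $w$ has characteristics $\big(\int_0^t\mathbf B(Y_s)ds,\,0,\,ds\,\nu(dz)\big)$ relative to the truncation $\mathbf 1_{|z|\le1}$.

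One then defines $L_t:=w_t-w_0-\int_0^t\mathbf B(Y_s)\,ds$. This process has deterministic characteristics $(0,0,\nu)$, so by the characterization of processes with independent increments (Jacod--Shiryaev, II.4.15) it is a Lévy process with triplet $(0,0,\nu)$ with respect to $(\mathcal F^a_t)$. The filtration $(\mathcal F^a_t)$ satisfies the usual conditions by the Note above. This yields a weak solution on the canonical space. The non-atomicity of $\nu$ is what allows Lepeltier to avoid enlarging the space, since jumps of $w$ identify the Poisson measure. The integrability condition holds because paths are càdlàg, hence locally bounded, and $\mathbf B$ is locally bounded.

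\textbf{Equivalence of uniqueness.} The final claim is formal given the two directions. Suppose uniqueness holds for the martingale problem. Any two weak solutions have laws solving $(\mathcal L,\mu)$, so their laws coincide. Conversely, suppose weak uniqueness holds. Two solutions $\mathbf P,\mathbf Q$ of the martingale problem yield weak solutions with laws $\mathbf P$ and $\mathbf Q$, so $\mathbf P=\mathbf Q$.

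The main obstacle is the semimartingale identification with the unbounded coordinate functions from only $C_c^2$ test functions. I would handle it by localization with stopping times $\tau_n=\inf\{t:|Y_t|\ge n\}$, which tend to $\infty$ by the càdlàg property, and otherwise defer to the cited results.
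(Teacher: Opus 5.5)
Your proposal is correct and follows the same route as the paper: It\^o's formula for the forward implication, the cited results of Lepeltier--Marchal (Th\'eor\`eme II$_{10}$) or Kurtz (Theorem 2.3) for the converse, and the formal deduction of the equivalence of the two uniqueness notions. Two small corrections: it is the \emph{square} of the jump integrand that is bounded by $C(|z|^2\wedge 1)$, and in your own reconstruction $L_t=w_t-w_0-\int_0^t\mathbf B(Y_s)\,ds$ the non-atomicity of $\nu$ plays no role, because the noise enters additively and the Poisson measure is simply the jump measure of $w$.
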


We finally recall for later purposes the following definition of the stopped martingale problem~\cite[Chapter 4, Section 6]{EK}.
\begin{definition}\label{de.SMP}
Given  an open subset $\mathscr U$ of $\mathbb R^{2d}$ and $\mathsf x\in  \mathbb R^{2d}$,   $(Y_t(\mathsf x),t\ge 0)$  is a solution to the \textit{stopped martingale problem} for $(\mathcal L,\mathsf x,\mathscr U)$  if    
$Y_t(\mathsf x)= Y_{t \wedge   \vartheta}(\mathsf x)$ for $t\ge 0$, $Y_0=\mathsf x$,  and if for all $\varphi \in C_c^2(\mathbb R^{2d})$, 
$$\varphi (Y_t(\mathsf x))-\varphi(\mathsf x)-\int_0^{t \wedge   \vartheta(\mathsf x)}\mathcal L \varphi (Y_s(\mathsf x))ds$$
is a $(\mathcal F_t^{Y(\mathsf x)})_{t\ge 0}$ martingale,   where  $\vartheta(\mathsf x)  := \inf \{t \ge 0 : Y_t(\mathsf x)\notin \mathscr U  \text{ or } Y_{t_-}(\mathsf x)\notin \mathscr U\}$. 
\end{definition}

Let us emphasize that, due to the presence of jumps, the localization time in the previous definition may be significantly shorter than the one   might naively be inclined to use, namely   $\inf \{t \ge 0 : Y_t(\mathsf x)\notin \mathscr U  \}$.

 
\subsection{Main assumptions and examples}
\label{sec.As}

\subsubsection{Assumptions}
We introduce the following three assumptions on the vector field $\mathbf{B}:\mathbb R^{2d}\to \mathbb R^d$:
 \begin{enumerate}
     \item[] {\rm\textbf{[{\footnotesize B$_{\text{smooth}}$}]}}  $\mathbf B$ is  $C^\infty$ and all its partial  derivatives of   order   $1$ are bounded over $\mathbb R^{2d}$.
 \vspace{0.1cm}
 
 \item[]  {\rm \textbf{[{\footnotesize B$_{\text{LG}}$}]}}      $\mathbf B$  is  measurable and has at most linear growth, i.e. there exists $C>0$ such that for all $\mathsf x\in \mathbb R^{2d}$, 
 $$|\mathbf B(\mathsf x)|\le C(1+|\mathsf x|).$$ 
 
 \item[]  {\rm \textbf{[{\footnotesize B$_{\text{P-Grad}}$}]}}  $\mathbf B(\mathsf x)=-\nabla \mathbf U(x)+ \boldsymbol \Theta( \mathsf x)$ for all  $\mathsf x=(x,v)$, where $\mathbf U:\mathbb R^d\to [1,+\infty)$ is $C^1$    and $ \boldsymbol \Theta:\mathbb R^{2d}\to \mathbb R^d$  is measurable and locally bounded.  Moreover,
 there  are constants $q \ge 2$  and $m_1, C^1_{\mathbf U},m_2, C^2_{\mathbf U}>0$  such that for all $x\in \mathbb R^{d}$:
\begin{equation}\label{eq.cond-U}
-\nabla  \mathbf U(x)\cdot  x \le -m_1\mathbf U(x)+C^1_{\mathbf U}  \ \text{ and } \      \mathbf U(x)\ge  m_2 |x|^q - C^2_{\mathbf U}.
\end{equation} 
In addition,  there are constants $C^1_\Theta , \Gamma>0$ and  $\ell_1\in [1,q)$ such that for all $\mathsf x=(x,v)\in \mathbb R^{2d}$:
\begin{equation}\label{eq.cond-gammaA}
\boldsymbol \Theta(\mathsf x) \cdot v \le -\Gamma |v|^2 + C^1_\Theta (1+ |x|^{\ell_1}). 
 \end{equation}    
 Finally,  $\Theta$ satisfies  one of the following conditions  \eqref{eq.cond-gammaB1} or \eqref{eq.cond-gammaB2}:
 \begin{equation}\label{eq.cond-gammaB1}
 \left\{
    \begin{array}{ll}
        \vspace{0.2cm}
        & \exists C^2_\Theta>0, \ell_2\in [1,q), \forall \mathsf x=(x,v)\in \mathbb R^{2d},  \\
        \vspace{0.2cm}
         & $\qquad \qquad  \qquad $\boldsymbol \Theta (\mathsf x)\cdot x  \le  C^2_\Theta \big[ 1+|v|^2  +   |x|^{\ell_2}\big],     \\
             \vspace{0.2cm}
         &\forall r>0, \exists C_r>0,  \forall \mathsf x=(x,v)\in \mathbb R^{2d},\\
           &$\qquad \qquad  \qquad $ \mathbf 1_{|x|\le r} | \Theta(\mathsf x)|\le C_r(1+|v|),
    \end{array} 
    \right.
  \end{equation}  
  or 
   \begin{equation}\label{eq.cond-gammaB2}
 \left\{
    \begin{array}{ll}
        \vspace{0.2cm}
        & \exists C^2_\Theta>0, \ell_2\in [1,q), \forall \mathsf x=(x,v)\in \mathbb R^{2d},  \\
        \vspace{0.2cm}
         & $\qquad \qquad  \qquad $ |\boldsymbol \Theta (\mathsf x) | \le  C^2_\Theta \big[ 1+|v|  +   |x|^{\ell_2-1}\big].  
    \end{array} 
    \right.
  \end{equation} 
   \end{enumerate} 
   The nomenclature for these assumptions reflects their structural properties: {\rm \textbf{[{\footnotesize B$_{\text{LG}}$}]}} refers to vector fields with at most linear growth, while {\rm \textbf{[{\footnotesize B$_{\text{P-Grad}}$}]}}  refers to  perturbed gradient fields. 
We incorporate condition \eqref{eq.cond-gammaB2} to account for standard friction models when $q=2$, exemplified by the damping force $\Theta(x,v) = -\Gamma v$.  The second condition in \eqref{eq.cond-gammaB1} is required to establish the topological irreducibility of both the killed and non-killed semigroups. Specifically, this condition facilitates control over the trajectories of the Langevin system \eqref{eq.Lan} via Grönwall’s inequality. The precise role of this requirement is demonstrated within the proof of Proposition \ref{pr.C5-infty}.
  
    We also define the following three assumptions: 
  \begin{enumerate}
    \item[]{\rm\textbf{[{\footnotesize H$^{\alpha\in (0,2)}_{\text{smooth}}$}]}}   Assumption  {\rm\textbf{[{\footnotesize B$_{\text{smooth}}$}]}} holds,  the driving  noise $L$ in \eqref{eq.Lan}   is a RI$\alpha$S  process, and   $\alpha \in (0,2)$. 
 \vspace{0.1cm}
 \item[]{\rm\textbf{[{\footnotesize H$^{\alpha\in (1,2)}_{\text{LG}}$}]}}   Assumption  {\rm\textbf{[{\footnotesize B$_{\text{LG}}$}]}} holds, the driving noise  $L$ in \eqref{eq.Lan} is a RI$\alpha$S  process,    and   $\alpha\in (1,2)$.
  \vspace{0.1cm}
 \item[]{\rm\textbf{[{\footnotesize{H$^{\alpha\in (1,2)}_{\text{P-Grad}}$}}]}}  Assumption   {\rm\textbf{[{\footnotesize B$_{\text{P-Grad}}$}]}} holds, the  driving noise  $L$  in \eqref{eq.Lan} is a RI$\alpha$S  process    and   $\alpha\in (1,2)$. 
\end{enumerate} 
Before we proceed, we make some comments on these assumptions.   
      Note that {\rm\textbf{[{\footnotesize B$_{\text{smooth}}$}]}} implies {\rm\textbf{[{\footnotesize B$_{\text{LG}}$}]}}.  We maintain the distinction   between assumptions \textbf{[{\footnotesize B$_{\text{smooth}}$}]} and \textbf{[{\footnotesize B$_{\text{LG}}$}]} to highlight the specific range of the parameter $\alpha$ covered in each case:
      \begin{itemize}
    \item[-] Under {\rm\textbf{[{\footnotesize B$_{\text{smooth}}$}]}}, our analysis accommodates the full range $\alpha \in (0,2)$ as defined in the assumption  {\rm\textbf{[{\footnotesize H$^{\alpha\in (0,2)}_{\text{smooth}}$}]}}.
    \vspace{0.1cm}
    
    \item[-] Under \textbf{[{\footnotesize B$_{\text{LG}}$}]}, our analysis specifically applies to $\alpha \in (1,2)$.
\end{itemize} 
 The restriction $\alpha\in(1,2)$ in  {\rm\textbf{[{\footnotesize H$^{\alpha\in (1,2)}_{\text{LG}}$}]}} and {\rm\textbf{[{\footnotesize{H$^{\alpha\in (1,2)}_{\text{P-Grad}}$}}]}}  is purely due to the perturbative method we use  to prove both the weak well-posedness  and  the strong Feller property of the non-killed process, see the discussion at the beginning of Section~\ref{sec.duha} and   the note following Theorem~\ref{th.FF}. This constraint on $\alpha$ is the technical trade-off required to accommodate coefficients with low regularity. 
On the other hand, the assumption that $\mathbf{U}: \mathbb{R}^d \to [1, +\infty)$ is $C^1$ in {\rm \textbf{[{\footnotesize B$_{\text{P-Grad}}$}]}}  (and not just differentiable with locally bounded gradient field)   is only imposed to ensure that the Lyapunov functional $W_p$ in \eqref{eq.cond-Wp} is $C^1$ in the spatial variable $x$, a prerequisite for the application of It\^o's formula. 
   We emphasize that, for results concerning the killed semigroup on $\mathscr D$ (see~\eqref{eq.DD}), only the restriction of $\mathbf B$ to $\bar{\mathscr D}$ is relevant.  
  \medskip

As already detailed in Section \ref{sec.pu}, we prove the existence of a spectral gap and the compactness of  the killed semigroup as well as   the topological irreducibility property of both the killed and non-killed semigroups   (see Theorems~\ref{th.bw}, Theorem  \ref{th.C5-infty},  Theorem~\ref{th.SFD}, and Proposition~\ref{pr.Rsp})     for rather general pure jump Lévy processes (which are not necessarily RI$\alpha$S processes)   assuming  a weak well-posedness framework and the weak continuity property of the trajectories w.r.t initial conditions. For that reason,  given  a pure jump Lévy process $L=(L_t,t\ge 0)$, we finally define the following assumptions, focusing only on \textbf{[{\footnotesize B$_{\text{LG}}$}]}.
\begin{enumerate}
     \item[]  \textbf{[{\footnotesize W$_{\text{LG}}$}]}    For every initial distribution $\mu \in \mathcal P(\mathbb R^{2d})$ and  $\mathbf B$ satisfying \textbf{[{\footnotesize B$_{\text{LG}}$}]}, there is a unique weak solution to~\eqref{eq.Lan}  driven by $L$ with $X_0=\mu$ in law. 
 \vspace{0.1cm}
 \item[] \textbf{[{\footnotesize C$_{\text{LG}}$}]}    Assumption   \textbf{[{\footnotesize W$_{\text{LG}}$}]} is satisfied and the following mapping   is  weakly continuous\footnote{I.e. is continuous with respect to weak convergence of probability measures in $\mathcal P(D([0,+\infty), \mathbb{R}^{2d}))$.}:   
 $$\mathsf x\in \mathbb R^{2d}  \to  \mathbb P_{\mathsf x}[X   \in \cdot ] \in \mathcal P(D([0,+\infty), \mathbb{R}^{2d})).$$
    \end{enumerate}
  We recall that we prove that \textbf{[{\footnotesize W$_{\text{LG}}$}]}  and \textbf{[{\footnotesize C$_{\text{LG}}$}]}  are satisfied  when $L$ is a RI$\alpha$S process and $\alpha\in (1,2)$, see Theorem~\ref{th.ex2}. We also prove the weak well-posedness of \eqref{eq.Lan} and the weak continuity property of  its trajectories  w.r.t. initial conditions  when      \textbf{[{\footnotesize H$^{\alpha\in (1,2)}_{\text{P-Grad}}$}]} holds, see Theorem~\ref{th.sta} where we also extend these results to $\alpha \in (0,1]$ when $\mathbf U$ and $\Theta$ are $C^\infty$. 
    
    \subsubsection{Some examples}
    
  Assumption {\rm\textbf{[{\footnotesize B$_{\text{LG}}$}]} encompasses non-conservative Hamiltonian vector fields   given by:
  $$\mathbf{B}(x,v) = \mathbf{M}(x) - \mathbf C(x,v) v, \qquad (x,v) \in \mathbb{R}^{2d},$$
   where  $\mathbf{M} : \mathbb{R}^d \to \mathbb{R}^d$ is a measurable mapping exhibiting at most linear growth and where,  for instance,  the  friction matrix   $\mathbf C$ is  measurable and bounded.
Discontinuous friction matrices $\mathbf{C}(x,v)$ naturally arise in various physical contexts where the dissipation regime undergoes an abrupt change, as illustrated by the following examples:
 \begin{enumerate}
 \item[-] Interface in an heterogeneous media (a particle moving through two different viscosities separated by an interface, e.g. at $x_1 = 0$):
 $$\mathbf{C}(x) = \gamma_1 \mathbb{I}_d \cdot \mathbf{1}_{ x_1 < 0 } + \gamma_2 \mathbb{I}_d\cdot \mathbf{1}_{ x_1 \geq 0 }, \ \  \gamma_1 \neq \gamma_2 \in \mathbb R.$$
 \item[-] Velocity-threshold (dissipation kicks in above a critical speed $v_c\in \mathbb R_+$):
 $$\mathbf{C}(v) = \gamma \cdot \mathbf{1}_{ |v| > v_c}\mathbb{I}_d,   \ \  \gamma \in \mathbb R.$$
 \item[-]   Anisotropic dissipation (the friction depends on the orientation of the particle's velocity, leading to a rank-one projection matrix):
 $$\mathbf{C}(v) = \gamma \frac{v \otimes v}{|v|^2} \cdot \mathbf{1}_{|v| > 0 },  \ \  \gamma \in \mathbb R.$$
 This matrix is bounded but discontinuous at $v = 0$.
  \end{enumerate}
    
More generally, the class {\rm\textbf{[{\footnotesize B$_{\text{LG}}$}]}  accommodates the piecewise-constant force fields  with at most linear growth  frequently encountered in the study of heterogeneous media. In such systems, 
 $$
  \mathbf B(x,v)
  =  \sum_{j=1}^n \mathbf  B_j(x,v)\,\mathbf 1_{ (x,v)\in  \Omega_j }, \qquad  (x,v)\in \mathbb R^{2d},
  $$
  where  $\mathbf B_j:\mathbb R^{2d}\to \mathbb R^d$ are  measurable   (here  $\Omega_j\in B( \mathbb R^{2d})$). 
Conversely,   the class {\rm \textbf{[{\footnotesize B$_{\text{P-Grad}}$}]}}  encompasses  conservative Hamiltonian vector fields of the form 
    $$\mathbf{B}(x,v) = -\nabla \mathbf U -\mathbf C(x,v)  v,$$
   where $\mathbf U$ satisfies \eqref{eq.cond-U} and where, to give an example,    the  friction  matrix   $\mathbf C\ge c>0$ is  measurable and bounded.  Such Hamiltonian vector fields are fundamental to statistical physics, as they naturally describe the dynamics of particles subject to external potentials and dissipative forces. 
      However,  the scope of {\rm \textbf{[{\footnotesize B$_{\text{P-Grad}}$}]}}   is significantly broader.  


 \subsection{Main challenges} 
 \label{sec.MC}
 The analysis of several properties of both the non-killed and the killed (outside of $\mathscr D$) process~\eqref{eq.Lan} presents several significant theoretical challenges when the driving noise is a pure-jump Lévy  process. 
In this section, we discuss some of them. 
    
 \begin{enumerate}
 \item[-] \textbf{Spectral gap}.  
 The first main difficulty we faced when studying    the killed process ~\eqref{eq.Lan} concerns the spectral gap of  its killed semigroup $(P^{\mathscr D}_t,t>0)$  (see~\eqref{eq.PTD} below).  Because the essential spectral radius of the killed semigroup~\eqref{eq.PTD} is unknown (and decreases exponentially fast when $t\to +\infty$) the spectral gap is usually obtained   using a Lyapunov functional $W$ such that $\mathcal{L}W/W \to -\infty$ at infinity (see indeed the discussion about assumption \textbf{(C3)} in ~\cite{guillinqsd}), which is a powerful criteria  to establish  spectral gap of    killed semigroup~\cite{guillinqsd}.  
 Here $\mathcal{L}$ is the generator of the considered non-killed process.
 However, without extra  assumptions on the moments of the Lévy measure (which would exclude RI$\alpha$S processes), we are not able to prove the existence of such a Lyapunov function. 
 We overcome this challenge by employing direct arguments to calculate the measure of non-compactness $\beta_w$ (see Section \ref{sec.ress}), while taking full advantage of the process's kinetic properties.  This   measure of non-compactness  allows us to quantify the essential spectral radius of the semigroup, see Theorem~\ref{th.bw}, but also to derive the compactness of the killed semigroup.
\vspace{0.2cm}

 \item[-] \textbf{Topological irreducibility}.  Establishing the irreducibility of the process is non-trivial. When the drift $\mathbf B$ is merely locally  bounded, we cannot rely on standard support theorems (in any case, without extra assumptions, the controlled curves are not well-defined in this case). Furthermore, even in the   Lipschitz case, constructing admissible trajectories  within $\mathscr{D}$ that connect arbitrary points in $\mathscr{D}$ using only velocity jumps within a fixed time $t>0$ is a delicate geometric problem. We prove the topological irreducibility of  both the killed and the non-killed process; see Theorem~\ref{th.C5-infty} for a rather general result which is then applied to the case when the driven noise if a RI$\alpha$S  in   Theorem~\ref{th.sta} and in (the proof of) Theorem~\ref{co.main}. To establish  the topological irreducibility, we first  construct suitable events through an adapted decomposition of the driving noise. On these events, we carefully analyze the trajectories of \eqref{eq.Lan} over sufficiently small time intervals, ensuring uniformity with respect to the initial conditions (see Proposition \ref{pr.C5-infty}). By leveraging this uniformity alongside the Markov property and the lower semicontinuity of the killed semigroup (see Proposition  \ref{pr.Co}), we extend these attainability results to any arbitrary time $t > 0$.   
\vspace{0.2cm}

  \item[-] \textbf{Global weak well-posedness of \eqref{eq.Lan}}. The technical challenge in proving Theorem~\ref{th.ex} arises from the combination of the  possible discontinuity of the drift $\mathbf{B}$ and the hypoelliptic (degenerate) nature of~\eqref{eq.Lan}, where the noise acts exclusively on the velocity variable. Furthermore, establishing weak well-posedness via a Girsanov transform - the standard approach for handling non-smooth coefficients in Brownian-driven SDEs - presents significant technical complications in this context. 
We refer to the paragraphs following Theorem~\ref{th.ex} for a discussion of the related literature. To establish the weak well-posedness of~\eqref{eq.Lan} under assumptions \textbf{[{\footnotesize H$^{\alpha\in (1,2)}_{\text{LG}}$}]} or \textbf{[{\footnotesize{H$^{\alpha\in (1,2)}_{\text{P-Grad}}$}}]}, we first address weak well-posedness for  a bounded drift $\mathbf{B}$ (Theorems~\ref{th.ex} and~\ref{th.uni}). In this globally bounded case,  our approach relies on an integrated Duhamel-type perturbative formula \eqref{eq.DI} for the non-killed semigroup, expressed in terms of the driftless process~\eqref{eq.L00}; this requires a careful  analysis of the semigroup gradient's singularity as $t \to 0$. 

This perturbative framework enables us to derive the integrated Krylov-type estimates\footnote{This analytical tool has already been successfully employed in various different contexts, demonstrating its broad applicability to the well-posedness of SDEs.} in \eqref{eq.Kt}.    These estimates are subsequently employed to establish the invertibility of a broad class of associated operators, which serves as a central argument for proving weak uniqueness. For weak existence, these Krylov-type estimates also enable us to pass to the limit in the associated martingale problem, thereby circumventing the discontinuity of $\mathbf{B}$. Finally, we extend these well-posedness results to the unbounded drift case (Theorems~\ref{th.ex2} and~\ref{th.sta}) by using Grönwall's inequality or by constructing suitable Lyapunov functions,    and by using the stopped martingale problem as developed in~\cite[Chapter 4, Section 6]{EK} (see Definition \ref{de.SMP}).


  \vspace{0.2cm}
 \item[-] \textbf{The strong Feller property}.  
In order to prove that $P_t^{\mathscr{D}}$ is strongly Feller (see Theorem~\ref{th.SFD}), we begin by showing that $ P_t$ itself enjoys the strong Feller property (Theorem~\ref{th.FF}). 
    Proving that  the non-killed semigroup $P_t$ is strongly Feller when $\mathbf B$   is a measurable and locally bounded is challenging because classical tools are unavailable. Specifically, the jump nature of the   noise and the lack of regularity of $\mathbf B$ preclude   the use of Girsanov transformation, Bismut-Elworthy-Li formulas, or Malliavin calculus, which are typically employed in the Gaussian case or when the drift is smooth enough. 
 We start by considering the case when   $\mathbf B$ is bounded and  we show the strong Feller property of $P_t$ in this case  in several steps: 

\vspace{0.2cm}
\begin{enumerate}
\item[$\bullet$] We prove another (non integrated) perturbative formula~\eqref{eq.Sfd}  for the non-killed semigroup involving the \textit{non drifted} associated process~\eqref{eq.L00}, handling  carefully  again the semigroup gradient's singularity as $t \to 0$.  

 \vspace{0.2cm}

\item[$\bullet$] Using this perturbative formula and bounds on the   derivatives of the semigroup of $(\hat X_t,t\ge 0)$  recently derived in  \cite{RocknerZhang} in Besov  spaces, we prove existence of a density of the process~\eqref{eq.Lan} as well as  upper bounds in $L^p$ spaces  which are uniform  in the starting point $\mathsf x\in \mathbb R^{2d}$. These upper bounds are used to prove the uniform integrability of the density, which is the main ingredient in the proof of the strong Feller property (see Lemma \ref{pr.Schilling} and Theorem~\ref{th.FF}). 
\end{enumerate}
\vspace{0.2cm}

\noindent
 The strong Feller property of the non-killed semigroup is then extended  when  $\mathbf B$ is not bounded (and more precisely under \textbf{[{\footnotesize H$^{\alpha\in (0,2)}_{\text{smooth}}$}]}, \textbf{[{\footnotesize H$^{\alpha\in (1,2)}_{\text{LG}}$}]} or \textbf{[{\footnotesize{H$^{\alpha\in (1,2)}_{\text{P-Grad}}$}}]}) using suitable Lyapunov functions and localization arguments (Theorems \ref{th.FF} and~\ref{th.sta}). The strong feller property of $P_t^{\mathscr D}$ is then derived in Theorem~\ref{th.SFD} using the strong Feller property of $P_t$, $t>0$ and by showing that the distribution function of the exit times from balls converges uniformly with respect to the initial condition on compact sets toward zero~\eqref{eq.Lpop}. It is important to emphasize that the jump characteristics of the driving noise (here, RI$\alpha$S processes) play a crucial role in establishing the strong Feller property.

   \end{enumerate}

\section{On the essential spectral radius of the   killed process} 
\label{sec.ress} 
 We recall that for a bounded linear operator  $T$ on  $bB(\mathbb R^{2d})$,  the essential spectral radius
$$
\mathsf r_{ess}(T|_{bB(\mathbb R^{2d})})
:= \sup\{|\lambda| \,;\, \lambda \in \sigma_{\mathrm{ess}}(T|_{bB(\mathbb R^{2d})})\}
\quad
(\text{convention: } \sup \varnothing := 0),
$$
where $\sigma_{\mathrm{ess}}(T|_{bB(\mathbb R^{2d})})$ is the Wolf essential spectrum of $T|_{bB(\mathbb R^{2d})}$, i.e. the set of $\lambda \in \mathbb C$ such that $\lambda I - T$ is not a Fredholm operator on $bB(\mathbb R^{2d})$.
The goal of this section is to prove the following result.

\begin{theorem}\label{th.bw} 
Let $\mathscr O$ be  a nonempty bounded open subset of $\mathbb R^d$ and $\mathscr D=\mathscr O\times \mathbb R^d$.   Consider~\eqref{eq.Lan} driven by a Lévy process $(L_t,t\ge 0)$. 
 Assume   that either {\rm\textbf{[{\footnotesize B$_{\text{smooth}}$}]}} holds, or that the process $(L_t,t\ge0)$ satisfies {\rm\textbf{[{\footnotesize W$_{\text{LG}}$}]}}. Assume finally    that  $(P_t,t> 0)$ is strongly Feller.   Then, for any $t>0$,  $\mathsf r_{ess}(P_t^{\mathscr D}|_{b B(\mathscr D)})=0$ (see~\eqref{eq.PTD}) and  $P_t^{\mathscr D}: bB(\mathscr D)\to bB(\mathscr D)$ is compact. 
 \end{theorem}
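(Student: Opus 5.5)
Our plan is to use Wu's measure of non-compactness $\beta_w$ for kernel operators on $bB(\mathscr D)$. Recall the formula
\[
\mathsf r_{ess}(T)=\lim_{n\to\infty}\beta_w(T^n)^{1/n},
\]
valid for a nonnegative bounded kernel $T$. Recall also that, when $T$ is strongly Feller, one has
\[
\beta_w(T)=\inf_{K\Subset \mathscr D}\,\sup_{\mathsf x\in \mathscr D} T(\mathsf x,\mathscr D\setminus K).
\]
Moreover, if $\beta_w(T)=0$ and $T$ is strongly Feller, then $T$ is compact on $bB(\mathscr D)$.

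We therefore reduce the theorem to two claims, for each fixed $t>0$:
\begin{enumerate}
\item[(i)] $P_t^{\mathscr D}$ is strongly Feller on $\mathscr D$, or at least $P^{\mathscr D}_{t}$ factors as $P^{\mathscr D}_{t/2}P^{\mathscr D}_{t/2}$ with a strongly Feller-type factor;
\item[(ii)] $\sup_{\mathsf x\in\mathscr D} P_t^{\mathscr D}(\mathsf x,\mathscr D\setminus K)\to 0$ along an exhausting sequence of compacts $K$.
\end{enumerate}
Then $\beta_w(P_t^{\mathscr D})=0$, which gives both compactness and $\mathsf r_{ess}=0$.

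\textbf{Step (ii), the kinetic tightness estimate.} Since $\mathscr O$ is bounded, the set $\mathscr D\setminus K$ consists of points with $x$ near $\partial\mathscr O$ or with $|v|$ large. We take $K=\{x\in\mathscr O:\mathbf d_{\mathbb R^d}(x,\mathscr O^c)\ge\delta,\ |v|\le R\}$ and handle the two parts separately.

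\emph{Large velocity.} On $\{t<\sigma_{\mathscr D}\}$ we have $x_s\in\mathscr O$ for all $s\le t$, hence
\[
\int_0^t v_s\,ds=x_t-x_0
\]
is bounded by $\mathrm{diam}(\mathscr O)$. Write $v_s=v_t-\int_s^t\mathbf B\,dr-(L_t-L_s)$. Under \textbf{[B$_{\text{LG}}$]} and with $x$ bounded, $|\mathbf B(X_r)|\le C(1+|v_r|)$, and a Grönwall argument backward from $t$ shows that $|v_t|\ge R$ forces $|v_s|\gtrsim R$ for $s\in[t-\epsilon,t]$. The exception is the event that the Lévy increment over $[t-\epsilon,t]$ is large. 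This yields
\[
\epsilon R\lesssim \mathrm{diam}(\mathscr O)+\text{(noise terms)}.
\]
Consequently $\sup_{\mathsf x}\mathbb P_{\mathsf x}[|v_t|>R,\ t<\sigma_{\mathscr D}]$ is bounded by the probability of a large noise oscillation on $[0,t]$, which is uniform in $\mathsf x$ and tends to $0$ as $R\to\infty$. This argument only uses the law of $L$, so \textbf{[W$_{\text{LG}}$]} suffices. Under \textbf{[B$_{\text{smooth}}$]} the drift also has linear growth, so the same argument applies.

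\emph{Boundary layer.} We need
\[
\sup_{\mathsf x}\mathbb P_{\mathsf x}\big[\mathbf d(x_t,\mathscr O^c)<\delta,\ |v_t|\le R,\ t<\sigma_{\mathscr D}\big]\to 0 \quad\text{as }\delta\to0.
\]
We use the Markov property at time $t/2$ and write $P_t^{\mathscr D}(\mathsf x,\cdot)=\int P_{t/2}^{\mathscr D}(\mathsf x,d\mathsf y)P^{\mathscr D}_{t/2}(\mathsf y,\cdot)$. Uniformity in $\mathsf y$ in a compact set comes from the strong Feller property of $P_{t/2}$: the map $\mathsf y\mapsto P_{t/2}(\mathsf y,\{\mathbf d(x,\partial\mathscr O)<\delta,|v|\le R\})$ is continuous, and it decreases to $P_{t/2}(\mathsf y,\partial\mathscr O\times \overline{B_R})$. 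To get uniform convergence, the right statement is that $P_{t/2}(\mathsf y,\cdot)$ is equi-absolutely continuous in $\mathsf y\in K'$, which is Lemma-type uniform integrability derived from strong Feller via Dini's theorem on compacts. The limit $P_{t/2}(\mathsf y,\partial\mathscr O\times\overline{B_R})$ is small but not obviously zero, since $\partial\mathscr O$ may have positive measure. To fix this, we instead exhaust $\mathscr D$ by compacts $K_n$ and use that $P_{t/2}(\mathsf y,\mathscr D\setminus K_n)\downarrow0$ pointwise. Strong Feller gives lower semicontinuity of the open-set masses $P_{t/2}(\mathsf y,\mathrm{int}K_n)$, and Dini's theorem then gives uniform convergence on compacts. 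The starting points $\mathsf y$ outside a compact in $\mathscr D$ are handled by the velocity estimate together with one more step.

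\textbf{Step (i).} The strong Feller property of $P_t^{\mathscr D}$ is not assumed. We avoid needing it by using $\beta_w(ST)\le\beta_w(S)\beta_w(T)$ together with the fact that $P_s^{\mathscr D}\le P_s$, where $P_s$ is strongly Feller. In Wu's framework, $\mathsf r_{ess}$ can be computed from $\beta_\tau$ for a kernel dominated by a strongly Feller kernel. Concretely, we will write
\[
P^{\mathscr D}_t=P^{\mathscr D}_{t/2}\,\mathbf 1_{K}\,P^{\mathscr D}_{t/2}+P^{\mathscr D}_{t/2}\,\mathbf 1_{\mathscr D\setminus K}\,P^{\mathscr D}_{t/2}.
\]
The second term has norm at most $\sup_{\mathsf x} P^{\mathscr D}_{t/2}(\mathsf x,\mathscr D\setminus K)$, which is small by step (ii). For the first term, $\mathbf 1_K P_{t/2}^{\mathscr D}$ is a kernel with the uniform integrability inherited from strong Feller. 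Composing two such kernels gives a compact operator on $bB$, by the standard fact that the product of two strongly Feller-type kernels is compact in $L^\infty$ (Dunford–Pettis). Letting $K\uparrow\mathscr D$ shows that $P_t^{\mathscr D}$ is a norm limit of compact operators, hence compact, and in particular $\mathsf r_{ess}=0$.

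\textbf{Main obstacle.} We expect the main obstacle to be the boundary-layer and uniformity part of step (ii), which needs uniformity in all starting points, including those near $\partial\mathscr O$ and those at high velocity, without any regularity of $\partial\mathscr O$.
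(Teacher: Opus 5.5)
Your architecture matches the paper's: a velocity-tail estimate gives $\beta_w(P^{\mathscr D}_t)=0$, and compactness then comes from the strong Feller property of the dominating kernel $P_t$. However, two steps fail as written.

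\textbf{Compacts inside $\mathscr D$.} The first problem, and the source of your ``main obstacle'', is the choice of compacts $K\Subset\mathscr D$. After the Markov split at $t/2$, consider intermediate points $\mathsf y$ with $x$ close to $\partial\mathscr O$ and bounded velocity. They lie in no compact of $\mathscr D$ and are not controlled by the velocity estimate, and the ``one more step'' is never supplied.

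The missing idea is the paper's: regard $P_t^{\mathscr D}$ as a kernel on $\mathbb R^{2d}$, extended by zero, and use the compacts $K_R=\overline{\mathscr O}\times\{|v|\le R\}$ of $\mathbb R^{2d}$. These are compact because $\mathscr O$ is bounded, and $\mathscr D\setminus K_R=\mathscr O\times\{|v|>R\}$. So only the velocity tail has to be controlled, and $\partial\mathscr O$ never enters. The paper then transfers $\mathsf r_{ess}$ back to $bB(\mathscr D)$ via the block-diagonal splitting $bB(\mathbb R^{2d})=bB(\mathscr D)\oplus bB(\mathbb R^{2d}\setminus\mathscr D)$.

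The same observation closes the gap inside your own scheme. Since $P_{t/2}$ is strongly Feller on all of $\mathbb R^{2d}$, the continuous functions $\mathsf y\mapsto P_{t/2}(\mathsf y,\mathscr D\setminus K_n)$ decrease to $0$ at every $\mathsf y\in\mathbb R^{2d}$, boundary points included. Hence Dini applies on $\overline{\mathscr O}\times\{|v|\le R'\}$, and only the region $|v|>R'$ is left for the velocity estimate.

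\textbf{The compactness step (i).} In $P^{\mathscr D}_{t/2}\mathbf 1_KP^{\mathscr D}_{t/2}=(P^{\mathscr D}_{t/2}\mathbf 1_K)(\mathbf 1_KP^{\mathscr D}_{t/2})$, only the right factor is localized in the starting point. For the left factor the starting point ranges over all of $\mathscr D$, where strong Feller gives no uniform countable additivity. So the Dunford--Pettis argument, which needs both factors weakly compact, does not apply as stated.

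Likewise, ``$\beta_w(T)=0$ and $T$ strongly Feller $\Rightarrow T$ compact'' is false for a single kernel: strong Feller does not make $\{Tf:\|f\|_\infty\le1\}$ equicontinuous.

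Use three factors instead. Take $s=t/3$ and $K=K_R$ and expand
$P^{\mathscr D}_s(\mathbf 1_K+\mathbf 1_{\mathscr D\setminus K})P^{\mathscr D}_s(\mathbf 1_K+\mathbf 1_{\mathscr D\setminus K})P^{\mathscr D}_s$.
The term $P^{\mathscr D}_s(\mathbf 1_KP^{\mathscr D}_s\mathbf 1_K)(\mathbf 1_KP^{\mathscr D}_s)$ is compact, because both localized factors are weakly compact. Each of the other three terms contains $P^{\mathscr D}_s\mathbf 1_{\mathscr D\setminus K}$, whose norm is $\sup_{\mathsf x}P^{\mathscr D}_s(\mathsf x,\mathscr D\setminus K)\to0$. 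Alternatively, follow the paper: show $\beta_\tau(P^{\mathscr D}_{2u})\le\beta_w(P^{\mathscr D}_u)\beta_\tau(P^{\mathscr D}_u)=0$, then apply Wu's Proposition 3.2(g) to $P^{\mathscr D}_uP^{\mathscr D}_{2u}$.

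\textbf{What your route buys once repaired.} Deriving $\mathsf r_{ess}=0$ directly from compactness bypasses the Gelfand-type formula. Your backward-Gr\"onwall velocity estimate works for every $t>0$ at once, without the paper's split of initial velocities at $\sqrt R$ and without restricting to $t\le\delta_C$ and then propagating. For this it must bound $|v_s-v_t|\le\tfrac34|v_t|$ on $[t-\epsilon,t]$, not merely $|v_s|\gtrsim R$. A lower bound on $|v_s|$ alone lets the velocity turn and does not force $|x_t-x_{t-\epsilon}|\ge\epsilon|v_t|/4$.
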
 

We now provide some comments on Theorem~\ref{th.bw}. Theorem~\ref{th.bw} is the only result in this work where  we require  $\mathscr O$ to be bounded.  
 Notice that even if $\mathscr O$ is bounded, the set $\mathscr D$ is unbounded. 
We   emphasize that, given a   Lévy process $(L_t,t\ge 0)$,  under the assumption of {\rm\textbf{[{\footnotesize W$_{\text{LG}}$}]}}, we assume that weak well-posedness holds for any drift $\mathbf{B}$ satisfying \textbf{[{\footnotesize B$_{\text{LG}}$}]}.    Let us  recall that, under \textbf{[{\footnotesize B$_{\text{smooth}}$}]},   \eqref{eq.Lan} is strongly well-posed, for any Lévy process $L$. 
Moreover, we assume a weak well-posedness framework in Theorem~\ref{th.bw} so that  the process~\eqref{eq.Lan} is Markovian, and therefore both the  non-killed and the killed semigroups are well defined.  Theorem~\ref{th.bw} can also be extended to other types of noise, provided one works within a weak well-posedness framework and assumes that the non-killed semigroup is strongly Feller (or more generally satisfies Assumption \textbf{(A1)} below).

 When   \textbf{[{\footnotesize H$^{\alpha\in (1,2)}_{\text{LG}}$}]}   is satisfied, we  prove below  in Theorem~\ref{th.ex2}  that {\rm\textbf{[{\footnotesize W$_{\text{LG}}$}]}} holds. In    Theorem~\ref{th.FF},  we show that   that $P_t$ is strongly Feller for $t>0$ when   \textbf{[{\footnotesize H$^{\alpha\in (1,2)}_{\text{LG}}$}]} or \textbf{[{\footnotesize H$^{\alpha\in (0,2)}_{\text{smooth}}$}]} holds. 
 Hence, according toTheorem \ref{th.bw}, under \textbf{[{\footnotesize H$^{\alpha\in (1,2)}_{\text{LG}}$}]} or \textbf{[{\footnotesize H$^{\alpha\in (0,2)}_{\text{smooth}}$}]}, $\mathsf r_{ess}(P_t^{\mathscr D}|_{b B(\mathscr D)})=0$  and  $P_t^{\mathscr D}: bB(\mathscr D)\to bB(\mathscr D)$ is compact.
 
  In addition, when   \textbf{[{\footnotesize H$^{\alpha\in (1,2)}_{\text{P-Grad}}$}]} is satified, we prove that \eqref{eq.Lan} is weakly well-posed and that $P_t$ is strongly Feller for $t>0$, and extend   Theorem~\ref{th.bw} to this setting, see Theorem~\ref{th.sta} and the proof of Theorem~\ref{co.main}.

 \medskip
 
 To prove Theorem~\ref{th.bw}, we rely on the measure of non-compactness $\beta_w$ for non-negative transition kernels introduced in \cite{Wu2004}, which we   recall below.
 For a bounded non-negative transition kernel $T$ on  $\mathbb R^{2d}$, set
$$
\beta_w(T  ):=\inf_{K\Subset \mathbb R^{2d}}\, \sup_{ \mathsf x\in \mathbb R^{2d}}\,  T(  \mathsf x, \mathbb R^{2d}\setminus K),$$
and 
$$
\beta_\tau(T ):=\sup_{(A_n)_n}\, \lim_{n\to +\infty} \,  \sup_{  \mathsf x\in \mathbb R^{2d}}\, T(  \mathsf x, A_n) ,
$$
where the  supremum is taken over all the sequences  $(A_n)_n$ of elements of  $   B(\mathbb R^{2d})$ such that  $A_n\searrow \varnothing$. The kernel $T$ satisfies  \cite[\textbf{(A1)}]{Wu2004} if by definition,  
$$
\text{\textbf{(A1)} }\  \, \exists N\ge 1, \forall  K\Subset \mathbb R^{2d}, \  \,  \beta_w(\mathbf 1_K T)=0 \text{ and  } \beta_\tau(\mathbf 1_KT^N)=0.
$$ 
Under this  assumption, the last author proves (see \cite[Theorem 3.5]{Wu2004}) the following Gelfand-Nussbaum type formula  for the essential spectral radius of $T$: 
 \begin{equation}\label{eq.Gelfand}
 \mathsf r_{ess}(T|_{bB(\mathbb R^{2d})})=\lim_{n\to +\infty}\big [ \beta_w((T|_{bB(\mathbb R^{2d})})^n)\big ]^{1/n}.
 \end{equation} 
 We also recall that, using Dini's theorem, the condition   \textbf{(A1)} is satisfied if $T$ is strongly Feller, with $N=1$ (see~\cite{guillinqsd}).


 \begin{proof} (Theorem~\ref{th.bw})    
Since by assumption, $P_t$ is strongly Feller for $t>0$, it satisfies  \textbf{(A1)}  with $N=1$. 
We consider $P_t^{\mathscr D}$ as an operator  on $bB(\mathbb R^{2d})$ ($t>0$). 
In addition,  $P_t^{\mathscr D}$ satisfies  \textbf{(A1)} with $N=1$ for every $t>0$ since $P_t^{\mathscr D}\le P_t$.  
Moreover, we have that 
\begin{align*}
 \beta_w(P_t^{\mathscr D})&= \inf_{K\Subset \mathbb R^{2d}} \sup_{\mathsf x=(x,v)\in \mathbb R^{2d}} P_t^{\mathscr D} (\mathsf x, \mathbb R^{2d} \setminus K)\\
 &= \inf_{K\Subset \mathbb R^{2d}} \sup_{\mathsf x=(x,v)\in \mathscr D} P_t^{\mathscr D} (\mathsf x, \mathscr D\setminus K).
 \end{align*}
Choosing $K_R=\overline{\mathscr O} \times \{  |v'|\le R\}$ (for $R>0$), one deduces that:
\begin{align}
\label{eq.bR}
 \beta_w(P_t^{\mathscr D})&\le \inf_{R> 0} \ \sup_{\mathsf x=(x,v)\in \mathscr O\times \mathbb R^d}\  P_t^{\mathscr D}(\mathsf x,  \mathscr O\times \{v',\, |v'|> R\}).   
\end{align}
The goal is to prove that  $\beta_w(P_t^{\mathscr D})=0$ for any $t>0$. 
 Indeed, by~\cite[Proposition 3.2.(e)]{Wu2004}, if $\beta_w(P_t^{\mathscr D})=0$, we will then get that  
 $$\beta_w((P_t^{\mathscr D})^n)\le \beta_w(P_t^{\mathscr D})^n=0,$$
  and the result $\mathsf r_{ess}(P_t^{\mathscr D}|_{b B(\mathscr D)})=0$ will  follow from~\eqref{eq.Gelfand} and the fact that 
\begin{equation}\label{eq.res}
\mathsf r_{ess}(P_t^{\mathscr D}|_{b B(\mathscr D)})=\mathsf r_{ess}(P_t^{\mathscr D}|_{bB(\mathbb R^{2d})}).
\end{equation} 
The proof of~\eqref{eq.res} is rather standard, but we briefly sketch it for the sake of completeness.  Note that 
 $$b B(\mathbb{R}^{2d}) = b B(\mathscr{D}) \oplus b B(\mathbb{R}^{2d} \setminus \mathscr{D}).$$
Moreover, $P_t^{\mathscr{D}}f(\mathsf{x}) = 0$ for all $\mathsf{x} \in \mathbb{R}^{2d} \setminus \mathscr{D}$ and, for $f=f_1+f_2\in  b B(\mathscr{D}) \oplus b B(\mathbb{R}^{2d} \setminus \mathscr{D})$, $P_t^{\mathscr{D}}f=P_t^{\mathscr{D}}f_1$. The operator $P_t^{\mathscr{D}}$ then acts as follows  relative to this decomposition:
$$P_t^{\mathscr{D}} |_{bB(\mathbb R^{2d})}= \begin{pmatrix} P_t^{\mathscr{D}}|_{b B(\mathscr{D})} & 0 \\ 0 & 0 \end{pmatrix}
.$$
 Using the definition of a Fredholm operator, the Wolf essential spectrum of a block diagonal operator is the union of the essential spectra of its blocks, and hence in our case:
 $$\sigma_{ess}(P_t^{\mathscr{D}}|_{b B(\mathbb{R}^{2d})}) = \sigma_{ess}(P_t^{\mathscr{D}}|_{b B(\mathscr{D})}) \cup \sigma_{ess}(0).$$
%
 proving~\eqref{eq.res}. 
Let us come back to the proof of  $\beta_w(P_t^{\mathscr D})=0$. To show it, we aim to prove that  (see~\eqref{eq.bR}):
 \begin{equation}\label{eq.sy0}
 \lim_{R\to +\infty}  \ \sup_{\mathsf x=(x,v)\in \mathscr O\times \mathbb R^d}\  P_t^{\mathscr D}(\mathsf x,  \mathscr O\times \{  |v'|> R\})=0.
 \end{equation} 
Fix $t>0$ and define
\[
\mathscr A_t(\mathsf x) := \{x_s(\mathsf x)\in\mathscr O \text{ for all } s\in[0,t]\}.
\]
Then, proving \eqref{eq.sy0} is equivalent, by definition of $P_t^{\mathscr D}$,  to showing that  \begin{align}
\label{eq.Pgb}
\lim_{R\to\infty}
\sup_{\mathsf x=(x,v)\in\mathscr O\times\mathbb R^d}
\mathbb P_{\mathsf x}\bigl[\mathscr A_t\cap\{|v_t |>R\}\bigr]=0.
\end{align} 
 In what follows, $C>0$ is a   constant  depending only on $\mathbf B$ and  on the bounded set $\mathscr O$,  which may change from one occurrence to another. 
We now prove~\eqref{eq.Pgb}.  
We write  for $t>0$, $R>0$,  and $\mathsf x=(x,v)\in \mathscr O\times \mathbb R^d$, 
\begin{align}
\nonumber
 \mathbb{P}_{\mathsf{x}}[\mathscr A_t \cap  \{|v_t| > R\}] &=  \mathbf 1_{  |v|< \sqrt R} \  \mathbb{P}_{\mathsf x}\big [\mathscr A_t \cap  \{|v_t| > R\}\big ] \\
 \label{eq.Kh}
&\quad +\mathbf 1_{  |v|\ge \sqrt R}   \   \mathbb{P}_{\mathsf{x}}\big [\mathscr A_t \cap  \{|v_t| > R\}\big ].
\end{align}
Recall that  we have (by assumption) for all $\mathsf x=(x,v)\in \mathbb R^{2d}$, 
\begin{equation}\label{eq.LGr}
|\mathbf{B}(\mathsf x)| \le C[1 + |x| + |v|].
\end{equation}

 Let us first consider the regime of the sufficiently small   initial velocities   $|v| \le \sqrt{R}$.
Let  $\mathsf x=(x,v)\in\mathscr O\times\mathbb R^d$. On the event $\mathscr A_t(\mathsf x)$, we have $|x_s(\mathsf x)| \le C$ for all $s\in [0,t]$ (because $\mathscr O$ is bounded), and hence,   by \eqref{eq.LGr}, we deduce that 
\begin{equation}\label{eq.Lpo}
|\mathbf{B}(X_s(\mathsf x))| \le C[1+|v_s(\mathsf x)|], \,\, \forall x\in \mathscr O,\, \,v\in \mathbb R^d.
\end{equation}
  From the velocity equation, we have on $\mathscr A_t(\mathsf x)$, for every  $s\in [0,t]$  and  $\mathsf x=(x,v)\in\mathscr O\times\mathbb R^d$:
$$
|v_s(\mathsf x)| \le |v| + C \int_0^s \big[ 1+|v_u(\mathsf x)|  \big] du + |L_s|.$$
  Applying Grönwall's inequality~\cite[Section 5 in Appendices]{EK}, on the event $\mathscr A_t(\mathsf x)$, it holds that:
\begin{equation}\label{eq.Gr}
|v_s(\mathsf x)| \le \left[ |v| + Ct + S_t \right] e^{Cs}, \quad \text{where  } S_t = \sup_{s \in [0,t]} |L_s|.
\end{equation}
In particular,  on the event $\mathscr A_t(\mathsf x)$,
$$|v_t(\mathsf x)| \le  [ |v| + Ct + S_t  ] e^{Ct}.$$ 
  On the event $\mathscr A_t(\mathsf x)\cap \{|v_t(\mathsf x)| > R\}$ where  $\mathsf x=(x,v)\in\mathscr O\times\mathbb R^d$  is such that  $|v| \le \sqrt{R}$, we thus obtain that:
  $$R < [\sqrt{R} + Ct + S_t] e^{Ct} \implies S_t > R e^{-Ct} - \sqrt{R} - C t.$$
 
Let us now  consider    the regime of  large  initial velocities   $|v| \ge \sqrt{R}$, $\mathsf x\in \mathscr O\times \mathbb R^d$.
By  the triangular integration identity we have for any non-negative measurable  functions $f:[0,t]\to \mathbb R$:
$$\int_0^t \int_0^s f_r dr ds = \int_0^t (t-r)  f_r dr.$$
Then, integrating the position equation yields:
$$x_t(\mathsf x) = x + vt + \int_0^t (t-r) \mathbf{B}(x_r(\mathsf x), v_r(\mathsf x)) dr + \int_0^t L_r dr.$$ 
Thus, one has:
\begin{equation}\label{eq.jhl}
|v|t \le |x_t(\mathsf x) - x| + \int_0^t (t-r) |\mathbf{B}(x_r(\mathsf x), v_r(\mathsf x))| dr + \left| \int_0^t L_r dr \right|.
\end{equation}
On the other hand, recall that for $\mathsf x\in \mathscr O\times \mathbb R^d$, on $\mathscr A_t(\mathsf x)$, $|x_t(\mathsf x) - x|\le C$ and we have~\eqref{eq.Gr} that we recall:
$$|v_u(\mathsf x)| \le \left[ |v| + Ct + S_t \right] e^{Cu}, \ u\in [0,t].$$ 
We then deduce from~\eqref{eq.Lpo} and ~\eqref{eq.jhl} that on $\mathscr A_t(\mathsf x)$ ($\mathsf x\in \mathscr O\times \mathbb R^d$):
$$|v|t \le C+ t S_t + C\int_0^t (t-r) dr +C \int_0^t  (t-r) \big[|v| + C t + S_t\big] e^{Cr} dr.$$
Isolating terms with $|v|$, one gets that:
$$|v| \left[ t - \int_0^t C(t-r)e^{Cr} dr \right] \le C+ t S_t + \frac{C t^2}{2} + [C t + S_t] \int_0^t C(t-r)e^{Cr} dr.
$$
The coefficient of $|v|$ in the left-hand side of the previous bound  is 
$$g(t) = t - \int_0^t C(t-r)e^{Cr} dr = 2t - \frac{e^{Ct} - 1}{C}.$$
 For small $t>0$, 
 $$g(t) = t - \frac{Ct^2}{2} + O(t^3).$$
  Let $\delta_C>0$ such that  
  $$0<t \le  \delta_C  \Rightarrow  g(t) > 0.$$ 
Note that $\delta_C$ depends solely on the constant $C>0$, which itself is a function of the domain $\mathscr{O}$ and the drift $\mathbf{B}$ restricted to $\bar{\mathscr{D}}$.
  Let 
  $$I(t) = \int_0^t C(t-r)e^{Cr} \, dr = \frac{e^{Ct}-1}{C} - t .$$ 
  The inequality thus becomes:
  $$g(t) |v|   \le C+ t S_t + \frac{C t^2}{2} + [C t + S_t] I(t) .$$ 
Let  $\mathsf x\in \mathscr O\times \mathbb R^d$  with $|v| > \sqrt{R}$. Then,  on $\mathscr A_t(\mathsf x)$, it holds that:
   $$S_t \ge \frac{\sqrt{R}\,  g(t) - C - {C t^2}/{2}-CtI(t)  }{I(t)+t}.$$
Consequently, as $R \to \infty$, the magnitude of the noise required to remain within $\mathscr{O}$ tends to infinity. Let us emphasize that this phenomenon is characteristic of kinetic equations. We now summarize the previous estimates.  Recall~\eqref{eq.Kh}.  We have proved that for all $\mathsf{x} \in \mathscr O\times \mathbb R^d$,  $R>0$,  and $0<t \le  \delta_C$:
\begin{align*}
  \mathbb{P}_{\mathsf{x}}[\mathscr A_t \cap  \{|v_t| > R\}] &\le \mathbb{P}\left[ S_t > R e^{-Ct} - \sqrt{R} - C t \right] \\
&\quad + \mathbb{P}\left[ S_t \ge \frac{\sqrt{R}\,  g(t) - C - {C t^2}/{2}-CtI(t)  }{I(t)+t} \right].
\end{align*}
Both terms in the right-hand side are independent of  $\mathsf{x} \in \mathscr O\times \mathbb R^d$ and converge to  $0$ as $R \to \infty$.  This shows that~\eqref{eq.sy0} holds for all   $0<t \le \delta_C$. Thus for any $0<t \le \delta_C$:
$$\beta_w(P_t^{\mathscr D})=0.$$
We propagate this over all $t>0$ by induction as follows. 
Let $t=   \delta_C +s$ with $0<s \le  \delta_C$.  Recall that for all $K\Subset \mathbb R^{2d}$, $\beta_w(\mathbf 1_KP_{s}^{\mathscr D})=0$ (since $P_s^{\mathscr D}$ satisfies  \textbf{(A1)}, for all $s>0$). 
Then, by~\cite[Proposition 3.2.(e)]{Wu2004},
$$\beta_w(P_t^{\mathscr D})\le \beta_w(P_{\delta_C}^{\mathscr D})\beta_w(P_s^{\mathscr D})=0\times \beta_w(P_s^{\mathscr D})=0 .$$
 We extend this by induction to get that $\beta_w(P_t^{\mathscr D})=0$ for all $t>0$, the desired result. Hence, 
 $$\mathsf r_{ess}(P_t^{\mathscr D}|_{b B(\mathscr D)}) =0, \ t>0.$$
We now prove that  $P_t^{\mathscr D}: bB(\mathscr D)\to bB(\mathscr D)$  is compact  for $t>0$. Again, we consider  $P_t^{\mathscr D}$ as a non-negative bounded linear kernel over $\mathbb R^{2d}$ and proves that it   is compact  for $t>0$.  We recall that  for all $t>0$,    $\beta_w(P_t^{\mathscr D})=0$ and, for all $K\Subset \mathbb R^{2d}$,  $\beta_\tau (\mathbf 1_{K}(P_t^{\mathscr D})) \le \beta_\tau (\mathbf 1_{K}(P_t))=0$ (since $P_t$ satisfies \textbf{(A1)}). 
Let  $t>0$ and write $t=3u$,  with $u>0$. We can thus use~\cite[Proposition 3.2 (f)]{Wu2004}, to deduce that
$$\beta_\tau (P^{\mathscr D}_{2u})\le \beta_w (P^{\mathscr D}_{u})\beta_\tau (P^{\mathscr D}_{u})=0.$$
 Consequently, thanks to~\cite[Proposition 3.2 (g)]{Wu2004}, $P^{\mathscr D}_{t}=P^{\mathscr D}_{u} P^{\mathscr D}_{2u}$ is compact over $bB (\mathbb R^{2d})$ (and thus on $bB (\mathscr D)$). 
 The proof of the theorem is complete.
 \end{proof}

%
%
%

\section{Topological irreducibility for the killed and non-killed processes}
\noindent

In this section we prove that the killed semigroup of~\eqref{eq.Lan} is   topologically irreducible over $\mathscr D$.
%
 
%
%


%

 \begin{theorem}\label{th.C5-infty} 
 Consider~\eqref{eq.Lan} driven by a pure jump  Lévy process $(L_t,t\ge 0)$  with Lévy measure $\nu$ having full topological support\footnote{We recall that $z$ belongs to the topological support of a measure $\mu$ on a metric space $E$ if and only if for any open ball  $\mathscr B$  of $E$ centered at $z$, $\nu(\mathscr B)>0$.},namely,
 \begin{equation}\label{eq.supp}
   {\rm supp}(\nu)=\mathbb R^d\setminus \{0\} .
  \end{equation}   
  We also assume that 
 \begin{equation}\label{eq.supp2}
   \sup_{\delta\in (0,1]} \big| \int_{\delta\le |z|\le 1}z\nu(dz)\big|<+\infty.
  \end{equation} 
 Assume in addition that either {\rm\textbf{[{\footnotesize B$_{\text{smooth}}$}]}} holds, or that the process $(L_t,t\ge0)$ satisfies {\rm\textbf{[{\footnotesize C$_{\text{LG}}$}]}}.  Let $\mathscr D=\mathscr O \times \mathbb R^d$ where $\mathscr O$ is a subdomain  (i.e. a nonempty, open,  and connected subset)   of $\mathbb R^d$. 
  Then,  $(P_t^{\mathscr D},t>0)$ is topologically irreducible over $\mathscr D$ (see~\eqref{eq.PTD}). If, in addition, $\mathbb R^{d}\setminus \bar{\mathscr O}$ is nonempty, then for all $\mathsf x\in \mathscr D$, $\mathbb P_{\mathsf x}[\sigma_{\mathscr D}<+\infty]>0$. 
 \end{theorem}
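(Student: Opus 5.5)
The plan follows the strategy announced in Section~\ref{sec.MC}: a short-time controllability estimate that is uniform in the initial condition, combined with the Markov property and lower semicontinuity. The proof splits into three steps.

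\textbf{Step 1 (short-time attainability).} Fix $\mathsf x_0=(x_0,v_0)\in\mathscr D$ and a target ball $\mathscr B_{\mathbb R^{2d}}(\mathsf y_0,r)\subset\mathscr D$, with $x_0,y_0$ joined by a straight segment inside $\mathscr O$, or more generally a short piece of polygonal path. I would decompose the noise as $L_t=L^{\delta}_t+\bar L^{\delta}_t$. Here $\bar L^\delta$ is the compound Poisson part with jumps of size $\ge\delta$. The part $L^\delta$ collects the small jumps together with the compensator drift $-t\int_{\delta\le|z|\le 1}z\nu(dz)$. By \eqref{eq.supp2} this drift is bounded uniformly in $\delta$, and $\mathbb E\sup_{s\le t}|L^\delta_s-\text{drift}|^2\le t\int_{|z|<\delta}|z|^2\nu(dz)\to0$. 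Hence, with positive probability, $\sup_{s\le t}|L^\delta_s|\le \eta$ for small $t$.

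On that event I also require that $\bar L^\delta$ has exactly two jumps in $[0,t]$, landing in small balls around prescribed values. The first jump sits at a time near $0$, with size near $w_1-v_0$, where $w_1=(y_0-x_0)/t$ up to correction. The second sits at a time near $t$, with size near $v_{target}-w_1$. These events have positive probability by \eqref{eq.supp}: full support means every ball avoiding $0$ has positive $\nu$-mass, and jump times are exponential.

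On this event I control the trajectory by Grönwall, exactly as in \eqref{eq.Gr}. Under \textbf{[B$_{\text{LG}}$]} the drift contributes $O(t(1+\sup|v|))$ to the velocity and $O(t^2(1+\sup|v|))$ to the position. So for $t$ small (depending on $\mathsf x_0,\mathsf y_0,r$) the position path stays in a tubular neighbourhood of the segment, hence in $\mathscr O$, and $X_t$ lands in the target ball. All estimates depend on $\mathsf x$ only through bounds on $|x|,|v|$. Therefore I get $\inf_{\mathsf x\in \mathscr B(\mathsf x_0,\rho)}P^{\mathscr D}_{t}(\mathsf x,\mathscr B(\mathsf y_0,r))>0$ for all $t\in(0,t_0]$, uniformly near $\mathsf x_0$. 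This is the content of Proposition~\ref{pr.C5-infty}.

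\textbf{Step 2 (chaining and arbitrary times).} Connectedness of $\mathscr O$ gives a polygonal path from $x_0$ to $y_0$ inside $\mathscr O$. I cover it by finitely many balls and apply Step~1 successively via the Markov property (available by \textbf{[C$_{\text{LG}}$]}, or by strong well-posedness under \textbf{[B$_{\text{smooth}}$]}). The uniformity in the starting point lets me integrate over intermediate positions. This yields positivity for times that are sums of small times.

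For a general fixed $t>0$, I fill the remaining time by "waiting". In the kinetic setting one cannot literally stay put, so instead I reach an intermediate point with small velocity and small excursion; Step~1 with a target near the current point handles this. Alternatively I use the lower semicontinuity of $\mathsf x\mapsto P^{\mathscr D}_s(\mathsf x,\mathscr V)$ (Proposition~\ref{pr.Co}, which follows from weak continuity \textbf{[C$_{\text{LG}}$]}, the portmanteau theorem, and openness of the non-exit event for open $\mathscr V$). This turns pointwise positivity into positivity on neighbourhoods, so splitting $t=t_1+\dots+t_n$ with each $t_i\le t_0$ suffices.

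\textbf{Step 3 (exit).} If $\mathbb R^d\setminus\bar{\mathscr O}\neq\varnothing$, I pick $z\notin\bar{\mathscr O}$. The same two-jump construction, now aimed at a position near $z$ and run in the non-killed process (the same argument on $\mathbb R^{2d}$), shows $\mathbb P_{\mathsf x}[x_t\notin\bar{\mathscr O}]>0$ for some $t$, so $\sigma_{\mathscr D}<\infty$ with positive probability.

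\textbf{Main obstacle.} I expect the main obstacle to be Step~1's uniform Grönwall control with a merely locally bounded drift. The velocity after the first jump is of order $1/t$, so the drift bound $C(1+|v|)$ is of order $1/t$ as well. Its integrated effect on the position is then $O(t)$, which is fine, but this must be checked carefully together with the small-jump remainder, uniformly in $\mathsf x$ near $\mathsf x_0$.
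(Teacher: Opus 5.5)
Your architecture matches the paper's: uniform short-time attainability, the Markov property, lower semicontinuity from Proposition~\ref{pr.Co}, and the exit argument. But Step~1 fails as set up, at exactly the point your ``main obstacle'' paragraph passes over. You aim the second big jump at the \emph{deterministic} value $v_{\mathrm{target}}-w_1$. During the cruise phase $|v_s|\approx|w_1|\approx|y_0-x_0|/t$, so the drift changes the \emph{velocity} by up to $\int_0^t|\mathbf B(X_s)|\,ds\lesssim C\,t\,(1+|y_0-x_0|/t)\approx C|y_0-x_0|$. This is an $O(1)$ quantity that does not vanish as $t\to0$; only its effect on the position is $O(t)$. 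For instance, with $\mathbf B(x,v)=-\gamma v$ the velocity just before the second jump is about $w_1-\gamma(y_0-x_0)$. On your event $v_t$ therefore sits near $v_{\mathrm{target}}-\gamma(y_0-x_0)$, outside $\mathscr B_{\mathbb R^{2d}}(\mathsf y_0,r)$ whenever $r<\gamma|y_0-x_0|$. Since $\mathbf B$ is only measurable, there is no deterministic ``correction'' you can precompute, because the controlled ODE is not even well defined. The same defect breaks your time-extension step. Re-entering $\mathscr B_{\mathbb R^{2d}}(\mathsf y_0,r)$ from every point $\mathsf y=(y,u)$ of its closure by aiming at $\mathsf y_0$ leaves a velocity error of order $C|y-y_0|$, which can be as large as $Cr$.

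The missing idea is to make the last jump's target random, as the paper explains in the remark after \eqref{eq.rstar1}. Fix deterministic times $0<t_1<t_2<t$ with $t_1$ and $t-t_2$ small compared with $t_2-t_1$. Require the big-jump part to have exactly one jump in $(t_2,t)$, lying in $\mathscr B_{\mathbb R^d}(v_{\mathrm{target}}-v_{t_2}(\mathsf x),\beta)$, a ball whose center is $\mathcal F_{t_2}$-measurable. By independence of increments, the probability of the full event is $\mathbb E[\mathbf 1_A\,\Psi(v_{t_2}(\mathsf x))]$, where $A$ is the event up to time $t_2$ and $\Psi(y)=(t-t_2)e^{-\lambda_\delta(t-t_2)}\nu\big(\mathscr B_{\mathbb R^d}(v_{\mathrm{target}}-y,\beta)\cap\{|z|>\delta\}\big)$, with $\lambda_\delta=\nu(\{|z|>\delta\})$. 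By \eqref{eq.supp}, $\Psi(y)>0$ for \emph{every} $y$ once $\delta\le\beta/2$. On $[t_2,t]$ the drift moves $v$ only by $O\big((t-t_2)/(t_2-t_1)\big)$, so $v_t$ ends within $O(\beta)+o(1)$ of $v_{\mathrm{target}}$. The paper applies this device at every corner of a broken line in a single cascade of events; your Markov chaining would also work once each step is repaired this way. The paper then obtains the infimum over the closed target ball from pointwise positivity with a uniform time threshold plus lower semicontinuity.
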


 Condition \eqref{eq.supp2} is a technical requirement, analogous to the conditions for the support theorem in \cite{kulyk22} for globally Lipschitz drifts, which arises naturally from the decomposition of the Lévy process into small and large jump components in the proof of Proposition~\ref{pr.C5-infty}. Specifically, condition \eqref{eq.supp2} is satisfied under either of the following scenarios: 
\begin{itemize}
\item[-] Symmetry:  the Lévy measure $\nu$ is symmetric, in which case the truncated mean vanishes, i.e., $\int_{\delta < |z| \le 1} z \nu(dz) = 0$ for all $\delta \in (0,1]$.
\item[-]  Finite variation:   $\int_{|z| \le 1} |z| \nu(dz) < +\infty$.
\end{itemize}
Notably, both conditions \eqref{eq.supp} and \eqref{eq.supp2} are satisfied for RI$\alpha$S processes for the entire range $\alpha \in (0,2)$.

\medskip

\noindent
\textbf{Note}.  Notice that the Theorem  \ref{th.C5-infty} applies when $\mathscr O=\mathbb R^d$.  As a consequence, the semigroup $(P_t,t>0)$ is topologically irreducible over $\mathbb R^{2d}$.  In addition, we prove in Theorem~\ref{th.ex2} below that     {\rm\textbf{[{\footnotesize C$_{\text{LG}}$}]}} is  satisfied when \textbf{[{\footnotesize H$^{\alpha\in (1,2)}_{\text{LG}}$}]} holds. 
Thus, when    \textbf{[{\footnotesize H$^{\alpha\in (1,2)}_{\text{LG}}$}]} holds, $(P_t^{\mathscr D},t>0)$ is topologically irreducible over $\mathscr D$.  
Moreover, when {\rm\textbf{[{\footnotesize B$_{\text{smooth}}$}]}} holds,  \eqref{eq.Lan} is strongly well-posed for any Lévy process $L$ and  the (weak) continuity of the trajectories of \eqref{eq.Lan} w.r.t. the initial condition $\mathsf x\in \mathbb R^{2d}$  is straightforward to deduce. Indeed,  by Grönwall inequality, it holds a.s. 
\begin{equation}\label{eq.cbn}
\forall T\ge 0, \ \  \lim_{\mathsf y\to \mathsf x} \ \sup_{t\in [0,T]}\ |X_t(\mathsf x)-X_t(\mathsf y)|=0.
\end{equation}
Consequently, when {\rm\textbf{[{\footnotesize B$_{\text{smooth}}$}]}} holds and when the pure jump Lévy process $(L_t,t\ge 0)$ satisfies \eqref{eq.supp}-\eqref{eq.supp2},   $(P_t^{\mathscr D},t>0)$ is topologically irreducible over $\mathscr D$.  Theorem \ref{th.C5-infty}   will be extended below when \textbf{[{\footnotesize H$^{\alpha\in (1,2)}_{\text{P-Grad}}$}]} holds, see Theorem \ref{th.sta}. 
\medskip

\begin{remark}
We emphasize that assumption \eqref{eq.supp} is not exhaustive, and the results can be extended to broader classes of Lévy processes. For instance, consider the case where $d=kn$ and $L_t := (L_t^1, \dots, L_t^k)$ is a (cylindrical) L\'evy process whose components $L_t^i$ are independent $n$-dimensional L\'evy processes with L\'evy measures satisfying \eqref{eq.supp} and \eqref{eq.supp2}. 

In this setting, the L\'evy measure of the joint process $(L_t, t \ge 0)$ is supported on the union of the $k$ subspaces defined by 
\[
\bigoplus_{j=1}^k V_j, \quad \text{where } V_j = \{0\}^{(j-1)n} \times \mathbb{R}^n \times \{0\}^{(k-j)n}.
\]
The assertions of Theorem~\ref{th.C5-infty} remain valid in this framework. Indeed, the result follows by adapting the proof of Proposition~\ref{pr.C5-infty} via standard, albeit notationally dense, modifications to account for the component-wise jump structure.
\end{remark}   
 
Let  us also mention~\cite{zhangAIHP2020}, where topological irreducibility is
established for two classes of SDEs with low
regularity coefficients: elliptic jump-diffusions (using Girsanov transformation), and elliptic SDEs driven by a rotationally invariant $\alpha$-stable
process with $\alpha\in(1,2)$ (using density estimates for the Dirichlet heat
kernel). We also refer to~\cite[Section 4]{kulik2009} and references therein, for similar results.  When $\mathbf B$ is locally Lipschitz on $\{|\mathbf B|\neq +\infty\}$,  the topological irreducibility of \eqref{eq.Lan} driven by RI$\alpha$S processes was investigated in~\cite[Proposition 3.2]{bao2024exponential} using an elegant framework based on subordinated Brownian motion.   However, it seems to us that the final estimation involves a supremum limit that is technically inconsistent with the jump nature of the subordinator. Specifically,  the continuous functions $u_{\ell_s^\delta}$ there cannot converge (locally) uniformly in time as $\delta\to 0$ to the discontinuous function $u_{\ell_s}$. Nevertheless, this gap can probably be filled using different arguments.


 To prove Theorem~\ref{th.C5-infty}, we need  two  intermediate results.

 \begin{proposition}\label{pr.Co} 
Consider~\eqref{eq.Lan} driven by a pure jump  Lévy process $(L_t,t\ge 0)$.  
 Assume in addition that either {\rm\textbf{[{\footnotesize B$_{\text{smooth}}$}]}} holds, or that the process $(L_t,t\ge0)$ satisfies {\rm\textbf{[{\footnotesize C$_{\text{LG}}$}]}}.  
  Then, for every open ball $\mathscr B\Subset \mathscr D$ and $t\ge 0$, 
\begin{equation}\label{eq.Map}
\mathsf x \in \mathscr D\mapsto P_{t}^{\mathscr D}  (\mathsf x, \mathscr B) \text{  is lower semicontinuous}. 
\end{equation}
 \end{proposition}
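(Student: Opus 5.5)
We prove \eqref{eq.Map} by writing $P_t^{\mathscr D}(\mathsf x,\mathscr B)$ as the probability of an open set of paths and then applying the portmanteau theorem to the weak continuity in \textbf{[{\footnotesize C$_{\text{LG}}$}]}. Under \textbf{[{\footnotesize B$_{\text{smooth}}$}]}, \eqref{eq.cbn} gives almost-sure locally uniform convergence of $X(\mathsf y)$ to $X(\mathsf x)$, hence in particular weak convergence of the laws in $D([0,+\infty),\mathbb R^{2d})$. So in both cases it suffices to use that $\mathsf x\mapsto \mathbb P_{\mathsf x}[X\in\cdot]$ is weakly continuous. The case $t=0$ is trivial, since $\mathbf 1_{\mathscr B}$ is lower semicontinuous. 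Fix $t>0$.

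The path functional we use is
\[
G:=\{\ell\in D([0,+\infty),\mathbb R^{2d}):\ \ell_s\in \mathscr D \text{ for all } s\in[0,t],\ \ell_t\in \mathscr B\}.
\]
Then $P_t^{\mathscr D}(\mathsf x,\mathscr B)=\mathbb P_{\mathsf x}[X\in G]$. The issue is that $G$ is not Skorokhod-open, for two reasons: the projection $\ell\mapsto \ell_t$ is discontinuous at paths that jump at time $t$, and $\mathscr D$ is unbounded in $v$. The second point is harmless, because $\mathscr D=\mathscr O\times\mathbb R^d$ and the condition only concerns the continuous component $x_s$.

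To handle the first point, we use the structure of the dynamics. Since $dx=v\,dt$, the position path $s\mapsto x_s$ is continuous, so for $\ell$ in the support the condition ``$x_s\in\mathscr O$ for $s\in[0,t]$'' reads $\{x_s: s\in[0,t]\}\subset\mathscr O$. This set is compact, and it stays inside $\mathscr O$ under small Skorokhod perturbations. Indeed, Skorokhod convergence to a path with continuous $x$-component implies locally uniform convergence of the $x$-components.

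For the endpoint, we use that $\mathbb P_{\mathsf x}[\Delta X_t\neq 0]=0$ for fixed $t$, because the Lévy process has no fixed jump times. Then $\ell\mapsto\ell_t$ is continuous at almost every path under the limiting law. More precisely, let $G^\circ$ be the set of paths with $x_{[0,t]}\subset\mathscr O$, $\ell_t\in\mathscr B$, and $\ell$ continuous at $t$. Every such path is an interior point of $G$ relative to the closed set of paths whose $x$-component is continuous. That closed set carries full mass under every $\mathbb P_{\mathsf y}$.

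We then run the portmanteau argument on that closed subspace $E$ of paths with continuous $x$-component. Weak convergence of measures supported on $E$ is equivalent to weak convergence in $E$. Let $\mathrm{int}_E G$ denote the interior of $G$ relative to $E$. Since $\mathbb P_{\mathsf x}(G\setminus G^\circ)=0$ and $G^\circ\subset \mathrm{int}_E G$,
\[
\liminf_{\mathsf y\to\mathsf x}\mathbb P_{\mathsf y}[X\in G]\ \ge\ \liminf_{\mathsf y\to\mathsf x}\mathbb P_{\mathsf y}[X\in \mathrm{int}_E G]\ \ge\ \mathbb P_{\mathsf x}[X\in \mathrm{int}_E G]=\mathbb P_{\mathsf x}[X\in G].
\]
This is lower semicontinuity.

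The main obstacle is checking carefully that paths in $G^\circ$ are interior points. Take $\ell^n\to\ell$ in the Skorokhod metric $\mathbf d_T$ with some $T>t$, with time changes $\lambda_n\to\mathrm{id}$. Continuity of $\ell$ at $t$ gives $\ell^n_t\to\ell_t$. Since $\mathscr B$ is open, $\ell^n_t\in\mathscr B$ for large $n$. Uniform convergence of the continuous $x$-parts on $[0,t]$, together with compactness of $x_{[0,t]}\subset\mathscr O$, gives $x^n_{[0,t]}\subset\mathscr O$ for large $n$. We also need $\mathbb P_{\mathsf x}[\Delta X_t\neq0]=0$ for the weak solution. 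This follows because $\Delta X_t=(0,\Delta L_t)$ and a Lévy process is stochastically continuous.
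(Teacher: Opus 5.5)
Your proof is correct and follows essentially the paper's route. The paper uses the Skorokhod representation theorem to get $X(\mathsf x_n)\to X(\mathsf x)$ a.s. in $D([0,+\infty),\mathbb R^{2d})$ and then applies Fatou's lemma to the indicators. It relies on exactly your two ingredients: uniform convergence of the continuous position components near the compact set $x_{[0,t]}(\mathsf x)\subset\mathscr O$, and $t$ being a.s. a continuity point, which gives $v_t(\mathsf x_n)\to v_t(\mathsf x)$. Your portmanteau inequality for the relatively open set $\mathrm{int}_E G$ in the closed subspace $E$ is the distributional counterpart of that pathwise Fatou argument.
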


\begin{proof}
Let $\mathsf{x}_n=(x_n,v_n) \to \mathsf{x}=(x,v)\in \mathscr D$. Let $t\ge 0$.  
To prove the lower semicontinuity, we have to show that:
\begin{equation}\label{eq.LMm}
\liminf_{n \to \infty} \mathbb{P}\big [t < \sigma_{\mathscr D}(\mathsf x_n), X_t(\mathsf x_n) \in \mathscr B\big ] \ge \mathbb{P}\big [t < \sigma_{\mathscr D}(\mathsf x), X_t(\mathsf x) \in \mathscr B\big ].
\end{equation}
When {\rm\textbf{[{\footnotesize B$_{\text{smooth}}$}]}} holds, 
then, by \eqref{eq.cbn}, $
X(\mathsf x_n) \to X(\mathsf x) \quad \text{a.s. in } D([0,+\infty), \mathbb{R}^{2d})$. 
On the other hand, when  $(L_t,t\ge0)$ satisfies {\rm\textbf{[{\footnotesize C$_{\text{LG}}$}]}},  since the   quantities  appearing in \eqref{eq.LMm} only involve    the laws of the processes separately, using   {\rm\textbf{[{\footnotesize C$_{\text{LG}}$}]}}  and the Skorokhod representation theorem, we may assume  that:
$$
X(\mathsf x_n) \to X(\mathsf x) \quad \text{a.s. in } D([0,+\infty), \mathbb{R}^{2d}).
$$
Moreover,  because  $(x_s(\mathsf x_n),s\ge 0) \to (x_s(\mathsf x),s\ge 0)$ a.s. in $ D([0,+\infty), \mathbb{R}^{d})$, 
and both $(x_s(\mathsf x_n),s\ge 0)$ and $(x_s(\mathsf x),s\ge 0)$ are continuous, it follows that 
\begin{equation}\label{eq.Jm}
x(\mathsf x_n) \to x(\mathsf x) \quad \text{a.s. in } C([0, \infty), \mathbb{R}^{d}).
\end{equation}
Write $\mathscr B=\mathscr B_{\mathbb R^{2d}}(\mathsf x',r)$, for some $r>0$ and  $\mathsf x'=(x',v')\in \mathscr D$. 
Let us prove that a.s. it holds:
\begin{equation}\label{eq.infL}
\liminf_{n \to \infty} \mathbf 1_{t < \sigma_{\mathscr D}(\mathsf x_n), |X_t(\mathsf x_n)-\mathsf x'|<r} \ge \mathbf 1_{t < \sigma_{\mathscr D}(\mathsf x),  |X_t(\mathsf x)-\mathsf x'|<r} 
\end{equation}
Indeed if~\eqref{eq.infL} holds, we conclude the proof of the proposition applying Fatou’s Lemma.  
Let us show~\eqref{eq.infL}. To this end, it suffices to prove that on the event   $\{t < \sigma_{\mathscr D}(\mathsf x), X_t(\mathsf x) \in \mathscr B\}$, there exists a random integer $m$ such that for all $n\ge m$, 
$$\mathbf 1_{t < \sigma_{\mathscr D}(\mathsf x_n), |X_t(\mathsf x_n)-\mathsf x'|<r}=1.$$  
Let us thus work on  the event $\{t < \sigma_{\mathscr D}(\mathsf x), X_t(\mathsf x) \in \mathscr B\}$. Then,  because $x_{[0,t]}(\mathsf x)\subset \mathscr O$, we have by~\eqref{eq.Jm}, that there exists $m\ge 0$ (random) such that  for every $n\ge m$, $x_{[0,t]}(\mathsf x_n)\subset \mathscr O$. Thus for $n\ge m$, 
$$t < \sigma_{\mathscr D}(\mathsf x_n).$$
On the other hand, by  \cite[Proposition 5.2 in Chapter 3]{EK},   for every  $u \in \mathscr C_v:=\{s> 0, \mathbb P_{\mathsf x}[v_s=v_{s_-}]=1\}$, we have a.s. that 
$$v_u(\mathsf x_n)\to v_u(\mathsf x).$$
Since $v_s(\mathsf x)$ is driven by a Lévy process, $\mathscr C_v=\mathbb R_+$. Consequently, we have a.s. $v_t(\mathsf x_n)\to v_t(\mathsf x)$. Because also 
 $x_t(\mathsf x_n)\to x_t(\mathsf x)$, up to increasing $m$, we have for all  $n\ge m$, 
$$|X_t(\mathsf x_n)- \mathsf x'|<r.$$
We have thus proved~\eqref{eq.infL}. The proof of the proposition is complete.  
\end{proof}

 \begin{proposition}\label{pr.C5-infty} 
 Let the assumptions of Theorem~\ref{th.C5-infty} hold. 
 Let $\mathsf x_0,\mathsf x_F\in \mathscr D$. 
Then, for  any  $\boldsymbol\epsilon>0$, there exists $t_{\boldsymbol\epsilon}>0$ such that for all $t\in (0,t_{\boldsymbol\epsilon}]$ and all $\mathsf x \in \{\mathsf x_0\}\cup \bar{\mathscr B}_{\mathbb R^{2d}}(\mathsf x_F,\boldsymbol\epsilon) $,
 $$ P_{t}^{\mathscr D} \big (\mathsf x, \mathscr B_{\mathbb R^{2d}}(\mathsf x_F,\boldsymbol\epsilon)  \big )>0.$$ 
 \end{proposition}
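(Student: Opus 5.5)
The plan is to build, for every small $t$, an explicit event for the driving noise $L$ that has positive probability. On that event every solution started from $\mathsf x\in\{\mathsf x_0\}\cup\bar{\mathscr B}_{\mathbb R^{2d}}(\mathsf x_F,\boldsymbol\epsilon)$ is forced to stay in $\mathscr D$ on $[0,t]$ and to end in $\mathscr B_{\mathbb R^{2d}}(\mathsf x_F,\boldsymbol\epsilon)$. Under {\rm\textbf{[{\footnotesize C$_{\text{LG}}$}]}} we work with a weak solution, so $L$ is the L\'evy process attached to it. Since the bounds below are pathwise, only the law of $L$ enters the probability of the event, and weak uniqueness makes this legitimate. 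We first fix the geometry. Write $\mathsf x_0=(x_0,v_0)$ and $\mathsf x_F=(x_F,v_F)$. Since $\mathscr O$ is connected, there is a polygonal line $x_0=y_0,y_1,\dots,y_k=x_F$ with segments $[y_{i-1},y_i]\subset\mathscr O$. We take its tube of radius $2\rho$ inside $\mathscr O$, with $\rho>0$ small, and also require $\rho<\boldsymbol\epsilon/4$ and $\mathscr B_{\mathbb R^d}(x_F,2\rho)\subset \mathscr O$. For starting points in $\bar{\mathscr B}_{\mathbb R^{2d}}(\mathsf x_F,\boldsymbol\epsilon)$ we use the one-segment path from $x$ to a point $x_F'$ with $|x_F'-x_F|$ small. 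Here we may need to shrink to $\bar{\mathscr B}$ of radius slightly less than $\boldsymbol\epsilon$, or first jump the velocity so that the position drifts inward. The uniformity then comes from compactness of this closed ball together with the fact that it lies in $\mathscr D$ (shrinking $\boldsymbol\epsilon$ if needed, which is harmless).

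Next we decompose the noise. Fix $\eta\in(0,1)$ and write $L_s=M^\eta_s+J^\eta_s-s\,b_\eta$. Here $M^\eta$ is the compensated martingale of jumps with $|z|<\eta$, $J^\eta$ is the compound Poisson process of jumps with $|z|\ge\eta$, and $b_\eta=\int_{\eta\le|z|\le1}z\,\nu(dz)$, which is bounded uniformly in $\eta$ by \eqref{eq.supp2}. Divide $[0,t]$ into $k+1$ intervals of length $t/(k+1)$. The event $E_t$ requires four things:
\begin{enumerate}
\item[(i)] $\sup_{s\le t}|M^\eta_s|\le\rho'$, which has positive probability since $M^\eta$ is a càdlàg martingale starting at $0$ (by Doob's inequality it is even close to $1$ for $\eta$ small);
\item[(ii)] $J^\eta$ has exactly one jump in each short window $[\frac{it}{k+1},\frac{it}{k+1}+\tau]$ for $i=0,\dots,k$, and no other jump on $[0,t]$, where $\tau\ll t$;
\item[(iii)] the $i$-th jump lies in a small ball around the deterministic vector $w_i$, with $w_0=(k+1)(y_1-y_0)/t-v$, $w_i=(k+1)(y_{i+1}-y_i-y_i+y_{i-1})/t$ for $i=1,\dots,k-1$ (all velocity targets are $(k+1)(y_{i+1}-y_i)/t$), and the last jump, placed near time $t$, bringing the velocity back to about $v_F$;
\item[(iv)] independence of $M^\eta$ and $J^\eta$.
\end{enumerate}
Condition (iv) together with \eqref{eq.supp} gives $\mathbb P(E_t)>0$, because every ball not containing $0$ has positive $\nu$-mass and jump times and marks are independent. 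The window size in (iii) must be chosen uniformly for $\mathsf x$ in the compact set. We take balls of radius $\kappa/t$ around targets of size $O(1/t)$, and $\nu$-masses of balls around a compact family of nonzero points are bounded below. The dependence of $w_0$ on $v$ is handled by a finite cover of the ball, or better by letting the first jump target depend continuously on $v$ and using lower semicontinuity of $z\mapsto\nu(\mathscr B(z,r))$.

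The core step, and the main obstacle, is the deterministic trajectory analysis on $E_t$ as $t\to0$. The velocities are of order $V\sim|y_{i+1}-y_i|(k+1)/t$, and as long as $x_s$ stays in the bounded tube we have $|\mathbf B(X_s)|\le C(1+|v_s|)$. Under {\rm\textbf{[{\footnotesize B$_{\text{LG}}$}]}} this holds with $C$ depending on the tube; under {\rm\textbf{[{\footnotesize B$_{\text{smooth}}$}]}} it holds directly. A Grönwall argument as in \eqref{eq.Gr} then shows that the velocity deviation from the piecewise-constant target is at most $C\,t\,V+\rho'+t|b_\eta|+\kappa/t\cdot e^{Ct}$. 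The position deviation is at most $t\cdot$ that quantity, plus $\tau V$ coming from the jump windows. With $\kappa,\tau/t,\rho'$ small and $t\to 0$, the position deviation from the polygonal path is $O(t+\kappa+\tau/t)$, which is below $\rho$. We run a continuity or bootstrap argument on the exit time of the tube, so that the drift bound is legitimately used and $t<\sigma_{\mathscr D}$. After the final jump near time $t$, only a time $\tau$ remains, so the position moves by $O(\tau V)$ and the velocity stays within $\boldsymbol\epsilon/2$ of $v_F$. Hence $X_t\in\mathscr B_{\mathbb R^{2d}}(\mathsf x_F,\boldsymbol\epsilon)$ on $E_t$.

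All constants are uniform in $\mathsf x$ in the given set and in $t\le t_{\boldsymbol\epsilon}$, which yields the claim. The delicate bookkeeping is to choose $\eta,\rho',\kappa,\tau$ in the right order, first geometric, then $\eta$ and $\rho'$, then $t$, then $\tau$ and $\kappa$ proportional to $t$, so that every error is dominated while $\mathbb P(E_t)$ remains positive.
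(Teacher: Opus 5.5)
There is a genuine gap in the last step, where you assert that after the final jump ``the velocity stays within $\boldsymbol\epsilon/2$ of $v_F$''. Your own Grönwall bound contains the term $C\,t\,V$. Since $V\sim(k+1)|y_{i+1}-y_i|/t$, this term is of order $C$ times the length of the polygonal path. It does \emph{not} tend to $0$ with $t$, and none of $\eta,\rho',\kappa,\tau$ controls it. Multiplied by the elapsed time it is harmless for the position, which is why your tube argument is fine. It enters the terminal velocity at full size, however. All your jump targets, including the last one, are deterministic, so the velocity just before the last jump is off from its nominal value by this $O(1)$ amount, and so is $v_t$. Already for the smooth drift $\mathbf B(x,v)=-\gamma v$, one computes on your event that $v_t\approx v_F-\gamma(x_F-x_0)$. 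This misses $\mathscr B_{\mathbb R^d}(v_F,\boldsymbol\epsilon/2)$ as soon as $\gamma|x_F-x_0|>\boldsymbol\epsilon$. For a merely measurable $\mathbf B$ (e.g.\ a friction that is discontinuous in $v$ across the nominal velocity), the offset is moreover random, so no deterministic pre-compensation of the last target can work.

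The missing idea is to aim the last jump at the \emph{state-dependent} vector $v_F-v_{s_*}$, where $(s_*,t)$ is the final short window of length $\tau$. Positivity then has to be proved by conditioning, not by independence of the marks. The increments of $L$ after $s_*$ are independent of $\mathcal F_{s_*}$, and the big and small jumps after $s_*$ are independent of each other. Hence, given $\mathcal F_{s_*}$ and the post-$s_*$ small-jump increments, the conditional probability of ``exactly one jump of $J^\eta$ in $(s_*,t)$, lying in $\mathscr B_{\mathbb R^d}(v_F-y,r)$'' is $\Psi(v_{s_*})$. Here $\Psi(y):=\tau e^{-\lambda\tau}\,\nu\big(\mathscr B_{\mathbb R^d}(v_F-y,r)\cap\{|z|\ge\eta\}\big)$ and $\lambda=\nu(\{|z|\ge\eta\})$. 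By \eqref{eq.supp}, $\Psi(y)>0$ for \emph{every} $y$ once $\eta<r$. So the probability of the whole event is $\mathbb E[\mathbf 1_{\text{earlier events}}\,\Psi(v_{s_*})]>0$. Because $\tau\ll t$, the velocity just before the jump stays close to $v_{s_*}$, and $v_t$ indeed lands near $v_F$. This is exactly how the paper proceeds; see the remark following \eqref{eq.rstar1}.

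With this modification the rest of your plan works. Your intermediate targets may stay deterministic, since their $O(1)$ velocity error costs only $O(t)$ in position. Your bootstrap on the tube exit time is a valid substitute for the paper's localization of $\mathbf B$ via the stopped martingale problem. The uniform lower bounds on $\nu$-masses you discuss are unnecessary, since the proposition only needs positivity with $t_{\boldsymbol\epsilon}$ uniform in $\mathsf x$.
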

 \noindent
\textbf{Note.} From the proof of Proposition~\ref{pr.C5-infty}, one easily shows that if  $\mathbb R^{d}\setminus \bar{\mathscr O}$ is nonempty, then for all $\mathsf x\in \mathscr D$, $\mathbb P_{\mathsf x}[\sigma_{\mathscr D}<+\infty]>0$.  
\medskip

Let us emphasize, even if we   did not make it explicit it in the notation, that the   time $t_{\boldsymbol\epsilon}>0$ also depends on $\mathsf x_0,\mathsf x_F\in \mathscr D$.
 With this result in hand, we now prove Theorem~\ref{th.C5-infty}.

 \begin{proof}(Theorem~\ref{th.C5-infty})
 Let $\mathsf x_0,\mathsf x_F\in \mathscr D$ and $\boldsymbol\epsilon>0$. There is no loss of generality in assuming  that   $\boldsymbol\epsilon>0$ is such that $\mathscr B_{\mathbb R^{2d}}(\mathsf x_F,\boldsymbol\epsilon)\Subset  \mathscr D$.
 Let $K_{\boldsymbol\epsilon}:= \bar{\mathscr B}_{\mathbb R^{2d}}(\mathsf x_F,\boldsymbol\epsilon)\cup\{\mathsf x_0\}$. 
 Let $t=t_{\boldsymbol\epsilon}+s$, $s\in (0,t_{\boldsymbol\epsilon}]$, where $t_{\boldsymbol\epsilon}>0$ is as in Proposition~\ref{pr.C5-infty}.  By the Markov property of the process~\eqref{eq.Lan}, we have that 
\begin{align*}
 P_{t}^{\mathscr D} \big (\mathsf x_0, \mathscr B_{\mathbb R^{2d}}(\mathsf x_F,\boldsymbol\epsilon)  \big )&= \int_{\mathscr D}  P_s^{\mathscr D}(\mathsf y, \mathscr B_{\mathbb R^{2d}}(\mathsf x_F,\boldsymbol\epsilon) )P_{t_{\boldsymbol\epsilon}}^{\mathscr D}(\mathsf x_0, d\mathsf y) \\
 &\ge P_{t_{\boldsymbol\epsilon}}^{\mathscr D} \big (\mathsf x_0, \mathscr B_{\mathbb R^{2d}}(\mathsf x_F,\boldsymbol\epsilon  )\big )\\
 &\quad \times \inf_{\mathsf y_0\in \bar{\mathscr B}_{\mathbb R^{2d}}(\mathsf x_F,\boldsymbol\epsilon  )} P_{s}^{\mathscr D} \big (\mathsf y_0, \mathscr B_{\mathbb R^{2d}}(\mathsf x_F,\boldsymbol\epsilon)\big ).
 \end{align*}
Using  Proposition~\ref{pr.C5-infty} and because   $\mathsf x_0,\mathsf x_F\in K_{\boldsymbol\epsilon}$, 
$$P_{t_{\boldsymbol\epsilon}}^{\mathscr D} \big (\mathsf x_0, \mathscr B_{\mathbb R^{2d}}(\mathsf x_F,\boldsymbol\epsilon )\big )>0.$$
 On the other hand,   by  Proposition~\ref{pr.C5-infty} and since $s\in (0,t_{\boldsymbol\epsilon}]$, we have  for all $\mathsf y_0\in \bar{\mathscr B}_{\mathbb R^{2d}}(\mathsf x_F,  \boldsymbol\epsilon  )\subset K_{\boldsymbol\epsilon}$, 
 $$  P_{s}^{\mathscr D}  (\mathsf y_0, \mathscr B_{\mathbb R^{2d}}(\mathsf x_F,\boldsymbol\epsilon)  ) >0.$$ 
 Moreover, since   $\mathsf x\mapsto  P_{s}^{\mathscr D}  (\mathsf x, \mathscr B_{\mathbb R^{2d}}(\mathsf x_F,\boldsymbol\epsilon))$  is lower semicontinuous (see Proposition~\ref{pr.Co}), we obtain that
 $$\inf_{\mathsf y_0\in \bar{\mathscr B}_{\mathbb R^{2d}}(\mathsf x_F,\boldsymbol\epsilon  )} P_{s}^{\mathscr D} \big (\mathsf y_0, \mathscr B_{\mathbb R^{2d}}(\mathsf x_F,\boldsymbol\epsilon)\big )>0,$$ 
 and therefore $P_{t}^{\mathscr D}   (\mathsf x_0, \mathscr B_{\mathbb R^{2d}}(\mathsf x_F,\boldsymbol\epsilon)) >0$. We can iterate this argument to obtain that for every $t>0$,  
 $$P_{t}^{\mathscr D}   (\mathsf x_0, \mathscr B_{\mathbb R^{2d}}(\mathsf x_F,\boldsymbol\epsilon)) >0.$$ This concludes the proof of Theorem~\ref{th.C5-infty}. 
 \end{proof}

We now prove  Proposition  \ref{pr.C5-infty}.

 \begin{proof}(Proposition~\ref{pr.C5-infty})   
The proof is organized as follows. First,  we decompose the Lévy process into its small-jump and big-jump components by introducing a threshold parameter $\delta > 0$.  The second step is concerned with  some  preliminary analysis. Given $\boldsymbol\epsilon>0$, $\mathsf x_0,\mathsf x_F\in \mathscr D$, the subsequent objective is to construct specific events (having nonzero probabilities) on which the process - starting from any initial condition $\mathsf{x} \in \{\mathsf{x}_0\} \cup \bar{\mathscr{B}}_{\mathbb{R}^{2d}}(\mathsf{x}_F, \boldsymbol{\epsilon})$ - reaches the target ball $\mathscr{B}_{\mathbb{R}^{2d}}(\mathsf{x}_F, \boldsymbol{\epsilon})$ within a time $t \in (0, t_{\boldsymbol{\epsilon}}]$ without exiting the domain $\mathscr{D}$.  The construction is divided into two distinct cases based on whether  $\mathsf x \in \bar{\mathscr B}_{\mathbb R^{2d}}(\mathsf x_F,\boldsymbol\epsilon)$ or $\mathsf x=\mathsf x_0$. These constructions are detailed in the  final  two steps of the proof. 
\medskip

\noindent
\textbf{Step 1}. Decomposition of   the Lévy process. For every  threshold $\delta\in (0,1]$
and $t\ge 0$, we decompose $L_t$ as
$
L_t = L_t^{+}(\delta) + L_t^{-}(\delta) - \mathbf c_\delta \, t
 $
where  
\[
L_t^{-}(\delta) =  \int_0^t \int_{|z| \le \delta} z \tilde{N}(ds, dz), \ 
L_t^{+}(\delta) =\int_0^t \int_{|z| > \delta} z N(ds, dz) 
$$
and 
$$ \mathbf c_\delta  :=  \int_{\delta < |z| \le 1} z\,  \nu(dz).
\]
The parameter $\delta\in (0,1]$ will be chosen and fixed in the subsequent steps, solely to guarantee that the constructed events have nonzero probability.

The Lévy processes $L^{-}(\delta)$   and $L^{+}(\delta)$ are  independent, and have Lévy measures given respectively by  $\nu(\cdot \cap \{|z|\le\delta\})$ and $\nu(\cdot \cap \{|z|>\delta\})$. 
Here \(N \) is the Poisson random measure associated with $(L_t,t\ge 0)$, and \(\tilde N \) is its compensated version. The process \(L^{-}(\delta)\) is a càdlàg martingale with finite second moment for all $t\ge 0$. 
By Doob's   maximal inequality for martingales, we  have for $\beta >0$, $\delta>0$,  and $t\ge 0$, 
$$
\mathbb P\Big[ 
\sup_{s\in[0,t]} |L_s^{-}(\delta)| > \beta 
\Big ]
\le \frac{\mathbb E\bigl[|L_t^{-}(\delta)|^2\bigr]}{|\beta |^2}
=
\frac{t}{|\beta |^2}
\int_{|z|\le\delta} |z|^2\,\nu (dz),
$$
where the last equality follows from  the Itô isometry for Poisson measures. 
Hence, for every $\beta >0$,  there exists  $\delta_{\beta }(1)>0$  such that  for all $0< \delta \le \delta_{\beta }(1)$ and $t\in [0,1]$,  the event
\begin{equation}\label{eq.e-}
\mathscr E^{-}(\delta,t,\beta ):=\Big  \{ \sup_{0\le s\le t} |L_s^{-}(\delta)|\le \beta \Big \} \text{ has positive probability}.
\end{equation}

\noindent
\textbf{Step 2}. Preliminary analysis.  To  only use Grönwall's inequality~\cite[Section 5 in Appendices]{EK} on the velocity component $(v_s,s\ge 0)$,  it is worth noting that we may assume without loss of generality (up to a localization argument in the position variable which is detailed just below) that 
\begin{equation}\label{eq.Hgp}
\exists C_{\mathbf B}>0, \forall (x,v)\in \mathbb R^{2d}, \ |\mathbf B(x,v)|\le C_{\mathbf B}[1+|v|].
\end{equation}
Indeed, let $\mathsf x_0,\mathsf x_F\in \mathscr D$. Consider a bounded subdomain $\mathscr O^*$ of $\mathscr O$ containing 
$ x_0$ and $ x_F$, with $\bar{\mathscr O}^*\subset \mathscr O$.
Set for $\mathsf x=(x,v)\in  \mathbb R^{2d}$,
$$\mathbf B^*(\mathsf x)=\mathbf 1_{x\in \bar{\mathscr O}^*}\mathbf B(\mathsf x).$$
Note that $\mathbf B^*$ satisfies \eqref{eq.Hgp} since $\mathbf B$ satisfies \textbf{[{\footnotesize B$_{\text{LG}}$}]} and since $\mathscr O^*$ is bounded. 
Set $\mathscr D^*=\mathscr O^*\times \mathbb R^d$.
Then, consider the unique weak solution $(X^*_t=(x^*_t,v^*_t),t\ge 0)\in \mathbb R^d \times \mathbb R^d $     to the kinetic Langevin equation (recall we assumed \textbf{[{\footnotesize W$_{\text{LG}}$}]}): 
$$
 dx^*_t=v^*_tdt, \ dv^*_t=\mathbf B^*(x^*_t,v^*_t)dt   +   dL_t.
$$
Set $\sigma_{\mathscr D^*}^*:=\inf\{t\ge 0: x_t^*\notin \mathscr O^*\}$. 
Since $\mathbf B^*= \mathbf B$ on  $\bar{\mathscr D}^*$, by Itô's formula~\cite[Theorem 4.4.7]{applebaum2009levy},  for $\mathsf x\in \mathbb R^{2d}$ with $x\in \mathscr O^*$, 
  $X_{\cdot \wedge \sigma_{\mathscr D^*}}(\mathsf x)$ and $X^*_{\cdot\wedge \sigma_{\mathscr D^*}^*}(\mathsf x)$ solve the stopped martingale problem  $(\mathcal L, \mathsf x,  \mathscr D^*)$, where $\mathcal L$ is the infinitesimal generator of \eqref{eq.Lan}. This stopped martingale problem   is well-posed  by \textbf{[{\footnotesize W$_{\text{LG}}$}]}, Proposition~\ref{pr.Pre} and~\cite[Theorem 6.1 in Chapter 4]{EK}. Hence,  we have for all $\mathsf x=(x,v)$ with  $x\in \mathscr O^*$,
$$
(X_{t\wedge \sigma_{\mathscr D^*}}(\mathsf x),t\ge 0)\overset{\text{law}}{=}(X^*_{t\wedge \sigma_{\mathscr D^*}^*}(\mathsf x),t\ge 0).
$$
As a result
and because $\sigma_{\mathscr D^*}\le \sigma_{\mathscr D}$, 
 for every $\boldsymbol\epsilon>0$, we have, for all $\mathsf x=(x,v)$ with  $x\in \mathscr O^*$,
\begin{align*}
P_{t}^{\mathscr D} \big (\mathsf x, \mathscr B_{\mathbb R^{2d}}(\mathsf x_F,\boldsymbol\epsilon)  \big )&\ge P_{t}^{\mathscr D^*} \big (\mathsf x, \mathscr B_{\mathbb R^{2d}}(\mathsf x_F,\boldsymbol\epsilon)  \big ) =\mathbb P_{\mathsf x}\big [t<\sigma^*_{\mathscr D^*}, X_t^*\in \mathscr B_{\mathbb R^{2d}}(\mathsf x_F,\boldsymbol\epsilon)\big ]. 
\end{align*}
Thus, up to considering  $X^*$ instead of $X$ and working over $\mathscr D^*$ instead of $\mathscr D$, we may assume, without loss of generality, that \eqref{eq.Hgp} is satisfied, and we will do so throughout the rest of the proof. 

We further incorporate the deterministic drift correction $ \mathbf c_\delta $ into the term $\mathbf{B}$ by setting:
$$\mathbf{B}_0:= \mathbf{B} - \mathbf c_\delta.$$
By virtue of \eqref{eq.supp2},  the upper bound \eqref{eq.Hgp} is still satisfied with a constant, still denoted by  $C_{\mathbf B}>0$, which is   independent of the threshold $\delta \in (0,1]$.  With a slight abuse  of notation we still denote $\mathbf B_0$ by $\mathbf B$. 

 A direct consequence of  \eqref{eq.Hgp} is as follows. Let  $[a,b]\subset \mathbb R_+$ and $\mathsf x\in \mathscr D$.   
For $s\in [a,b]$
\begin{equation}\label{eq.Klp}
v_s(\mathsf x )= v_{a}(\mathsf x) + \int_{a}^s \mathbf B(X_u(\mathsf x))du+ 
 L^{-}_s(\delta)-L^{-}_{a}(\delta)+  L^{+}_s(\delta)-L^{+}_{a}(\delta).
 \end{equation}
Then, by Grönwall's inequality and \eqref{eq.Hgp}, we have for all  $s\in [a,b]\subset [0,t]$ and on $\mathscr E^{-}(\delta,t,\beta)$ (see \eqref{eq.e-}),  
$$|v_s(\mathsf x) | \le \Big[ |v_{a}(\mathsf x) | + C_{\mathbf B}(b-a) + 2\beta + \sup_{u \in [a, b]} |L^{+}_u(\delta)-L^{+}_{a}(\delta)|  \Big] e^{C_{\mathbf B}(b-a)},$$
where $C_{\mathbf B}>0$ is the constant from \eqref{eq.Hgp}.
Hence, coming back to \eqref{eq.Klp} and using again \eqref{eq.Hgp},  one gets  that for $s\in [a,b]$:
$$
 v_s(\mathsf x)=v_{a}(\mathsf x) +  L^{+}_s(\delta)-L^{+}_{a}(\delta) + \mathbf e (s),
 $$
 where 
 \begin{align}
 \nonumber
  |\mathbf e(s)| & \le C_{\mathbf B}(b-a)+ C_{\mathbf B}(b-a) \, e^{C_{\mathbf B}(b-a)} \Big[ |v_{a}(\mathsf x) | + C_{\mathbf B}(b-a) + 2\beta \\
  \label{eq.E}
  &\quad + \sup_{u \in [a, b]} |L^{+}_u(\delta)-L^{+}_{a}(\delta)|  \Big] +  2\beta. 
 \end{align} 
  Let $\mathsf x_0=(x_0,v_0),\mathsf x_F=(x_F,v_F)\in \mathscr D$.
 Consider a continuous curve $\gamma_0: [0,1]\to \mathscr O$ with $\gamma_0(0)=x_0$ and  $\gamma_0(1)=x_F$. 
 Let $\boldsymbol\epsilon_0>0$ such that  
  \begin{equation}\label{eq.h1}
 \forall \boldsymbol\epsilon\in (0,\boldsymbol\epsilon_0]: \    \bar{\mathscr B}_{\mathbb R^{d}}( x_0, 4\boldsymbol\epsilon) \subset {\mathscr O} \text{ and } \text{Ran}_{4\boldsymbol\epsilon}(\gamma_0) \subset \mathscr O, 
  \end{equation}
  where $\text{Ran}_{4\boldsymbol\epsilon}(\gamma_0)$ is the $4\boldsymbol\epsilon$-closed neighborhood of the range $\text{Ran}(\gamma_0)$ of $\gamma_0$ (see Figure~\ref{fig.R}). 
  Let $\boldsymbol\epsilon>0$.   In what follows we assume that  $\boldsymbol\epsilon\in (0,\boldsymbol\epsilon_0]$. We can always only consider such $\boldsymbol\epsilon>0$ since if $\boldsymbol\epsilon >\boldsymbol\epsilon_0$, we write
  $$ P_{t}^{\mathscr D} \big (\mathsf x_0, \mathscr B_{\mathbb R^{2d}}(\mathsf x_F,\boldsymbol\epsilon)  \big )\ge P_{t}^{\mathscr D} \big (\mathsf x_0, \mathscr B_{\mathbb R^{2d}}(\mathsf x_F,\boldsymbol\epsilon_0 ) \big)  .$$
To prove Proposition~\ref{pr.C5-infty}, the goal is to show that there   exists $t_{\boldsymbol\epsilon}>0$ such that for all $t\in (0,t_{\boldsymbol\epsilon}]$ and all   $\mathsf x \in \{\mathsf x_0\}\cup \bar{\mathscr B}_{\mathbb R^{2d}}(\mathsf x_F,\boldsymbol\epsilon) $, it holds:
\begin{equation}\label{eq.P>}
 \mathbb P\Big[ |X_t(\mathsf x)- \mathsf x_F|\le \boldsymbol\epsilon \text{ and } x_s(\mathsf x)\in \mathscr O \text{ for all $s\in [0,t]$}     \Big]>0.
\end{equation}
In  what follows  $c_{\boldsymbol\epsilon_0}>0$ is a constant such that 
\begin{equation}\label{eq.COF}
\text{$\forall \mathsf x  \in  \bar{\mathscr B}_{\mathbb R^{2d}}(\mathsf x_F, 4\boldsymbol\epsilon_0) $ and $\forall y\in  \text{Ran}_{4\boldsymbol\epsilon_0 }(\gamma_0) $, $|\mathsf x| + |y|\le c_{\boldsymbol\epsilon_0}$}.
\end{equation}
 We now turn to the proof of \eqref{eq.P>}. 
  \medskip 

\noindent
\textbf{Step 3 (Case A)}. Proof of \eqref{eq.P>}  when $\mathsf x=(x,v)\in \mathscr B_{\mathbb R^{2d}}(\mathsf x_F,\boldsymbol\epsilon) $.
\medskip

\noindent
When  $|\mathsf x-\mathsf x_F|\le \boldsymbol\epsilon$,  it holds $| x- x_F|\le \boldsymbol\epsilon$  and thus (see \eqref{eq.h1}):
\begin{equation}\label{eq.seg}
[x,x_F]\subset \mathscr B(x_F, \boldsymbol\epsilon) \Subset \mathscr B(x_F,2 \boldsymbol\epsilon) \Subset \mathscr O.
\end{equation}
Here $[x,x_F]=\{ux_F+ (1-u)x, u\in [0,1]\}$ is the  line segment joining $x$ to $x_F$.  
 Consider  $t_0:=0<t_1<t_2<t$. Set  $\pmb{\Delta}_0=t_1$, $\pmb{\Delta}_1=t_2-t_1$, and $\pmb{\Delta}_2=t-t_2$.  
 For  $x,y\in \mathbb R^d$, we define the velocity  to go from $x$ to $y$ during the interval  time $\pmb{\Delta}$ by:
  \begin{equation}\label{eq.lmp}
  v_{x\to y}(\pmb{\Delta})= \frac{y-x}{\pmb{\Delta}}.
 \end{equation} 
For ${\mathsf x}\in \mathbb R^{2d}$, define the following events:
   \begin{align*}
   \mathscr E^{+}_{\text{jump}}(\delta, t_0,t_1,\beta)&=  \Big \{\big (L^{+}_s(\delta),s\in (0, t_1)\big ) \text{ has exactly one jump}\\
&\quad \text{ and this jump lies in } \mathscr B_{\mathbb R^d}(v_{x\to x_F}(\pmb{\Delta}_1)-v,\beta) \Big \} ,
   \end{align*}
 $$
 \mathscr E^{+}_{\text{no jump}}(\delta,t_1,t_2) := \Big \{\big (L^{+}_s(\delta)-L^{+}_{t_1}(\delta),  s\in [t_1,t_2]\big )\text{ has no jumps}\Big \},$$
and 
\begin{align*}
\mathscr E^{+}_{\text{jump}}(\delta, t_2,t,\beta) &=\Big  \{\big (L^{+}_s(\delta)-L^{+}_{t_2}(\delta),  s\in (t_2,t)\big ) \text{ has exactly one jump}\\
&\quad \text{ and this jump lies in }  \mathscr B_{\mathbb R^d}(v_F-v_{t_2}(\mathsf x),\beta) \Big \}.
\end{align*} 
Note that the center of the ball in the last event   is random. 
Finally, set 
  \begin{align*}
 \mathscr E^{+}_{\textbf{A}} (\mathsf x,\delta,t_1,t_2,t,\beta) &:=    \mathscr E^{+}_{\text{jump}}(\delta, t_0,t_1,\beta)     \pmb{\cap}   \mathscr E^{+}_{\text{no jump}}(\delta,t_1,t_2)   \pmb{\cap}\mathscr E^{+}_{\text{jump}}(\delta, t_2,t,\beta).
\end{align*}  
We now claim that for  every $0<t_1<t_2<t\le 1$ and  all  $0<\delta \le \frac \beta2\wedge 1$, it holds:
$$\mathbb P\Big [\mathscr E^{+}_{\textbf{A}} (\mathsf x,\delta,t_1,t_2,t,\beta)\Big]>0.$$  
We now prove this claim.   Using the independence between the increments of the Lévy process $(L^{+}_s(\delta),s\ge 0)$, it holds:
$$\mathbb P\big [\mathscr E^{+}_{\textbf{A}} (\mathsf x,\delta,t_1,t_2,t,\beta)\big ] =    \mathbb{E} \left[ \mathbf{1}_{  \mathscr E^{+}_{\text{jump}}(\delta, t_0,t_1,\beta)    \pmb{\cap}   \mathscr E^{+}_{\text{no jump}}(\delta,t_1,t_2)} \,  \Psi (v_{t_2}(\mathsf x)) \right],$$
where $\Psi (y)=\mathbb P\big [\mathscr E^{+}_{\text{jump}}(\delta, t_2,t,\beta) \, |\,  v_{t_2} (\mathsf x)= y\big ]$, i.e. 
$$\Psi (y) = \underbrace{  \lambda_\delta \, \pmb{\Delta}_2 e^{-\lambda_\delta  \pmb{\Delta}_2} }_{\text{Prob. of exactly one jump}} \ \underbrace{\frac{\nu \left( \mathscr B_{\mathbb R^d}(v_F - y, \beta) \cap \{ |z| > \delta \} \right)}{\lambda_\delta }}_{\text{Prob. jump size is in the ball}}, $$
where $\lambda_\delta := \nu(\{|z| > \delta\})$. 
Because $0<\delta\le \frac \beta2\wedge 1$,    we have for every $y\in \mathbb R^d$, 
$$\nu(\mathscr B_{\mathbb R^d}(y,\beta)\cap \{|z|>\delta\})>0.$$ 
Hence, $\Psi (y)>0$ for every possible velocity state $y\in \mathbb R^d$.  Moreover, it holds because $0<\delta\le \frac \beta2\wedge 1$:
\begin{align*}
\mathbb P\Big [\mathscr E^{+}_{\text{jump}}(\delta, t_0,t_1,\beta)    \pmb{\cap}   \mathscr E^{+}_{\text{no jump}}(\delta,t_1,t_2)\Big]&=\mathbb P\Big [\mathscr E^{+}_{\text{jump}}(\delta, t_0,t_1,\beta)     \Big]\\
&\quad \times \mathbb P\Big [   \mathscr E^{+}_{\text{no jump}}(\delta,t_1,t_2)\Big]>0,
\end{align*}
where we have used that  
$$\mathbb P\Big [\mathscr E^{+}_{\text{jump}}(\delta, t_0,t_1,\beta)     \Big] 
 =   \pmb{\Delta}_0 e^{-\lambda_\delta\pmb{\Delta}_0 }\  \nu \big( \mathscr B_{\mathbb R^d}(v_{x\to x_F}(\pmb{\Delta}_1)-v, \beta)  {\cap} \{ |z| > \delta \} \big) 
$$
and
$$\mathbb P \Big [ \mathscr E^{+}_{\text{no jump}}(\delta,t_1,t_2)    \Big]  = e^{-\lambda_\delta \pmb{\Delta}_1}.$$
 As a result, 
$$\mathbb P[\mathscr E^{+}_{\textbf{A}} (\mathsf x,\delta,t_1,t_2,t,\beta)]  =    \int_{ \mathscr E^{+}_{\text{jump}}(\delta, t_0,t_1,\beta)    \pmb{\cap}   \mathscr E^{+}_{\text{no jump}}(\delta,t_1,t_2) } \Psi(v_{t_2}(\mathsf x)(\omega)) \,  \mathbb{P}(d\omega)>0,$$
 proving the claim. 

 Hence,   for every $0<t_1<t_2<t\le 1$ and if $\delta\in (0,\delta_{\beta}(1)\wedge \frac \beta2\wedge 1]$ (see \eqref{eq.e-}),  
we have since $(L^{-}_s(\delta),s\ge 0)$ and $(L^{+}_s(\delta),s\ge 0)$ are independent:
 \begin{equation}\label{eq.Proba}
 \mathbb P\Big [\mathscr E_{\textbf{A}} (\mathsf x,\delta,t_1,t_2,t,\beta) \Big ]>0,
   \end{equation}
   where  
   $$\mathscr E_{\textbf{A}} (\mathsf x,\delta,t_1,t_2,t,\beta) :=\mathscr E^{-} (\delta,t,\beta)\cap \mathscr E^{+}_{\textbf{A}} (\mathsf x,\delta,t_1,t_2,t,\beta).$$
We work on the event  $\mathscr E_{\textbf{A}} (\mathsf x,\delta,t_1,t_2,t,\beta)$ and consider the process~\eqref{eq.Lan} starting from $\mathsf x=(x,v)\in \mathscr B_{\mathbb R^{2d}}(\mathsf x_F,\boldsymbol\epsilon) $. 
   Recall we want to prove~\eqref{eq.P>}. To this end, we analyse the trajectories  of the process \eqref{eq.Lan}  on the event   $\mathscr E_{\textbf{A}} (\mathsf x,\delta,t_1,t_2,t,\beta)$ when $\mathsf x=(x,v)\in \mathscr B_{\mathbb R^{2d}}(\mathsf x_F,\boldsymbol\epsilon) $. 
   \medskip

  \noindent
\underline{On $[0,t_1]$}.
The process $(L^{+}_s(\delta),s\in (0,t_1))$ has only one jump, and this jump is in the $\beta$ neighborhood of $v_{x \to x_F}(\pmb{\Delta}_1) - v$. 
Hence, for all $s\in [0,t_1]$:   
 \begin{align}
  \label{eq.tL0}
  |L^{+}_s(\delta)|\le |v_{x\to x_F}(\pmb{\Delta}_1)-v|+\beta\le \frac{2c_{\boldsymbol\epsilon_0}}{\pmb{\Delta}_1}+ c_{\boldsymbol\epsilon_0}+ \beta.
   \end{align}
By \eqref{eq.E}, it then holds for $s\in [0,t_1]$:
 \begin{align}
 \nonumber
  \big |v_s(\mathsf x)- v  -  L^{+}_s(\delta) \big  |   &\le \mathbf r^{(0)}_\star:= C_{\mathbf B}\pmb{\Delta}_0+ C_{\mathbf B}  \, e^{C_{\mathbf B}\pmb{\Delta}_0} \Big[ \pmb{\Delta}_0c_{\boldsymbol\epsilon_0}  +  C_{\mathbf B}\pmb{\Delta}_0^2 + 2\beta \pmb{\Delta}_0 \\
  \label{eq.r0star}
  &\quad \quad \quad \quad +  2c_{\boldsymbol\epsilon_0}\frac{  \pmb{\Delta}_0}{\pmb{\Delta}_1}+ c_{\boldsymbol\epsilon_0} \pmb{\Delta}_0+  \pmb{\Delta}_0\beta \Big] +  2\beta, 
 \end{align}
 and then by the triangle inequality, \eqref{eq.r0star} and \eqref{eq.tL0}: 
  \begin{align*}
 \big |v_s(\mathsf x)|&\le   c_{\boldsymbol\epsilon_0}+\frac{2c_{\boldsymbol\epsilon_0}}{\pmb{\Delta}_1}+ c_{\boldsymbol\epsilon_0}+ \beta+ \mathbf r^{(0)}_\star.
 \end{align*}
This implies that for   $s\in [0,t_1]$:
 $$|x_s(\mathsf x)-x|\le\mathbf R^{(0)}_\star:=  2c_{\boldsymbol\epsilon_0}\frac{  \pmb{\Delta}_0}{\pmb{\Delta}_1}+ \pmb{\Delta}_0[2c_{\boldsymbol\epsilon_0}+ \beta+   \mathbf r^{(0)}_\star].$$
  Moreover, because  $L^{+}_{t_1}(\delta)= v_{x\to x_F}(\pmb{\Delta}_1)-v + z$ with $|z|\le \beta$, we get from  \eqref{eq.r0star}, 
 \begin{align}
  \label{eq.vt1}
 & \big |v_{t_1}(\mathsf x)-  v_{x\to x_F}(\pmb{\Delta}_1)|   \le \beta+ \mathbf r^{(0)}_\star.  
 \end{align}

 \noindent
\underline{On $[t_1,t_2]$}. Since  $(L^{+}_s(\delta)-L^{+}_{t_1}(\delta),s\in [t_1,t_2])$ has no jumps, we then deduce by \eqref{eq.E}, \eqref{eq.vt1}, and the triangle inequality,  that for $s\in [t_1,t_2]$:
 \begin{align}
  \nonumber
 & \big |v_s(\mathsf x)- v_{t_1}(\mathsf x) \big  | \\
  \nonumber
 &\le C_{\mathbf B}\pmb{\Delta}_1+ C_{\mathbf B}  \, e^{C_{\mathbf B}\pmb{\Delta}_1} \pmb{\Delta}_1 \Big[ |v_{t_1}(\mathsf x)|  +  C_{\mathbf B}\pmb{\Delta}_1 + 2\beta     \Big] +  2\beta \\
 \label{eq.rstar1}
 & \le \mathbf r^{(1)}_\star:= C_{\mathbf B}\pmb{\Delta}_1+  C_{\mathbf B}  \, e^{C_{\mathbf B}\pmb{\Delta}_1}  \Big[    2c_{\boldsymbol\epsilon_0} +  \pmb{\Delta}_1 (     \mathbf r^{(0)}_\star +  C_{\mathbf B}\pmb{\Delta}_1 + 3\beta     )\Big] +  2\beta.
 \end{align}
Note that, though bounded, the term  $\mathbf r^{(1)}_\star$ cannot be made arbitrary small 
because of the second term appearing in its definition (namely the term $2c_{\boldsymbol\epsilon_0}C_{\mathbf B}  \, e^{C_{\mathbf B}\pmb{\Delta}_1}$).   This explains the fact that we have to consider in the third event in the definition of $\mathscr E^{+}_{\textbf{A}} (\mathsf x,\delta,t_1,t_2,t,\beta)$ a jump around the random velocity $v_F-v_{t_2}(\mathsf x)$ (and not around the deterministic velocity $v_F-v_{x\to x_F}(\pmb{\Delta}_1)$). 
We then deduce using \eqref{eq.vt1} and the triangle inequality, that for $s\in [t_1,t_2]$:
$$|v_s(\mathsf x) -  v_{x\to x_F}(\pmb{\Delta}_1)|\le \mathbf r^{(2)}_\star:= \mathbf r^{(1)}_\star+  \beta+ \mathbf r^{(0)}_\star.$$
In addition, writing 
$$ x_s(\mathsf x) =x_{t_1}(\mathsf x)+  (s-t_1) v_{x\to x_F}(\pmb{\Delta}_1)  +   \int_{t_1}^s[v_u(\mathsf x) - v_{x\to x_F}(\pmb{\Delta}_1)] du,$$
and recalling that $|x_{t_1}(\mathsf x)-x|\le \mathbf R^{(0)}_\star $, we deduce that:  
\begin{align*}\Big |x_s(\mathsf x)- x -  \frac{s-t_1}{\pmb{\Delta}_1} (x_F-x)\Big |  &\le |x_{t_1}(\mathsf x)-x| + \int_{t_1}^{t_2} |v_s(\mathsf x) - v_{x\to x_F}( \pmb{\Delta}_1)|ds\\
 &\le \mathbf R^{(1)}_\star:= \mathbf R^{(0)}_\star + \pmb{\Delta}_1\,  \mathbf r^{(2)}_\star. 
 \end{align*}
 Recall that $[x,x_F]=\{x + \frac{s-t_1}{\pmb{\Delta}_1} (x_F-x), s\in [t_1,t_2]\}\subset \mathscr B(x_F, \boldsymbol\epsilon)$ (see \eqref{eq.seg}). The previous estimate thus implies that  $\{x_s(\mathsf x),s\in [t_1,t_2]\}$ can be made  as close as desired to the  segment $[x,x_F]$ (which lies in $\mathscr O$). 
Note that in particular, it holds setting $s=t_2$ in the previous upper bound: 
  $$ |x_{t_2}(\mathsf x)-x_F|\le  \mathbf R^{(1)}_\star,$$
  and 
\begin{equation}\label{eq.vtt2}
 |v_{t_2}(\mathsf x) |\le |v_{t_2}(\mathsf x) -  v_{x\to x_F}(\pmb{\Delta}_1)|+ |v_{x\to x_F}(\pmb{\Delta}_1)|\le  \mathbf r^{(2)}_\star+ \frac{2c_{\boldsymbol\epsilon_0}}{\pmb{\Delta}_1}.
 \end{equation}

 \noindent
\underline{On $[t_2,t]$}. The process $(L^{+}_s(\delta)-L^{+}_{t_2}(\delta),s\in (t_2,t))$ has only one jump, and this jump is in the $\beta$ neighborhood of $v_F-v_{t_2}(\mathsf x)$.   
Therefore, for $s\in  [t_2,t]$: 
$$|L^{+}_s(\delta)-L^{+}_{t_2}(\delta)|\le |v_F|+ |v_{t_2}(\mathsf x) |+\beta\le c_{\boldsymbol\epsilon_0}+ |v_{t_2}(\mathsf x) |+\beta.$$  
Then, using \eqref{eq.E} and \eqref{eq.vtt2}, it holds for every $s\in [t_2,t]$:
 \begin{align*}
&\big| v_s(\mathsf x)- v_{t_2}(\mathsf x) -( L^{+}_s(\delta)-L^{+}_{t_2}(\delta))  \big|\\
& \le C_{\mathbf B}\pmb{\Delta}_2+ C_{\mathbf B}\pmb{\Delta}_2 \, e^{C_{\mathbf B}\pmb{\Delta}_2} \Big[ 2|v_{t_2}(\mathsf x) | + C_{\mathbf B}\pmb{\Delta}_2 + 3\beta  + c_{\boldsymbol\epsilon_0} \Big] +  2\beta\\
  &\le  \mathbf r^{(3)}_\star:= C_{\mathbf B}\pmb{\Delta}_2+ C_{\mathbf B}  \, e^{C_{\mathbf B}\pmb{\Delta}_2} \Big[ 2 \pmb{\Delta}_2 \mathbf r^{(2)}_\star+ 4c_{\boldsymbol\epsilon_0} \frac{  \pmb{\Delta}_2}{\pmb{\Delta}_1} + C_{\mathbf B}\pmb{\Delta}_2^2 + 3\pmb{\Delta}_2 \beta + c_{\boldsymbol\epsilon_0}\pmb{\Delta}_2   \Big] +  2\beta.
 \end{align*}
Because $L^{+}_{t}(\delta)-L^{+}_{t_2}(\delta)= v_F-v_{t_2}(\mathsf x)+z$ with $|z|\le \beta$, this implies that 
 \begin{align*}
  | v_t(\mathsf x)-v_F|\le \mathbf r^{(3)}_\star + \beta,
  \end{align*}
  and for all $s\in  [t_2,t]$:
  $$| v_s(\mathsf x)|\le   \mathbf r^{(3)}_\star + |v_{t_2}(\mathsf x)|+| L^{+}_s(\delta)-L^{+}_{t_2}(\delta)| \le   \mathbf r^{(3)}_\star + 2 \mathbf r^{(2)}_\star+ \frac{4c_{\boldsymbol\epsilon_0}}{\pmb{\Delta}_1} +c_{\boldsymbol\epsilon_0}.$$
The latter inequality yields, for every   $s\in  [t_2,t]$  
 \begin{align*}
 |x_s(\mathsf x)- x_F|&\le   |x_{t_2}(\mathsf x)-x_F|+ \pmb{\Delta}_2\mathbf r^{(3)}_\star + 2  \pmb{\Delta}_2\mathbf r^{(2)}_\star+ \frac{4c_{\boldsymbol\epsilon_0} \pmb{\Delta}_2}{\pmb{\Delta}_1} + \pmb{\Delta}_2c_{\boldsymbol\epsilon_0} \\
 &\le \mathbf R^{(2)}_\star:= \mathbf R^{(1)}_\star+ \pmb{\Delta}_2\mathbf r^{(3)}_\star + 2  \pmb{\Delta}_2\mathbf r^{(2)}_\star+ \frac{4c_{\boldsymbol\epsilon_0} \pmb{\Delta}_2}{\pmb{\Delta}_1} + \pmb{\Delta}_2c_{\boldsymbol\epsilon_0}.
   \end{align*}
\underline{Conclusion}.  In the limit  $\beta+ \pmb{\Delta}_1 + \frac{  \pmb{\Delta}_0}{\pmb{\Delta}_1}+ \frac{  \pmb{\Delta}_2}{\pmb{\Delta}_1} \to 0^+$ (and thus $\pmb{\Delta}_0+\pmb{\Delta}_2\to 0$ as well)  we have that 
 $$  \mathbf R^{(0)}_\star +  \mathbf R^{(1)}_\star+   \mathbf R^{(2)}_\star +  (\mathbf r^{(3)}_\star+ \beta) \to 0.$$
 Therefore, recalling \eqref{eq.seg}, we obtain that  there exist $\beta_{ \boldsymbol\epsilon}^{\mathbf A}>0$ and $t^{\mathbf A}_{\boldsymbol\epsilon}\in (0,  1]$     such that for every
 \begin{enumerate}
 \item[-] $\mathsf x=(x,v)\in \mathscr B_{\mathbb R^{2d}}(\mathsf x_F,\boldsymbol\epsilon) $, 
 \item[-]   $t\in (0,t^{\mathbf A}_{\boldsymbol\epsilon}]$,   $\pmb{\Delta}_0\le t^{\mathbf A}_{\boldsymbol\epsilon} \pmb{\Delta}_1$, $\pmb{\Delta}_2\le t^{\mathbf A}_{\boldsymbol\epsilon}  \pmb{\Delta}_1$,    
  \item[-] $\delta\in (0,1]$,
 \end{enumerate} 
 we have that:
\begin{align*}
\mathscr E_{\textbf{A}} (\mathsf x,\delta,t_1,t_2,t,\beta_{\boldsymbol\epsilon}^{\textbf{A}}) \subset  \Big \{ |X_t(\mathsf x)- \mathsf x_F|\le \boldsymbol\epsilon \text{ and } x_s(\mathsf x)\in \mathscr O \text{ for all $s\in [0,t]$} \Big \}. 
\end{align*}
Choosing finally 
$$\delta= \delta_{\beta_{\boldsymbol\epsilon}^{\textbf{A}}}(1)\wedge \frac{ \beta_{\boldsymbol\epsilon}^{\textbf{A}}}2\wedge  1,$$
yields~\eqref{eq.Proba}.  This shows~\eqref{eq.P>} for every $t\in (0,t^{\mathbf A}_{\boldsymbol\epsilon}]$ and every   $\mathsf x=(x,v)\in \mathscr B_{\mathbb R^{2d}}(\mathsf x_F,\boldsymbol\epsilon) $. 
  \medskip 

\noindent
 \textbf{Step 4 (Case B)}. Proof of \eqref{eq.P>}    when $\mathsf x=\mathsf x_0$. 
 \medskip
 
 \noindent
 The case when   $|x_0-x_F|\le \boldsymbol\epsilon$ is treated  in the  previous  case.
 We  thus consider the case when 
 $$
 |x_0-x_F|>  \boldsymbol\epsilon.
 $$

\noindent
 \textbf{Step 4.1}. Preliminary analysis. 
 \medskip
 
 \noindent
 Using the curve $\gamma_0$ above, one deduces that there exist $N\ge 0$ and distinct points   $ x_1, \ldots,  x_N$   such that setting $ x_{N+1}= x_F$ and $x_0=x$, it holds for $k\in \{0, \ldots,N\}$ (if $N=0$, $\{x_1, \ldots,  x_N\}=\varnothing$ by convention):
  $$[x_k,x_{k+1}]\subset   \text{Ran}_{\boldsymbol\epsilon/2}(\gamma_0).$$ 
 Note that (see~\eqref{eq.h1})
 $$
 \Big \{y\in  \mathbb R^d, \mathbf d_{\mathbb R^d}\big (y, \bigcup_{k=0}^N[x_k,x_{k+1}] \big )\le   \boldsymbol\epsilon \Big  \} \subset \text{Ran}_{3\boldsymbol\epsilon}(\gamma_0) \ (\subset  \mathscr O).
 $$  
  We can also assume that $x_{k+1}-x_k$ is not collinear to $x_{k+2}-x_{k+1}$ (else we remove the middle point $x_{k+1}$). Note that  $x_0, x_1, \ldots,  x_N,x_F$ is a \textit{broken line} joining $x_0$ to $x_F$ in $\mathscr O$.  
 We now introduce a finite sequence of increasing times 
 $$t_0=0<t_1<t_2<t_3<t_4<t_5<t_6\ldots <t_{4+ 2N-1} =t.$$
 Set   for $i\ge 0$, 
 $$  \pmb{\Delta}_i= t_{i+1}-t_i.$$
 We now define several events on which we will work. On $(t_0,t_1)$ (recall \eqref{eq.lmp}) we consider 
  \begin{align*}
  &\mathscr E^{+}_{\text{jump}}(\delta, t_0,t_1,\beta) \\
  &:=\Big \{\big (L^{+}_s(\delta),s\in (t_0,t_1)\big ) \text{ has exactly one jump}\\
  &\quad \quad  \quad \text{and this jump lies in } \mathscr B_{\mathbb R^d}(v_{x_0\to x_1}(\pmb{\Delta}_1) -v_0,\beta) \Big \}. 
 \end{align*}
On  $(t_2,t_3)$, $(t_4,t_5)$,  $\ldots$,  we consider  recursively the events 
 \begin{align*}
  &\mathscr E^{+}_{\text{jump}}(\delta, t_{2p},t_{2p+1},\beta) \\
  &:=\Big \{\big (L^{+}_s(\delta)-L_{t_{2p}}^{+}(\delta),s\in (t_{2p},t_{2p+1})\big ) \text{ has exactly one jump}\\
  &\quad \quad  \quad \text{and this jump lies in } \mathscr B_{\mathbb R^d}(v_{x_p\to x_{p+1}}(\pmb{\Delta}_{2p+1})- v_{t_{2p}}(\mathsf x_0),\beta) \Big \}, 
 \end{align*}
  while    on the last time interval  $(t_{4+ 2N-2}, t)$, we consider 
  \begin{align*}
  &\mathscr E^{+}_{\text{jump}}(\delta, t_{4+ 2N-2}, t_{4+ 2N-1},\beta) \\
  &:=\Big \{\big (L^{+}_s(\delta)-L_{t_{4+ 2N-2}}^{+}(\delta),s\in (t_{4+ 2N-2}, t_{4+ 2N-1})\big ) \text{ has exactly one jump}\\
  &\quad \quad  \quad \text{and this jump lies in } \mathscr B_{\mathbb R^d}( v_F- v_{t_{4+ 2N-2}}(\mathsf x_0),\beta) \Big \}. 
 \end{align*}
  On the time intervals $[t_1,t_2]$, $[t_3,t_4]$,  $[t_5,t_6]$ $\ldots$, we consider  the events:
 \begin{align*}
 &\mathscr E^{+}_{\text{no jump}}(\delta,t_{2p+1},t_{2p+2})\\
 &:=\Big \{\big (L^{+}_s(\delta)-L_{t_{2p+1}}^{+}(\delta),s\in [t_{2p+1},t_{2p+2}]\big )  \text{ has no jumps}\Big \},
 \end{align*}
 We finally define   (recall~\eqref{eq.e-}) 
\begin{align*}
&\mathscr E_{\textbf{B}} (\delta,t_1,\ldots, t_{4+ 2N-2},t ,\beta) \\
&:= \mathscr E^{-}(\delta,t,\beta)\pmb{\cap} \mathscr E^{+}_{\text{jump}} (\delta, t_0,t_1,\beta)\pmb{\cap}  \mathscr E^{+}_{\text{no jump}}(\delta, t_1,t_2)  \ldots \\
&\quad \pmb{\cap} \mathscr E^{+}_{\text{no jump}} (\delta,t_{4+ 2N-3},t_{4+ 2N-2}) \pmb{\cap}  \mathscr E^{+}_{\text{jump}}(\delta,t_{4+ 2N-2},t,\beta).
\end{align*} 
We now prove   that if $0<\delta\le  \delta_\beta(1)\wedge \frac \beta2\wedge 1$,
  \begin{align}\label{eq.IN0}
  \mathbb P\big[\mathscr E_{\textbf{B}} (\delta,t_1,\ldots, t_{4+ 2N-2},t ,\beta)\big]>0.
    \end{align}
Recall that   $(L^{-}_s(\delta),s\ge 0)$ and $(L^{+}_s(\delta),s\ge 0)$ are independent, and that  the increments of the Lévy process $(L^{+}_s(\delta),s\ge 0)$ are also independent. Thus, it holds:
\begin{align*}
&\mathbb P\Big[\mathscr E^{-}(\delta,t,\beta)\pmb{\cap} \mathscr E^{+}_{\text{jump}} (\delta, t_0,t_1,\beta)\pmb{\cap}  \mathscr E^{+}_{\text{no jump}}(\delta, t_1,t_2)  \ldots \\
&\quad \pmb{\cap} \mathscr E^{+}_{\text{no jump}} (\delta,t_{4+ 2N-3},t_{4+ 2N-2}) \pmb{\cap}  \mathscr E^{+}_{\text{jump}}(\delta,t_{4+ 2N-2},t,\beta)\Big]\\
&=\underbrace{\mathbb P[\mathscr E^{-}(\delta,t,\beta)]}_{>0}\ \mathbb E\Big[\mathbf 1_{ \mathscr E^{+}_{\text{jump}} (\delta, t_0,t_1,\beta)\pmb{\cap}  \,  \ldots \, \pmb{\cap} \mathscr E^{+}_{\text{no jump}} (\delta,t_{4+ 2N-3},t_{4+ 2N-2}) }\\
&\quad \times \underbrace{\mathbb P[ \mathscr E^{+}_{\text{jump}}(\delta,t_{4+ 2N-2},t,\beta)|\mathcal F_{t_{4+ 2N-2}}]}_{\Phi( v_{t_{4+ 2N-2}}(\mathsf x))}\Big],
  \end{align*}
  where 
  $$\Phi (y) =   \pmb{\Delta}_{4+2N-2}  e^{-\lambda_\delta  \pmb{\Delta}_{4+2N-2}} \ \nu  \big( \mathscr B_{\mathbb R^d}(v_F - y, \beta) \cap \{ |z| > \delta \}\big ). $$
  Since $0<\delta\le   \frac \beta2\wedge 1$, it holds $\Phi ( y)>0$ for all $y\in \mathbb R^d$. 
  It is thus enough to prove that 
  \begin{align}\label{eq.IN}
 \mathbb P\Big[  \underbrace{ \mathscr E^{+}_{\text{jump}} (\delta, t_0,t_1,\beta)\pmb{\cap}  \,  \ldots \, \pmb{\cap} \mathscr E^{+}_{\text{no jump}} (\delta,t_{4+ 2N-3},t_{4+ 2N-2})}_{\mathscr A_N}  \Big]>0.
  \end{align}
  One proceeds by  induction on $N\ge 0$.   If $N=0$, we have 
   \begin{align*}
 \mathbb P[\mathscr A_0]=\mathbb P\Big[ \mathscr E^{+}_{\text{jump}} (\delta, t_0,t_1,\beta)\pmb{\cap}  \mathscr E^{+}_{\text{no jump}}(\delta, t_1,t_2) \Big]&= \mathbb P\Big[ \mathscr E^{+}_{\text{jump}} (\delta, t_0,t_1,\beta) \Big]\\
 &\quad \times \mathbb P\Big[  \mathscr E^{+}_{\text{no jump}}(\delta, t_1,t_2) \Big] >0.
   \end{align*}
   This proves \eqref{eq.IN} when $N=0$.
 Assume that \eqref{eq.IN} holds for some $N\ge 0$. 
  Write 
   \begin{align*}
   \mathscr A_{N+1}= \mathscr A_{N}\pmb{\cap} \mathscr E^{+}_{\text{jump}} (\delta,t_{4+ 2N-2},t_{4+ 2N-1})\pmb{\cap} \mathscr E^{+}_{\text{no jump}} (\delta,t_{4+ 2N-1},t_{4+ 2N}).
    \end{align*} 
    We have, conditioning   w.r.t. $\mathcal F_{t_{4+ 2N-2}}$, 
     \begin{align*}
   \mathbb P[\mathscr A_{N+1}]&= \mathbb P\Big[\mathscr A_{N}\pmb{\cap} \mathscr E^{+}_{\text{jump}} (\delta,t_{4+ 2N-2},t_{4+ 2N-1})\Big]  \, \mathbb P\big [\mathscr E^{+}_{\text{no jump}} (\delta,t_{4+ 2N-1},t_{4+ 2N})\big ]\\
   &= \mathbb E\Big[\mathbf 1_{\mathscr A_{N}}    \Upsilon( v_{t_{4+ 2N-2}}(\mathsf x_0)) \Big]   \times \underbrace{\mathbb P\big [\mathscr E^{+}_{\text{no jump}} (\delta,t_{4+ 2N-1},t_{4+ 2N})\big ]}_{>0},
    \end{align*} 
    where (Recall that $x_{N+2}=x_F$):
    $$\Upsilon(y) =   \pmb{\Delta}_{4+ 2N-2} e^{-\lambda_\delta  \pmb{\Delta}_{4+ 2N-2}}  \,      \nu \big( \mathscr B_{\mathbb R^d}(v_{x_{N+1}\to x_{N+2}}(\pmb{\Delta}_{4+ 2N-1}) - y, \beta) \cap \{ |z| > \delta \} \big) .$$
    Since $0<\delta\le   \frac \beta2\wedge 1$, it holds $\Upsilon ( y)>0$ for all $y\in \mathbb R^d$. Because $\mathbb P[\mathscr A_N]>0$ (by the induction hypothesis), we obtain that
    $\mathbb E [\mathbf 1_{\mathscr A_{N}}    \Upsilon( v_{t_{4+ 2N-2}}(\mathsf x_0))  ] >0$ and then that $ \mathbb P[\mathscr A_{N+1}]>0$. This completes the proof of \eqref{eq.IN} by induction. The proof of \eqref{eq.IN0} is complete. 
%
   \medskip 

\noindent
 \textbf{Step 4.2}. We now end the proof of~\eqref{eq.P>} when $\mathsf x=\mathsf x_0$.  
 \medskip
 
 \noindent
  The  case $N=0$, corresponding to the scenario where the segment $[x_0, x_F]$ is entirely contained within the domain $\mathscr O$, follows directly from the arguments established in \textbf{Case A} above. In the following, we provide a detailed derivation for the case $N=1$ (illustrated in Figures~\ref{fig.R} and \ref{fig.R2}).  The general case for $N \ge 2$ follows by a straightforward inductive extension of this logic. 
 \medskip
 
 \noindent
 \textbf{Step 4.2.a}. Reach $x_F$ starting from $x$.   The analysis of the trajectories of \eqref{eq.Lan} on $[t_0,t_1]$ and $[t_1,t_2]$  are very close to the one made in \textbf{Case A} above. 
  \medskip

  \noindent
 \underline{On $[t_0,t_1]$}.  The process $(L^{+}_s(\delta),s\in (0,t_1))$ has only one jump, and this jump is in the $\beta$ neighborhood of $v_{x \to x_1}(\pmb{\Delta}_1) - v_0$  (see Figure~\ref{fig.R2}). 
Consequently,  for every  $s\in [0,t_1]$ (recall \eqref{eq.COF} for the definition of $c_{\boldsymbol\epsilon_0}>0$):
 $$|L^{+}_s(\delta)|\le |v_{x_0\to x_1}(\pmb{\Delta}_1)-v_0|+\beta\le \frac{2c_{\boldsymbol\epsilon_0}}{\pmb{\Delta}_1}+ c_{\boldsymbol\epsilon_0}+ \beta.$$ 
 Recall the definition of $ \mathbf r^{(0)}_\star $ in \eqref{eq.r0star}. 
By \eqref{eq.E}, it then holds for $s\in [0,t_1]$:
  $s\in [0,t_1]$:  
  \begin{align*}
    \big |v_s(\mathsf x_0)- v_0  -  L^{+}_s(\delta) \big  |   \le \mathbf r^{(0)}_\star  \text{ and } 
 \big |v_s(\mathsf x_0)|\le   c_{\boldsymbol\epsilon_0}+\frac{2c_{\boldsymbol\epsilon_0}}{\pmb{\Delta}_1}+ c_{\boldsymbol\epsilon_0}+ \beta+ \mathbf r^{(0)}_\star.
 \end{align*}
 As a result,  for   $s\in [0,t_1]$:
 $$|x_s(\mathsf x_0)-x_0|\le\mathbf R^{(0)}_\star:=  2c_{\boldsymbol\epsilon_0}\frac{  \pmb{\Delta}_0}{\pmb{\Delta}_1}+ \pmb{\Delta}_0[2c_{\boldsymbol\epsilon_0}+ \beta+   \mathbf r^{(0)}_\star].$$
Finally, since $L^{+}_{t_1}(\delta)= v_{x\to x_1}(\pmb{\Delta}_1)-v_0 + z$ with $|z|\le \beta$, we get 
  \begin{align}
    \label{eq.vt1Bis}
 & \big |v_{t_1}(\mathsf x_0)-  v_{x_0\to x_1}(\pmb{\Delta}_1)|   \le \beta+ \mathbf r^{(0)}_\star.  
 \end{align}

 \noindent
\underline{On $[t_1,t_2]$}. 
The process $(L^{+}_s(\delta)-L^{+}_{t_1}(\delta),s\in [t_1,t_2])$ has no jumps. As a result, using  \eqref{eq.E} and \eqref{eq.vt1Bis},  we deduce that  for $s\in [t_1,t_2]$:
 \begin{align*}
 & \big |v_s(\mathsf x_0)- v_{t_1}(\mathsf x_0) \big  |  \le \mathbf r^{(1)}_\star, 
 \end{align*}
 where we recall  \eqref{eq.rstar1} for the definition of $\mathbf r^{(1)}_\star$.
 We then have using also \eqref{eq.vt1Bis} and the triangle inequality:
$$|v_s(\mathsf x_0) -  v_{x\to x_1}(\pmb{\Delta}_1)|\le \mathbf r^{(2)}_\star , \ s\in [t_1,t_2],$$
where we recall that $\mathbf r^{(2)}_\star= \mathbf r^{(1)}_\star+  \beta+ \mathbf r^{(0)}_\star$. 
In addition, because  $|x_{t_1}(\mathsf x_0)-x_0|\le \mathbf R^{(0)}_\star $, we deduce that:  
\begin{align*}\Big |x_s(\mathsf x_0)- x_0 -  \frac{s-t_1}{\pmb{\Delta}_1} (x_1-x_0)\Big |  &\le  \mathbf R^{(1)}_\star,
 \end{align*}
 where we recall that $\mathbf R^{(1)}_\star:= \mathbf R^{(0)}_\star + \pmb{\Delta}_1\,  \mathbf r^{(2)}_\star$.   In particular, we have at time $s=t_2$:
\begin{equation}\label{eq.vtt2bis}
|x_{t_2}(\mathsf x_0)-x_1|\le  \mathbf R^{(1)}_\star \text{ and }  |v_{t_2}(\mathsf x_0) | \le  \mathbf r^{(2)}_\star+ \frac{2c_{\boldsymbol\epsilon_0}}{\pmb{\Delta}_1}.
 \end{equation}

  \noindent
 \underline{On $[t_2,t_3]$}. The process $(L^{+}_s(\delta)-L^{+}_{t_2}(\delta),s\in (t_2,t_3))$ has only one jump, and this jump is in the $\beta$ neighborhood of $v_{x_1 \to x_2}(\pmb{\Delta}_3) - v_{t_2}(\mathsf x_0)$ (see Figure~\ref{fig.R2}). 
Therefore, for $s\in  [t_2,t_3]$: 
$$|L^{+}_s(\delta)-L^{+}_{t_2}(\delta)|\le   \frac{2c_{\boldsymbol\epsilon_0}}{\pmb{\Delta}_3}+ |v_{t_2}(\mathsf x_0) |+\beta.$$  
Then, using \eqref{eq.E} and \eqref{eq.vtt2bis}, it holds for every $s\in [t_2,t_3]$:
 \begin{align*}
&\big| v_s(\mathsf x_0)- v_{t_2}(\mathsf x_0) -( L^{+}_s(\delta)-L^{+}_{t_2}(\delta))  \big|\\
& \le C_{\mathbf B}\pmb{\Delta}_2+ C_{\mathbf B}\pmb{\Delta}_2 \, e^{C_{\mathbf B}\pmb{\Delta}_2} \Big[ 2|v_{t_2}(\mathsf x_0) | + C_{\mathbf B}\pmb{\Delta}_2 + 3\beta  + \frac{2c_{\boldsymbol\epsilon_0}}{\pmb{\Delta}_3}  \Big] +  2\beta\\
  &\le  \hat{\mathbf r}^{(3)}_\star:= C_{\mathbf B}\pmb{\Delta}_2+ C_{\mathbf B}  \, e^{C_{\mathbf B}\pmb{\Delta}_2} \Big[ 2 \pmb{\Delta}_2 \mathbf r^{(2)}_\star+ 4c_{\boldsymbol\epsilon_0} \frac{  \pmb{\Delta}_2}{\pmb{\Delta}_1} + 2c_{\boldsymbol\epsilon_0} \frac{  \pmb{\Delta}_2}{\pmb{\Delta}_3} + C_{\mathbf B}\pmb{\Delta}_2^2 + 3\pmb{\Delta}_2 \beta  \Big] +  2\beta.
 \end{align*}
Because $L^{+}_{t_3}(\delta)-L^{+}_{t_2}(\delta)=v_{x_1 \to x_2}(\pmb{\Delta}_3) - v_{t_2}(\mathsf x_0)+z$ with $|z|\le \beta$, this implies that at time $s=t_3$:
 \begin{align}
 \label{eq.vvP}
  | v_{t_3}(\mathsf x_0)-v_{x_1 \to x_2}(\pmb{\Delta}_3)|\le \hat{\mathbf r}^{(3)}_\star + \beta
  \end{align}
  and for all $s\in  [t_2,t_3]$
 \begin{align}
 \nonumber
 | v_s(\mathsf x_0)|&\le   \hat{\mathbf r}^{(3)}_\star + |v_{t_2}(\mathsf x_0)|+| L^{+}_s(\delta)-L^{+}_{t_2}(\delta)| \\
 \label{eq.vt3}
 &\le   \hat{\mathbf r}^{(3)}_\star + 2  \mathbf r^{(2)}_\star+   \frac{ 4c_{\boldsymbol\epsilon_0} }{\pmb{\Delta}_1} +   \frac{  2c_{\boldsymbol\epsilon_0} }{\pmb{\Delta}_3}.
   \end{align}
The latter inequality yields, for every   $s\in  [t_2,t_3]$ (recall \eqref{eq.vtt2bis}):  
 \begin{align*}
 |x_s(\mathsf x_0)- x_1|&\le   |x_{t_2}(\mathsf x_0)-x_1|+ \pmb{\Delta}_2\hat{\mathbf r} ^{(3)}_\star + 2  \pmb{\Delta}_2\mathbf r^{(2)}_\star+ \frac{4c_{\boldsymbol\epsilon_0} \pmb{\Delta}_2}{\pmb{\Delta}_1} + \pmb{\Delta}_2c_{\boldsymbol\epsilon_0} \\
 &\le \hat{\mathbf R}^{(2)}_\star:= \mathbf R^{(1)}_\star+ \pmb{\Delta}_2\hat{\mathbf r}^{(3)}_\star + 2  \pmb{\Delta}_2\mathbf r^{(2)}_\star+ 4c_{\boldsymbol\epsilon_0} \frac{  \pmb{\Delta}_2}{\pmb{\Delta}_1} + 2c_{\boldsymbol\epsilon_0} \frac{  \pmb{\Delta}_2}{\pmb{\Delta}_3}.
   \end{align*}

 \noindent
  \underline{On $[t_3,t_4]$}. Because $(L^{+}_s(\delta)-L^{+}_{t_3}(\delta),s\in [t_3,t_4])$ has  no jumps, we have using \eqref{eq.vt3} and \eqref{eq.E}, for every $s\in [t_3,t_4]$: 
   \begin{align}
  \nonumber
 & \big |v_s(\mathsf x_0)- v_{t_3}(\mathsf x_0) \big  | \\
  \nonumber
 &\le C_{\mathbf B}\pmb{\Delta}_3+ C_{\mathbf B}  \, e^{C_{\mathbf B}\pmb{\Delta}_3} \pmb{\Delta}_3 \Big[ |v_{t_3}(\mathsf x_0)|  +  C_{\mathbf B}\pmb{\Delta}_3 + 2\beta     \Big] +  2\beta \\
   \nonumber
 & \le \mathbf r^{(4)}_\star:= C_{\mathbf B}\pmb{\Delta}_3+  C_{\mathbf B}  \, e^{C_{\mathbf B}\pmb{\Delta}_3}  \Big[    2c_{\boldsymbol\epsilon_0} +  \pmb{\Delta}_3\hat{\mathbf r}^{(3)}_\star + 2 \pmb{\Delta}_3 \mathbf r^{(2)}_\star+  4c_{\boldsymbol\epsilon_0} \frac{   \pmb{\Delta}_3}{\pmb{\Delta}_1} \\
 \label{eq.rstar1bis}
 &\quad \quad \quad \quad + \pmb{\Delta}_3(C_{\mathbf B}\pmb{\Delta}_3 + 2\beta)\Big] +  2\beta.
 \end{align}
By the triangle inequality and using \eqref{eq.vvP}, it then holds:
$$
  |v_s(\mathsf x_0)-v_{x_1 \to x_2}(\pmb{\Delta}_3)|\le \mathbf r^{(4)}_\star+ \hat{\mathbf r}^{(3)}_\star + \beta, \ s\in [t_3,t_4]. 
  $$
  Recall that $v_{x_1 \to x_2}(\pmb{\Delta}_3)= (x_2-x_1)/\pmb{\Delta}_3$. Hence, because $|x_s(\mathsf x_0)- x_1| \le  \hat{\mathbf R}^{(2)}_\star$, we get by the triangle inequality, for every $s\in [t_3,t_4]$:  
\begin{align*}
\Big |x_s(\mathsf x_0)- x_1 -  \frac{s-t_3}{\pmb{\Delta}_3} (x_2-x_1)\Big |  &\le 
\mathbf R^{(3)}_\star:= \hat{\mathbf R}^{(2)}_\star+  \pmb{\Delta}_3 (\mathbf r^{(4)}_\star+ \hat{\mathbf r}^{(3)}_\star + \beta).  
 \end{align*} 
   In particular,  at time $s=t_4$, the two previous upper bounds  imply that:
$$
|x_{t_4}(\mathsf x_0)-x_2|\le  \mathbf R^{(3)}_\star  \text{ and }  |v_{t_4}(\mathsf x_0) | \le  \frac{2c_{\boldsymbol\epsilon_0}}{\pmb{\Delta}_3 }+ \mathbf r^{(4)}_\star+ \hat{\mathbf r}^{(3)}_\star + \beta.
$$
 \noindent
 \textbf{Step 4.2.b}. Reach $v_F$ during the time interval $[t_4,t_5]$ (Recall that $t_5=t$). 
   The purpose of this final step is to  make   $v_t(\mathsf x_0)$  fall into the   $\boldsymbol\epsilon/2$ neighborhood of  $v_F$  and  $x_t(\mathsf x_0)$ fall into the    $\boldsymbol\epsilon/2$ of  $x_F$, while ensuring that     $(x_s,s\in [t_4,t_5])$ stays close to $x_2=x_F$ (so that it remains inside $\mathscr O$).  
   \medskip
  
 \noindent
 \underline{On $[t_4,t]$}.
 For  $s\in [t_4,t]$,  
 The process $(L^{+}_s(\delta)-L^{+}_{t_4}(\delta),s\in (t_4,t))$ has only one jump, and this jump is in the $\beta$ neighborhood of $v_F-v_{t_4}(\mathsf x_0)$.   
Thus, for $s\in  [t_4,t]$: 
$$|L^{+}_s(\delta)-L^{+}_{t_4}(\delta)|\le |v_F|+ |v_{t_4}(\mathsf x_0) |+\beta\le c_{\boldsymbol\epsilon_0}+ \frac{2c_{\boldsymbol\epsilon_0}}{\pmb{\Delta}_3 }+ \mathbf r^{(4)}_\star+ \hat{\mathbf r}^{(3)}_\star + 2\beta.$$ 
Then, using \eqref{eq.E}, it holds for every $s\in [t_4,t]$:
 \begin{align*}
&\big| v_s(\mathsf x_0)- v_{t_4}(\mathsf x_0) -( L^{+}_s(\delta)-L^{+}_{t_4}(\delta))  \big|\\
& \le  \mathbf r^{(5)}_\star:=  C_{\mathbf B}\pmb{\Delta}_4+ C_{\mathbf B}\, e^{C_{\mathbf B}\pmb{\Delta}_4 }  4c_{\boldsymbol\epsilon_0}  \frac{ \pmb{\Delta}_4 }{\pmb{\Delta}_3 } \\
&\quad +  \pmb{\Delta}_4( 2\mathbf r^{(4)}_\star+ 2\hat{\mathbf r}^{(3)}_\star +   C_{\mathbf B}\pmb{\Delta}_4 + 5\beta  +c_{\boldsymbol\epsilon_0} )  \Big] +  2\beta. 
 \end{align*}
Because $L^{+}_t(\delta)-L^{+}_{t_4}(\delta)= v_F-v_{t_4}(\mathsf x_0)+z$ with $|z|\le \beta$, we deduce that:
 \begin{align*}
  | v_t(\mathsf x_0)-v_F|\le \mathbf r^{(5)}_\star + \beta,
  \end{align*}
  and for all $s\in  [t_4,t]$:
  $$| v_s(\mathsf x_0)|\le   \mathbf r^{(5)}_\star + |v_{t_4}(\mathsf x_0)|+| L^{+}_s(\delta)-L^{+}_{t_4}(\delta)| \le   \mathbf r^{(5)}_\star + \frac{4c_{\boldsymbol\epsilon_0}}{\pmb{\Delta}_3 }+ 2 \mathbf r^{(4)}_\star+ 2\hat{\mathbf r}^{(3)}_\star +3 \beta+c_{\boldsymbol\epsilon_0}.$$
The latter inequality yields, for every   $s\in  [t_4,t]$  
 \begin{align*}
 |x_s(\mathsf x_0)- x_F|&\le   |x_{t_4}(\mathsf x_0)-x_F| + \pmb{\Delta}_4\mathbf r^{(5)}_\star + 4c_{\boldsymbol\epsilon_0}\frac{ \pmb{\Delta}_4}{\pmb{\Delta}_3 }\\
 &\quad + 2\pmb{\Delta}_4 \mathbf r^{(4)}_\star+ 2\pmb{\Delta}_4\hat{\mathbf r}^{(3)}_\star +\pmb{\Delta}_4(3 \beta+c_{\boldsymbol\epsilon_0})\\
 &\le \mathbf R^{(4)}_\star:=\mathbf R^{(3)}_\star + 4c_{\boldsymbol\epsilon_0}\frac{ \pmb{\Delta}_4}{\pmb{\Delta}_3 }  + 2\pmb{\Delta}_4 \mathbf r^{(4)}_\star+ 2\pmb{\Delta}_4\hat{\mathbf r}^{(3)}_\star +\pmb{\Delta}_4(3 \beta+c_{\boldsymbol\epsilon_0}).
   \end{align*}
\underline{Conclusion}. In the limit  $\beta+ \pmb{\Delta}_1 +  \pmb{\Delta}_3+  \frac{  \pmb{\Delta}_0}{\pmb{\Delta}_1}+ \frac{  \pmb{\Delta}_2}{\pmb{\Delta}_1} + \frac{  \pmb{\Delta}_2}{\pmb{\Delta}_3}+  \frac{  \pmb{\Delta}_4}{\pmb{\Delta}_3}       \to 0^+$ (and thus $\pmb{\Delta}_0+\pmb{\Delta}_2+ \pmb{\Delta}_4\to 0$ as well)  we have that 
 $$\mathbf R^{(0)}_\star + \mathbf R^{(1)}_\star+\hat{ \mathbf R}^{(2)}_\star+ \mathbf R^{(3)}_\star+ \mathbf R^{(4)}_\star+ \mathbf r^{(5)}_\star \to 0.$$
We then  deduce that  there exist $\beta^{\mathbf B} _{\boldsymbol\epsilon}>0$ and  $t^{\mathbf B}_{\boldsymbol\epsilon}\in (0,1]$   such that for every   
 \begin{enumerate}
 \item [-] 
 $t\in (0,t^{\mathbf B}_{\boldsymbol\epsilon}]$,   $\pmb{\Delta}_0\le t^{\mathbf B}_{\boldsymbol\epsilon} \pmb{\Delta}_1$, $\pmb{\Delta}_2\le t^{\mathbf B}_{\boldsymbol\epsilon} \min (\pmb{\Delta}_1,\pmb{\Delta}_3)$, and $\pmb{\Delta}_4\le t^{\mathbf B}_{\boldsymbol\epsilon} \pmb{\Delta}_3$,  
 \item [-]  $\delta\in (0,1]$,
 \end{enumerate}
 we have:
 \begin{align*}
 &\mathscr E_{\textbf{B}} (\delta,t_1,\ldots, t_4,t ,\beta_{\boldsymbol\epsilon}^{\mathbf B}) \\
 &\subset  \Big \{ |X_t(\mathsf x_0)- \mathsf x_F|\le \boldsymbol\epsilon \text{ and } x_s(\mathsf x_0)\in \mathscr O \text{ for all $s\in [0,t]$} \Big \}. 
 \end{align*}
  Choosing in addition $\delta= \delta_{\beta_{\boldsymbol\epsilon}^{\mathbf B}}(1)\wedge \beta_{\boldsymbol\epsilon}^{\mathbf B}/2\wedge  1$, we  get  that 
 $$\mathbb P[\mathscr E_{\textbf{B}} (\delta,t_1,\ldots, t_4,t ,\beta_{\boldsymbol\epsilon}^{\mathbf B})]>0.$$ This implies~\eqref{eq.P>} when $\mathsf x=\mathsf x_0$ and  $t\in (0,t^{\mathbf B}_{\boldsymbol\epsilon}]$, the desired result.   
 The proof of Proposition~\ref{pr.C5-infty} is complete  by considering 
 $$t_{\boldsymbol\epsilon}=\min( t_{\boldsymbol\epsilon}^{\mathbf A}, t_{\boldsymbol\epsilon}^{\mathbf B}).$$ 
 \end{proof}

 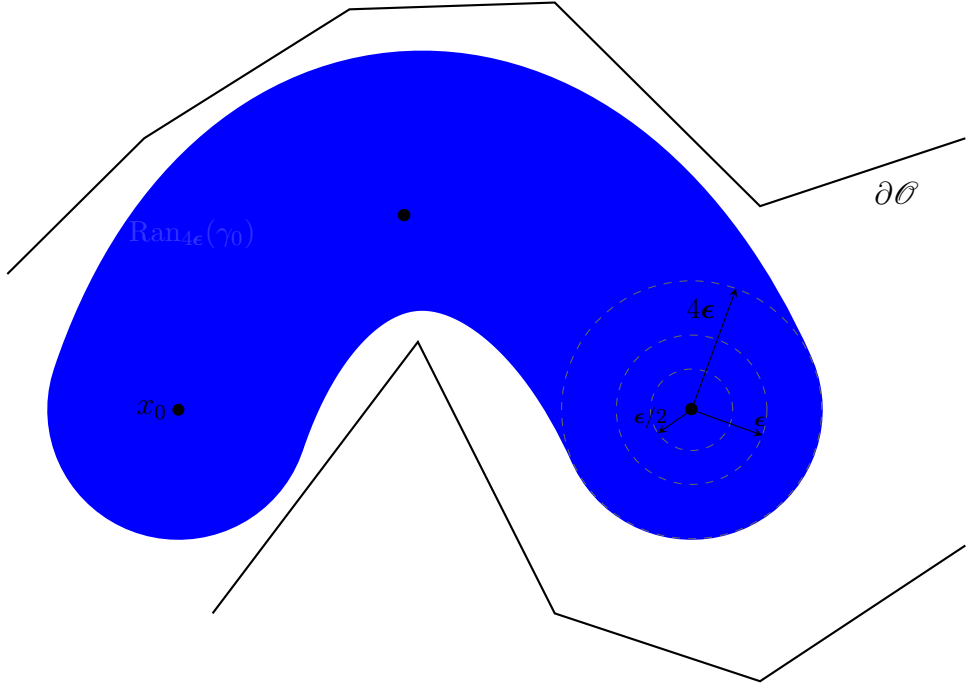
\begin{figure}[h]
\centering
\begin{tikzpicture}[scale=0.9]

 
\draw[thick]  (-2,0) -- (1, 4) -- (3,0) -- (6,-1) -- (9,1);
\draw[thick] (-5,5) -- (-3,7) -- (0,8.9) -- (3,9) -- (6,6) -- (9,7);

\node at (8,6.2) {  $\partial \mathscr{O}$};

\coordinate (x0) at (-2.5, 3);
\coordinate (xF) at (5, 3);
  \draw[scale=0.8]  (6.3, 3.3)  node[]{{\footnotesize $x_F$}};
 
\coordinate (x1) at (0.8, 5.9);
  \draw[scale=0.8]  (01, 6.5)  node[]{$x_1$};
\draw[thick, densely dotted] (x1) -- (xF);
\draw[thick, densely dotted] (x0) -- (x1);
\begin{scope}
    
    \draw[blue, line width=3.45cm, opacity=0.15, line cap=round] 
        (x0) .. controls (-1, 7.5) and (3, 7.5) .. (xF);
\end{scope}

\draw[blue, very thick] (x0) .. controls (-1, 7.5) and (3, 7.5) .. (xF);
\node[blue, font=\small] at (1, 7) {$\gamma_0$};
\node[text width=3cm, align=center, font=\small, blue!80] at (-2.3, 5.6) {$\text{Ran}_{4\boldsymbol\epsilon}(\gamma_0) $};

\def\rOne{0.6}   
\def\rTwo{1.1}   
\def\rThree{1.9} 

\draw[dashed, black!60] (xF) circle (\rOne);
\draw[dashed, black!60] (xF) circle (\rTwo);
\draw[dashed, black!60] (xF) circle (\rThree);

\draw[<-, >=stealth, thin] ($(xF)+(215:\rOne)$) -- (xF);
  \draw[scale=0.8]  (5.5,3.6)  node[]{{\tiny ${\boldsymbol\epsilon}/{2}$}};

\draw[<-, >=stealth, thin] ($(xF)+(-20:\rTwo)$) -- (xF);
  \draw[scale=0.8]  (7.5,3.54)  node[]{{\tiny ${\boldsymbol\epsilon}$}};

\draw[<-, >=stealth, thin] ($(xF)+(70:\rThree)$) -- (xF);
\node[scale=0.9] at ($(xF)+(85:1.5)$) {$4\boldsymbol\epsilon$};

\fill (x0) circle (2.5pt) node[left] {$x_0$};
  \draw   (xF)   node[]{$\bullet$};
    \draw   (0.8,5.85)   node[]{$\bullet$};
\end{tikzpicture}
\caption{Schematic representation (not to scale) of the  condition~\eqref{eq.h1} and $\text{Ran}_{4\boldsymbol\epsilon}(\gamma_0) $, when $N=1$.}
\label{fig.R}
\end{figure}

 
 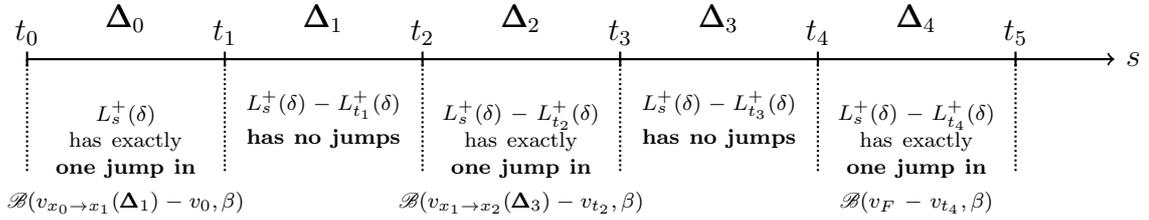
\begin{figure}
\begin{tikzpicture}[xscale=2.6]

\draw[thick, ->] (-0,0) -- (5.5,0) node[right] {$s$};

\foreach \i in {0,1,2,3,4,5} {
    \draw[thick] (\i,0.1) -- (\i,-0.1) node[above=4pt] {$t_{\i}$};
}


\node[above=6pt] at (0.5,0) {$\pmb{\Delta}_0$};
\node[above=6pt] at (1.5,0) {$\pmb{\Delta}_1$};
\node[above=6pt] at (2.5,0) {$\pmb{\Delta}_2$};
\node[above=6pt] at (3.5,0) {$\pmb{\Delta}_3$};
\node[above=6pt] at (4.5,0) {$\pmb{\Delta}_4$};

\node[text width=3.8cm, align=center, font=\small] at (0.5, -1.3) {
    {\tiny $L^{+}_s(\delta)$ \\
    has 
    exactly \\
    \textbf{one jump in } \\
      $\mathscr{B}(v_{x_0 \to x_1}(\pmb{\Delta}_1) - v_0, \beta)$}
};

\node[text width=3cm, align=center, font=\small] at (1.5, -0.8) {
     {\tiny $L^{+}_s(\delta)-L^{+}_{t_1}(\delta)$\\
      \textbf{has no jumps}}
};

\node[text width=3.8cm, align=center, font=\small] at (2.5, -1.3) {
    {\tiny $L^{+}_s(\delta)-L^{+}_{t_2}(\delta)$ \\
    has  
    exactly\\
     \textbf{one jump in} \\ 
       $\mathscr{B}(v_{x_1 \to x_2}(\pmb{\Delta}_3) - v_{t_2}, \beta)$}
};

\node[text width=3cm, align=center, font=\small] at (3.5, -0.8) {
    {\tiny $L^{+}_s(\delta)-L^{+}_{t_3}(\delta)$\\
      \textbf{has no jumps}}
};

\node[text width=3.8cm, align=center, font=\small] at (4.5, -1.3) {
    {\tiny $L^{+}_s(\delta)-L^{+}_{t_4}(\delta)$ \\
    has 
    exactly \\
    \textbf{one jump in } 
    \\
    $\mathscr{B}(v_F - v_{t_4}, \beta)$}
};

\foreach \x in {0,1,2,3,4,5} {
    \draw[thick, densely dotted] (\x, 0) -- (\x, -1.5);
}

\end{tikzpicture}
\caption{Schematic representation of the  cascade of events  for the increments of $(L^{+}_s(\delta),s\ge 0)$ when $N=1$ ($t_0=0$ and $t_5=t$) and when $X_0=\mathsf x_0$.}
\label{fig.R2}
 \end{figure}

\section{Weak well-posedness  of \eqref{eq.Lan}  driven by a RI$\alpha$S  process with $\alpha\in (1,2)$: the case of a bounded drift    $\mathbf B$}

\label{sec.Eu}
 
In this section, we prove the well-posedness   and the weak continuity property w.r.t. initial conditions of~\eqref{eq.Lan}   when 
 \begin{enumerate}
   \item[]{\rm\textbf{[{\footnotesize H$^{\alpha\in (1,2)}_{\text{bounded}}$}]}}  $\mathbf B$ is  measurable and bounded, the driving noise  in \eqref{eq.Lan} is a RI$\alpha$S  process,    and $\alpha\in (1,2)$.   
\end{enumerate} 
In this setting, we also  show the existence of a density in some $L^m$ spaces, $m>1$. 
Let us emphasize that several intermediate results we derive in this section  are also crucial  to prove  the strong Feller property of this process 
under {\rm\textbf{[{\footnotesize H$^{\alpha\in (1,2)}_{\text{bounded}}$}]}} in the next section, see the proof of Theorem~\ref{th.FF}. 
  The main results of this section are Theorems \ref{th.ex} and \ref{th.uni}.

\begin{theorem}\label{th.ex}
Assume {\rm\textbf{[{\footnotesize H$^{\alpha\in (1,2)}_{\text{bounded}}$}]}}.  Then, for every $\mu \in \mathcal P(\mathbb R^{2d})$, there is a unique weak solution   to~\eqref{eq.Lan}  with initial law $\mu$. In addition,  for $t\ge 0$,  the mapping $\mathsf x\mapsto \mathbb P_{\mathsf x}[X_t\in \cdot]$ (resp. $\mathsf x\mapsto \mathbb P_{\mathsf x}[X\in \cdot]$) is continuous for the weak convergence in $\mathcal P(\mathbb R^{2d})$ (resp. in $\mathcal P(D([0,+\infty),\mathbb R^{2d})))$.  
\end{theorem}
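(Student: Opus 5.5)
The plan is to run a perturbative argument around the driftless kinetic process $\hat X_t(\mathsf x)=(x+tv+\int_0^t L_s\,ds,\ v+L_t)$, whose semigroup $(\hat P_t)$ has explicit Gaussian-free but stable-type density estimates. From \cite{RocknerZhang} we take the gradient bound in the velocity variable, $\|\nabla_v \hat P_t f\|_\infty\le C t^{-1/\alpha}\|f\|_\infty$. Since $\alpha\in(1,2)$, this singularity is integrable at $t=0$, and this is exactly where the restriction on $\alpha$ enters. For a weak solution $X$, applied to $u(s,\cdot)=\hat P_{t-s}f$ by Itô's formula, this yields the Duhamel identity
$$\mathbb E_\mu f(X_t)=\mu(\hat P_tf)+\int_0^t \mathbb E_\mu\big[\mathbf B(X_s)\cdot\nabla_v\hat P_{t-s}f(X_s)\big]\,ds,\qquad f\in C_b^\infty.$$
Integrating in time against $e^{-\lambda t}$ gives a resolvent form
$$R_\lambda f=\hat R_\lambda f+R_\lambda(\mathbf B\cdot\nabla_v\hat R_\lambda f),\qquad \|\nabla_v\hat R_\lambda g\|_\infty\le C\lambda^{1/\alpha-1}\|g\|_\infty.$$
Taking $\lambda$ large makes $g\mapsto \mathbf B\cdot\nabla_v\hat R_\lambda g$ a contraction on $bB$. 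Approximating $f$ and $\mathbf B$ by smooth ones, this extends to bounded measurable $f$.

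Uniqueness comes first. The resolvent identity, combined with $\|R_\lambda\|\le 1/\lambda$, gives a Krylov-type estimate
$$\mathbb E_\mu\int_0^\infty e^{-\lambda t}|h(X_t)|\,dt\le C\|h\|_{L^p}\quad\text{for some } p<\infty,$$
obtained from the $L^p$ bound on the density of $\hat X$ together with $\hat R_\lambda:L^p\to L^\infty$. More directly, it shows that $R_\lambda=\hat R_\lambda(I-\mathbf B\cdot\nabla_v\hat R_\lambda)^{-1}$ is determined independently of the chosen solution. So every solution of the martingale problem $(\mathcal L,\mathsf x)$ has the same one-dimensional resolvent, hence the same one-dimensional marginals. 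By \cite[Theorem 4.4.2]{EK}, uniqueness of one-dimensional marginals for all initial laws gives uniqueness of the martingale problem, and Proposition~\ref{pr.Pre} transfers this to weak uniqueness. The one care point is justifying Itô's formula for $\hat P_{t-s}f$, which is smooth but only of time-singular gradient; I would do this with $u_\varepsilon=\hat P_{t-s+\varepsilon}f$ and let $\varepsilon\to0$.

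For existence, I would mollify $\mathbf B_n\to\mathbf B$ a.e. with $\|\mathbf B_n\|_\infty\le\|\mathbf B\|_\infty$ and take strong solutions $X^n$. Tightness in $D$ follows from Aldous' criterion, since the drift is bounded and the noise is common. For a limit point $\mathbf P$, the goal is $\int_0^t\mathbf B_n\cdot\nabla_v\varphi(X^n_s)\,ds\to\int_0^t\mathbf B\cdot\nabla_v\varphi(X_s)\,ds$ in law. The plan is to replace $\mathbf B$ by a continuous $\mathbf B^\eta$ that is close in $L^p(K)$, and to control both errors uniformly in $n$ by the Krylov estimate. This estimate holds uniformly because it depends only on $\|\mathbf B_n\|_\infty$, and it passes to the limit for continuous $h$. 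This closing of the martingale problem under a discontinuous drift is the main obstacle I expect. In particular, the Krylov bound must be in a genuinely integrable norm ($L^p$, not merely $L^\infty$), which requires the $L^m$ density bounds for $\hat X$ and a hypoelliptic scaling check that $p$ can be chosen finite when $\alpha>1$.

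Finally, continuity in the initial condition follows from uniqueness plus tightness. For $\mathsf x_n\to\mathsf x$ the laws $\mathbb P_{\mathsf x_n}$ are tight in $D$. By the same limit argument any limit point solves $(\mathcal L,\mathsf x)$, so it equals $\mathbb P_{\mathsf x}$, and the whole sequence converges weakly in $\mathcal P(D([0,\infty),\mathbb R^{2d}))$. Since $X$ has no fixed jump times (as seen in Proposition~\ref{pr.Co}), evaluation at $t$ is a.s. continuous under $\mathbb P_{\mathsf x}$. This gives the weak continuity of $\mathsf x\mapsto\mathbb P_{\mathsf x}[X_t\in\cdot]$.
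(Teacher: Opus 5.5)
Your proposal is correct. For existence and for continuity in the initial condition it is essentially the paper's argument: mollified drifts, tightness, a Krylov bound uniform in $n$ and in the starting point, $L^p$-approximation of the discontinuous term $\mathbf B\cdot\nabla_v\varphi$, and identification of limit points by uniqueness.

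Where you genuinely differ is uniqueness. The paper proceeds as follows:
\begin{itemize}
\item It first proves that every weak solution has a density and satisfies the Krylov bound \eqref{eq.Kt}.
\item It then extends the integrated Duhamel formula \eqref{eq.DI} to time-dependent test functions in $L^q_TL^p(\mathbb R^{2d})$.
\item Using $\|\nabla_v\hat P_t f\|_{L^p}\le C(1\wedge t)^{-1/\alpha}\|f\|_{L^p}$, it inverts $\mathbb I-R_T$ on $L^q_TL^p$ for a small $T_0$. This identifies the marginals only on $[0,T_0]$, which must then be propagated by conditioning.
\end{itemize}
You instead Laplace-transform the Duhamel identity and invert $I-\mathbf B\cdot\nabla_v\hat R_\lambda$ on $bB$ for large $\lambda$. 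This needs only the sup-norm gradient bound. It identifies all one-dimensional marginals at once, via uniqueness of Laplace transforms of the right-continuous maps $t\mapsto\mathbb E_\mu g(X_t)$, $g\in C_b$. It also bypasses densities entirely. The price is that you obtain no $L^{p'}$ density information, which the paper's route produces along the way and reuses for Theorem~\ref{th.FF}.

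The one step you should state precisely is the extension of the resolvent identity from $C_c^\infty$ to $bB$. Do it in $f$ only, by a monotone class argument. Do not approximate $\mathbf B$ there, since $\mathbf B$ is tied to the fixed solution. The class of admissible $f$ is closed under bounded pointwise limits because $\nabla_v\hat P_u$ is given by an integrable kernel of $L^1$-norm at most $C(1\wedge u)^{-1/\alpha}$, so dominated convergence passes through $\nabla_v\hat R_\lambda$.

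Your shifted test function $\hat P_{t-s+\varepsilon}f$ is also a small improvement. The boundary term becomes $\mathbb E\,\hat P_\varepsilon f(X_t)$, which converges by dominated convergence. The paper's choice leaves $\mathbb E\,\hat P_\epsilon f(X_{t-\epsilon})$ and needs the H\"older--Zygmund estimate \eqref{eq.Cab}.
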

  \medskip

As already evoked in the introduction, the technical point in the proof of Theorem~\ref{th.ex} lies in the fact that  $\mathbf B$ is not continuous which is combined with the full degeneracy of~\eqref{eq.Lan} (in the sense that the noise only acts on the velocity variable). Moreover,  the Girsanov formula is not relevant  in this    setting to circumvent this lack of regularity (though it is a powerful tool often used  for SDEs with non-smooth coefficients which are  driven by a Brownian noise, see e.g.~\cite{jacod2013limit,Wu2001,chaudru}).   
\vspace{0.2cm}

The literature on the well-posedness of SDEs driven by Lévy noise is extensive. The following short review is by no means exhaustive and mainly focuses on results that are closest to our setting.  When the drift is bounded, and the equation is elliptic (noise acts in every direction) and driven by a Brownian motion, we refer to~\cite{zvonkin1974transformation,veretennikov1981strong,stroock1975diffusion} (see also~\cite[Section 5.5]{situ2005theory} and~\cite{Lepeltier,krylov2005strong,chen2016uniqueness}). We also refer to~\cite{stroock1975diffusion,kallenberg,kurtz2010equivalence,kulinich2014strong} where continuity of the drift is assumed for existence. Kinetic equations driven by a Brownian motion and  with low regularity coefficients have been studied in~\cite{de2023heat}.  Very close to our setting are   the recent works~\cite{marino} and~\cite[Theorem 4.1]{RocknerZhang} where    kinetic equations driven by $\alpha$-stable processes ($\alpha\in (1,2)$) with  Hölder continuous coefficients are considered. We emphasize that the work \cite{RocknerZhang} is concerned with  kinetic McKean-Vlasov equations driven by $\alpha$-stable processes (see also~\cite{de2023multidimensional}).   For another detailed investigation of  the strong well-posedness of \eqref{eq.Lan} under the assumption of Hölder continuous drifts, we refer the reader to \cite{hao2020schauder}.   Let us  also mention  the  elliptic example provided in  \cite[Theorem 3.2]{tanaka1974perturbation} where weak uniqueness fails when $0\le \alpha<1$ (see also~\cite{priola2012pathwise} or~\cite[Theorem 2.9]{marino} for an example of non uniqueness in a fully degenerate setting).


      This section is organized as follows.   In Section~\ref{eq.weakU}, we consider existence of density for weak solutions and derive the weak uniqueness of \eqref{eq.Lan} when  $\mathbf B$ is measurable and bounded, see Theorem~\ref{th.uni}. Weak existence is proved in  Section~\ref{sec.weakEx}. 
      Our main ingredients to prove Theorems \ref{th.ex} and  \ref{th.uni} are Krylov's type estimates. Such estimates, as originally derived in~\cite{krylov1969ito},  are powerful tools   to derive the  existence of  solutions to SDEs in low regularity settings (see e.g.~\cite{stroock1975diffusion,Lepeltier,mel1983stochastic,krylov2005strong,situ2005theory,zhangAIHP2020,chaudru,marino}).
   Throughout this section, we assume   {\rm\textbf{[{\footnotesize H$^{\alpha\in (1,2)}_{\text{bounded}}$}]}}. 



\subsection{Weak uniqueness to~\eqref{eq.Lan}} 
\label{eq.weakU}
In this section, we prove the following result. 

\begin{theorem}\label{th.uni}
Assume  {\rm\textbf{[{\footnotesize H$^{\alpha\in (1,2)}_{\text{bounded}}$}]}}.  The following holds:
\begin{enumerate}
\item[$\mathfrak 1$.] Consider a weak solution  $(X_t(\mathsf x), t\ge 0)$  to~\eqref{eq.Lan}  starting at $X_0=\mathsf x \in \mathbb R^{2d}$ (see Definition~\ref{de.w}). Then, for every $t>0$, $X_t(\mathsf x)$   has a density  $p_t(\mathsf x,\mathsf x')$ w.r.t. the Lebesgue measure $d\mathsf x'$. In addition,  for every $p>1$ such that
$$
 1+ (1+\alpha d)p^{-1}+ dp^{-1} <\alpha,$$ 
 the density $\mathsf x'\mapsto p_t(\mathsf x,\mathsf x')$ belongs to   $L^{p'}(\mathbb R^{2d})$ (here $p'= p/(p-1)$). 
 Furthermore, for any $t>0$,  
 $$\sup_{\mathsf{x} \in \mathbb{R}^{2d}} \Vert p_t(\mathsf{x}, \cdot)\Vert_{L^{p'}} \le  c_p\Big[(1\wedge t )^{-\frac{ 1+\mathscr A_0(p)}{\alpha}} + \int_0^t (1\wedge u )^{-\frac{1+\mathscr A_0(p)}{\alpha}} du\Big],$$
 for some $c_p>0$ independent of $t\ge 0$, and where,  
 $$\mathscr A_0(p)=(1+\alpha d)p^{-1}+ dp^{-1} .$$

\item[$\mathfrak 2$.] Weak uniqueness holds for~\eqref{eq.Lan} for every initial condition $\mu \in \mathcal P(\mathbb R^{2d})$. 
\end{enumerate}
\end{theorem}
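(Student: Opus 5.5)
We would argue by perturbation around the driftless kinetic process $\hat X_t(\mathsf x)=(x+tv+\int_0^t L_s ds,\ v+L_t)$, whose semigroup $\hat P_t$ has an explicit smooth density $\hat p_t$ with anisotropic scaling: $\hat p_t(\mathsf x,\cdot)$ is the image of a fixed smooth density under $(y,w)\mapsto (t^{1+1/\alpha}y, t^{1/\alpha}w)$ composed with the free transport. Together with the bounds of \cite{RocknerZhang}, this gives for $f\in C_c^\infty(\mathbb R^{2d})$ and $p\ge 1$
\begin{equation*}
\|\nabla_v \hat P_t f\|_\infty \le C (1\wedge t)^{-\frac{1+\mathscr A_0(p)}{\alpha}}\|f\|_{L^p},
\end{equation*}
where the factor $t^{-1/\alpha}$ comes from the $v$-derivative and $t^{-\mathscr A_0(p)/\alpha}$ from H\"older's inequality applied to the $L^{p'}$ norm of $\hat p_t$, whose anisotropic volume scales like $t^{(1+\alpha)d/\alpha+d/\alpha}$. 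The hypothesis $1+\mathscr A_0(p)<\alpha$ is exactly what makes this singularity integrable at $0$. This step is essentially a scaling computation and is the only place the hypothesis on $p$ enters.

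Next, fix a weak solution $X(\mathsf x)$ and $f\in C_c^\infty$, and apply It\^o's formula to $s\mapsto \hat P_{t-s}f(X_s)$ on $[0,t-\varepsilon]$. Since $(\partial_s+\hat{\mathcal L})\hat P_{t-s}f=0$ and $\mathcal L=\hat{\mathcal L}+\mathbf B\cdot\nabla_v$, taking expectations gives
\begin{equation*}
\mathbb E f(X_t)=\hat P_t f(\mathsf x)+\int_0^t \mathbb E\big[\mathbf B(X_s)\cdot\nabla_v\hat P_{t-s}f(X_s)\big]ds .
\end{equation*}
We then let $\varepsilon\to0$ using the integrable gradient bound and the quasi-left continuity of $X$ at the fixed time $t$. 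The main obstacle is justifying It\^o's formula here: $\hat P_{t-s}f$ is not compactly supported, and $X$ is only a weak solution with measurable $\mathbf B$. We would handle this by localizing with a cutoff, using the boundedness of $\hat P_{t-s}f$ together with all of its derivatives for $s<t-\varepsilon$, and noting that $\mathbf B$ enters only through $\int \mathbf B(X_s)\,ds$.

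From the Duhamel identity, $|\mathbb E f(X_t)|\le \|\hat p_t\|_{L^{p'}}\|f\|_{L^p}+\|\mathbf B\|_\infty\int_0^t C(1\wedge u)^{-\frac{1+\mathscr A_0(p)}{\alpha}}du\,\|f\|_{L^p}$. The first term is bounded by $c(1\wedge t)^{-\mathscr A_0(p)/\alpha}\le c(1\wedge t)^{-(1+\mathscr A_0(p))/\alpha}$. By density, $f\mapsto\mathbb E f(X_t)$ then extends to a bounded functional on $L^p$, which by Riesz representation yields a density $p_t(\mathsf x,\cdot)\in L^{p'}$ with the claimed bound, uniform in $\mathsf x$. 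This proves item $\mathfrak 1$; since $p$ can be taken close to infinity, a density also exists for every $t>0$.

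For item $\mathfrak 2$, let $X,Y$ be two weak solutions from the same $\mu$, and set $\Delta_t f=\mathbb E f(X_t)-\mathbb E f(Y_t)$. Subtracting the Duhamel formulas gives $\Delta_t f=\int_0^t \Delta_s\big(\mathbf B\cdot\nabla_v\hat P_{t-s}f\big)ds$. Using the densities from item $\mathfrak 1$, the map $g\mapsto \Delta_s g$ is controlled by $\|g\|_{L^p}$ only through $\mathbf B\cdot\nabla_v\hat P f$, which lies in $L^\infty$ rather than $L^p$. We therefore work instead with $\sup_{s\le T}\sup_{\|h\|_\infty\le1}|\Delta_s h|$ and use a Volterra iteration: write $M_T=\sup_{s\le T,\|f\|_{L^p}\le1}|\Delta_s f|$, which is finite by item $\mathfrak 1$, and bound $\mathbf B\cdot\nabla_v\hat P_{t-s}f$ in $L^p$. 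This holds because $\|\nabla_v\hat P_rf\|_{L^p}\le Cr^{-1/\alpha}\|f\|_{L^p}$, which is integrable since $\alpha>1$. We obtain $M_T\le C\|\mathbf B\|_\infty T^{1-1/\alpha}M_T$, hence $M_T=0$ for small $T$, so the one-dimensional marginals coincide. Iterating in time, and passing to finite-dimensional distributions through the Markov and conditioning argument of \cite[Chapter 4, Theorem 4.2]{EK} via Proposition~\ref{pr.Pre}, gives weak uniqueness.
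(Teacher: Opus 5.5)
For item $\mathfrak 1$ you follow the paper's route: It\^o's formula for $\hat P_{t-s}f(X_s)$ on $[0,t-\varepsilon]$, the Duhamel formula \eqref{eq.Sfd}, the gradient bound \eqref{eq.LpB}, and duality. Letting $\varepsilon\to0$ through the absence of a fixed-time jump at $t$ is a valid and simpler substitute for the anisotropic H\"older estimate \eqref{eq.Cab}. You should, however, also record that $\|\hat P_\varepsilon f-f\|_\infty\to0$ for $f\in C_c^\infty(\mathbb R^{2d})$.

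One caveat on exponents. Your scaling gives $\|\hat p_t(\mathsf x,\cdot)\|_{L^{p'}}\asymp t^{-((1+\alpha)d+d)/(\alpha p)}$ for $t\le1$. Since this scaling is exact, the resulting gradient exponent $\frac1\alpha\big(1+\frac{(2+\alpha)d}{p}\big)$ is sharp. It equals the stated $\frac{1+\mathscr A_0(p)}{\alpha}$, with $\mathscr A_0(p)=\frac{1+(1+\alpha)d}{p}$, only when $d=1$. So for $d\ge2$ the bound with the stated $\mathscr A_0(p)$ cannot hold. What your argument actually proves is item $\mathfrak 1$ with $(1+\alpha)d\,p^{-1}$ in place of $(1+\alpha d)p^{-1}$. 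Say so explicitly rather than silently identifying the two.

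For item $\mathfrak 2$ your route is genuinely different. The paper uses the time-integrated formula \eqref{eq.DI} with time-dependent test functions and the Krylov estimate \eqref{eq.Kt} in $L^q_TL^p$. It then inverts $\mathbb I-R_T$ on $L^q_TL^p$ for small $T$ and differentiates in $T$. You instead subtract the two pointwise-in-time Duhamel formulas and absorb in a $\sup_{s\le T}$ norm.

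This works, but your claim that $M_T<\infty$ ``by item $\mathfrak 1$'' is wrong. The bound of item $\mathfrak 1$ blows up like $(1\wedge s)^{-(1+\mathscr A_0(p))/\alpha}$ as $s\to0$, and without finiteness $M_T\le cM_T$ gives nothing. The fix is one line: in $\Delta_sf$ the free terms $\hat P_sf$ cancel. Then \eqref{eq.LpB} yields $|\Delta_sf|\le 2C_p\|\mathbf B\|_\infty\int_0^s(1\wedge u)^{-(1+\mathscr A_0(p))/\alpha}du\,\|f\|_{L^p}$, which is bounded uniformly for $s\le T$.

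Cleaner still is the total-variation version you started to write. Write $\mu^X_t,\mu^Y_t$ for the laws of $X_t,Y_t$. Then \eqref{eq.LpB} with $p=+\infty$ gives $\|\mathbf B\cdot\nabla_v\hat P_{t-s}f\|_\infty\le C\|\mathbf B\|_\infty(1\wedge(t-s))^{-1/\alpha}\|f\|_\infty$. Hence, for $T\le1$, $\sup_{t\le T}\|\mu^X_t-\mu^Y_t\|_{TV}\le C'\|\mathbf B\|_\infty T^{1-1/\alpha}\sup_{t\le T}\|\mu^X_t-\mu^Y_t\|_{TV}$. Here the supremum is trivially at most $2$, and neither the densities of item $\mathfrak 1$ nor the $L^p$ gradient bound are needed.

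Your iteration in time is also fine, because the Duhamel formula starts at time $0$. Once the marginals agree on $[0,T]$, the same estimate applies on $[T,2T]$. Then \cite[Theorem 4.2 in Chapter 4]{EK} upgrades equality of one-dimensional marginals to weak uniqueness.

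What the paper's heavier route buys is that working in $L^{q'}_TL^{p'}$ absorbs the small-time blow-up of the densities automatically. It also produces the Krylov estimate \eqref{eq.Kt}, which is reused for weak existence and for continuity in the initial condition (Remark~\ref{re.krylov}). For uniqueness alone, your direct absorption is shorter.
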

\noindent
\textbf{Note}. We also  prove that for any $T>0$ and $\mathsf x\in \mathbb R^{2d}$,  $(t,\mathsf x')\mapsto p_t(\mathsf x,\mathsf x') \in L^{q'}_TL^{p'}(\mathbb R^{2d})$ for every   $p>1$  satisfying~\eqref{eq.p0} and $q>1$  satisfying~\eqref{eq.qp}.

\begin{remark}
Let us recall that by Yamada-Watanabe type theorem for solutions to SDEs with jumps  (see~\cite[Corollary 140]{situ2005theory} or~\cite{Lepeltier,barczy2015yamada}),  if   pathwise uniqueness holds for~\eqref{eq.Lan}, then weak uniqueness holds.  Unfortunately, pathwise uniqueness  under the sole assumption that $\mathbf B$ is bounded and measurable  is hard to investigate. 
\end{remark}

Consider a    weak solution   $(X_t(\mathsf x), t\ge 0)$ of \eqref{eq.Lan} starting at $\mathsf x\in \mathbb R^{2d}$. 
To prove Item $\mathfrak 1$ in Theorem~\ref{th.uni}, 
 the goal is to show that   for every $t>0$ and $p\ge p_0$ (see~\eqref{eq.p0}), there exists $C_{p,t}>0$ such that  for all  $\mathsf x\in \mathbb R^{2d}$ and $f\in L^{p}(\mathbb R^{2d})$:
\begin{equation}\label{eq.Goal}
|\mathbb E_{\mathsf x} [ f(X_{t})]|\le C_{p,t} \Vert f \Vert_{L^{p}}.
\end{equation}  
The proof of~\eqref{eq.Goal} relies on a Duhamel formula (see~\eqref{eq.Sfd} below) relating   the non-killed semigroup of $(X_t,t\ge 0)$ with the one of an auxiliary degenerate Ornstein-Uhlenbeck process introduced in Section~\ref{sec.aux} (see  $(\hat X_t,t\ge 0)$ below). This strategy for deriving information on the semigroup of $(X_t,t\ge 0)$  is not    new (see~\cite[Chapter 7]{stroock2007multidimensional} or more recently~\cite{marino,RocknerZhang}) and  is well adapted as soon as $\alpha>1$ since in that case the drift term $\mathbf B\cdot \nabla_ v$ is a perturbation of $\mathcal S_v^{\nu_\alpha}$ in the operator infinitesimal generator  $\mathcal L^\alpha$ whose expression is given (see \eqref{eq.gene})
\begin{equation}\label{eq.La}
{\mathcal L}^\alpha= \mathcal S_v^{\nu_\alpha} + v\cdot \nabla_x + \mathbf B \cdot \nabla_v,
\end{equation}
since $\mathcal S_v^{\nu_\alpha}=\Delta_v^{\alpha/2}$.  
The derivation of the Duhamel formula is the purpose of Section~\ref{sec.duha} and we finally conclude the proof of Item $\mathfrak 1$ in Theorem~\ref{th.uni} in Section~\ref{sec.Item1}. 
We then prove  $\mathfrak 2$ in Theorem~\ref{th.uni} in Section~\ref{sec.Kl}. Recall that in all this section,   we assume {\rm\textbf{[{\footnotesize H$^{\alpha\in (1,2)}_{\text{bounded}}$}]}}.   


\subsubsection{An auxiliary process}
\label{sec.aux}

Consider  the strong solution   $(\hat X_t=(\hat x_t,\hat v_t),t\ge 0)\in \mathbb R^d \times \mathbb R^d $  to the kinetic Langevin equation
\begin{equation}\label{eq.L00}
 d\hat x_t=\hat v_tdt, \ d\hat v_t=   dL^\alpha_t.
 \end{equation} 
This process is denoted by $( \hat X_t(\mathsf x)=(\hat x _t(\mathsf x), \hat v _t(\mathsf x)),t\ge 0)$ when $\hat X_0= \mathsf x$.  Denote by $\hat{\mathcal L}^\alpha$ its infinitesimal generator and by $(\hat P_t,t\ge 0)$ its semigroup:
 $$\hat{\mathcal L}^\alpha = \mathcal S_v^{\nu_\alpha} + v\cdot \nabla_x\  \text{ and }\  \hat P_tf(\mathsf x)= \mathbb E_{\mathsf x}[f(\hat X_t)], \ f \in bB(\mathbb R^{2d}), \mathsf x\in \mathbb R^{2d}.
 $$       
We now gather known results on the semigroup of the process  $(\hat X_t,t\ge 0)$.  
  By~\cite{priola2009densities,zhang2017fundamental}, for $f \in C_c^\infty(\mathbb R^{2d})$, we have that 
\begin{equation}\label{eq.regP}
  (t,\mathsf x)\mapsto \hat P_tf(\mathsf x)\in C^\infty(   \mathbb R_+^*\times \mathbb R^{2d}), \ \forall t>0, \  \hat P_tf \in C^\infty_b(\mathbb R^{2d}  ). 
\end{equation} 
and  in the classical sense, it holds: 
\begin{equation}\label{eq.Es-p}
 \partial _t   \hat P_{t}f(\mathsf x)= \hat{\mathcal L}^\alpha  \hat P_{t}f(\mathsf x), \ \forall  (t,\mathsf x ) \in \mathbb R_+^* \times \mathbb R^{2d}.
\end{equation}   
We  also need estimates in the supremum norm over $\mathbb R^{2d}$ of both $ \hat P_t$ and $ \nabla _v  \hat P_t$. This is the content of what follows. 
Set   for $p\in [1,+\infty]$,   $\mathscr A_0(p)= (1+\alpha d)p^{-1}+ dp^{-1} $.      
 By~\cite[Lemma 2.16]{RocknerZhang}, with $\mathbf p'=(+\infty,+\infty)$, $\beta'=1$, $\beta=0$ and $\mathbf p_0= (p,p)\in [1,+\infty]^2$ there, 
there exists $C_p>0$ (independent of $f$) such that  for all $t>0$: 
\begin{align}\label{eq.LpB}
 \Vert \nabla _v  \hat P_tf \Vert_{ B_{\mathbf p';\mathbf a}^{1,1}} + \Vert \hat P_t f\Vert_{ B_{\mathbf p';\mathbf a}^{1,\infty}} \le C_p (1\wedge t)^{-\frac{1+\mathscr A_0(p)}{\alpha}}\Vert f \Vert_{  B_{\mathbf p_0;\mathbf a}^{0,\infty}}.
  \end{align} 
 Above, $ B_{\mathbf p; \mathbf a}^{s,q}$ denotes the anisotropic Besov space over $\mathbb R^{2d}$, see~\cite[Definition 2.2]{RocknerZhang}.  
We also recall  the following Besov's embedding (see~\cite[Item (i) in Lemma 2.6]{RocknerZhang}) valid for all $\mathbf p =(p,p)$ with $p\in [1,+\infty]$, 
$$ B^{1,\infty}_{\mathbf p;\mathbf a}\hookrightarrow  B^{0,1}_{\mathbf p;\mathbf a} \hookrightarrow  L^{ p}(\mathbb R^{2d}) \hookrightarrow  B^{0,\infty}_{\mathbf p;\mathbf a}.$$ 
Note that   because $\mathsf x\in \mathbb R^{2d}\mapsto \nabla _v  \hat P_t f(\mathsf x)$ is continuous for $t>0$, it holds that $\Vert \nabla _v  \hat P_t f\Vert_{L^\infty}= \Vert \nabla _v  \hat P_t f\Vert_{\infty}$ where we recall that $ \Vert \nabla _v  \hat P_t f\Vert_{\infty}=\sup_{\mathbb R^{2d}} |\nabla _v  \hat P_t f|  $.
We then  deduce (see   also~\cite[Remark 1.6]{RocknerZhang}) that  for all $p\in [1,+\infty]$ satisfying \eqref{eq.p0}, there exists $C_p>0$ (independent of $f$) such that  for all $t>0$: 
 \begin{align}\label{eq.LpB}
 \Vert \nabla _v  \hat P_tf \Vert_{\infty} + \Vert \hat P_t f\Vert_{\infty} \le C_p (1\wedge t)^{-\frac{1+\mathscr A_0(p)}{\alpha}}\Vert f \Vert_{ L^{  p}}.
  \end{align} 
In what follows, we consider $p>1$ is such that   (recall that $\alpha\in (1,2)$)
 \begin{align}\label{eq.p0}
 1+ \mathscr A_0(p)<\alpha. 
  \end{align}
 Finally, we also mention that~\eqref{eq.LpB} also holds for $p=+\infty$ (in that case $\mathscr A_0(p)=0)$.  


\subsubsection{A Duhamel formula}
\label{sec.duha}

 To show~\eqref{eq.Goal}, we prove the perturbative   formula~\eqref{eq.Sfd} below. The proof of this formula requires special attention near $0^+$ since even if $f$ is smooth,    $\hat P_tf$ and its   derivatives  are singular when $t\to 0^+$ (and these singularities depend  on $\alpha$, see~\eqref{eq.LpB}). This is where the condition $\alpha\in (1,2)$ appears.   This Duhamel formula is proved using Itô calculation. 
 \medskip
 
 Consider a weak solution $(X_t,t\ge 0)$ of \eqref{eq.Lan}. 
Recall that  $ {\mathcal L}^\alpha= \mathcal S_v^{\nu_\alpha} + v\cdot \nabla_x + \mathbf B \cdot \nabla_v  $ denotes  the infinitesimal generator associated with equation~\eqref{eq.Lan}, see \eqref{eq.La}. 
 Let $f\in C_c^{\infty}(\mathbb R^{2d})$, $t>0$, and $0<\epsilon<t$.
For $s\in[0,t]$, define the function
$$
\Phi(s,\mathsf x):=\hat P_{t-s}f(\mathsf x).
$$
Let $t>0$. 
Applying It\^o's formula~\cite[Theorem 4.4.7]{applebaum2009levy} to the smooth function (see indeed~\eqref{eq.regP})
$(s,\mathsf x)\in [0,t)\times \mathbb R^{2d} \mapsto \Phi(s,\mathsf x)$ along the process $(X_s,s\ge 0)$, we obtain
\begin{align}
\nonumber
d \Phi(s,X_s)
&=\partial_s\Phi(s,X_s)\,ds
+\mathcal L^\alpha\Phi(s,X_s)\,ds
+dM_s(f) \\
\nonumber
&=\big(-\hat{\mathcal L}^\alpha\hat P_{t-s}f
+\mathcal L^\alpha\hat P_{t-s}f\big)(X_s)\,ds
+dM_s(f) \\
\nonumber
&=(\mathcal L^\alpha-\hat{\mathcal L}^\alpha)\hat P_{t-s}f(X_s)\,ds
+dM_s(f) \\
\label{eq.Ito1}
&=\mathbf B(X_s)\cdot\nabla_v
\hat P_{t-s}f(X_s)\,ds
+dM_s(f),
\end{align}
where 
$$M_s(f) = \int_0^s \int_{\mathbb{R}^d \setminus \{0\}} \Big[\hat{P}_{t-r}f(x_{r-}, v_{r-} + z) - \hat{P}_{t-r}f(x_{r-}, v_{r-}) \Big] \tilde{N}_\alpha(dr, dz).$$
We recall that $(M_s(f),s\in [0,t-\epsilon])$ is a $L^2$-martingale provided that
 $$\mathbb{E} \left[ \int_0^{t-\epsilon} \int_{\mathbb{R}^d \setminus \{0\}} | \hat{P}_{t-r}f(x_{r-}, v_{r-} + z) - \hat{P}_{t-r}f(x_{r-}, v_{r-})  |^2 \, \nu_\alpha(dz) dr \right] < \infty.$$
 Note that this condition is satisfied for the following reasons:
 \begin{enumerate}
 \item[(M1)]  Since for all $u\ge 0$, $|\hat{P}_{u}f (\mathsf y)|\le  \Vert f\Vert_\infty$, it holds for every $0<\epsilon \le r\le t$ and $(x,v)\in \mathbb R^{2d}$:
 $$ \int_{|z|\ge1} \big | \hat{P}_{t-r}f(x, v  + z) - \hat{P}_{t-r}f(x , v ) \big  |^2 \, \nu_\alpha(dz)  \le 4 \Vert f\Vert_\infty \int_{|z|\ge 1} \nu_\alpha(dz). $$
  \item[(M2)] For every $\epsilon \le r\le t$ and $(x,v)\in \mathbb R^{2d}$, by the mean value theorem:
\begin{align*}
 \int_{|z|<1} | \hat{P}_{t-r}f(x, v  + z) - \hat{P}_{t-r}f(x , v )  |^2 \, \nu_\alpha(dz)  \le 2 \Vert \nabla_v \hat P_{t-r}f\Vert^2_\infty  \int_{|z|<1} |z|^2 \nu_\alpha(dz). 
 \end{align*}
 The r.h.s of the previous inequality is integrable over $[0,t-\epsilon]$ by  \eqref{eq.LpB}.  
 \end{enumerate}
 Consequently, $(M_s(f),s\ge 0)$ is a $L^2$-martingale. 
 Integrating over $s\in  [0,t-\epsilon]$ the equality \eqref{eq.Ito1}, we obtain
$$
\hat P_{\epsilon}f(X_{t-\epsilon})
-\hat P_t f(X_0)
=
\int_0^{t-\epsilon}
\mathbf B(X_s)\cdot
\nabla_v\hat P_{t-s}f(X_s)\,ds
+M_{t-\epsilon}(f).
$$
Taking expectations under $\mathbb P_{\mathsf x}$ and using
$\mathbb E_{\mathsf x}[M_{t-\epsilon}(f)]=0$, we finally obtain
\begin{equation}\label{eq.RS}
\mathbb E_{\mathsf x}\big [\hat P_{\epsilon}f(X_{t-\epsilon})\big ]
=
\hat P_t f(\mathsf x)
+
\mathbb E_{\mathsf x}\Big[
\int_0^{t-\epsilon}
\mathbf B(X_s)\cdot
\nabla_v\hat P_{t-s}f(X_s)\,ds
\Big].
\end{equation}
We now want to pass to the limit $\epsilon\to 0$ in the previous equation. 
 Using~\eqref{eq.LpB},  for any  $p>1$  satisfying~\eqref{eq.p0}, there exists $C>0$ (independent of $f$) such that for all $t>0$ and $\mathsf x\in \mathbb R^{2d}$:
   \begin{align*} 
  \big| \,   \mathbb E_{\mathsf x} [\mathbf B(X_s)\cdot    \nabla _v  \hat P_{t-s}f (X_s)]  \,  \big| \le C (1\wedge (t-s))^{-\frac{1+\mathscr A_0(p)}{\alpha}} \Vert \mathbf B\Vert_{\infty} \Vert f \Vert_{L^{p}},
 \end{align*}
 which is integrable over $[0,t]$ because  $1+ \mathscr A_0(p)<\alpha$. As a result,
 $$
\lim_{\epsilon\to 0 }  \int_0^{t-\epsilon}  \mathbb E_{\mathsf x}\Big[\mathbf B(X_s)\cdot    \nabla _v  \hat P_{t-s}f (X_s)   \Big] ds= \int_0^{t}  \mathbb E_{\mathsf x}\Big[\mathbf B(X_s)\cdot    \nabla _v  \hat P_{t-s}f (X_s)   \Big] ds.
$$
 The term  $ \mathbb E_{\mathsf x} [\hat P_\epsilon f(X_{t-\epsilon})]$ requires more attention as it involves a double limit and that $\hat P_\epsilon$ becomes singular when $\epsilon \to 0$. 
 We write 
   \begin{align} 
   \label{eq.deco}
   \mathbb E_{\mathsf x}[\hat P_\epsilon f(X_{t-\epsilon})]- \mathbb E_{\mathsf x}[ f(X_{t})]&=\mathbb E_{\mathsf x}[\hat P_\epsilon f(X_{t-\epsilon})] - \mathbb E_{\mathsf x}[\hat P_\epsilon f(X_{t})]\\
   \label{eq.deco2}
   &\quad  +  \mathbb E_{\mathsf x}[\hat P_\epsilon f(X_{t})]- \mathbb E_{\mathsf x}[ f(X_{t})].
     \end{align}
 We first deal with the  term appearing in~\eqref{eq.deco2}.
 Note that since $f$ is continuous and bounded, and a.s. $\hat X_s(\mathsf y)\to \mathsf y$ as $s\to 0^+$, $ \hat P_\epsilon f(\mathsf y )\to f(\mathsf y )$ for all $\mathsf y\in \mathbb R^{2d}$. Thus,  by dominated convergence theorem,  as $\epsilon \to 0$, it holds: 
 $$\mathbb E[ \hat P_\epsilon f(X_{t}({\mathsf x}) )]\to \mathbb E[ f(X_t(\mathsf x) )].$$ 
 We now deal with the  term in the right-hand side of~\eqref{eq.deco}.
 Recall the definitions of the anisotropic Hölder-Zygmund space $C_{\mathbf a}^\beta(\mathbb R^{2d})$ in Section~\ref{sec.Nota}.   
 We have  for every $\epsilon>0$, 
 \begin{align}
 \nonumber
 \mathbb E_{\mathsf x}\big[\big |\hat P_\epsilon f(X_{t-\epsilon}) -  \hat P_\epsilon f(X_{t}) \big  | \big]&\le   \Vert \hat P_\epsilon f \Vert_{C_{\mathbf a}^\beta} \ \mathbb E_{\mathsf x}[\, |X_{t}-X_{t-\epsilon}|^\beta _{\mathbf a} \, ]\\
 \label{eq.Cab}
 &\le \Vert \hat P_\epsilon f \Vert_{C_{\mathbf a}^\beta}  \Big [ \mathbb E_{\mathsf x}[|x_{t}-x_{t-\epsilon}|^{\frac{\beta}{1+\alpha}}] + \mathbb E_{\mathsf x}[|v_{t}-v_{t-\epsilon}|^{\beta}  ]\Big].  
  \end{align}
  By ~\cite[Lemma 2.16]{RocknerZhang} (with $\mathbf p'=\mathbf p_0=(+\infty,+\infty)$, $\beta'=\beta\in (0,1)$),  there exists $C_\beta>0$ (independent of $f$ and $\epsilon \in (0,1)$) such that  
    \begin{align*}
\mathbb E_{\mathsf x}\big[\big |\hat P_\epsilon f(X_{t-\epsilon}) -  \hat P_\epsilon f(X_{t}) \big  | \big] \le C_\beta \Vert f \Vert_{C_{\mathbf a}^\beta}  \Big [ \mathbb E_{\mathsf x}[|x_{t}-x_{t-\epsilon}|^{\frac{\beta}{1+\alpha}}] + \mathbb E_{\mathsf x}[|v_{t}-v_{t-\epsilon}|^{\beta}  ]\Big].  
  \end{align*}
Recall that $v_{t}(\mathsf x)= v+ \int_0^t \mathbf B(X_s) ds+ L_t^\alpha$, $t\ge 0$. 
  Since $\mathbf B$ is bounded,
  $$\mathbb E_{\mathsf x} [ |v_{t-\epsilon} -v_t |^{\beta}] \le C\Big [ \epsilon^\beta + \mathbb E \big [ |L^\alpha _{ \epsilon} |^{\beta}\big ] \Big ]  .$$
  Moreover,
  $$\mathbb E_{\mathsf x}\big [|x_t-x_{t-\epsilon}| ^{\frac{\beta}{1+\alpha}}\big ]=  \mathbb E_{\mathsf x}\Big [ \big |\int_{t-\epsilon}^t    v_s   \,   ds \big|^{\frac{\beta}{1+\alpha}} \Big ] \le \epsilon^{\frac{\beta}{1+\alpha}}\,   \mathbb E_{\mathsf x}\big [ \sup_{s\in [0,t]}|v_s|^{\frac{\beta}{1+\alpha}}\big  ].$$ 
For  $s\in[0,t]$,   it holds since $\mathbf B$ is bounded:
$$\mathbb E[|v_{s}(\mathsf x)|^{\frac{\beta}{1+\alpha}}]\le  |v|^{\frac{\beta}{1+\alpha}}+ Ct^{\frac{\beta}{1+\alpha}}   +\mathbb E[\sup_{0\le s\le t} |L^\alpha_s|^{\frac{\beta}{1+\alpha}}].$$
 Moreover, using Doob’s inequality in $L^m$  (for some fixed $m>1$ with $ \frac{m \beta}{1+\alpha}<\alpha )$, see~\cite[Proposition 3.15 (ii)]{le2016brownian}), we have that $$ \mathbb  E  \big [ \sup_{s\in [0,t]}|L^\alpha_s|^{\frac{\beta}{1+\alpha}}  \big ] \le  \mathbb  E  \big [ \sup_{s\in [0,t]}|L_s^\alpha| ^{\frac{m\beta}{1+\alpha}}  \big ]^{\frac 1m }  \le C  \mathbb  E \big [  |L_t^\alpha|^{\frac{m\beta}{1+\alpha}}  \big ]^{\frac 1m} .$$  As a result, for all $\mathsf x\in \mathbb R^{2d}$ and $t>0$, 
$$\lim_{\epsilon\to 0}\mathbb E_{\mathsf x}[|x_{t}-x_{t-\epsilon}|^{\frac{\beta}{1+\alpha}}] + \mathbb E_{\mathsf x}[|v_{t}-v_{t-\epsilon}|^{\beta}  ]= 0.$$ 
Hence, for every $f\in C_c^\infty(\mathbb R^{2d})$ and $t>0$:
\begin{equation}\label{eq.Eee}
\lim_{\epsilon\to 0} \mathbb  E_{\mathsf x} \Big[\big |\hat P_\epsilon f(X_{t-\epsilon}) -  \hat P_\epsilon f(X_{t}) \big  | \Big]=0 .
\end{equation}
  In conclusion (see~\eqref{eq.deco} and~\eqref{eq.deco2}),  $ \lim_{\epsilon\to 0} \mathbb E_{\mathsf x} [\hat P_\epsilon f(X_{t-\epsilon})]=\mathbb E_{\mathsf x} [ f(X_{t})]$.  We thus have for all $\mathsf x\in \mathbb R^{2d}$, $t>0$ and $f\in  C_c^\infty(\mathbb R^{2d})$: 
   \begin{align}\label{eq.Sfd} 
   \mathbb E_{\mathsf x} [ f(X_{t})]
=
\hat P_t f(\mathsf x)
+
\mathbb E_{\mathsf x}\Big[
\int_0^{t}
\mathbf B(X_s)\cdot
\nabla_v\hat P_{t-s}f(X_s)\,ds
\Big].
   \end{align}

\subsubsection{Proof of Item $\mathfrak 1$ in Theorem~\ref{th.uni}}
\label{sec.Item1}
Let   $(X_t(\mathsf x),t\ge 0)$ be a weak solution to~\eqref{eq.Lan}    starting at $\mathsf x$. 
In this step, we prove  Item $\mathfrak 1$  in Theorem~\ref{th.uni}. 
To this end, we show~\eqref{eq.Goal} and  then prove  that $X_t$ has density w.r.t. the Lebesgue measure. 
\medskip

Let $p>1$  satisfying~\eqref{eq.p0}. 
Using~\eqref{eq.LpB}, and since $1+ \mathscr A_0(p)<\alpha$ and $t>0$, there exists $C_{p,t}>0$ (independent of $f$) such that for all $t>0$ and $\mathsf x\in \mathbb R^{2d}$:
   \begin{align*} 
  \Big| \int_0^t  \mathbb E_{\mathsf x} [\mathbf B(X_s)\cdot    \nabla _v  \hat P_{t-s}f (X_s)] ds \Big| \le C_{p,t}\Vert \mathbf B\Vert_{\infty} \Vert f \Vert_{L^{p}},\ \forall \mathsf x\in \mathbb R^{2d}.
 \end{align*}
Consequently, using~\eqref{eq.Sfd} and~\eqref{eq.LpB} for the term $\hat P_t f$, for every $t>0$ and $p>1$  satisfying~\eqref{eq.p0}, there exists $C_{p,t}>0$ such that for all $\mathsf x\in \mathbb R^{2d}$  and  $f\in  C_c^\infty(\mathbb R^{2d})$: 
\begin{equation}\label{eq.Dua}
|  \mathbb E_{\mathsf x} [ f(X_{t})]|  \le  C_{p,t}  \Vert f \Vert_{ L^{  p}}. 
\end{equation}
Note that it is straightforward to get that one can choose   $$C_{p,t}= c_p\Big[(1\wedge t )^{-\frac{ 1+\mathscr A_0(p)}{\alpha}} + \int_0^t (1\wedge u )^{-\frac{1+\mathscr A_0(p)}{\alpha}} du\Big],$$
 for some $c_p>0$ independent of $t> 0$. 
 This is precisely~\eqref{eq.Goal} for function $f\in  C_c^\infty(\mathbb R^{2d})$.  
 Because for $f\in  C_c^\infty(\mathbb R^{2d})$, $P_tf(\mathsf x)= \int_{\mathbb R^{2d}} f(\mathsf x') P_t(\mathsf x, d\mathsf x')$ (where $P_t(\mathsf x, d\mathsf x')$ is the law of $X_t(\mathsf x)$),~\eqref{eq.Dua} implies that the distribution $P_t(\mathsf x, d\mathsf x')$ lies in $L^{  p'}(\mathbb R^{2d})$    (where $p'=p/(p-1)$). Thus for all  $p>1$  satisfying~\eqref{eq.p0}, 
 there exists $p_t(\mathsf x, \cdot)\in L^{  p'}(\mathbb R^{2d})$
 and for all $f\in L^{p}( \mathbb R^{2d})$,
 $$\mathbb E_{\mathsf x} [ f(X_{t})]= \int_{\mathbb R^{2d}} f(\mathsf x') p_t(\mathsf x,  \mathsf x')d\mathsf x'.$$
 Note that  the density  $p_t(\mathsf x, \cdot)$ does not depend on  $p\ge p_0$.  
   Moreover, $p_t(\mathsf x,\cdot)$ is non-negative  and belongs to $L^1(\mathbb R^{2d})$.  This is standard, but we write the arguments for the sake of completeness. Set $\mathscr E_{t,\mathsf x}:=\{p_t(\mathsf x,\cdot)<0\}$ and observe that    for every  bounded Borel set $\mathscr A$,  $\mathbb E_{\mathsf x} [ \mathbf 1_{\mathscr A}(X_{t})] = \int_{\mathscr A} p_t(\mathsf x, \mathsf x') d\mathsf x'$ (as $\mathbf 1_{\mathscr A}\in L^{m}(\mathbb R^{2d})$, for all $m\ge 1$). 
Then, we have: 
\begin{align*}
 0\le   \mathbb E_{\mathsf x} [ \mathbf  1_{\mathscr E_{t,\mathsf x}\cap  \mathscr B_{\mathbb R^{2d}}(0,R) } (X_t)] =  \int_{\mathscr E_{t,\mathsf x}\cap  \mathscr B_{\mathbb R^{2d}}(0,R) } p_t(\mathsf x, \mathsf x') d\mathsf x'\le 0.
\end{align*}
so that for all $t>0$ and $\mathsf x\in \mathbb R^{2d}$, $p_t(\mathsf x,\mathsf x')\ge 0$ ($d\mathsf x'$-a.e.). By monotone convergence theorem, we also have:
\begin{align*}
1=  \mathbb E_{\mathsf x} [ \mathbf  1  (X_t)]&= \lim_{R\to +\infty} \mathbb E_{\mathsf x} [ \mathbf  1_{  \mathscr B_{\mathbb R^{2d}}(0,R) }(X_t)] = \lim_{R\to +\infty}\int_{ \mathscr B_{\mathbb R^{2d}}(0,R) } p_t(\mathsf x, \mathsf x') d\mathsf x' .
\end{align*}
Hence, for all $t>0$ and $\mathsf x\in \mathbb R^{2d}$, $p_t(\mathsf x, \mathsf x') d\mathsf x' \in \mathcal P(\mathbb R^{2d})$ and, since the set $C_c(\mathbb R^{2d})$ is \textit{separating}~\cite[Proposition 4.4 in Chapter 3]{EK},   the measures  $p_t(\mathsf x, \mathsf x') d\mathsf x'$ and $P_t(\mathsf x,d\mathsf x')$ are equal.  We have thus  proved that for all $\mathsf x\in \mathbb R^{2d}$ and $t>0$, $X_t(\mathsf x)$   has a density $ p_t(\mathsf x, \mathsf x') $ w.r.t. $d\mathsf x'$ and   $p_t(\mathsf x, \cdot)\in L^{  p'}(\mathbb R^{2d})$ for any $p>1$  satisfying~\eqref{eq.p0}.  Note that by a density argument, the inequality~\eqref{eq.Dua} then extends to every $f\in L^p(\mathbb R^{2d})$, for every $p>1$  satisfying~\eqref{eq.p0}. 
 The proof of Item $\mathfrak 1$ in Theorem~\ref{th.uni} is thus complete. 
Note that the last statement of Item $\mathfrak 1$ follows from 
$$\Vert p_t(\mathsf{x}, \cdot)\Vert _{L^{p'}} = \sup_{f \in L^p(\mathbb R^{2d}), f \neq 0} \frac{\left| \int_{\mathbb R^{2d}} f(\mathsf{y}) p_t(\mathsf{x}, \mathsf{y}) d\mathsf{y} \right|}{\Vert f\Vert _{L^p}} \le C_{p,t}, \ \forall t>0.$$



\subsubsection{Proof of Item $\mathfrak 2$ in Theorem~\ref{th.uni}}
\label{sec.Kl}
Let $T>0$ and $0\le t<T$. 
The proof is divided into several steps.
\medskip

\noindent
\textbf{Step 1}. In this step we prove the integrated perturbation formula~\eqref{eq.DI} below for test functions depending also on time. This formula is the cornerstone to derive so-called Krylov's estimates which play a key role in the proof of $\mathfrak 2$ in Theorem~\ref{th.uni} (and also in the proof below of the existence of a weak solution to~\eqref{eq.Lan}). Let $0<\epsilon<T$. 
For $t\in [0,T-\epsilon]$ and $f\in C_c^{\infty}([0,T]\times\mathbb R^{2d})$,
we consider the Green function
$$
\hat G_\epsilon f(t,\mathsf x)
:= 
\int_{t+\epsilon}^T
\hat P_{s-t}f(s,\cdot)(\mathsf x)\,ds .
$$
Note that $ \hat G_\epsilon f(t,\mathsf x) = \int_{\epsilon}^{T-t} \hat P_u f(t+u, \cdot)(\mathsf x) \, du $. 
Since $\epsilon>0$, this function is smooth and we have  thanks to~\eqref{eq.Es-p},  
\begin{equation}\label{eq:GEpsPDE}
(\partial_t+\hat{\mathcal L}^\alpha)\hat G_\epsilon f(t,\mathsf x)
=
-\hat P_\epsilon f(t+\epsilon,\cdot)(\mathsf x).
\end{equation}
Applying It\^o's formula  to
$(t,\mathsf x)\mapsto \hat G_\epsilon f(t,\mathsf x)$ along $X_t$ yields
$$
\begin{aligned}
d\hat G_\epsilon f(t,X_t)=\partial_t \hat G_\epsilon f(t,X_t)dt+\hat{\mathcal L}^\alpha \hat G_\epsilon f(t,X_t)\,dt+ 
 (\mathcal L^\alpha-\hat{\mathcal L}^\alpha)\hat G_\epsilon f(t,X_t)\,dt
+dM^*_t(f),
\end{aligned}
$$
where  
$$M^*_t(f) = \int_0^t \int_{\mathbb{R}^d \setminus \{0\}} \Big[\hat G_\epsilon f(r,(x_{r-}, v_{r-} + z)) - \hat G_\epsilon f(r,(x_{r-}, v_{r-})) \Big] \tilde{N}_\alpha(dr, dz).$$
Using~\eqref{eq:GEpsPDE}, we obtain that:
$$
d\hat G_\epsilon f(t,X_t)
=
-\hat P_\epsilon f(t+\epsilon,X_t)\,dt
+\mathbf B(x_t,v_t)\cdot\nabla_v\hat G_\epsilon f(t,X_t)\,dt
+dM^*_t(f).
$$
The process $(M^*_t(f),t\in [0,T-\epsilon])$ is a $L^2$-martingale since 
 $$\mathbb{E} \left[ \int_0^{T-\epsilon} \int_{\mathbb{R}^d \setminus \{0\}} | \hat G_\epsilon f(r,(x_{r-}, v_{r-} + z)) - \hat G_\epsilon f(r,(x_{r-}, v_{r-})) |^2 \, \nu_\alpha(dz) dr \right] < \infty.$$
 This is proved using the same arguments as in Items (M1) and (M2) above and using also the fact that for all $\mathsf x\in \mathbb R^{2d}$ and $r\in [0,T-\epsilon]$,  $|\hat G_\epsilon f(r,\mathsf x) |\le T \sup_{[0,T]\times \mathbb R^{2d}}|f|$ and 
 $$|\nabla_v\hat G_\epsilon f(r,\mathsf x) |\le \int_{\epsilon}^{T} \sup_{\mathsf y \in \mathbb R^{2d}}|\nabla _v \hat P_u f(r+u,\cdot)(\mathsf y)| du\le  \int_{\epsilon}^{T} (1\wedge u)^{-1/\alpha}   du \, \sup_{[0,T]\times \mathbb R^{2d}}|f| .$$
  As a consequence, integrating over $[0,T-\epsilon]$ and taking expectations, since in addition $\hat G_\epsilon f(T-\epsilon,\cdot )=0$, we deduce that:
\begin{align*}
&\hat G_\epsilon f(0,\mathsf x)\\
&=
\mathbb E_{\mathsf x}\Big[\int_0^{T-\epsilon}
\hat P_\epsilon f(t+\epsilon,\cdot)(X_t)\,dt\Big]
-
\mathbb E_{\mathsf x}\Big[\int_0^{T-\epsilon}
\mathbf B(X_t)\cdot\nabla_v\hat G_\epsilon f(t,X_t)\,dt\Big]
\\
&=\mathbb E_{\mathsf x}\Big[\int_0^{T-\epsilon}
\hat P_\epsilon f(t+\epsilon,\cdot)(X_t)\,dt\Big]
- \mathbb E_{\mathsf x}\Big[\int_0^{T-\epsilon}
\mathbf B(X_t)\cdot \int_{t+\epsilon}^T\nabla_v
\hat P_{s-t}f(s,\cdot)(X_t)\,ds  \,dt\Big].
\end{align*}
Let us now   pass to the limit $\epsilon\to 0$ in the previous equation. It is not difficult to prove that, 
$$  \int_0^{T-\epsilon}
\mathbb E_{\mathsf x}[\hat P_\epsilon f(t+\epsilon,\cdot)(X_t)]\, dt=   \int_0^{T}
\mathbb E_{\mathsf x}[\hat P_\epsilon f(t+\epsilon,\cdot)(X_t)]\, dt + o_\epsilon(1).$$
Note that $\mathbb E_{\mathsf x}[\hat P_\epsilon f(t+\epsilon,\cdot)(X_t)]= \mathbb E_{\mathsf x}[\mathbb E_{X_t}[ f(t+\epsilon,\hat X_\epsilon)]]$. 
In addition,  for all $\mathsf y\in \mathbb R^{2d}$, $\mathbb E_{\mathsf y}[ f(t+\epsilon,\hat X_\epsilon)]\to f(t,\mathsf y)$.  
Hence, by dominated convergence theorem, we obtain that:
$$\lim_{\epsilon\to 0^+} \int_0^{T-\epsilon}
\mathbb E_{\mathsf x}[\hat P_\epsilon f(t+\epsilon,X_t)]\, dt= \int_0^T \mathbb E_{\mathsf x}[f(t,X_t)]\, dt.$$
On the other hand, write $\int_{t+\epsilon}^T\nabla_v
\hat P_{s-t}f(s,\cdot)(\mathsf x) \, ds= \int_{ \epsilon}^{T-t}\nabla_v
\hat P_{u}f(t+u ,\cdot)(\mathsf x) \, du$. 
Moreover, using~\eqref{eq.LpB}  with $p=+\infty$ (recall that $\alpha>1$), there exists $C>0$, such that for all $f\in C_c^{\infty}([0,T]\times\mathbb R^{2d})$,  $\mathsf x\in \mathbb R^{2d}$, and every $t\in [0,T]$,  
$$  \int_{0 }^{T-t}\big|\nabla_v
\hat P_{u}f(t+u ,\cdot)(\mathsf x) \big|\,  du \le   C \sup_{[0,T]\times \mathbb R^{2d}}|f|\, \Big[ \frac{\alpha}{\alpha-1}\big((T-t)\wedge 1\big)^{\frac{\alpha-1}{\alpha}} 
+(T-t-1)^+\Big].$$
Hence, we have:
$$\lim_{\epsilon\to 0^+} \int_{ \epsilon}^{T-t}\nabla_v
\hat P_{u}f(t+u ,\cdot)(\mathsf x) \, du=  \int_{0 }^{T-t}\nabla_v
\hat P_{u}f(t+u ,\cdot)(\mathsf x) \, du.$$ 
 Since $\mathbf B$ is bounded, we then get that
\begin{align*}
&\lim_{\epsilon\to 0^+}  \mathbb E_{\mathsf x}\Big[\int_0^{T-\epsilon}
\mathbf B(X_t)\cdot \int_{t+\epsilon}^T\nabla_v
\hat P_{s-t}f(s,\cdot)(X_t)\,ds  \,dt\Big]\\
&= \mathbb E_{\mathsf x}\Big[\int_0^{T}
\mathbf B(X_t)\cdot  \int_{0 }^{T-t}\nabla_v
\hat P_{u}f(t+u ,\cdot)(X_t) \, du   \,dt\Big].
\end{align*}
In conclusion, for any $\mathsf x\in \mathbb R^{2d}$, $T>0$, and $f\in C_c^{\infty}([0,T]\times\mathbb R^{2d})$ we have that 
\begin{align}
\nonumber
 \int_{0 }^T
\hat P_{s}f(s,\cdot)(\mathsf x)\,ds&=\mathbb E_{\mathsf x}\Big[\int_0^T f(t,X_t)\,dt\Big]  \\
\label{eq.DI}
&\quad +\mathbb E_{\mathsf x}\Big[\int_0^{T}
\mathbf B(X_t)\cdot  \int_{0 }^{T-t}\nabla_v
\hat P_{u}f(t+u ,\cdot)(X_t) \, du   \,dt\Big].
\end{align}


\noindent
\textbf{Step 2}. We now derive Krylov's type estimates using ~\eqref{eq.LpB} and~\eqref{eq.DI}. For ease of notation, denote by $E_2(T,f,\mathsf x)$ the last  term in
~\eqref{eq.DI}. 
 We recall that, for $p,q\in [1,+\infty]$ the Bochner space $L^q_TL^p(\mathbb R^{2d})$ is defined as the set of Borel functions $f: [0,T]\times \mathbb R^{2d}\to \mathbb R$ such that 
 $$\Vert f\Vert_{L^q_TL^p}:=\Big[\int_0^T \Vert f(s,\cdot)\Vert_{L^p} ^q ds\, \Big]^{1/q}<+\infty,$$
 where we recall (see Section~\ref{sec.Nota}) that $\Vert f(s,\cdot)\Vert_{L^p}=  \big [\int_{\mathbb R^{2d}}|f(s,x,v)|^p\,dx\,dv \big ]^{1/p}$, with obvious changes if $p$ or $q$ are infinite. 
  Let  $T>0$ and $f\in C_c^{\infty}([0,T]\times\mathbb R^{2d})$. We have using ~\eqref{eq.LpB} and since $\mathbf B$ is bounded, for $t\in [0,T]$ and $p>1$  satisfying~\eqref{eq.p0}, 
 \begin{align*}
 &\int_0^{T-t}  \sup_{\mathsf y\in \mathbb R^{2d}}  |\mathbf B(\mathsf y)\cdot  \nabla_v \hat P_{u}f(t+u ,\cdot)(\mathsf y )|    \, du  \\
 &\le \Vert \mathbf B\Vert_\infty C_p \int_0^{T-t} (1\wedge u)^{-\frac{1+\mathscr A_0(p)}{\alpha}}\Vert f (t+u,\cdot) \Vert_{ L^{  p}}    \, du  .
 \end{align*}
 Let $p>1$  satisfying~\eqref{eq.p0}. 
Consider  $q>1$ such that
\begin{equation}\label{eq.qp}
 \frac{1+\mathscr A_0(p)}{\alpha}\frac{q}{q-1} <1.
\end{equation}
Note that this is possible since $1+\mathscr A_0(p) <\alpha$. 
 By Hölder's inequality in time, we obtain that: 
 \begin{align*}
 &\int_0^{T-t} (1\wedge u)^{-\frac{1+\mathscr A_0(p)}{\alpha}}\Vert f (t+u,\cdot) \Vert_{ L^{  p}}    \, du \\
 &\le  \underbrace{\Big[ \int_0^{T} (1\wedge u)^{-\frac{1+\mathscr A_0(p)}{\alpha}\frac{q}{q-1} }\, du \Big]^{\frac{q-1}{q} }}_{:=\eta(T,p,q)<+\infty} \Big[ \int_0^{T}\Vert f (s,\cdot) \Vert_{ L^{  p}} ^q\, ds \Big]^{\frac{1}{q} }.
 \end{align*}
Consequently, we obtain that:
$$E_2(T,f,\mathsf x)\le \Vert \mathbf B\Vert_\infty C_p  T \, \eta(T,p,q) \,  \Vert f\Vert_{L^q_TL^p}.$$ 
On the other hand, with the same arguments, we also have thanks to~\eqref{eq.LpB}, for every  $p>1$  satisfying~\eqref{eq.p0} and $q>1$  satisfying~\eqref{eq.qp}:
$$\Big | \int_{0 }^T
\hat P_{s}f(s,\cdot)(\mathsf x)\,ds\Big|\le  C_p\, \eta(T,p,q) \,   \Vert f\Vert_{L^q_TL^p}.$$
In conclusion, by~\eqref{eq.DI},  we have the following Krylov's estimate: for every $\mathsf x\in \mathbb R^{2d}$,     $p>1$  satisfying~\eqref{eq.p0}, $q>1$  satisfying~\eqref{eq.qp}, and every $f\in C_c^{\infty}([0,T]\times\mathbb R^{2d})$:
  \begin{equation}\label{eq.Kt}
  \Big| \mathbb E_{\mathsf x}\Big[\int_0^T f(t,X_t)\,dt\Big] \Big| \le  [\Vert \mathbf B\Vert_\infty +1]C_p  T \, \eta(T,p,q) \,  \Vert f\Vert_{L^q_TL^p}.
  \end{equation}
Because  $p$ and $q$ are    finite,  we thus deduce by a density argument, that the inequalities~\eqref{eq.Kt} and~\eqref{eq.DI} extend to every $f\in L^{q}_TL^{p}(\mathbb R^{2d})$,  and that 
\begin{equation}\label{eq.dPp}
(t,\mathsf x')\mapsto p_t(\mathsf x,\mathsf x') \in L^{q'}_TL^{p'}(\mathbb R^{2d}).
\end{equation}

\noindent
\textbf{Step 3}.  We now conclude the proof of the uniqueness  of the weak solution to~\eqref{eq.Lan} (namely the proof of Item $\mathfrak 2$ in Theorem~\ref{th.uni}). Assume there are two weak solutions $(X_t(\mathsf x),t\ge 0)$ and $(Y_t(\mathsf x),t\ge 0)$ to~\eqref{eq.Lan} with the same  initial condition $\mathsf x$ (which might be defined on different probably spaces with respective probability $\mathbb P$ and $\bar{\mathbb P}$). By 
Item~$\mathfrak 1$ in Theorem~\ref{th.uni} and~\eqref{eq.dPp}, for every   $p>1$  satisfying~\eqref{eq.p0} and $q>1$  satisfying~\eqref{eq.qp},  both solutions have a density, denoted   by 
$$p^X_\cdot(\mathsf x,\cdot)\in L^{q'}_TL^{p'}(\mathbb R^{2d}) \text{ and } p^Y_\cdot(\mathsf x,\cdot)\in L^{q'}_TL^{p'}(\mathbb R^{2d}).$$
 By~\eqref{eq.DI}, we have for every $T>0$, $\mathsf x\in \mathbb R^{2d}$, and $f\in L^{q}_TL^{p}(\mathbb R^{2d})$ with     $p>1$  satisfying~\eqref{eq.p0} and $q>1$  satisfying~\eqref{eq.qp}:
\begin{align*}
  &\int_0^T \int_{\mathbb R^{2d}} f(t,\mathsf y ) \, q_t(\mathsf x,\mathsf y)\, d\mathsf y \, dt\\
  &=  \int_0^{T}
\int_{\mathbb R^{2d}} \underbrace{\Big[\mathbf B(\mathsf y)\cdot  \int_{0 }^{T-t}\nabla_v
\hat P_{u}f(t+u ,\cdot)(\mathsf y) \, du \Big]}_{R_T f(t,\mathsf y)}  \, q_t(\mathsf x,\mathsf y) \, d\mathsf y \,dt ,
\end{align*}
where 
$q_t(\mathsf x,\mathsf y):= p^X_t(\mathsf x,\mathsf y)-p^Y_t(\mathsf x,\mathsf y)$.  
The previous equality rewrites:
\begin{align}\label{eq.q=}
 \int_0^T \int_{\mathbb R^{2d}} [\mathbb{I}- R_T]f(t,\mathsf y ) \, q_t(\mathsf x,\mathsf y)\, d\mathsf y \, dt=0, \ \, \forall f\in L^{q}_TL^{p}(\mathbb R^{2d}). 
\end{align}
By~\cite[Lemma 2.16]{RocknerZhang}, with $\mathbf p'=(p,p)\in [1,+\infty]^2$, $\beta'=1$, $\beta=0$ and $\mathbf p_0= \mathbf p'$ there, there exists $C_p>0$ (independent of $f$) such that  for all $t>0$:
$$
\Vert \nabla _v  \hat P_tf \Vert_{L^p}   \le C_p (1\wedge t)^{-\frac{1}\alpha}\Vert f \Vert_{ L^{  p}}.
$$ 
Let  $p\in [1,+\infty]$ and  $T\in [0,1]$. By Minkowski’s integral inequality and since $\mathbf B$ is bounded
 $$\|R_T f(t, \cdot)\|_{L^p} \le C_p \|\mathbf{B}\|_\infty   \int_0^{T-t} u^{-\frac{1}\alpha} \|f(t+u, \cdot)\|_{L^p} du.$$ 
We apply Hölder's inequality in time and we have  for $t\in [0,T]$:
$$\|R_T f(t, \cdot)\|_{L^p} \le \underbrace{C_p \|\mathbf{B}\|_\infty    \left( \int_0^{T} u^{-\frac{1}\alpha \frac{q}{q-1}} du \right)^{\frac{q-1}{q}}}_{K_T} \|f\|_{L^q_T L^p}.$$
For $K_T$ to be finite, it is necessary and sufficient that 
$$q > \frac{\alpha}{\alpha - 1} \ \ (\Leftrightarrow  \frac{1}\alpha \frac{q}{q-1}<1).$$
Note that the previous condition is satisfied when     $q>1$  satisfies~\eqref{eq.qp} since in this case:
$$\frac{1}\alpha \frac{q}{q-1}\le \frac{1+\mathscr A_0(p)}{\alpha}\frac{q}{q-1} <1.$$
Then, for some constant $ C(p, q, \|\mathbf{B}\|_\infty)>0$, we have that 
 $$K_T = C(p, q, \|\mathbf{B}\|_\infty) \, T^{\frac{q-1}{q} - \frac{1}{\alpha}}.$$
Consequently, we deduce that:
 $$\left( \int_0^T \|R_T f(t, \cdot)\|_{L^p}^q dt \right)^{1/q} \le \left( \int_0^T \left( K_T \|f\|_{L^q_T L^p} \right)^q dt \right)^{1/q}\le  K_T \, T^{1/q} \, \|f\|_{L^q_T L^p},$$
 i.e. 
 $$\|R_T\|_{L^q_T L^p \to L^q_T L^p} \le C(p, q, \|\mathbf{B}\|_\infty) \, T^{1 - \frac{1}{\alpha}}.$$
 Hence $R_T:  L^{q}_TL^{p}(\mathbb R^{2d}) \to   L^{q}_TL^{p}(\mathbb R^{2d})$ is a bounded linear operator for every $p\in [1,+\infty]$ and $q\in ( \frac{\alpha}{\alpha - 1},+\infty]$. In addition, there exists $T_0\in (0,1]$ such that for all $T\in [0,T_0]$, $\mathbb{I}- R_T$ is invertible with bounded inverse  $(\mathbb{I}- R_T)^{-1}: L^{q}_TL^{p}(\mathbb R^{2d})\to    L^{q}_TL^{p}(\mathbb R^{2d}) $. Consider any  $g\in L^{q}_TL^{p}(\mathbb R^{2d})$ with    $p>1$ and $q>1$   satisfying resp.~\eqref{eq.p0} and~\eqref{eq.qp}. We can then insert 
 $f = (\mathbb{I}- R_T)^{-1} g\in L^{q}_TL^{p}(\mathbb R^{2d})$ in~\eqref{eq.q=} to deduce that  
$$
 \int_0^T \int_{\mathbb R^{2d}} g(t,\mathsf y ) \, q_t(\mathsf x,\mathsf y)\, d\mathsf y \, dt=0, \  \forall  T\in [0,T_0], \, \forall  g\in L^{q}_TL^{p}(\mathbb R^{2d}),
 $$
i.e.  
$$
\mathbb E_{\mathsf x}\Big [\int_0^T g(s,X_s)ds\Big] =\bar{\mathbb E}_{\mathsf x}\Big [\int_0^T g(s,Y_s)ds\Big],\  \forall  T\in [0,T_0], \, \forall  g\in L^{q}_TL^{p}(\mathbb R^{2d}).
$$
 Let us now pick $g\in C_c^\infty(\mathbb R^{2d})$. Noting that both $t\mapsto \mathbb E_{\mathsf x}[g(X_t)]$ and $t\mapsto \bar{\mathbb E}_{\mathsf x}[g(Y_t)]$ are  continuous over $\mathbb R_+^*$ (by~\eqref{eq.Sfd}), we can differentiate the previous relation to get that: 
 \begin{equation}\label{eq.=L}
 \mathbb E_{\mathsf x}[g(X_t)]= \bar{\mathbb E}_{\mathsf x}[g(Y_t)],\  \forall  t\in [0,T_0], \,  g\in C_c^\infty(\mathbb R^{2d}).
 \end{equation}
 Note that, if we are given  $\mu\in \mathcal P(\mathbb R^{2d})$, our previous analysis  implies that~\eqref{eq.=L} also holds if $X_0$ and $Y_0$ have law $\mu$, i.e. in this case, $X_t=Y_t$ in law for all $t\in [0,T_0]$. 
Recall that the laws of $(X_t)_{t\in [0,T_0]}$  and of $(Y_t)_{t\in [0,T_0]}$ are solution to the associated martingale problem over $[0,T_0]$. 
By~\cite[Theorem 4.2 in Chapter 4]{EK}, 
we conclude that weak uniqueness holds on $[0,T_0]$. It is then standard to propagate this weak uniqueness over $\mathbb R_+$ using regular conditional probability distributions. 
By~\cite[Item (a) in Theorem 4.2 in Chapter 4]{EK}, any weak solution to~\eqref{eq.Lan} is  a  Markov process. 
The proof of Theorem~\ref{th.uni} is complete.



\subsection{Weak existence  to~\eqref{eq.Lan}} 
\label{sec.weakEx}
Assume {\rm\textbf{[{\footnotesize H$^{\alpha\in (1,2)}_{\text{bounded}}$}]}}. 
Recall  \eqref{eq.La}.  
Let $\mu \in \mathcal{P}(\mathbb{R}^{2d})$. 
In view of Proposition~\ref{pr.Pre}, establishing the existence of a weak solution to \eqref{eq.Lan} with initial law $\mu$ is equivalent to proving that the martingale problem $({\mathcal L^\alpha}, \mu)$ is solvable. Furthermore, it is now well-known this general existence result follows directly once a solution $({\mathcal L^\alpha}, \mathsf{x})$ is constructed for every deterministic starting point $\mathsf{x} \in \mathbb{R}^{2d}$.

\subsubsection{Approximating  sequence and Krylov's type  estimates} We first start by approximating $\mathbf B$.
\medskip

\noindent
\underline{Approximating  sequence}. 
\medskip

\noindent
Let $(\mathbf B_n)_{n\ge 0}$ be a sequence of   smooth compactly supported vector fields, which is uniformly bounded, and  such that 
\begin{equation}\label{eq.CvBn}
  \mathbf B_n\to  \mathbf B \text{ a.e. over } \mathbb R^{2d}.
  \end{equation}
Consider the  strong solution 
$(X_t^n(\mathsf x)=(x^n_t,v^n_t), t\ge 0)$ to~\eqref{eq.Lan} with initial law $\delta_{\mathsf x}$ when $\mathbf B$ is replaced by $\mathbf B_n$. 
Denote by   
$$\mathcal L^{\alpha}_n=\mathcal S_v^{\nu_\alpha}+ v\cdot \nabla x + \mathbf B_n \cdot \nabla_v $$
   the infinitesimal generator associated with   $(X_t^n, t\ge 0)$.   
   In addition, it is easy to prove  using  the  tightness criteria~\cite[Theorem 6.10]{billingsley2013}  that $(X_s^n(\mathsf x), s\ge 0)$ is (weakly) relatively compact in $ D([0,+\infty), \mathbb R^{2d})$. 
Hence, there exists a subsequence (still denoted by $n$) and a process $X$ such that as $n\to +\infty$:
$$
X^n(\mathsf x)  \overset{\text{law}}{\to}  X   
\quad \text{in } D([0,+\infty),\mathbb{R}^{2d}).
$$  
On the other hand, by  Skorokhod's representation theorem, we may assume (on a new probability space\footnote{With a slight abuse of notation we keep the same notation for  this new probability space and for the  processes defined on this new probability space.}) that as $n\to +\infty$:
\begin{equation}\label{eq.cvXX}
X^{n}(\mathsf x)\to  X 
\quad \text{a.s. in } D([0,+\infty),\mathbb{R}^{2d}).
\end{equation}
The goal is  to prove that $X$ solves the martingale problem $({\mathcal L^\alpha}, \mathsf x)$. Note that in law, we have $X_0^n \to X_0$ when $n\to +\infty$ (see e.g. \cite[Lemma 7.8  in Chapter 3]{EK}).  Hence $X_0=\mathsf x$.  
\medskip

\noindent
\underline{Krylov's type estimates for $X_t^n(\mathsf x)$}. 
\medskip

\noindent
 Because $\alpha \in (1,2)$, the identity~\eqref{eq.DI} holds for the   the process $(X_t^n(\mathsf x),t\ge 0)$. Then, with  the same computations as those made in  \textbf{Step 2} in Section~\ref{sec.Kl}  and since $(\mathbf B_n)_{n\ge 0}$ is uniformly bounded, one deduces the following Krylov's type  estimate. For any      $p>1$  satisfying~\eqref{eq.p0} and $t>0$, there exists $C_{t,p}>0$,  such that for  all $f\in C_c^\infty(\mathbb{R}^{2d})$, all $\mathsf x\in \mathbb R^{2d}$, and all $n\ge 0$,
 \begin{equation}\label{eq.K}
 \Big | \mathbb{E}_{\mathsf x} \Big  [ \int_0^t f\!\left(X_s^{n}\right)\, ds \Big ] \Big |
 \le 
C_{t,p}\, \Vert  f\Vert _{L^{p}}. 
 \end{equation}  Moreover, since $f$ is continuous and bounded, by \eqref{eq.cvXX}, we can pass to the limit $n\to +\infty$ to get that:
  \begin{equation}\label{eq.K2}
 \Big | \mathbb{E}_{\mathsf x} \Big  [ \int_0^t f\!\left(X_s\right)\, ds \Big ] \Big |
 \le 
C_{t,p}\, \Vert  f\Vert _{L^{p}}. 
 \end{equation}
Note that ~\eqref{eq.K}  and ~\eqref{eq.K2}
extend to every $f\in L^{p}(\mathbb{R}^{2d})$,  for   $p>1$  satisfying~\eqref{eq.p0}. 

\subsubsection{Martingale problem} 
\label{sec.MG}
Let $\varphi \in C_c^2(\mathbb{R}^{2d})$. 
For each $n$ and $\mathsf x\in \mathbb R^{2d}$,
$$
M_t^n
=
\varphi  (X_t^n(\mathsf x))
-
\varphi (\mathsf x)
-
\int_0^t
\mathcal L^{\alpha}_n \varphi (X_s^n(\mathsf x) )\, ds
$$
is a (true) martingale.  Thus, for   all $0\le t_1\le \ldots \le t_k\le  s\le t$ and for all bounded and continuous function    $g: \mathcal (\mathbb{R}^{2d})^k\to \mathbb R$,   we have  for all $n\ge 0$ (see~\cite[Section 3 in Chapter 4]{EK}):
\begin{equation}\label{eq.MP}
\mathbb E_{\mathsf x} \Big [   \Big(\varphi  (X_t^n)- \varphi  (X_s^n) -  \int_s^t
\mathcal L^{\alpha}_n \varphi (X_u^n  )\, du  \Big)  g(X^n_{t_1},\ldots, X^n_{t_k})   \Big ]=0,
\end{equation}
that we rewrite 
 \begin{align} 
 \nonumber
&\mathbb E \Big [   \Big(\varphi  (X_t^n(\mathsf x))- \varphi  (X_s^n(\mathsf x)) -  \int_s^t [\mathcal S_v^{\nu_\alpha}+ v\cdot \nabla x] \varphi (X_u^n(\mathsf x) )\,  du  \Big)  g(X^n_{t_1}(\mathsf x),\ldots, X^n_{t_k}(\mathsf x))   \Big ]\\
\label{eq.MP2}
&\quad  - \mathbb E \Big [ g(X^n_{t_1}(\mathsf x),\ldots, X^n_{t_k}(\mathsf x))     \int_s^t [\mathbf B_n\cdot \nabla_v] \varphi (X_u^n(\mathsf x) )\,  du    \Big ]=0.
\end{align}
We now wants to pass to the limit $n\to +\infty$ in~\eqref{eq.MP2}. Since $\mathbf B_n$ is   only assumed to converge a.e. to $\mathbf B$, the only delicate point 
 to deal with in~\eqref{eq.MP2} is the convergence of the term 
 $$ H(\mathbf B_n, X^n(\mathsf x)):= \mathbb E \Big [ g(X^n_{t_1}(\mathsf x),\ldots, X^n_{t_k}(\mathsf x))    \int_s^t [\mathbf B_n\cdot \nabla_v] \varphi (X_s^n(\mathsf x) )\, ds     \Big ].$$
 We aim to prove that 
 \begin{equation}\label{eq.Hc}
 \lim_n H(\mathbf B_n, X^n(\mathsf x))= H(\mathbf B, X(\mathsf x)).
 \end{equation}
 One has for all $m,n\ge 0$, 
 \begin{align}
  \nonumber
 H(\mathbf B_n, X^n(\mathsf x)) -H(\mathbf B, X(\mathsf x)) &= H(\mathbf B_n, X^n(\mathsf x))-H(\mathbf B_m,X^n(\mathsf x))\\
 \nonumber
 &\quad + H(\mathbf B_m,X^n(\mathsf x))- H(\mathbf B_m,X(\mathsf x))\\
 \label{eq.Hph}
 &\quad +   H(\mathbf B_m,X(\mathsf x))-H(\mathbf B, X(\mathsf x)).
   \end{align}
   Let us treat separately all  the three  term on the right-hand side  of~\eqref{eq.Hph}. For ease of notation, set   for $q\ge 0$, $G_{q}(\mathsf x)=g(X^q_{t_1},\ldots, X^q_{t_k})$ and $h_q= [\mathbf B_q\cdot \nabla_ v] \varphi$. Set also  $G(\mathsf x)=g(X_{t_1},\ldots, X_{t_k})$ and $h= [\mathbf B\cdot \nabla_v] \varphi$.
   By \cite[Lemma 7.7 in Chapter 3]{EK},   the complementary of $\mathscr C_X:=\{t> 0, \mathbb P[X_t=X_{t_-}]=1\}$   in $\mathbb R_+$  is at most countable. Note that  $u\in \mathscr C_X$ iff $X$ is  a.s. continuous at $u$.   From \cite[Proposition 5.2  in Chapter 3]{EK},  for all  $q\ge 1$,  it holds  a.s.   when  $ n \to +\infty$,    
 $$
      \text{$X_{u_j}^n\to X_{u_j}$ for all   $u_1,\ldots, u_q  \in   \mathscr C_X$.}
$$
From now on, $0\le t_1\le \ldots \le t_k\le  s\le t$, with  $t_1,\ldots,t_k\in     \mathscr C_X$, so that in particular, we have by dominated convergence theorem:  
\begin{equation}\label{eq.kapp}
\kappa_q:= \mathbb E\big [| G_q(\mathsf x)-G(\mathsf x)|\big ]\to 0 \text{ as } q\to +\infty.
\end{equation} 
   Note that since $\varphi\in C_c^2(\mathbb R^{2d}) $  and by~\eqref{eq.CvBn}, we have  for all $p\ge 1$, 
\begin{equation}\label{eq.Hconv}
h_q,h\in L^p(\mathbb R^{2d}) \text{ and } \lim_{q\to +\infty}h_q=h  \text{ in } L^p(\mathbb R^{2d}). 
\end{equation}
From now on,  $p>1$  is fixed  so that $2p$ satisfies~\eqref{eq.p0}. Hence, the inequalities~\eqref{eq.K} and~\eqref{eq.K2} hold in $L^{2p}(\mathbb R^{2d})$. 
   On the one hand, one has:
   \begin{align*}
  &H(\mathbf B_n, X^n(\mathsf x))-H(\mathbf B_m,X^n(\mathsf x))\\
  &=\mathbb E \Big [    G_n(\mathsf x) \Big( \,  \int_s^t h_n (X_u^n(\mathsf x) )\,  du -\int_s^t h_m (X_u^n(\mathsf x) )\, du\Big)  \Big ]. 
   \end{align*}
    Using~\eqref{eq.K},  we have that:
$$
 \mathbb{E}  \Big [
\int_s^t
\big | \big (h_n-h_m) 
 (X_u^n(\mathsf x) )\big | ^2\, du
\Big  ] 
\le 
\delta_{m,n}^2:=C_{t,p} \Vert   h_n-h_m  \Vert  _{L^{2p}}^2.
$$ 
In addition, using   Cauchy–Schwarz inequality, we deduce that as $n\to +\infty$, 
 \begin{align*}
&\Big | H(\mathbf B_n, X^n(\mathsf x))-H(\mathbf B_m,X^n(\mathsf x)) \Big | \\
&\le \Vert g\Vert_\infty  \sqrt{  \mathbb E  \Big [
\Big|\int_s^t (h_n-h_m) (X_u^n (\mathsf x)) \, du\Big|^2
\Big  ]   }  \le   (t-s)^{1/2}  \, \Vert g\Vert_\infty  \,     \delta_{m,n}    . 
\end{align*}
Let $C_*>0$ such that $\sup_m \Vert h_m\Vert_{\infty}\le C_*$.  Set (see~\eqref{eq.K2})
\begin{equation}\label{eq.dd}
\delta_{q}^2:=C_{t,p} \Vert    h_q-h \Vert  _{L^{2p}}^2\to 0. 
\end{equation}
Then, we have by~\eqref{eq.K2}: 
 \begin{align*}
  &\Big|H(\mathbf B_m,X(\mathsf x))-H(\mathbf B, X(\mathsf x))\Big|\\
  &=\Big|\mathbb E \Big [    G(\mathsf x) \,  \int_s^t h_m (X_u(\mathsf x) )\,  du   \Big ]-\mathbb E \Big [    G(\mathsf x)  \,  \int_s^t h (X_u(\mathsf x) )\, du  \Big ]\Big|\\
  &\le   (t-s)^{1/2}\Vert g\Vert_\infty \, \delta_m. 
   \end{align*}
 Finally, we have that  
  \begin{align*}
  &\Big|H(\mathbf B_m,X^n)- H(\mathbf B_m,X)\Big|\\
  &= \Big|\mathbb E \Big [    G_n(\mathsf x) \,  \int_s^t h_m (X_u^n(\mathsf x) )\,  du   \Big ]- \mathbb E \Big [    G(\mathsf x)  \,  \int_s^t h_m (X_u (\mathsf x) )\,  du   \Big ]\Big|\\
  &\le C_*  (t-s)   \kappa_n + \underbrace{\Big|\mathbb E \Big [    G(\mathsf x)\Big( \,  \int_s^t h_m (X_u^n(\mathsf x) )\,  du  - \int_s^t h_m (X_u (\mathsf x) )\,  du\Big)   \Big ]\Big|}_{=:A_{m,n}(\mathsf x)}.
   \end{align*}
 Since $h_m$ is continuous and bounded, using \eqref{eq.cvXX} and by dominated convergence theorem, it follows that 
 \begin{align}\label{eq.Anm}
\forall m\ge 0, \  \lim_{n\to +\infty} A_{m,n}(\mathsf x)=0.
    \end{align} 
 In conclusion, coming back to~\eqref{eq.Hph},  we have for all $0\le m\le n$, 
 \begin{align*}
  \nonumber
 \big |H(\mathbf B_n, X^n(\mathsf x)) -H(\mathbf B, X(\mathsf x)) \big |&\le   (t-s)^{1/2}  \, \Vert g\Vert_\infty  \,     \delta_{m,n}\\
 &\quad + C_*  (t-s)   \kappa_n  +  A_{m,n}(\mathsf x) \\
 &\quad +   (t-s)^{1/2}\Vert g\Vert_\infty \, \delta_m.
   \end{align*} 
 Let $\epsilon>0$. Using~\eqref{eq.dd}, there exists   $m_\epsilon\ge 0$ such that
 $$  (t-s)^{1/2}\Vert g\Vert_\infty\, \delta_{m_\epsilon}\le \epsilon.$$
 and  (note that by~\eqref{eq.Hconv}, $(h_q)_{q\ge 0}$ is a Cauchy sequence in $ L^{2p}(\mathbb{R}^{2d})$), 
 $$\forall n\ge m_\epsilon, \ (t-s)^{1/2}  \, \Vert g\Vert_\infty  \,     \delta_{m_\epsilon,n} \le \epsilon.$$ 
Since  $\lim_{n\to +\infty} A_{m_{\epsilon},n}(\mathsf x)=0$ (recall indeed~\eqref{eq.Anm}) and using~\eqref{eq.kapp},  there exists $M_\epsilon\ge 0$, for all $n\ge M_\epsilon$,  
 $$A_{m_{\epsilon},n}(\mathsf x)+ C_*  (t-s)   \kappa_n\le \epsilon.$$
 In conclusion, for $n\ge \max(m_\epsilon,M_\epsilon)$, 
 $$ \big |H(\mathbf B_n, X^n) -H(\mathbf B, X) \big |\le 3 \epsilon,$$
 proving~\eqref{eq.Hc}. 
Hence, letting finally $n\to +\infty$ in~\eqref{eq.MP2}, one deduces that for every $0\le t_1\le \ldots \le t_k\le  s\le t$, with  $t_1,\ldots,t_k,s,t\in     \mathscr C_X$:
$$
\mathbb E \Big [ \Big (\varphi  (X_t )- \varphi  (X_s) -  \int_s^t
 {\mathcal L}^\alpha \varphi (X_u  )\, du  \Big)  g(X_{t_1},\ldots, X_{t_k}) \Big ]=0.
$$
By the right continuity of the trajectories of $X$ and since $[0,  +\infty)\setminus \mathscr C_X$ is at most countable, the latter equality extends to any $0\le t_1\le \ldots \le t_k\le  s\le t$. 
   That is $X$ is a solution to the martingale problem with $( {\mathcal L^\alpha}, \delta_{\mathsf x})$.  
This achieves the proof of the existence   part in  Theorem~\ref{th.ex} (see indeed Proposition~\ref{pr.Pre}).

\begin{remark}\label{re.krylov}We emphasize for later purposes that the proof identifying the weak limit $X$ of the sequence $X^n(\mathsf{x})$ as a solution to the martingale problem $(\mathcal{L}^\alpha, \mathsf{x})$ rests on two fundamental components: the uniform (in $n\ge 0$) Krylov estimate \eqref{eq.K} for the sequence $X^n(\mathsf x)$ and the corresponding estimate \eqref{eq.K2} for the limit process $X$. 
\end{remark}

 \subsection{Proof of the weak continuity w.r.t. the initial condition}
Recall that we work under the assumption {\rm\textbf{[{\footnotesize H$^{\alpha\in (1,2)}_{\text{bounded}}$}]}}. 
Recall also that $ (X_t( \mathsf y ),t\ge 0)$ is the weak solution to  ~\eqref{eq.Lan} with initial condition $\mathsf y\in \mathbb R^{2d}$. 
 Let $\mathsf x_n\to \mathsf x\in \mathbb R^{2d}$.  
 Let us assume that as $n\to +\infty$:
 \begin{equation}\label{eq.cvw}
 X(\mathsf x_n)  \to  X (\mathsf x)   \text{ weakly in } \mathcal P(D([0,+\infty), \mathbb{R}^{2d})).
\end{equation} 
Recall that for a process $(Z_t,t\ge 0)$, $\mathscr C_Z=\{t> 0, \mathbb P[Z_t=Z_{t_-}]=1\}$. 
  Note that since  every weak solution $ (X_t,t\ge 0)$ is driven by a Lévy process, we have\footnote{This is actually a consequence of a more general result for solution to matringale problems~\cite[Theorem 3.12 in Chapter 4]{EK}.}:
 $$\mathscr C_X= \mathbb R_+.$$
 Hence, by~\eqref{eq.cvw}  and \cite[Lemma 7.8  in Chapter 3]{EK},  for every $t\ge 0$, it holds:
$$
 X_t(\mathsf x_n)  \to   X_t(\mathsf x)  \text{ weakly in } \mathcal P( \mathbb{R}^{2d}),
$$
 which is precisely the continuity of $\mathsf x\mapsto \mathbb P[X_t(\mathsf x)\in \cdot]$ for the weak convergence in $\mathcal P(\mathbb R^{2d})$.

 We now prove ~\eqref{eq.cvw}.  
 Since $\mathbf B$ is bounded, using   standard criteria (such as e.g.~\cite[Theorem 16.10]{billingsley2013}), it   is rather straightforward  to prove that $(X_s(\mathsf x_n), s\ge 0)$ is (weakly) relatively compact in $ D([0,+\infty), \mathbb R^{2d})$. Thus, there exists an extraction (still denoted by $n$) and a process $Y$ such that as $n\to +\infty$ and in law, 
$
 X(\mathsf x_n)  \to  Y
 $ in $ D([0,+\infty),\mathbb{R}^{2d}).
$  
Recall   the Krylov's estimate~\eqref{eq.K}    for the process $X(\mathsf x_n)$ which, we emphasize,  is  uniform in the initial condition in $\mathbb R^{2d}$. Therefore, as we have already seen, passing to the limit $n\to +\infty$ in~\eqref{eq.K} ,  the process  $Y$ (more precisely its law)  also satisfies~\eqref{eq.K} (with the same constant $C_{t,p}>0$). Moreover, recall also Remark~\ref{re.krylov}. 
Then, using  the same arguments as those used in Section~\ref{sec.MG}, one proves that $(Y_t,t\ge 0)$ is a    solution to the martingale problem $(\mathcal L^\alpha, \delta_{\mathsf x})$. By weak uniqueness we proved in the previous section  when $\mathbf B$ is measurable and bounded (see  also Proposition~\ref{pr.Pre}), $Y= X (\mathsf x)$ in law, and the whole sequence $( X(\mathsf x_n))_{n\ge 0}$ converges weakly to $X (\mathsf x)$.   This completes the proof of~\eqref{eq.cvw}. The proof of Theorem~\ref{th.ex} is complete. 


 \section{Well-posedness~\eqref{eq.Lan}   and  several  properties of~\eqref{eq.Lan} when  {\rm\textbf{[{\footnotesize H$^{\alpha\in (1,2)}_{\text{LG}}$}]}} holds}

\label{sec.Eu2}
 
In this section, we prove the well-posedness of~\eqref{eq.Lan}  and  several properties of~\eqref{eq.Lan}  when  {\rm\textbf{[{\footnotesize H$^{\alpha\in (1,2)}_{\text{LG}}$}]}} holds, namely when $\mathbf B$ has at most linear growth, and when the driving noise $L$ is a RI$\alpha$S process with $\alpha \in (1,2)$ (recall that we denote such a process by $L^\alpha$).

\begin{theorem}\label{th.ex2}
Assume {\rm\textbf{[{\footnotesize H$^{\alpha\in (1,2)}_{\text{LG}}$}]}}.  Then, for every $\mu \in \mathcal P( \mathbb R^{2d})$, there is a unique weak solution   to~\eqref{eq.Lan}  with initial law $\mu$. In addition:
\begin{enumerate}
\item[-]  For $t\ge 0$,  the mapping $\mathsf x\mapsto \mathbb P[X_t(\mathsf x)\in \cdot]$ (resp. $\mathsf x\mapsto \mathbb P[X(\mathsf x)\in \cdot]$) is continuous for the weak convergence in $\mathcal P(\mathbb R^{2d})$ (resp. in $\mathcal P(D([0,+\infty),\mathbb R^{2d}))$).  \item[-] For all $\mathsf x \in \mathbb R^{2d}$ and $t>0$,  $X_t(\mathsf x)$   has a density  $p_t(\mathsf x,\mathsf x')$ w.r.t. the Lebesgue measure $d\mathsf x'$. Moreover,  for every $p> 1$ satisfying~\eqref{eq.p0},  $\mathsf x'\mapsto p_t(\mathsf x,\mathsf x')$ belongs to   $L^{p'}(\mathbb R^{2d})$ (where $p'= p/(p-1)$).
\item[-] Finally, $(P_t,t\ge 0)$  is a  Feller $C_0(\mathbb R^{2d})$-semigroup.  
\end{enumerate}
 \end{theorem}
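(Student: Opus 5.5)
The plan is to localize the linear-growth drift into bounded drifts, use the bounded-drift results (Theorems~\ref{th.ex} and~\ref{th.uni}) on each piece, and glue the pieces through the stopped martingale problem of~\cite[Chapter 4, Section 6]{EK}. For $R>0$ set $\mathbf B_R:=\mathbf B\,\mathbf 1_{|\mathsf x|\le R}$. This drift is bounded and measurable, so {\rm\textbf{[{\footnotesize H$^{\alpha\in (1,2)}_{\text{bounded}}$}]}} holds. Hence for every $\mu$ the martingale problem $(\mathcal L^\alpha_R,\mu)$ is well posed, its solution $X^R$ is weakly continuous in the initial point, and $X^R_t$ has an $L^{p'}$ density.

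\textbf{Step 1 (non-explosion).} Let $\tau_R:=\inf\{t:|X^R_t|\ge R \text{ or } |X^R_{t-}|\ge R\}$. On $[0,\tau_R]$ we have $|\mathbf B_R(X^R_s)|\le C(1+|X^R_s|)$. Gr\"onwall's inequality applied to $|x_s|+|v_s|$ then gives
$\sup_{s\le t}|X^R_s|\le (|\mathsf x|+Ct+\sup_{s\le t}|L^\alpha_s|)e^{Ct}$ on $\{s\le\tau_R\}$. This bound does not depend on $R$. It yields
$$\sup_{|\mathsf x|\le r}\mathbb P_{\mathsf x}[\tau_R\le t]\le \mathbb P\big[\sup_{s\le t}|L^\alpha_s|\ge Re^{-Ct}-r-Ct\big]\xrightarrow[R\to\infty]{}0 .$$
The same statement holds for $\mu$-initial laws after integrating in $\mathsf x$.

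\textbf{Step 2 (existence and uniqueness).} The stopped martingale problem $(\mathcal L^\alpha,\mathsf x,\{|\mathsf y|<R\})$ coincides with the one for $\mathcal L^\alpha_R$, so by~\cite[Theorem 6.1, Chapter 4]{EK} it is well posed. The localization theorem \cite[Theorem 6.2, Chapter 4]{EK} applies once the stopping times $\tau_R\to\infty$ a.s., which Step~1 provides. It gives existence and uniqueness for $(\mathcal L^\alpha,\mu)$, and Proposition~\ref{pr.Pre} converts this into weak well-posedness of~\eqref{eq.Lan}.

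\textbf{Step 3 (weak continuity and density).} Let $\mathsf x_n\to\mathsf x$. The laws of $X(\mathsf x_n)$ and $X^R(\mathsf x_n)$ agree up to $\tau_R$. On $\{\tau_R>T\}$ this gives
$$\big|\mathbb E[F(X(\mathsf x_n)|_{[0,T]})]-\mathbb E[F(X^R(\mathsf x_n)|_{[0,T]})]\big|\le 2\|F\|_\infty\sup_{|\mathsf y|\le r}\mathbb P_{\mathsf y}[\tau_R\le T].$$
By Step~1 the right-hand side is small uniformly in $n$. Combining this with the weak continuity of Theorem~\ref{th.ex} for $X^R$ gives continuity in $\mathcal P(D([0,T],\mathbb R^{2d}))$ for each $T$, hence in $\mathcal P(D([0,\infty),\mathbb R^{2d}))$. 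Since $\mathscr C_X=\mathbb R_+$, this also gives continuity of the time-$t$ marginals.

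For the density, take a Lebesgue-null set $A$. Then $\mathbb P[X_t\in A]\le \mathbb P[X^R_t\in A]+\mathbb P[\tau_R\le t]=\mathbb P[\tau_R\le t]\to0$, so $X_t$ has a density. For the $L^{p'}$ bound, the constant $C_{p,t}$ in Theorem~\ref{th.uni} depends on $\|\mathbf B_R\|_\infty$, so it cannot be used directly as $R\to\infty$. Instead I would use $\mathbb E f(X_t)\le \mathbb E f(X^R_t)$ for $f\ge0$ together with $\mathbb E[f(X_t)\mathbf 1_{\tau_R>t}]\le \mathbb E f(X^R_t)$. This bounds $p_t(\mathsf x,\cdot)$ on the events $\{\tau_R>t\}$, and I would combine it with the Markov property, starting the process at time $t-\eta$ and using Duhamel~\eqref{eq.Sfd} on a short final window where the localized constant is harmless. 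I expect this step, getting $L^{p'}$ integrability without $R$-dependent constants, to be the main technical obstacle. If the argument through the final window does not close, a fallback is to apply Duhamel~\eqref{eq.Sfd} directly to $X$ with a truncation of $\mathbf B$ only along trajectories, which is legitimate because $\mathbf B$ is locally bounded.

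\textbf{Step 4 (Feller $C_0$ property).} For $f\in C_0(\mathbb R^{2d})$, continuity of $P_tf$ follows from Step~3. The decay $P_tf(\mathsf x)\to0$ as $|\mathsf x|\to\infty$ comes from a reverse Gr\"onwall estimate. For $|\mathsf x|$ large and $t$ fixed, $|X_t(\mathsf x)|\ge |\mathsf x|e^{-Ct}-C-\sup_{s\le t}|L^\alpha_s|e^{Ct}$, so $X_t$ lies outside any fixed compact set with probability tending to one. Strong continuity at $t=0$ follows from uniform continuity of $f$ together with the uniform-in-$\mathsf x$ bound $\mathbb P_{\mathsf x}[|X_t-\mathsf x|>\delta]\to0$. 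This uniform bound is obtained by splitting according to $|\mathsf x|\le r$ or $|\mathsf x|>r$ and using that $f$ is small outside a compact set.
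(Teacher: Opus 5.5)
Your Steps 1, 2 and 4 follow the paper's localization and Gr\"onwall scheme. Your weak-continuity argument is a valid alternative to the paper's tightness-plus-finite-dimensional route, and a more direct one. Since the stopped laws coincide, the bound $2\|F\|_\infty\sup_{|\mathsf y|\le r}\mathbb P_{\mathsf y}[\tau_R\le T]$ holds for every bounded measurable functional of the path on $[0,T]$. Weak continuity therefore transfers from $X^R$ to $X$ without a separate tightness lemma. Existence of a density via $\mathbb P[X_t\in A]\le \mathbb P[\tau_R\le t]$ is also fine.

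The genuine gap is the $L^{p'}$ integrability in the second bullet, which you leave open, and your primary route cannot close as stated. The inequality $\mathbb E f(X_t)\le \mathbb E f(X^R_t)$ for $f\ge 0$ is false in general, because after the exit time the two processes have different drifts. Only $\mathbb E[f(X_t)\mathbf 1_{\theta_R>t}]=\mathbb E[f(X^R_t)\mathbf 1_{\tau_R>t}]$ is available. That controls the sub-density on $\{\theta_R>t\}$, but with a constant that grows with $\|\mathbf B_R\|_\infty$, hence with $R$. On $\{\theta_R\le t\}$ you only know the mass is small, which says nothing about an $L^{p'}$ norm. Restarting at time $t-\eta$ via the Markov property does not help: it reproduces the same uncontrolled term $\mathbb E_{\mathsf y}[f(X_\eta)\mathbf 1_{\theta_R\le \eta}]$ from the random point $\mathsf y=X_{t-\eta}$.

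The missing idea is that no localization is needed for this bound. Because $\alpha>1$, Doob's $L^m$ inequality with $m\in(1,\alpha)$ gives $\mathbb E[\sup_{s\le t}|L^\alpha_s|]<\infty$. Gr\"onwall then gives $\mathbb E_{\mathsf x}[\sup_{s\le t}|\mathbf B(X_s)|]\le C_t(1+|\mathsf x|)$. Consequently the It\^o derivation of \eqref{eq.RS} holds for $X$ itself:
\begin{itemize}
\item the martingale part involves only the bounded functions $\hat P_{t-s}f$ and $\nabla_v\hat P_{t-s}f$ for $s\le t-\epsilon$;
\item the drift term is now integrable;
\item the limit $\epsilon\to 0$ goes through as in \eqref{eq.Cab}, using these moment bounds.
\end{itemize}
The gradient estimate for $\hat P$ then yields $|\mathbb E_{\mathsf x} f(X_t)|\le C_{p,t}(1+|\mathsf x|)\|f\|_{L^p}$, i.e.\ \eqref{eq.Dua-bis}, for every $p$ satisfying \eqref{eq.p0}. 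The factor $1+|\mathsf x|$ is harmless because the claim is for fixed $\mathsf x$. This is the paper's argument. Your fallback points this way, but what makes it legitimate is linear growth combined with $\alpha>1$, not local boundedness of $\mathbf B$.

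A smaller point in Step 4: $\mathbb P_{\mathsf x}[|X_t-\mathsf x|>\delta]\to 0$ cannot hold uniformly over all of $\mathbb R^{2d}$. For $|v|\gg \delta/t$ the transport term alone moves $x_t$ by more than $\delta$. The bound holds uniformly on $\{|\mathsf x|\le r\}$, and the region $|\mathsf x|>r$ must be handled with your reverse Gr\"onwall decay estimate, as the paper does. Your reverse Gr\"onwall bound does give the $C_0$ decay for all $t$ at once, whereas the paper treats small $t$ and extends by the semigroup property.
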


Notice that Theorem \ref{th.ex2} implies in particular that \textbf{[{\footnotesize C$_{\text{LG}}$}]} is satisfied  when  {\rm\textbf{[{\footnotesize H$^{\alpha\in (1,2)}_{\text{LG}}$}]}} holds. 
\medskip

\noindent
  \textbf{Note}. 
  The unique weak solutions to~\eqref{eq.Lan}  form a strong Markov process, see e.g. \cite[Theorem 6.17]{le2016brownian}. 
   Theorem~\ref{th.ex2} will be extended to the case of drifts $\mathbf B$ with possibly superlinear growth in the position variable  $x$ under the additional assumption that $\mathbf B$ possesses a perturbed gradient field structure, see Theorem~\ref{th.sta}.  
The proof we give of the Feller $C_0(\mathbb R^{2d})$-semigroup property of $(P_t, t \ge 0)$ is independent of the specific noise structure and shows that this property remains  valid under \textbf{[{\footnotesize C$_{\text{LG}}$}]}.

\subsection{Weak well-posedness under {\rm\textbf{[{\footnotesize H$^{\alpha\in (1,2)}_{\text{LG}}$}]}}}

We recall that weak well-posedness for an arbitrary initial distribution is guaranteed if, for every $\mathsf{x} \in \mathbb{R}^{2d}$, there exists a unique weak solution to \eqref{eq.Lan} starting at $X_0 = \mathsf{x}$. Consequently, we may restrict our focus to such deterministic initial conditions.  
  The proof   relies on    the associated stopped martingale problem  (see Definition~\ref{de.SMP}), Theorem~\ref{th.ex} and Grönwall's inequality.

\subsubsection{Stopped martingale problem} 
\label{sec.SMP}
  Define the  vector field   $\mathbf B_R$ by
\begin{equation} \label{eq.BR}
\mathbf B_R(\mathsf y):=\mathbf B(\mathsf y)\mathbf 1_{|\mathsf y|\le R}, \ \mathsf y  \in  \mathbb R^{2d}.
\end{equation} 
Let us denote by $\mathcal L^\alpha_R$ the infinitesimal generator of \eqref{eq.Lan} with vector field $\mathbf B_R$, i.e.  
$$
\mathcal L^\alpha_R= \mathcal S_v^{\nu_\alpha} + v\cdot \nabla_x + \mathbf B_R\cdot \nabla_ v.
$$
        Equation~\eqref{eq.Lan} can be rewritten as follows:
$$
  dX_t= \mathbf b(X_t)dt +  \Sigma 
 dL_t^\alpha  
   \text{ with }  \mathbf b(x',v')=   \begin{pmatrix}
  v'  \\
  \mathbf B(x',v')  
 \end{pmatrix} \text{ and }   \, \Sigma= \begin{pmatrix}
  0    \\
  \mathbb{I}_d 
 \end{pmatrix}.
$$   
For $\mathsf x  \in  \mathbb R^{2d}$   and  $R>|\mathsf x|$, denote by $(X_t^R =(x_t^R,v_t^R),t\ge 0)$ the weak solution (see Theorem~\ref{th.ex}) to   
 \begin{equation}\label{eq.exXtr}
 dX^R_t= \mathbf b_R(X^R_t)dt +  \Sigma 
dL^\alpha_t, \ X^R_0= \mathsf x, \   
  \text{ with }  \mathbf b_R(x',v')=   \begin{pmatrix}
 v'  \\
 \mathbf B_R(x',v')   
\end{pmatrix},
\end{equation}
We denote this weak solution by $(X_t^R(\mathsf x),t\ge 0)$. 
Denote   for  $\mathsf x\in \mathbb R^{2d}$ and  $R>|\mathsf x|$,
$$
\tau_R(\mathsf x)  = \inf \{t \ge 0 : | {X}^R_t(\mathsf x)| \ge R \text{ or } | {X}^R_{t_-}(\mathsf x)| \ge R\}.
$$
For every  $\mathsf x\in \mathbb R^{2d}$ and  $R>|\mathsf x|$, by Itô's formula~\cite[Theorem 4.4.7]{applebaum2009levy},  $X^R_{\cdot \wedge   \tau_R}(\mathsf x)$    solves the stopped martingale problem  $(\mathcal L^\alpha,\mathsf x,\mathscr B_{\mathbb R^{2d}}(0,R))$  (see Definition~\ref{de.SMP}) which is well-posed by   Theorem~\ref{th.ex} and  Proposition~\ref{pr.Pre}, and ~\cite[Theorem 6.1 in Chapter 4]{EK}. Consequently,  $X^R_{\cdot \wedge   \tau_R}(\mathsf x)$ is the unique solution   to the stopped martingale problem  $(\mathcal L^\alpha_R,\mathsf x, \mathscr B_{\mathbb R^{2d}}(0,R))$.  
  Note that for all $t\ge 0$, it holds:
$$\int_0^{t \wedge   \tau_R(\mathsf x)}\mathcal L_R^\alpha \varphi (X^R_s(\mathsf x))ds= \int_0^{t }  \mathbf 1_{s< \tau_R(\mathsf x)}\,\mathcal L_R^\alpha \varphi (X^R_s(\mathsf x))ds.$$ 
Moreover, we have that  $\mathcal L^\alpha_R\varphi = \mathcal L^\alpha\varphi$ on $ \bar{\mathscr B}_{\mathbb R^{2d}}(0,R)$ and $|Y_s^R(\mathsf x)|\le R$ when $s<\tau_R(\mathsf x)$. It then  follows that   a solution to  the stopped martingale problem  $(\mathcal L_R^\alpha,\mathsf x,\mathscr B_{\mathbb R^{2d}}(0,R))$ is a solution to   the stopped martingale problem  $(\mathcal L^\alpha,\mathsf x,\mathscr B_{\mathbb R^{2d}}(0,R))$ and vice versa.  Consequently, for every  $\mathsf x\in \mathbb R^{2d}$ and  $R>|\mathsf x|$,
$X^R_{\cdot \wedge   \tau_R}(\mathsf x)$ is (also) the unique solution to  the stopped martingale problem  $(\mathcal L^\alpha,\mathsf x,\mathscr B_{\mathbb R^{2d}}(0,R))$. 
As a result, by~\cite[Theorem 6.3 in Chapter 4]{EK} and Proposition~\ref{pr.Pre},  weak well-posedness under {\rm\textbf{[{\footnotesize H$^{\alpha\in (1,2)}_{\text{LG}}$}]}} holds if we prove the following result. 
  
  \begin{proposition}\label{pr.rt} Assume {\rm\textbf{[{\footnotesize H$^{\alpha\in (1,2)}_{\text{LG}}$}]}}. 
  For each $t\ge 0$,  $\lim_{R\to +\infty} \mathbb P_{\mathsf x}[\tau_R\le t]=0$.
  \end{proposition}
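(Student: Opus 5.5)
The plan is to control $\sup_{s\le t}|X^R_s(\mathsf x)|$ on $[0,\tau_R)$ by a pathwise Grönwall argument whose constants do not depend on $R$. The only random input left over will be the running supremum of the driving noise, $S_t=\sup_{s\in[0,t]}|L^\alpha_s|$, and its law does not depend on $R$ either.

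\textbf{Pathwise bound.} For $s<\tau_R(\mathsf x)$ we have $|X^R_s|<R$, so $\mathbf B_R(X^R_s)=\mathbf B(X^R_s)$. By \textbf{[{\footnotesize B$_{\text{LG}}$}]}, $|\mathbf B_R(\mathsf y)|\le C(1+|\mathsf y|)$ for every $\mathsf y$, with $C$ independent of $R$. Writing $X^R=(x^R,v^R)$ and integrating \eqref{eq.exXtr}, for all $s\ge0$ we get
$$|x^R_s|+|v^R_s|\le |\mathsf x|+\int_0^s\big(|v^R_u|+C(1+|x^R_u|+|v^R_u|)\big)du+|L^\alpha_s|.$$
Setting $\phi(s)=|x^R_s|+|v^R_s|$, this gives
$$\phi(s)\le |\mathsf x|+Cs+S_t+(C+1)\int_0^s\phi(u)\,du,\qquad s\in[0,t].$$
Grönwall's inequality \cite[Section 5 in Appendices]{EK} then yields, for every $s\in[0,t]$,
$$|X^R_s(\mathsf x)|\le \phi(s)\le \big(|\mathsf x|+Ct+S_t\big)e^{(C+1)t}=:Z_t .$$
The same bound holds for the left limits $X^R_{s-}$. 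The point to check is that $Z_t$ is a random variable whose law depends only on $\mathsf x$, $t$ and $C$, and not on $R$. This holds because $L^\alpha$ is, for each $R$, a RI$\alpha$S process on the (possibly $R$-dependent) probability space, so only its law matters.

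\textbf{Conclusion.} On $\{\tau_R\le t\}$, either $|X^R_{\tau_R}|\ge R$ or $|X^R_{\tau_R-}|\ge R$, with $\tau_R\le t$. In both cases $Z_t\ge R$, and therefore
$$\mathbb P_{\mathsf x}[\tau_R\le t]\le \mathbb P\big[S_t\ge Re^{-(C+1)t}-|\mathsf x|-Ct\big].$$
The right-hand side tends to $0$ as $R\to\infty$, because $S_t<\infty$ almost surely (the paths are càdlàg on a compact interval).

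The only delicate point is the event $\tau_R\le t$ realised through a jump of size $\ge R$, and the fact that the weak solutions $X^R$ live on different probability spaces. Both are handled as above: the bound holds on all of $[0,t]$, jump times included, since we apply it pathwise and the drift bound is global in $R$; and only the law of $S_t$ enters.
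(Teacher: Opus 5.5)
Your argument is correct and is essentially the paper's proof. It combines an $R$-independent Grönwall bound $\sup_{s\in[0,t]}|X^R_s(\mathsf x)|\le(|\mathsf x|+Ct+S^\alpha_t)e^{C't}$ from the global linear growth of $\mathbf B_R$, the inclusion $\{\tau_R\le t\}\subset\{\sup_{s\in[0,t]}|X^R_s(\mathsf x)|\ge R\}$ (which covers the left-limit case), and the vanishing tail of $S^\alpha_t$; the only cosmetic slip is that $|x|+|v|\le\sqrt2\,|\mathsf x|$ rather than $|\mathsf x|$, which changes nothing.
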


   \begin{proof}(Proposition~\ref{pr.rt}) 
 First of all, since $|\mathbf B_R(\mathsf y)|\le |\mathbf B(\mathsf y)|\le C(1+|\mathsf y|)$, by Grönwall's inequality~\cite[Section 5 in Appendices]{EK},   we have for all $t\ge 0$, 
\begin{equation}\label{eq.Gg0}
\sup_{s\in [0,t]}|X^R_s(\mathsf x)| \le \big [ |\mathsf x| + Ct + S^\alpha_t \big] e^{Ct}, \quad \text{where  } S^\alpha_t = \sup_{s \in [0,t]} |L^\alpha_s|.
 \end{equation}
On the other hand, we have that 
\begin{equation}
\label{eq.tauRinclu}
\{\tau_R(\mathsf x)\le t\}\subset \Big \{\sup_{s\in [0,t]}|X^R_s(\mathsf x)|\ge R\Big \}.
\end{equation}
 Indeed on the event $\{\tau_R(\mathsf x) <+\infty\}$, by definition of $\tau_R(\mathsf x)$, it holds:
$$|X^R_{\tau_R}(\mathsf x)| \geq R \qquad \text{or} \qquad |X^R_{\tau_R^-}(\mathsf x)| \geq R.$$ 
In conclusion, using \eqref{eq.Gg0} and \eqref{eq.tauRinclu},  it  follows that for all $0<r<R$ and  every $|\mathsf x|\le r$, 
\begin{equation}\label{eq.tauR}
\mathbb P_{\mathsf x}[{\tau}_R\le t]\le \Psi_r (R)= \mathbb P\big [ S^\alpha_t \ge R\, e^{-Ct}  - r + Ct\big ].
 \end{equation}
 Note that $\Psi_r (R)\to 0$ as $R\to +\infty$.        
 This  proves Proposition~\ref{pr.rt}.  
  \end{proof}
   The proof of existence and uniqueness of a weak solution to~\eqref{eq.Lan} when \textbf{[{\footnotesize H$^{\alpha\in (1,2)}_{\text{LG}}$}]} holds is therefore complete.  In particular, the weak solutions  form a Markov process. 

\subsection{Weak  continuity of the trajectories under {\rm\textbf{[{\footnotesize H$^{\alpha\in (1,2)}_{\text{LG}}$}]}}}
 \label{sec.Cx}
In this section, we show that when \textbf{[{\footnotesize H$^{\alpha\in (1,2)}_{\text{LG}}$}]} holds:
\begin{equation}\label{eq.Xc}
\mathsf x\in \mathbb R^{2d}\mapsto   \mathbb P[X (\mathsf x)\in \cdot] \in \mathcal P(D([0,+\infty), \mathbb{R}^{2d})) \text{ is  weakly continuous.}
\end{equation}
In all this section, we assume  \textbf{[{\footnotesize H$^{\alpha\in (1,2)}_{\text{LG}}$}]}. 
For $\mathsf x\in \mathbb R^{2d}$, let  $X(\mathsf x)$ be the unique weak solution to \eqref{eq.Lan} with $\mathsf X_0=\mathsf x$. Set 
  $${\theta}_R(\mathsf x)  = \inf \{t \ge 0 : | {X}_t(\mathsf x)| \ge R \text{ or } | {X}_{t_-}(\mathsf x)| \ge R\}.$$  
Since $|\mathbf B(\mathsf y)|\le C(1+|\mathsf y|)$, we recall that by Grönwall's inequality and \eqref{eq.Lan},   for all $t\ge 0$, 
\begin{equation}\label{eq.Gg}
\sup_{s\in [0,t]}|X_s(\mathsf x)| \le \left[ |\mathsf x| + Ct + S^\alpha_t \right] e^{Ct}.
 \end{equation}
Since $\{{\theta}_R(\mathsf x)\le t\}\subset \{\sup_{s\in [0,t]}|X_s(\mathsf x)|\ge R\}$, it follows 
that for all $0<r<R$ and  every $|\mathsf x|\le r$, 
\begin{equation}\label{eq.tauR2}
\mathbb P_{\mathsf x}[{\theta}_R\le t]\le \Psi_{r} (R).
 \end{equation} 
 Note that \eqref{eq.tauR2} can also be directly obtained from \eqref{eq.tauR} and the fact that 
 $ {\tau}_R(\mathsf x) \overset{\text{law}}{=} {\theta}_R(\mathsf x)$.   Note  indeed that because, by Itô's formula,    $X_{\cdot \wedge   {\theta}_R}(\mathsf x)$   also solves the stopped martingale problem  $(\mathcal L^\alpha,\mathsf x,\mathscr B_{\mathbb R^{2d}}(0,R))$ which has a unique solution, we have that:
\begin{equation}\label{eq.=law}
(X_{t\wedge   {\theta}_R}(\mathsf x),t\ge 0)\overset{\text{law}}{=}(X^R_{t\wedge   \tau_R}(\mathsf x),t\ge 0), \  \text{for all $0<r<R$, $|\mathsf x|\le r$}.
\end{equation}

 \subsubsection{Relative compactness} 
 The goal of this section is to prove the following result. 
 \begin{lemma}\label{le.tight}
Assume  {\rm \textbf{[{\footnotesize H$^{\alpha\in (1,2)}_{\text{LG}}$}]}}. Let $\mathsf x_n\to \mathsf x\in \mathbb R^{2d}$.   Then, for all $T\ge 0$,  
\begin{equation}\label{eq.gg0}
(\mathbb P_{\mathsf x_n}[X_{[0,T]}   \in \cdot ])_{n\ge 0}\text{  is tight in $\mathcal P(D([0,T], \mathbb{R}^{2d})$)}.
\end{equation} 
 \end{lemma}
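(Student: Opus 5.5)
The plan is to verify Aldous' tightness criterion in $D([0,T],\mathbb R^{2d})$, see e.g.~\cite[Theorem 16.10]{billingsley2013}. Two facts are needed. First, compact containment: $\sup_n \mathbb P_{\mathsf x_n}[\sup_{s\le T}|X_s|\ge R]\to 0$ as $R\to+\infty$. Second, for every $\eta>0$,
$$\lim_{h\to 0}\ \limsup_n\ \sup_{\tau,\ 0\le \theta\le h} \mathbb P_{\mathsf x_n}\big[|X_{(\tau+\theta)\wedge T}-X_{\tau}|>\eta\big]=0,$$
where $\tau$ ranges over stopping times bounded by $T$.

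Compact containment is immediate from the Grönwall bound \eqref{eq.Gg}, which is pathwise and depends on the initial point only through $|\mathsf x_n|$. Since $\mathsf x_n\to\mathsf x$, there is $r$ with $|\mathsf x_n|\le r$ for all $n$. Then \eqref{eq.tauR2} gives $\mathbb P_{\mathsf x_n}[\sup_{s\le T}|X_s|\ge R]\le \Psi_r(R)$, uniformly in $n$, and this tends to $0$.

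For the Aldous condition, I localize on the event $\{\sup_{s\le T}|X_s|< R\}$, whose complement has probability at most $\Psi_r(R)$ uniformly in $n$. On this event the drift satisfies $|\mathbf B(X_s)|\le C(1+R)$ and $|v_s|\le R$. Hence for $\theta\le h$,
$$|x_{\tau+\theta}-x_\tau|\le Rh,\qquad |v_{\tau+\theta}-v_\tau|\le C(1+R)h+|L^\alpha_{\tau+\theta}-L^\alpha_\tau|.$$
By the strong Markov property of the Lévy process, $(L^\alpha_{\tau+u}-L^\alpha_\tau)_{u\ge 0}$ has the same law as $L^\alpha$. So $\mathbb P[|L^\alpha_{\tau+\theta}-L^\alpha_\tau|>\eta/4]\le \mathbb P[S^\alpha_h>\eta/4]$, which tends to $0$ as $h\to0$ uniformly in $\tau$ and $n$. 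I will first fix $R$ large so that $\Psi_r(R)<\epsilon$, and then choose $h$ small so that $Rh+C(1+R)h<\eta/2$ and $\mathbb P[S^\alpha_h>\eta/4]<\epsilon$. This gives the Aldous condition.

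I expect the main obstacle to be the rigorous use of the noise increment after a stopping time. A weak solution for $\mathsf x_n$ lives on its own probability space with its own Lévy process $L^{\alpha,n}$. However, each $L^{\alpha,n}$ is an $(\mathcal F_t)$-Lévy process, so its increments after any bounded stopping time are independent of $\mathcal F_\tau$ and have the same law as $L^\alpha$, by~\cite[Theorem 6.17]{le2016brownian}-type strong Markov arguments. The estimate above therefore holds uniformly over all solutions. The Grönwall bound is pathwise in $L^{\alpha,n}$, and $S^{\alpha,n}_t$ has the same law as $S^\alpha_t$, so \eqref{eq.Gg} transfers to each $n$ without change. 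This proves \eqref{eq.gg0} for every $T$.
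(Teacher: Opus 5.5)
Your argument is correct, but it verifies a different tightness criterion from the paper's. The paper uses \cite[Theorem 13.2]{billingsley2013}. Compact containment comes from \eqref{eq.Gg}, exactly as in your proposal. The oscillation condition is then obtained by a purely pathwise comparison: $|X_s(\mathsf x_n)-X_t(\mathsf x_n)|\le 2\delta Z_T+2|L^\alpha_t-L^\alpha_s|$ for $|t-s|\le\delta$, where $Z_T$ is random but independent of $n$ (it is built from $S^\alpha_T$). This gives $w'_T(X(\mathsf x_n),\delta)\lesssim \delta Z_T+w'_T(L^\alpha,\delta)$, and the paper concludes from $w'_T(L^\alpha,\delta)\to 0$ almost surely.

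You instead check Aldous' condition. You replace the random constant $Z_T$ by a deterministic $R$ on $\{\sup_{s\le T}|X_s|<R\}$, at cost $\Psi_r(R)$, and you control the noise increment after $\tau$ by the strong Markov property of the driving L\'evy process.

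The paper's route needs no stopping times and no independence of post-$\tau$ increments, only a property of individual L\'evy paths. One detail is left implicit there: $w'_T$ is an infimum over partitions whose intervals have length \emph{at least} $\delta$, so the partition must be refined into pieces of length at most $2\delta$ to make the drift contribution $O(\delta Z_T)$.

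Your route needs each driving noise to be an $(\mathcal F_t)$-L\'evy process. You rightly flag this, and it is part of the weak-solution framework. In exchange, your route:
\begin{itemize}
\item avoids the partition bookkeeping;
\item is the formulation that extends to state-dependent (multiplicative) noise, where no pathwise comparison with the modulus of $L^\alpha$ is available;
\item in the form of \cite[Theorem 16.10]{billingsley2013}, yields tightness directly in $D([0,+\infty),\mathbb R^{2d})$, which is what is used afterwards.
\end{itemize}
Note that \cite[Theorem 16.10]{billingsley2013} is stated on $D([0,+\infty))$, with $\tau\le m$ and $\tau+\delta$ not truncated. Your estimates cover this case once $T$ is replaced by $T+1$.
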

 \begin{proof}  
We use~\cite[Theorem 13.2]{billingsley2013}.  
On the one hand, from \eqref{eq.Gg}, we immediately  conclude that:
  $$\forall T\ge 0,\, \forall  a\ge 0, \  \ \ \mathbb P \Big [\sup_{t\in [0,T]}|X_t(\mathsf x_n)|\ge a\Big ]\le \Psi_{R_0} (a) \to 0 \text{ as } a\to +\infty.$$
Let $\delta>0$ and $0\le a<b\le T$  such that $b-a\le \delta$. Pick $s,t\in [a,b)$. Then, by  \eqref{eq.Gg}:
  \begin{align*}
   |x_s(\mathsf x_n)-x_t(\mathsf x_n)|\le \delta   \left[ R_0 + CT  + S^\alpha_{T} \right] e^{C{T}} ,
  \end{align*}
  and 
   \begin{align*}
  |v_t(\mathsf x_n)-v_s(\mathsf x_n)|\le \delta   \left[C+ R_0 + CT + S^\alpha_{T+1} \right] e^{CT}  +  |L^\alpha_t-L^\alpha_{s}|.
  \end{align*}
  Hence $|X_s(\mathsf x_n)-X_t(\mathsf x_n)|\le  2\delta Z_T+ 2|L^\alpha_t-L^\alpha_{s}|$, where   $$Z_T:=\left[C+ R_0 + C(T+1) + S^\alpha_{T} \right] e^{C{T}}.$$ 
  Let us recall that for $\gamma\in D([0,T],\mathbb R^{2d})$ and $\delta\in (0,1)$, the modulus of continuity of $\gamma$ over $[0,T]$ is defined by (see~\cite[Eq. (12.7)]{billingsley2013}): 
  $$w_T'(\gamma,\delta)= \inf_{\mathscr P= \{t_i\}}\  \max_{1\le i\le \ell(\mathscr P)}\  \sup_{s,t\in [t_i,t_{i-1})}|\gamma_s-\gamma_t|,$$
  where the infimum is taken over all $\mathscr P= \{t_0=0,t_1,\ldots, t_{\ell (\mathscr P)}=T\}$ with $t_{i-1}<t_i$ and $t_{i}-t_{i-1}\ge \delta$. 
   Note that  
  $w_T'(X(\mathsf x_n),\delta)\le 2\delta Z_T+ 2w_T'(L^\alpha,\delta)$. Hence,  for $\epsilon>0$, we deduce the following upper bound which is independent of $n\ge 0$:
     \begin{align*}
  \mathbb P[  w_T'(X(\mathsf x_n),\delta)\ge \epsilon]&\le   \mathbb P[ 2\delta Z_T\ge  \epsilon /2]+  \mathbb P[ 2w_T'(L^\alpha,\delta)\ge  \epsilon /2].
  \end{align*}
Let us recall that  a.s. $w_T'(L^\alpha,\delta)\to 0$ a.s. when $\delta\to 0^+$ (see~\cite[Lemma 1]{billingsley2013}). As a result,
$$\lim_{\delta\to 0^+}\sup_{n\ge 0}\mathbb P[  w_T'(X(\mathsf x_n),\delta)\ge \epsilon]=0.$$
 This completes the proof of Lemma \ref{le.tight}. 
  \end{proof}

 \subsubsection{Proof of \eqref{eq.Xc}}   In this section we prove the following result which, together with Lemma \ref{le.tight} implies \eqref{eq.Xc} by~\cite[Theorems 16.7 and 13.1]{billingsley2013}. 

 \begin{proposition}\label{pr.C}
Assume  {\rm \textbf{[{\footnotesize H$^{\alpha\in (1,2)}_{\text{LG}}$}]}}. Let $\mathsf x_n\to \mathsf x\in \mathbb R^{2d}$.   Then, for all  $0\le t_1\le \ldots \le t_k \le T$ and every continuous  function $g: \mathcal (\mathbb{R}^{2d})^k\to \mathbb R$, we have: 
\begin{equation}\label{eq.gg}
\lim_{n\to+\infty }\mathbb E[g(X_{t_1}(\mathsf x_n),\ldots, X_{t_k}(\mathsf x_n)) ]= \mathbb E[g(X_{t_1}(\mathsf x),\ldots, X_{t_k}(\mathsf x)) ].
\end{equation}
 \end{proposition}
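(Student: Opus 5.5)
We reduce the statement to the bounded-drift case by localization. Fix $\epsilon>0$ and choose $r>\sup_n|\mathsf x_n|\vee|\mathsf x|$. By \eqref{eq.tauR2}, we can pick $R>r$ such that $\sup_{|\mathsf y|\le r}\mathbb P_{\mathsf y}[\theta_R\le T]\le \Psi_r(R)\le\epsilon$. First assume $g$ is bounded. On the event $\{\theta_R>T\}$ we have $X_{t_i}(\mathsf y)=X_{t_i\wedge\theta_R}(\mathsf y)$ for all $i$. Hence
$$\big|\mathbb E[g(X_{t_1}(\mathsf y),\dots,X_{t_k}(\mathsf y))]-\mathbb E[g(X_{t_1\wedge\theta_R}(\mathsf y),\dots,X_{t_k\wedge\theta_R}(\mathsf y))]\big|\le 2\|g\|_\infty\epsilon$$
uniformly in $|\mathsf y|\le r$. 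By \eqref{eq.=law}, the stopped process $X_{\cdot\wedge\theta_R}(\mathsf y)$ has the same law as $X^R_{\cdot\wedge\tau_R}(\mathsf y)$. In turn, $X^R_{\cdot\wedge\tau_R}(\mathsf y)$ agrees with $X^R(\mathsf y)$ on $\{\tau_R>T\}$, and by \eqref{eq.tauR} this event has probability at least $1-\epsilon$. Therefore it suffices to prove
$$\lim_n\mathbb E[g(X^R_{t_1}(\mathsf x_n),\dots,X^R_{t_k}(\mathsf x_n))]=\mathbb E[g(X^R_{t_1}(\mathsf x),\dots,X^R_{t_k}(\mathsf x))],$$
which costs an additional error of $4\|g\|_\infty\epsilon$.

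The drift $\mathbf B_R$ is bounded and measurable, so Theorem~\ref{th.ex} applies: the map $\mathsf y\mapsto\mathbb P[X^R(\mathsf y)\in\cdot]$ is weakly continuous in $\mathcal P(D([0,+\infty),\mathbb R^{2d}))$. The solution $X^R(\mathsf x)$ is driven by a Lévy process, so it has no fixed jump times, i.e.\ $\mathscr C_{X^R(\mathsf x)}=\mathbb R_+$. By \cite[Lemma 7.8 in Chapter 3]{EK}, the finite-dimensional distributions at $t_1,\dots,t_k$ therefore converge weakly, which gives the displayed limit for bounded continuous $g$. Letting $\epsilon\to0$ then proves \eqref{eq.gg} for bounded continuous $g$.

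If $g$ is continuous but unbounded, convergence of the finite-dimensional laws still gives the result, provided $\mathbb E[g(\cdots)]$ is finite and uniform integrability holds. I would truncate $g$ at level $M$. By Lemma~\ref{le.tight}, or directly by \eqref{eq.Gg}, the family $\sup_{s\le T}|X_s(\mathsf x_n)|$ is tight uniformly in $n$, so the finite-dimensional laws converge weakly in either case. Since this is the part I regard as most delicate, I would read ``continuous'' as bounded continuous, which is what \cite[Theorem 16.7]{billingsley2013} requires anyway. The main obstacle is the uniform (in $n$) control of the exit probabilities. That control is provided by the Grönwall bound \eqref{eq.tauR}, whose right-hand side $\Psi_r(R)$ depends only on $r$.
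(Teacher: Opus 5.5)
Your proof is correct and follows the paper's argument essentially step for step. You localize with $\theta_R$ and $\tau_R$ using the bounds \eqref{eq.tauR}--\eqref{eq.tauR2}, which are uniform over $|\mathsf y|\le r$, together with the identity in law \eqref{eq.=law}. You then apply the weak continuity of Theorem~\ref{th.ex} to the bounded drift $\mathbf B_R$ and use that $X^R(\mathsf x)$ has no fixed jump times. Reading $g$ as bounded continuous is exactly what the paper's proof implicitly does, since its $O(\mathbb P[\cdot])$ error terms carry the constant $\Vert g\Vert_\infty$. It is also the only meaningful reading, because for unbounded $g$ the expectations need not be finite (e.g.\ $\mathbf B\equiv 0$, $g(\mathsf y)=|\mathsf y|^2$), so your hedged paragraph on unbounded $g$ can simply be dropped.
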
 

 \begin{proof}  
 Assume  {\rm \textbf{[{\footnotesize H$^{\alpha\in (1,2)}_{\text{LG}}$}]}}. Let $R_0>0$ such that $|\mathsf x_n|+|\mathsf x|\le R_0$. 
Recall  also that $X^R(\mathsf y)$ denotes  the weak solution   to    \eqref{eq.exXtr} with $X^R_0=\mathsf y$ when $\mathbf B_R$ is given by \eqref{eq.BR}.   Using  \eqref{eq.tauR}, \eqref{eq.=law} and~\eqref{eq.tauR2}, we have for all $n\ge 0$ and $R>R_0$:
\begin{align*}
\mathbb E[g(X_{t_1}(\mathsf x_n),\ldots, X_{t_k}(\mathsf x_n)) ]&=\mathbb E[\mathbf 1_{T<{\theta}_R(\mathsf x_n)} g(X_{t_1}(\mathsf x_n),\ldots, X_{t_k}(\mathsf x_n)) ] \\
&\quad + O(\mathbb P_{\mathsf x_n}[{\theta}_R(\mathsf x_n)\le T]) \\
&= \mathbb E[\mathbf 1_{T<\tau_R(\mathsf x_n)} g(X^R_{t_1}(\mathsf x_n),\ldots, X^R_{t_k}(\mathsf x_n)) ]\\
&\quad  + O(\mathbb P_{\mathsf x_n}[{\theta}_R(\mathsf x_n)\le T])\\
&= \mathbb E[ g(X^R_{t_1}(\mathsf x_n),\ldots, X^R_{t_k}(\mathsf x_n)) ]\\
&\quad + O(\mathbb P_{\mathsf x_n}[\tau_R(\mathsf x_n)\le T])  + O(\mathbb P_{\mathsf x_n}[{\theta}_R(\mathsf x_n)\le T])  \\
&=  \mathbb E[ g(X^R_{t_1}(\mathsf x_n),\ldots, X^R_{t_k}(\mathsf x_n)) ] + r^T_n(R),
 \end{align*}
 where $\sup_n r^T_n(R)\to 0$ as $R\to +\infty$. On the other hand, for any fixed $R>R_0$, by Theorem~\ref{th.ex} (recall that $\mathbf B_R$ is bounded and   $\{t\ge 0, \mathbb P_{\mathsf x}[X^R_t=X^R_{t_-}]=1\}=\mathbb R_+$), 
\begin{equation}\label{eq.Mkp}
\lim_n \mathbb E[ g(X^R_{t_1}(\mathsf x_n),\ldots, X^R_{t_k}(\mathsf x_n)) ]=  \mathbb E[ g(X^R_{t_1}(\mathsf x),\ldots, X^R_{t_k}(\mathsf x)) ].
\end{equation}
 Now we write thanks to \eqref{eq.tauR2} and \eqref{eq.=law},
 \begin{align*}
 \mathbb E[ g(X^R_{t_1}(\mathsf x),\ldots, X^R_{t_k}(\mathsf x)) ]&= \mathbb E[ \mathbf 1_{T<\tau_R(\mathsf x)}  g(X^R_{t_1}(\mathsf x),\ldots, X^R_{t_k}(\mathsf x)) ] \\
 &\quad +  O(\mathbb P_{\mathsf x}[\tau_R(\mathsf x )\le T])\\
 &=  \mathbb E[   g(X_{t_1}(\mathsf x),\ldots, X_{t_k}(\mathsf x)) ]\\
 &\quad +  O(\mathbb P_{\mathsf x}[\tau_R(\mathsf x )\le T])+ O(\mathbb P_{\mathsf x}[{\theta}_R(\mathsf x )\le T])\\
 &=  \mathbb E[   g(X_{t_1}(\mathsf x),\ldots, X_{t_k}(\mathsf x)) ] + o_1(R).
 \end{align*} 
We finally  write: 
 \begin{align*}
& \big |\mathbb E[g(X_{t_1}(\mathsf x_n),\ldots, X_{t_k}(\mathsf x_n)) ]- \mathbb E[   g(X_{t_1}(\mathsf x),\ldots, X_{t_k}(\mathsf x)) ] \big |\\
&\le  \big |\mathbb E[g(X_{t_1}(\mathsf x_n),\ldots, X_{t_k}(\mathsf x_n)) ]-\mathbb E[ g(X^R_{t_1}(\mathsf x_n),\ldots, X^R_{t_k}(\mathsf x_n)) ] \big |\\
&\quad +\big | \mathbb E[ g(X^R_{t_1}(\mathsf x_n),\ldots, X^R_{t_k}(\mathsf x_n)) ]-\mathbb E[ g(X^R_{t_1}(\mathsf x),\ldots, X^R_{t_k}(\mathsf x)) ] \big |\\
&\quad +\big  |\mathbb E[ g(X^R_{t_1}(\mathsf x),\ldots, X^R_{t_k}(\mathsf x)) ]- 
 \mathbb E[   g(X_{t_1}(\mathsf x),\ldots, X_{t_k}(\mathsf x)) ]\big |.
  \end{align*}
 We now conclude the proof of \eqref{eq.gg}. Let $\epsilon>0$. Pick  $R_\epsilon\ge R_0+1$ such that for all $n\ge 0$, 
   \begin{align*}
& \big |\mathbb E[g(X_{t_1}(\mathsf x_n),\ldots, X_{t_k}(\mathsf x_n)) ]- \mathbb E[   g(X_{t_1}(\mathsf x),\ldots, X_{t_k}(\mathsf x)) ] \big |\\
 &\le  \epsilon +\big | \mathbb E[ g(X^{R_\epsilon}_{t_1}(\mathsf x_n),\ldots, X^{R_\epsilon}_{t_k}(\mathsf x_n)) ]-\mathbb E[ g(X^{R_\epsilon}_{t_1}(\mathsf x),\ldots, X^{R_\epsilon}_{t_k}(\mathsf x)) ]\big  |.
  \end{align*}
In addition, by \eqref{eq.Mkp}, there exits $N_\epsilon\ge 0$, for all $n\ge N_\epsilon$, 
  $$\big |\mathbb E[g(X_{t_1}(\mathsf x_n),\ldots, X_{t_k}(\mathsf x_n)) ]- \mathbb E[   g(X_{t_1}(\mathsf x),\ldots, X_{t_k}(\mathsf x)) ] \big |\le 2\epsilon.$$
   This completes the proof of Proposition~\ref{pr.C}. 
    \end{proof}
  The proof of    \eqref{eq.Xc} is complete. 
  
   \subsubsection{Existence of a density} 
   \label{sec.density}
In this step we prove the claims on the density of $X_t(\mathsf x)$  stated in Theorem \ref{th.ex2}.  Let $t>0$. All the claims will follow if we prove that   for every $t>0$ and for every $p> 1$ satisfying~\eqref{eq.p0}, there exists $C_{p,t}>0$ such that for all $\mathsf x\in \mathbb R^{2d}$  and  $f\in  C_c^\infty(\mathbb R^{2d})$: 
\begin{equation}\label{eq.Dua-bis}
|  \mathbb E_{\mathsf x} [ f(X_{t})]|  \le  C_{p,t}\big [1+|\mathsf x|\big ] \  \Vert f \Vert_{ L^{  p}}. 
\end{equation}
  To prove \eqref{eq.Dua-bis} we need to explain how to extend \eqref{eq.Sfd} to the case when $\mathbf B$ has at most linear growth. In what follows,  $C>0$ is a constant independent of $0\le s\le t$ and $\mathsf x\in \mathbb R^{2d}$ which can change from one occurence to another. 
  Note  first that, when  $\mathbf B$ has at most linear growth, the equality \eqref{eq.RS} is still valid  and that we just has to justify that one can pass to the limit $\epsilon\to 0^+$  in this equality. To do so, they are two terms to handle 
  $$\mathbb E_{\mathsf x}\big [\hat P_{\epsilon}f(X_{t-\epsilon})\big ] \ 
 \text{ and } \ 
\mathbb E_{\mathsf x}\Big[
\int_0^{t-\epsilon}
\mathbf B(X_s)\cdot
\nabla_v\hat P_{t-s}f(X_s)\,ds
\Big].
$$
We first consider the second term above. 
For $s\in [0,t)$,  we observe that  
   $$ \big|    \mathbb E_{\mathsf x} [\mathbf B(X_s)\cdot    \nabla _v  \hat P_{t-s}f (X_s)]  \big|\le   C\,   \mathbb E_{\mathsf x} \big [1+| (X_s)| \big]  \,  \Vert     \nabla _v  \hat P_{t-s}f \Vert_\infty. 
 $$
By \eqref{eq.Gg} and Doob’s inequality in $L^m$  (for some fixed $m\in (1,\alpha)$),  it holds:
\begin{align*}
\mathbb E_{\mathsf x} \big [ \sup_{s\in [0,t]}| (X_s)| \big] &\le  \Big ( |\mathsf x| + Ct + \mathbb E\big [\sup_{s \in [0,t]} |L^\alpha_s|\big ] \Big) e^{Ct}  \le   \Big ( |\mathsf x| + Ct + \mathbb E\big [ |L^\alpha_t|^m\big ]^{1/m} \Big) e^{Ct} . \end{align*}
Note that $\mathbb{E}[|L^\alpha_t|^m]^{1/m}   = t^{1/\alpha} \mathbb{E}[|L_1^\alpha|^m]^{1/m}$. 
As a consequence, reasoning as just below \eqref{eq.RS} and using~\eqref{eq.LpB}, we thus get that: 
 $$
\lim_{\epsilon\to 0} \mathbb E_{\mathsf x}\Big[
\int_0^{t-\epsilon}
\mathbf B(X_s)\cdot
\nabla_v\hat P_{t-s}f(X_s)\,ds
\Big]= \mathbb E_{\mathsf x}\Big[
\int_0^{t}
\mathbf B(X_s)\cdot
\nabla_v\hat P_{t-s}f(X_s)\,ds
\Big]
$$
We now consider the term $\mathbb E_{\mathsf x}\big [\hat P_{\epsilon}f(X_{t-\epsilon})\big ]$ and claim again that $\mathbb E_{\mathsf x}\big [\hat P_{\epsilon}f(X_{t-\epsilon})\big ]\to  \mathbb E_{\mathsf x} [ f(X_{t})]$. 
  To prove such a result we use the decomposition given in \eqref{eq.deco}-\eqref{eq.deco2}. The term in \eqref{eq.deco2} is treated with the same arguments as we did previously.  To deal with the second term \eqref{eq.deco2} we recall that one uses \eqref{eq.Cab} and shows that for $\beta\in (0,1)$, 
  $$\lim_{\epsilon\to 0}\mathbb E_{\mathsf x}[|x_{t}-x_{t-\epsilon}|^{\frac{\beta}{1+\alpha}}] + \mathbb E_{\mathsf x}[|v_{t}-v_{t-\epsilon}|^{\beta}  ]= 0.$$
We show the previous limit using  the Grönwall's inequality. We then finally  deduce 
that $\mathbb E_{\mathsf x}\big [\hat P_{\epsilon}f(X_{t-\epsilon})\big ]\to  \mathbb E_{\mathsf x} [ f(X_{t})]$.  This completes the proof of  \eqref{eq.Dua-bis}.

\subsubsection{$C_0$-Feller semigroup} 
   \label{sec.Feller}
   The Feller  $C_0(\mathbb R^{2d})$-semigroup property is well known for stochastic differential equations driven by Brownian motion with bounded and/or Lipschitz continuous drift coefficients. In contrast, in the present framework the drift is neither continuous nor bounded, and the driving noise is purely discontinuous, possessing finite moments only up to order strictly less than $\alpha$.  
   
   Let us prove that under {\rm\textbf{[{\footnotesize H$^{\alpha\in (1,2)}_{\text{LG}}$}]}}, $(P_t,t\ge 0)$ is a $C_0(\mathbb R^{2d})$-semigroup. 
   Note that because $X_t(\mathsf y)\to X_t(\mathsf x)$ in law when $\mathsf y\to \mathsf x\in \mathbb R^{2d}$, we have $P_tC_b(\mathbb R^{2d})\subset C_b(\mathbb R^{2d})$ for all $t\ge 0$. 
It is thus enough to prove that for all $f\in C_0(\mathbb R^{2d})$:
\begin{enumerate}
\item[-]   $\lim_{|\mathsf x|\to \infty} \mathbb E_{\mathsf x}[f(X_t)]= 0$.
\item[-]   $\lim_{t\to 0^+}\Vert P_tf- f\Vert_\infty=0$. 
\end{enumerate}
Let us prove that for every $t>0$ sufficiently small,  $\lim_{|\mathsf x|\to +\infty}\mathbb E_{\mathsf x}[f(X_t)]=0$.  Since $f\in C_0(\mathbb R^{2d})$, it suffices to show that:
$$\lim_{|\mathsf x|\to +\infty}\mathbb P_{\mathsf x}[|X_t|\le R]=0, \ \forall R>0.$$
 Note that  $|X_t(\mathsf x)|\le R$ implies $|\mathsf x|\le R + Ct[1+ \sup_{s\in [0,t]}|X_s(\mathsf x)|]+   S_t^\alpha$ and then by \eqref{eq.Gg0} (which also holds for $X(\mathsf x)$), it thus implies that  $ |\mathsf x|\le R + Ct[1+     [ |\mathsf x| + Ct + S^\alpha_t  ] e^{Ct} ]+   S_t^\alpha$. As a result, when $|X_t(\mathsf x)|\le R$, it holds that:
$$|\mathsf x| ( 1- Ct e^{Ct})\le R + Ct\big[1+   Ct \big]+   [1+ Cte^{Ct} ] S_t^\alpha.$$
Hence, for all $t\in [0,t_c]$ (where $t_c>0$ is such that $Ct_c e^{Ct_c}<1$), $\lim_{|\mathsf x|\to +\infty}\mathbb P_{\mathsf x}[|X_t|\le R]=0$. Therefore, $P_tC_0(\mathbb R^{2d})\subset C_0(\mathbb R^{2d})$, for all $t\in [0,t_c]$. We extend it to every $t>t_0$ thanks to the semigroup property. 

Let us now consider the second item. 
For a fixed $\epsilon > 0$, we consider   $\Psi^\epsilon \in C_c^\infty(\mathbb{R}^{2d})$  such that $\Psi^\epsilon(\mathsf{y}) = 1$ when $|\mathsf{y} | \le \epsilon/2$, $\Psi^\epsilon(\mathsf{y}) = 0$ when $|\mathsf{y}| \ge \epsilon$, and $0 \le \Psi^\epsilon \le 1$ everywhere.
We then   consider the function $\mathsf z\mapsto \Phi^\epsilon_{\mathsf x}(\mathsf z):= \Psi^\epsilon(\mathsf z-\mathsf x)$. Note that $\nabla \Phi^\epsilon_{\mathsf x}$ and all its derivatives are bounded over $\mathbb R^{2d}$, uniformly w.r.t. $\mathsf{x}\in \mathbb R^{2d}$. 
Moreover, $ \Phi^\epsilon_{\mathsf x} (\mathsf z) \le \mathbf{1}_{ |\mathsf z - \mathsf{x}| \le \epsilon}$, and consequently we have:
$$\mathbb{P}_{\mathsf{x}}[|X_t - \mathsf{x}| > \epsilon] \le 1 - \mathbb{E}_{\mathsf{x}}[\Phi^\epsilon_{\mathsf x}(X_t)] = |\Phi^\epsilon_{\mathsf x}(\mathsf{x}) - P_t \Phi^\epsilon_{\mathsf x}(\mathsf{x})|.$$
On the other hand, by Itô's formula and Fubini's Theorem, 
$$P_t \Phi^\epsilon_{\mathsf x}(\mathsf{x}) - \Phi^\epsilon_{\mathsf x}(\mathsf{x}) = \int_0^t \mathbb{E}_{\mathsf{x}}[\mathcal L^\alpha \Phi^\epsilon_{\mathsf x}(X_s)] ds.$$
In addition, we recall that  for $\mathsf z=(x',v')\in \mathbb R^{2d}$, 
\begin{align*}
\mathcal{L}^\alpha \Phi^\epsilon_{\mathsf{x}}(\mathsf z) &= v' \cdot \nabla_{x'} \Phi^\epsilon_{\mathsf{x}}(\mathsf z) + \mathbf{B}(\mathsf z) \cdot \nabla_{v'} \Phi^\epsilon_{\mathsf{x}}(\mathsf z)\\
&\quad + \int_{\mathbb{R}^d} \left[ \Phi^\epsilon_{\mathsf{x}}(x', v'+w) - \Phi^\epsilon_{\mathsf{x}}(x', v') - \nabla_{v'} \Phi^\epsilon_{\mathsf{x}}(x', v') \cdot w \mathbf{1}_{|w|\le 1} \right] \nu_\alpha(dw).
\end{align*} 
 By definition, $ \Phi^\epsilon_{\mathsf{x}}$ is supported in  $\mathscr{B}_{\mathbb{R}^{2d}}(\mathsf{x}, \epsilon)$. 
 Hence, if $|\mathsf{x}| \le R$, it holds that 
 $$\sup_{\mathsf{z} \in \mathbb{R}^{2d}}\Big( |v' \cdot \nabla_{x'} \Phi^\epsilon_{\mathsf{x}}(\mathsf{z})|  + |\mathbf{B}(\mathsf{z}) \cdot \nabla_{v'} \Phi^\epsilon_{\mathsf{x}}(\mathsf{z})|\Big)\le \Big[(R + \epsilon) + \sup_{|\mathsf{z}| \le R + \epsilon} |\mathbf{B}(\mathsf{z})| \Big] \|\nabla \Psi^\epsilon\|_\infty.$$
 In addition, for all $\mathsf x\in \mathbb R^{2d}$, the absolute value of the jump part in the expression of $\mathcal{L}^\alpha \Phi^\epsilon_{\mathsf{x}}(x',v') $  is bounded by 
 $$   \frac{1}{2} \|\nabla^2 \Psi^\epsilon\|_\infty \int_{|w| \le 1} |w|^2 \nu_\alpha(dw) + 2 \int_{|w|>1}  \nu_\alpha(dw).$$
Hence,  for all $R>0$ and $\epsilon>0$,  
$$ M_R^\epsilon:= \sup_{|\mathsf x|\le R}\Vert\mathcal L^\alpha \Phi^\epsilon_{\mathsf x}\Vert_\infty <+\infty.$$
This implies that for all $R>0$ and $\epsilon>0$, 
\begin{equation}\label{eq.Rrp}
\sup_{|\mathsf x|\le R} \mathbb{P}_{\mathsf{x}}[|X_t - \mathsf{x}| > \epsilon] \le \sup_{|\mathsf x|\le R}  |\Phi^\epsilon_{\mathsf x}(\mathsf{x}) - P_t \Phi^\epsilon_{\mathsf x}(\mathsf{x})| \le t M_R^\epsilon.
\end{equation}
Note that we can also deduce a similar estimate to \eqref{eq.Rrp} using  Grönwall inequality. 
 It thus follows that for every $f\in C_0(\mathbb R^{2d})$,  $R>0$,  and $\epsilon>0$, 
 \begin{align*}
\sup_{|\mathsf x| \le R} |P_tf(\mathsf x)- f(\mathsf x)|  &\le  \sup_{|\mathsf y-\mathsf x|\le \epsilon} |f(\mathsf y)-f(\mathsf x)|+ 2\Vert f\Vert_\infty\sup_{|\mathsf x| \le R} \mathbb P_{\mathsf x} [ |X_t - \mathsf{x}| > \epsilon]\\
&\le \sup_{|\mathsf y-\mathsf x|\le \epsilon} |f(\mathsf y)-f(\mathsf x)|+ 2\Vert f\Vert_\infty \, t M_R^\epsilon. 
\end{align*}  
We now aim to estimate $\sup_{|\mathsf x|  > R} |P_tf(\mathsf x)- f(\mathsf x)|$ for $f\in C_0(\mathbb R^{2d})$.  Let  $R\ge R_0>0$. Writing  
$$|P_t f(\mathsf{x})| \le \mathbb{E}_{\mathsf{x}} \big [|f(X_t)| \mathbf{1}_{|X_t| > R_0 } \big ] + \mathbb{E}_{\mathsf{x}} \big [|f(X_t)| \mathbf{1}_{ |X_t| \le R_0 } \big ],$$
 we deduce that:
 \begin{align*}
\sup_{|\mathsf x| > R} |P_tf(\mathsf x)- f(\mathsf x)| &\le \Vert f\Vert_\infty \sup_{|\mathsf x| > R} \mathbb P_{\mathsf x}[ |X_t| \le R_0] +\sup_{|\mathsf x| > R_0}|f(\mathsf x)|+     \sup_{|\mathsf x| > R}|f(\mathsf x)|\\
&\le    \Vert f\Vert_\infty \sup_{|\mathsf x| > R} \mathbb P_{\mathsf x}[ |X_t| \le R_0] + 2\sup_{|\mathsf x| > R_0}|f(\mathsf x)|.
\end{align*}  
Set $a_c:=1-Ct_c e^{Ct_c}>0$. 
For $t\in [0,t_c]$, we have   (recall \eqref{eq.Gg0}):
 \begin{align*}
 |X_t(\mathsf x)| \le R_0 &\Rightarrow  R_0\ge |\mathsf x|  -Ct- Ct\sup_{s\in [0,t]}|X_s(\mathsf x)| -S_t^\alpha\\
&\Rightarrow  R_0\ge|\mathsf x|  -Ct_c- Ct_c[ |\mathsf x| + Ct_c + S^\alpha_{t}  ] e^{Ct_c}  -S_{t}^\alpha\\
&\Rightarrow  S^\alpha_{t}[1+Ct_c e^{Ct_c}]\ge a_c|\mathsf x| -R_0-Ct_c[1+Ct_c].
\end{align*}
Hence, for $t\in [0,t_c]$ and $|\mathsf x|>R$, 
 \begin{align*}
 |X_t(\mathsf x)| \le R_0 &\Rightarrow     S^\alpha_{t}[1+Ct_c e^{Ct_c}]\ge a_c R -R_0-Ct_c[1+Ct_c].  
\end{align*}
We then deduce that for $t\in [0,t_c]$ and $R\ge R_0>0$, 
\begin{align*}     
&\sup_{|\mathsf x| > R} |P_tf(\mathsf x)- f(\mathsf x)|  \\
&\le    \Vert f\Vert_\infty   \mathbb P \Big [ S^\alpha_{t}[1+Ct_c e^{Ct_c}]\ge a_c R -R_0-Ct_c[1+Ct_c]\Big ] + 2\sup_{|\mathsf x| > R_0}|f(\mathsf x)|.
\end{align*}  
We are now in position to conclude the proof of the second item. Let $\eta>0$ and consider $\epsilon(\eta)>0$ and $R_0(\eta)>0$ such that 
$$ \sup_{|\mathsf y-\mathsf x|\le \epsilon(\eta)} |f(\mathsf y)-f(\mathsf x)| \le \eta \text{ and } 2\sup_{|\mathsf x| > R_0(\eta)}|f(\mathsf x)|\le \eta.$$
Now take $R(\eta) = a_c^{-1}( R_0(\eta)+ Ct_c[1+Ct_c]) +1$ so that (note that  $R(\eta)\ge R_0(\eta)$ since $0<a_c\le 1$), for all $t\in [0,t_c]$,
\begin{align*}
 &\sup_{\mathsf x\in \mathbb R^{2d}} |P_tf(\mathsf x)- f(\mathsf x)|\\
 &\le  \sup_{|\mathsf x| \le R(\eta)} |P_tf(\mathsf x)- f(\mathsf x)|+  \sup_{|\mathsf x| > R(\eta)} |P_tf(\mathsf x)- f(\mathsf x)|\\
 &\le  \eta + 2\Vert f\Vert_\infty \, t M_{R(\eta)}^{\epsilon(\eta)} + \Vert f\Vert_\infty   \mathbb P \Big [ S^\alpha_{t}[1+Ct_c e^{Ct_c}]\ge a_c R(\eta) -R_0-Ct_c[1+Ct_c]\Big ] + 2 \eta \\
 &\le 3 \eta + 2\Vert f\Vert_\infty \, t M_{R(\eta)}^{\epsilon(\eta)} + \Vert f\Vert_\infty   \mathbb P \Big [ S^\alpha_{t}[1+Ct_c e^{Ct_c}]\ge1  \Big ]. 
\end{align*} 
Note that $S_t^\alpha\to 0$ when $t\to 0^+$. 
Thus,  there is $0<t(\eta)\le t_c$ such that for all $t\in [0,t(\eta)]$, 
$$\sup_{\mathsf x\in \mathbb R^{2d}} |P_tf(\mathsf x)- f(\mathsf x)|\le 5\eta.$$
In conclusion, $\sup_{\mathsf x\in \mathbb R^{2d}} |P_tf(\mathsf x)- f(\mathsf x)|\to 0$ as $t\to 0$. 
This ends the proof of the Feller continuity property of $(P_t,t\ge 0)$. 
  The proof of   Theorem~\ref{th.ex2} is complete. 

\section{Strong Feller property and spectral radius}

\subsection{Strong Feller property of  the non-killed semigroups of \eqref{eq.Lan} driven by  RI$\alpha$S processes} 
The goal of this section is to prove the following result.

\begin{theorem}\label{th.FF} 
 Assume  that {\rm\textbf{[{\footnotesize H$^{\alpha\in (0,2)}_{\text{smooth}}$}]}} or {\rm\textbf{[{\footnotesize H$^{\alpha\in (1,2)}_{\text{LG}}$}]}}. 
Then,  the semigroup $(P_t,t> 0)$ of~\eqref{eq.Lan}  has the strong Feller  property. 
 \end{theorem}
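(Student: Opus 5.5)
The plan is to treat the bounded-drift case first, extend to linear growth by localization, and then handle the smooth case for all $\alpha\in(0,2)$ separately.

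\emph{Bounded drift, $\alpha\in(1,2)$.} Here the key tool is the criterion in Lemma~\ref{pr.Schilling}: a Feller-continuous kernel with densities is strongly Feller as soon as the family of densities is uniformly integrable locally in the starting point. Theorem~\ref{th.ex} gives weak continuity of $\mathsf x\mapsto \mathbb P_{\mathsf x}[X_t\in\cdot]$. Item $\mathfrak 1$ of Theorem~\ref{th.uni} gives densities $p_t(\mathsf x,\cdot)$ with
\[
\sup_{\mathsf x}\|p_t(\mathsf x,\cdot)\|_{L^{p'}}<\infty,\qquad p'>1 .
\]
By H\"older this bound yields equi-integrability on sets of small Lebesgue measure. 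Tightness for $\mathsf x$ in compacts follows from the Gr\"onwall bound \eqref{eq.Gg}. I would then argue directly as follows. Take $f\in bB(\mathbb R^{2d})$ and, using Lusin's theorem, approximate $f$ by $g\in C_b$ with $g=f$ off a set $A$ of small Lebesgue measure. Then
\[
|P_tf(\mathsf x)-P_tg(\mathsf x)|\le 2\|f\|_\infty\sup_{\mathsf x}\int_A p_t(\mathsf x,\cdot)\le 2\|f\|_\infty |A|^{1/p}C_{p,t},
\]
which is uniform in $\mathsf x$. Since $P_tg$ is continuous by weak continuity, $P_tf$ is a uniform limit of continuous functions, hence continuous.

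\emph{Linear growth, $\alpha\in(1,2)$.} Under \textbf{[{\footnotesize H$^{\alpha\in (1,2)}_{\text{LG}}$}]}, Theorem~\ref{th.ex2} again gives weak continuity together with densities satisfying \eqref{eq.Dua-bis}. The $L^{p'}$ bound is now uniform only for $|\mathsf x|\le r$, and this suffices because continuity is a local property. Alternatively, one can localize through $\mathbf B_R$: by \eqref{eq.=law} and \eqref{eq.tauR2},
\[
|P_tf(\mathsf x)-P^R_tf(\mathsf x)|\le 2\|f\|_\infty\Psi_r(R)
\]
uniformly for $|\mathsf x|\le r$. Each $P^R_t$ is strongly Feller by the bounded case, so $P_tf$ is continuous on $\{|\mathsf x|<r\}$ for every $r$.

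\emph{Smooth drift, $\alpha\in(0,2)$.} Here the Duhamel perturbation fails for $\alpha\le1$ because $(1\wedge t)^{-1/\alpha}$ is not integrable at $0$. Instead I would use that under \textbf{[{\footnotesize B$_{\text{smooth}}$}]} the equation has a smooth stochastic flow $\mathsf x\mapsto X_t(\mathsf x)$, by the Gr\"onwall continuity \eqref{eq.cbn} and differentiability with bounded $\nabla\mathbf B$. The aim is a Bismut-type gradient bound $\|\nabla P_tf\|_\infty\le C_t\|f\|_\infty$, obtained by Malliavin calculus on the jump sizes (Bismut's approach for pure-jump noise with the smooth density of $\nu_\alpha$) combined with the kinetic H\"ormander bracket $[\nabla_v, v\cdot\nabla_x]=\nabla_x$ to invert the Malliavin matrix. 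A cheaper route would be to cite existing results on strong Feller and gradient estimates for degenerate SDEs driven by subordinated Brownian motion or $\alpha$-stable noise with Lipschitz drift. These typically hold for bounded $\nabla\mathbf B$, possibly after the same localization as in the linear-growth case.

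I expect the main obstacle to be this smooth low-$\alpha$ case. For $\alpha\le 1$ no perturbative density bound is available, so the argument must rely on genuinely hypoelliptic tools. I would first look for an applicable published gradient or coupling estimate rather than redo the Malliavin analysis.
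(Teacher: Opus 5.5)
Your proposal is correct and follows the paper's proof essentially step by step: the bounded case via the uniform $L^{p'}$ density bound of Theorem~\ref{th.uni} plus the Feller property from Theorem~\ref{th.ex} (your Lusin argument is just a direct proof of the relevant direction of Lemma~\ref{pr.Schilling} with $\mu_t$ the Lebesgue measure), and the linear-growth case via the $\mathbf B_R$-localization with the uniform bound $2\Vert f\Vert_\infty\Psi_r(R)$, exactly as in the paper. For the smooth case with $\alpha\in(0,2)$ the paper likewise just cites the literature: the Malliavin-calculus result for subordinated Brownian motion of \cite{dong2016strong}, whose hypotheses hold under {\rm\textbf{[{\footnotesize B$_{\text{smooth}}$}]}} with Lyapunov function $1+|\mathsf x|^2$, so that is the reference you were looking for.
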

 \medskip
 
 \noindent
 \textbf{Note}. Under {\rm\textbf{[{\footnotesize H$^{\alpha\in (0,2)}_{\text{smooth}}$}]}},  the assertion of Theorem~\ref{th.FF} has already been proved for every $\alpha \in (0,2)$ by means of Malliavin calculus on   subordinated Brownian motions  in~\cite{dong2016strong}.  Indeed, when $\mathbf B$ is smooth with bounded partial derivatives of order $1$,~\cite[Eqs. (1.4) and (1.5)]{dong2016strong} are satisfied with  the Lyapunov function $H(\mathsf x)= 1+|\mathsf x|^2$.  We also refer to the series of works~\cite{zhang2014sima,zhang2014densities,zhang2017fundamental,zhangAIHP2020} and the references therein for a comprehensive review concerning  the smoothness of the distributional densities of SDEs with jumps. 
    The goal of this section  is to extend such a result  when less regularity are assumed on $\mathbf B$ (see also Theorem~\ref{th.sta}) and we recall that the perturbative approach we employed before necessitates restricting the range of $\alpha$ to $(1, 2)$.



\subsubsection{Proof of Theorem~\ref{th.FF} under {\rm\textbf{[{\footnotesize H$^{\alpha\in (1,2)}_{\text{bounded}}$}]}}}

Recall  Theorems \ref{th.ex} and  \ref{th.uni}. 
In this section,  we assume {\rm\textbf{[{\footnotesize H$^{\alpha\in (1,2)}_{\text{bounded}}$}]}}, i.e. we consider  when $\mathbf B$ is bounded and measurable,  the weak solution $(X_t=(x_t,v_t), t\ge 0)$ of ~\eqref{eq.Lan} when  $(L^\alpha_t, t\ge 0)$ is a RI$\alpha$S process, $\alpha\in (1,2)$.  
To prove Theorem~\ref{th.FF} under {\rm\textbf{[{\footnotesize H$^{\alpha\in (1,2)}_{\text{bounded}}$}]}}, we will need the following result, see also~\cite[Lemma 3.2]{wu1999} for a very similar result.

 \begin{lemma}(\cite[Theorem 2.1 (b)]{schilling2012strong}) 
\label{pr.Schilling}
A  semigroup $(Q_t,t> 0)$ has the strong Feller  property  over $\mathbb R^{2d}$     if and only if the following two conditions hold:
\begin{enumerate}
\item[-] For all $t\ge 0$, $Q_t(C_b(\mathbb R^{2d}))\subset C_b(\mathbb R^{2d})$ (i.e. $Q_t$ is    Feller). 
 \vspace{0.1cm}
 
\item[-] For every $t>0$, there exists some positive
Radon measure $\mu_t$ on $\mathbb R^{2d}$ such that $(Q_t(\mathsf x,d\mathsf y), \mathsf x\in \mathbb R^{2d})$ is locally uniformly absolutely continuous with respect to $\mu_t$, i.e. for every $K\Subset \mathbb R^{2d}$:
$$\lim_{\delta\to 0} \ \sup _{\mathscr A\in  B(\mathbb R^{2d}), \, \mu_t(\mathscr A)\le \delta} \ \sup_{\mathsf x\in K} Q_t(\mathsf x, \mathscr A)=0.$$
\end{enumerate}
\end{lemma}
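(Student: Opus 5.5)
The plan is to prove the two implications separately. The forward direction ("strong Feller $\Rightarrow$ the two conditions") will use a countable-mixture reference measure together with Dini's theorem. The converse will use a Lusin-type approximation combined with tightness on compacts.

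\emph{Strong Feller $\Rightarrow$ (Feller and local uniform absolute continuity).} The Feller property is immediate, since $C_b(\mathbb R^{2d})\subset bB(\mathbb R^{2d})$. For the second condition, fix $t>0$ and a countable dense set $\{\mathsf x_k\}_{k\ge1}\subset\mathbb R^{2d}$, and set
\[
\mu_t:=\sum_{k\ge1}2^{-k}Q_t(\mathsf x_k,\cdot).
\]
This is a finite, hence Radon, measure. First, if $\mu_t(\mathscr A)=0$, then the function $Q_t\mathbf 1_{\mathscr A}$ is continuous (by the strong Feller property) and vanishes on the dense set $\{\mathsf x_k\}$, so it vanishes identically. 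To upgrade this to local uniformity, argue by contradiction. Suppose there are $K\Subset\mathbb R^{2d}$, $\varepsilon>0$, sets $\mathscr A_n$ with $\mu_t(\mathscr A_n)\le 2^{-n}$, and points $\mathsf y_n\in K$ with $Q_t(\mathsf y_n,\mathscr A_n)\ge\varepsilon$. Put $\mathscr B_m=\bigcup_{n\ge m}\mathscr A_n$. Then $\mathscr B_m\searrow \mathscr B$ with $\mu_t(\mathscr B)=0$, so $Q_t(\cdot,\mathscr B)\equiv0$. The functions $h_m:=Q_t\mathbf 1_{\mathscr B_m}$ are continuous, decrease in $m$, and converge pointwise to $0$. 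By Dini's theorem on $K$, $\sup_K h_m\to0$. This contradicts $h_m(\mathsf y_m)\ge Q_t(\mathsf y_m,\mathscr A_m)\ge\varepsilon$.

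\emph{(Feller and local uniform absolute continuity) $\Rightarrow$ strong Feller.} Fix $t>0$, $f\in bB(\mathbb R^{2d})$ and $K\Subset\mathbb R^{2d}$; it suffices to show that $Q_tf$ is continuous on $K$. By the Feller property, the map $\mathsf x\mapsto Q_t(\mathsf x,\cdot)$ is weakly continuous: for Markov kernels this is exactly testing against $C_b$, and in the sub-Markov case one also uses continuity of $Q_t\mathbf 1$. Hence $\{Q_t(\mathsf x,\cdot),\mathsf x\in K\}$ is compact, and by Prokhorov it is tight. So for every $\eta>0$ there is a compact $L$ with $\sup_{\mathsf x\in K}Q_t(\mathsf x,L^c)\le\eta$. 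Next choose $\delta>0$ from the second condition so that $\mu_t(\mathscr A)\le\delta$ implies $\sup_KQ_t(\cdot,\mathscr A)\le\eta$. By Lusin's theorem applied to the finite measure $\mu_t|_L$, there is $g\in C_b(\mathbb R^{2d})$ with $\|g\|_\infty\le\|f\|_\infty$ and $\mu_t(\{f\ne g\}\cap L)\le\delta$. Then
\[
\sup_{\mathsf x\in K}|Q_tf(\mathsf x)-Q_tg(\mathsf x)|\le 2\|f\|_\infty\sup_{\mathsf x\in K}\big[Q_t(\mathsf x,\{f\neq g\}\cap L)+Q_t(\mathsf x,L^c)\big]\le 4\|f\|_\infty\eta .
\]
Thus $Q_tf$ is a uniform limit on $K$ of the continuous functions $Q_tg$, and so it is continuous on $K$. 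Since $K$ was arbitrary, $Q_tf\in C_b(\mathbb R^{2d})$.

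I expect the main delicate point to be the forward direction. Pointwise absolute continuity of each $Q_t(\mathsf x,\cdot)$ with respect to $\mu_t$ is easy, but the local uniformity is not. The device that handles it is to pass to the decreasing unions $\mathscr B_m$, so that Dini's theorem applies to the continuous functions $Q_t\mathbf 1_{\mathscr B_m}$. In the converse, the only subtlety is obtaining uniform tightness on $K$ from the mere Feller property; this is why weak continuity and Prokhorov are invoked.
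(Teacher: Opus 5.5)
Your argument is correct in both directions. The paper does not prove this lemma. It imports it from Schilling--Wang, with a pointer to Wu's Lemma~3.2, and uses only the converse implication: in the proof of Theorem~\ref{th.FF} under bounded drift, with $\mu_t$ the Lebesgue measure and the bound $P_t\mathbf 1_{\mathscr A}\le C|\mathscr A|^{1/p}$, which is even uniform in $\mathsf x\in\mathbb R^{2d}$. So what you give is a self-contained replacement for a citation rather than a variant of an in-paper proof.

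Your forward half is essentially the Dini mechanism the paper invokes in Section~\ref{sec.ress} to verify \textbf{(A1)} for strongly Feller kernels. Since $Q_t(\cdot,\mathscr B)\equiv 0$, one has $Q_t(\cdot,\mathscr B_m)=Q_t(\cdot,\mathscr B_m\setminus\mathscr B)$ with $\mathscr B_m\setminus\mathscr B\searrow\varnothing$. Your contradiction step is therefore exactly $\beta_\tau(\mathbf 1_K Q_t)=0$. The dense-mixture choice of $\mu_t$ is what turns $\mu_t$-null sets into sets that are null for every $Q_t(\mathsf x,\cdot)$.

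In the converse, the same device can replace the appeal to Prokhorov for sub-probability measures. Take the continuous cutoffs $g_R(\mathsf y)=\min\{1,(|\mathsf y|-R+1)^+\}$. The functions $Q_tg_R$ are continuous by the Feller property and decrease pointwise to $0$ as $R\to\infty$. Dini then gives $\sup_{\mathsf x\in K}Q_t(\mathsf x,\{|\mathsf y|\ge R\})\to 0$ directly.
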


 \begin{proof}(Theorem~\ref{th.FF} under {\rm\textbf{[{\footnotesize H$^{\alpha\in (1,2)}_{\text{bounded}}$}]}}).   
 Let us  prove that 
\begin{equation}\label{eq.LS}
\lim_{\delta\to 0} \ \sup _{\mathscr A\in  B(\mathbb R^{2d}), \, |\mathscr A|\le \delta} \ \sup_{\mathsf x\in K} P_t\mathbf 1_{\mathscr A}(\mathsf x)=0,
\end{equation}
 where  $|\mathscr A|=\int_{\mathscr A}d\mathsf x'$ is the Lebesgue measure of $\mathscr A$. 
 Let $\delta >0$ and $\mathscr A$ be a measurable set such that $|\mathscr A|\le \delta $.  Note that $\mathbf 1_{\mathscr A}\in  L^{p}( \mathbb R^{2d})$, for every $p\ge 1$.  As a result, using~\eqref{eq.Goal}, we get for every $t>0$ and $\mathsf x\in \mathbb R^{2d}$:
 $$P_t\mathbf 1_{\mathscr A}(\mathsf x) \le C \delta^{\frac1{p}},$$
 for some $C>0$ independent of $\mathsf x\in \mathbb R^{2d}$ and $ \mathscr A\in  B(\mathbb R^{2d})$ with $|\mathscr A|\le \delta $. 
 This implies~\eqref{eq.LS}.  In order to use Lemma \ref{pr.Schilling}, it remains to show that  for $t\ge 0$, $P_t$ is Feller. This a direct consequence of the second assertion in Theorem~\ref{th.ex}. The proof of Theorem~\ref{th.FF} when {\rm\textbf{[{\footnotesize H$^{\alpha\in (1,2)}_{\text{bounded}}$}]}} holds is complete. 
%
%
%
 \end{proof}

\subsubsection{Proof of Theorem~\ref{th.FF} under {\rm\textbf{[{\footnotesize H$^{\alpha\in (1,2)}_{\text{LG}}$}]}}}
\label{sec.SFH2}
We consider the process~\eqref{eq.Lan} driven by a RI$\alpha$S process under   {\rm\textbf{[{\footnotesize B$_{\text{LG}}$}]}} and with $\alpha\in (1,2)$. The goal is to prove   that the mapping 
$$\mathsf x\in \mathbb R^{2d}\mapsto \mathbb E_{\mathsf x}[f(X_t)]$$
is continuous for every $t>0$ and $f\in bB(\mathbb R^{2d})$. 
One way to establish this is by adapting the arguments from the previous section together with \eqref{eq.Dua-bis}. We present below an alternative approach that is more flexible.
  Recall the definition of $(X_t^R,t\ge 0)$ in \eqref{eq.exXtr}.   
Recall that  from~\eqref{eq.=law}, it holds for $|\mathsf y|<R$, $ \mathbb E_{\mathsf y}[f(X_t)\mathbf 1_{t<{\theta}_R}]=  \mathbb E_{\mathsf y}[f(X^R_t)\mathbf 1_{t<\tau_R}]$. 
Let $\mathsf x_n\to \mathsf x\in \mathbb R^{2d}$. Consider  $R_0>0$  such that  $|\mathsf x_n|+|\mathsf x|\le R_0$ for all $n\ge 1$. Let $R>R_0$. We have:
\begin{align*}
\mathbb E_{\mathsf x}[f(X_t)]-\mathbb E_{\mathsf x_n}[f(X_t)]&= \mathbb E_{\mathsf x}[f(X^R_t)\mathbf 1_{ t<\tau_R} ]-\mathbb E_{\mathsf x_n}[f(X^R_t)\mathbf 1_{ t<\tau_R} ] + \Vert f\Vert_\infty \, o_1(R)\\
&= \mathbb E_{\mathsf x}[f(X^R_t) ]-\mathbb E_{\mathsf x_n}[f(X^R_t)  ] + 2\Vert f\Vert_\infty \, o_1(R),
\end{align*}
where $o_1(R)$ is uniform in $n\ge 0$ (by~\eqref{eq.tauR} and~\eqref{eq.tauR2}). 
  By the strong Feller property proved just before for bounded and measurable drifts, and since  $\mathbf B_R$ is measurable and bounded, for all $R>0$, $\mathsf x\in \mathbb R^{2d}\mapsto \mathbb E_{\mathsf x}[f(X_t^R)]$ is continuous for every $f\in bB(\mathbb R^{2d})$. Hence, we have that  $|\mathbb E_{\mathsf x}[f(X_t)]-\mathbb E_{\mathsf x_n}[f(X_t)]|\to 0$ as $n\to +\infty$, the desired result.
This completes the proof of Theorem~\ref{th.FF} when \textbf{[{\footnotesize H$^{\alpha\in (1,2)}_{\text{LG}}$}]} holds.

\subsection{Strong Feller property of    killed semigroups in  a rather  general framework}
In this section we investigate the strong Feller property of the killed semigroup $P_t^{\mathscr D}$, where we recall that   $\mathscr D=\mathscr O\times \mathbb R^d$, where $\mathscr O$ is an open subset of $\mathbb R^d$, see~\eqref{eq.DD}.  
 \begin{theorem}\label{th.SFD}
Let $\mathscr D=\mathscr O\times \mathbb R^d$, where $\mathscr O$ is non empty open subset of $\mathbb R^d$. 
 Consider the process~\eqref{eq.Lan} driven by a pure jump  Lévy process $(L_t,t\ge 0)$. Assume in addition that either {\rm\textbf{[{\footnotesize B$_{\text{smooth}}$}]}} holds, or that the process $(L_t,t\ge0)$ satisfies {\rm\textbf{[{\footnotesize W$_{\text{LG}}$}]}}.     Assume also  that $(P_t,t>0)$ is strongly Feller over $\mathbb R^{2d}$.    
  Then,   $(P_t^{\mathscr D},t> 0)$ is strongly Feller. 
 \end{theorem}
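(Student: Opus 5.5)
The plan is to obtain the strong Feller property of $P_t^{\mathscr D}$ from that of $P_t$ by a short-time splitting at a small time $s\in(0,t)$. Write, by the Markov property (available since we are in the weak well-posedness framework, or the strong one under {\rm\textbf{[{\footnotesize B$_{\text{smooth}}$}]}}), for $f\in bB(\mathscr D)$ extended by $0$ outside $\mathscr D$ and $\mathsf x\in\mathscr D$,
\begin{equation*}
P_t^{\mathscr D}f(\mathsf x)=\mathbb E_{\mathsf x}\big[\mathbf 1_{s<\sigma_{\mathscr D}}P^{\mathscr D}_{t-s}f(X_s)\big]=P_s\big(P^{\mathscr D}_{t-s}f\big)(\mathsf x)-\mathbb E_{\mathsf x}\big[\mathbf 1_{\sigma_{\mathscr D}\le s}P^{\mathscr D}_{t-s}f(X_s)\big].
\end{equation*}
The first term is continuous in $\mathsf x$ by the strong Feller property of $P_s$ applied to the bounded Borel function $P^{\mathscr D}_{t-s}f$. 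The second term is bounded in absolute value by $\Vert f\Vert_\infty\,\mathbb P_{\mathsf x}[\sigma_{\mathscr D}\le s]$. Hence $P_t^{\mathscr D}f$ is a locally uniform limit of continuous functions on $\mathscr D$ as soon as
\begin{equation*}
\lim_{s\to 0^+}\ \sup_{\mathsf x\in K}\mathbb P_{\mathsf x}[\sigma_{\mathscr D}\le s]=0,\qquad \forall K\Subset\mathscr D,
\end{equation*}
which will give continuity of $P_t^{\mathscr D}f$ on $\mathscr D$, i.e. the claim.

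To prove this uniform estimate, fix $K\Subset\mathscr D$ and choose $r>0$ such that $\mathscr B_{\mathbb R^d}(x,2r)\subset\mathscr O$ for every $(x,v)\in K$, and $R_K$ with $|v|\le R_K$ on $K$. Since $\sigma_{\mathscr D}$ only involves the continuous position component, $\{\sigma_{\mathscr D}\le s\}\subset\{\sup_{u\le s}|x_u-x|\ge r\}\subset\{s\sup_{u\le s}|v_u|\ge r\}$. I then localize as in Section~\ref{sec.SMP}: up to the exit time of the ball of radius $\rho:=\sup_K|\mathsf x|+r+1$ in position the drift can be replaced by $\mathbf B$ truncated in $x$, which has linear growth in $v$ only (under {\rm\textbf{[{\footnotesize B$_{\text{smooth}}$}]}} or \textbf{[{\footnotesize B$_{\text{LG}}$}]}), and the laws agree by uniqueness of the stopped martingale problem (\cite[Theorem 6.1 in Chapter 4]{EK}). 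Before that exit time, Grönwall's inequality gives, exactly as in \eqref{eq.Gr}, $\sup_{u\le s}|v_u|\le[R_K+Cs+S_s]e^{Cs}$ with $S_s=\sup_{u\le s}|L_u|$, uniformly in $\mathsf x\in K$. Therefore
\begin{equation*}
\sup_{\mathsf x\in K}\mathbb P_{\mathsf x}[\sigma_{\mathscr D}\le s]\le \mathbb P\big[s\,(R_K+Cs+S_s)e^{Cs}\ge r\big]\xrightarrow[s\to0]{}0,
\end{equation*}
since $S_s\to0$ a.s. by right continuity of $L$ at $0$; note the position must first leave $\mathscr B(x,r)\subset \mathscr O$, which happens before leaving the localization ball, so the localization is consistent.

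The main obstacle is the bookkeeping of the first step rather than analysis: one must make sure $P^{\mathscr D}_{t-s}f$, viewed as a function on $\mathbb R^{2d}$ vanishing off $\mathscr D$, is Borel (it is, by measurability of $\mathsf x\mapsto\mathbb P_{\mathsf x}$ in the well-posed Markov framework) so that the strong Feller property of $P_s$ on $bB(\mathbb R^{2d})$ applies, and that the Markov property at the deterministic time $s$ holds on $\{s<\sigma_{\mathscr D}\}$. With these in hand, the uniform smallness of early exits on compacts is the only genuinely kinetic input, and it follows from Grönwall as above; letting $s\to0$ concludes that $P_t^{\mathscr D}(bB(\mathscr D))\subset C_b(\mathscr D)$ for every $t>0$.
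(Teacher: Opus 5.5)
Your proof is correct and follows the paper's route. The paper also reduces, via the splitting $P_t^{\mathscr D}f=P_s(P^{\mathscr D}_{t-s}f)-\mathbb E_{\mathsf x}[\mathbf 1_{\sigma_{\mathscr D}\le s}P^{\mathscr D}_{t-s}f(X_s)]$ of \cite[Theorem 2.2]{chung2001brownian} (which you write out explicitly), to the uniform short-time exit estimate \eqref{eq.Lpop}, and proves that estimate (its Proof 1) with the same Gr\"onwall bound on the velocity. Your stopped-martingale-problem localization is superfluous: $\mathbf B$ has at most linear growth in both cases, so the pathwise bound \eqref{eq.Gg0} applies directly to the original process. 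Moreover, under {\rm\textbf{[{\footnotesize B$_{\text{smooth}}$}]}} alone, well-posedness for an indicator-truncated drift is not covered by the hypotheses, though a pathwise Gr\"onwall argument up to the exit from $\mathscr B_{\mathbb R^d}(x,r)$ repairs this at once.
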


  \noindent
   \textbf{Note}.   
   The  assumptions of Theorem~\ref{th.SFD} are satisfied if either   
 {\rm\textbf{[{\footnotesize H$^{\alpha\in (0,2)}_{\text{smooth}}$}]}}  or {\rm\textbf{[{\footnotesize H$^{\alpha\in (1,2)}_{\text{LG}}$}]}}   holds, see  indeed Theorems \ref{th.ex2} and~\ref{th.FF}. In these two settings, according to Theorem~\ref{th.SFD}, $(P_t^{\mathscr D},t> 0)$ is thus strongly Feller. 

 \begin{proof}  Since $(P_t,t>0)$ is strongly Feller over $\mathbb R^{2d}$, the proof follows exactly the same lines as the one made to prove     \cite[Theorem 2.2]{chung2001brownian} (see also~\cite[Example 5.2]{Girsanov})  if the following condition is satisfied: 
 for all  $K \Subset \mathbb R^{2d}$ and $\delta>0$,  
\begin{equation}\label{eq.Lpop}
\lim_{s\to 0^+} \sup_{\mathsf x=(x,v)\in K}\mathbb P_{\mathsf x}[  \sigma_{\mathscr  B_{\mathbb R^d}( x,\delta)}\le s]=0,
\end{equation}
 where $ \sigma_{\mathscr  B_{\mathbb R^d}( x,\delta)}:=\inf\{ t\ge 0: x_t\notin \mathscr  B_{\mathbb R^d}( x,\delta)\}$ is the first exit time from $\mathscr  B_{\mathbb R^d}( x,\delta)\times \mathbb R^d$ for the process~\eqref{eq.Lan}. 
 
 We now give two different proofs of \eqref{eq.Lpop}.
 \medskip

\noindent
 \textbf{Proof 1}. Elementary proof of \eqref{eq.Lpop} using that $\mathbf B$ has at most linear growth.  For $\mathsf x=(x,v)\in \mathbb R^{2d}$, 
 $$ \mathbb P_{\mathsf x}[  \sigma_{\mathscr  B_{\mathbb R^d}( x,\delta)}\le s]\le \mathbb P_{\mathsf x}\Big [   \sup_{u\in [0,s]} |x_u-x|\ge \delta\Big ].$$
   Let $r_K>0$ is such that   $K\subset \mathscr{B}_{\mathbb{R}^{2d}}(0, r_K)$. By \eqref{eq.Gg0}, for any $u\in  [0,s]$ and $\mathsf x\in K$, it holds that:
 $$|x_u(\mathsf x)-x| \le \Big| \int_0^u v_t(\mathsf x) dt\Big|\le s \Big[ r_K + Cs +S_s^\alpha\Big]e^{Cs}.$$
 Hence  
 $$\mathbb P_{\mathsf x}[  \sigma_{\mathscr  B_{\mathbb R^d}( x,\delta)}\le s]\le \mathbb P_{\mathsf x}\Big[\delta \le s [ r_K + Cs +S_s^\alpha]e^{Cs}\Big ]\to 0.$$
 The r.h.s. of the previous inequality does not depend on $\mathsf x\in K$ and goes to $0$ as $s\to 0^+$. 
 This achieves the proof of \eqref{eq.Lpop}.

 \medskip

\noindent
\textbf{Proof 2}. Proof of 
\eqref{eq.Lpop} using  the strong Markov property  of the weak solutions  \eqref{eq.Lan} (this property   follows from  Proposition \ref{pr.Pre} and e.g.~\cite[Item (c) in Theorem 4.2 in Chapter 4]{EK} or~\cite[Théorème II$_{12}$]{Lepeltier}). 
 Recall that the position process $(x_t,t\ge 0)$ is continuous. We have for every $\mathsf{x}=(x,v) \in \mathbb{R}^{2d}$, 
\begin{align}
\nonumber
&\mathbb{P}_{\mathsf{x}}[\sigma_{\mathscr{B}_{\mathbb{R}^d}(\mathsf{x}, \delta)} \le s] \\
\nonumber
&\le \mathbb{P}_{\mathsf{x}}\Big [\sigma_{\mathscr{B}_{\mathbb{R}^d}(\mathsf{x}, \delta)} < s, x_s \in \mathscr{B}_{\mathbb{R}^d}(x, \delta/2) \Big ]  + \mathbb{P}_{\mathsf{x}}\big [x_s \in \mathscr{B}_{\mathbb{R}^d}(x, \delta/2)^c\big ] \\
\nonumber
&\le \mathbb{P}_{\mathsf{x}}\Big [\sigma_{\mathscr{B}_{\mathbb{R}^d}(x, \delta)} < s, |x_s -x_{\sigma_{\mathscr{B}_{\mathbb{R}^d}(x, \delta)}} |\le \delta/2 \Big ]  + \mathbb{P}_{\mathsf{x}}\big [x_s \in \mathscr{B}_{\mathbb{R}^d}(x, \delta/2)^c\big ] \\
\label{eq.E+}
&\le \mathbb{E}_{\mathsf{x}} \left[  \mathbf 1_{\sigma_{\mathscr{B}_{\mathbb{R}^d}(x, \delta)} < s}\,  \mathbb{P}_{X_{\sigma_{\mathscr{B}_{\mathbb{R}^d}(x, \delta)}}} \big [|x_{s - \sigma_{\mathscr{B}_{\mathbb{R}^d}(x, \delta)}} - x_0| \ge \delta/2\big ]   \right] + \mathbb{P}_{\mathsf{x}}\big [|x_s - x| \ge \delta/2\big ].  
\end{align}
Moreover, note that by \eqref{eq.Rrp} (where we emphasize  that  the derivation   \eqref{eq.Rrp} is still valid under {\rm\textbf{[{\footnotesize W$_{\text{LG}}$}]}}), it holds for any $K \Subset \mathbb R^{2d}$, 
\begin{equation}
\label{eq.Gnj}
\lim_{s\to 0}  \ \sup_{\mathsf x\in K} \ \mathbb{P}_{\mathsf{x}}\big [|x_s - x| \ge \delta/2\big ]=0.
\end{equation}
It thus only remains to deal with the term appearing in the l.h.s. of  \eqref{eq.E+}.  
We have 
$$X_{\sigma_{\mathscr{B}_{\mathbb{R}^d}(x, \delta)}}= (x_{\sigma_{\mathscr{B}_{\mathbb{R}^d}(x, \delta)}},v_{\sigma_{\mathscr{B}_{\mathbb{R}^d}(x, \delta)}}) \text{ and } |x_{\sigma_{\mathscr{B}_{\mathbb{R}^d}(x, \delta)}}(\mathsf x)-x|= \delta.$$
Unfortunately, we have no such  direct information on the location of the velocity at time     $ \sigma_{\mathscr{B}_{\mathbb{R}^d}(x, \delta)}$. To get some, we proceed as follows.  Let  $\mathsf x=(x,v)\in K \Subset \mathbb R^{2d}$ and $r_K>0$ such that $K\subset \mathscr{B}_{\mathbb{R}^{2d}}(0, r_K)$. From now on, $s\in [0,1]$. By \eqref{eq.Gg0} (which clearly also holds for $X$), we have when $\sigma_{\mathscr{B}_{\mathbb{R}^d}(x, \delta)} < s$, 
 $$|X_{\sigma_{\mathscr{B}_{\mathbb{R}^d}(x, \delta)}}(\mathsf x)|\le e^{C}[ r_K+ C + S_1^\alpha].$$ 
For $\mathsf x\in K$, we  then write for a parameter $\kappa>0$:
 \begin{align*}
\forall \mathsf x\in K, \ \ \   &\mathbb{E}_{\mathsf{x}} \left[  \mathbf 1_{\sigma_{\mathscr{B}_{\mathbb{R}^d}(x, \delta)} < s}\,  \mathbb{P}_{X_{\sigma_{\mathscr{B}_{\mathbb{R}^d}(x, \delta)}}} \big [|x_{s - \sigma_{\mathscr{B}_{\mathbb{R}^d}(x, \delta)}} - x_0| \ge \delta/2\big ]   \right]   \\
  &\le \mathbb P[ S_1^\alpha\ge \kappa]+   \sup_{u\in [0,s] } \  \sup_{|\mathsf y|  \le  e^C[r_K+ C+ \kappa]} \ \mathbb{P}_{\mathsf y} \big [|x_u - x_0| \ge \delta/2\big ]\\
  &\le \mathbb P[ S_1^\alpha\ge \kappa]+   s  M_{e^C[r_K+ C+ \kappa]}^{\delta/2}, 
\end{align*}
where we used \eqref{eq.Rrp} to deduce the last inequality.  Now let $\eta>0$ and choose $\kappa(\eta)>0$ large enough such that $\mathbb P[ S_1^\alpha\ge \kappa(\eta)]\le \eta$. Then for every 
$$0\le s\le  \frac{\eta}{1+M_{e^C[r_K+ C+ \kappa(\eta)]}^{\delta/2}},$$
we have for all $\mathsf x\in K$:
$$\mathbb{E}_{\mathsf{x}} \left[  \mathbf 1_{\sigma_{\mathscr{B}_{\mathbb{R}^d}(x, \delta)} < s}\,  \mathbb{P}_{X_{\sigma_{\mathscr{B}_{\mathbb{R}^d}(x, \delta)}}} \big [|x_{s - \sigma_{\mathscr{B}_{\mathbb{R}^d}(x, \delta)}} - x_0| \ge \delta/2\big ]   \right]\le 2 \eta.$$
This end the proof  of \eqref{eq.Lpop} when   {\rm\textbf{[{\footnotesize W$_{\text{LG}}$}]}} holds.  
\end{proof}
\begin{remark}
Let us mention that while we still used in the second proof of \eqref{eq.Lpop}   that   $\mathbf{B}$ has at most linear drift, probably one advantage of this method (and more precisely of the decomposition \eqref{eq.E+}) is its extensibility to solutions   to  SDEs with  non-locally bounded drifts, a case we would like to study in future works for \eqref{eq.Lan}.
\end{remark}


\subsection{On the spectral radius of strongly Feller semigroups} 
We now establish criteria for the positivity of the spectral radius within the general class of non-negative sub-Markovian kernels. These results serve as a prerequisite for analyzing the positivity of the spectral radius of the killed semigroups $P_t^{\mathscr D}$ (see the note just below). The following theorem formalizes this general approach. 
 
  \begin{proposition}\label{pr.Rsp}
  Let $Q_t(\mathsf x,d\mathsf y)$ be a non-negative sub-Markovian kernel over a nonempty  open subset  $\mathscr M$ of $\mathbb R^{2d}$ which is strongly Feller and topologically irreducible  over $\mathscr M$.   
Then, for any $t>0$,  $\mathsf r_{sp}(Q_t|_{bB (\mathscr M)})>0$. 
 \end{proposition}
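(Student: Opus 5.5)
Since $Q_t$ is a non-negative operator, $\mathsf r_{sp}(Q_t|_{bB(\mathscr M)})=\lim_n\|Q_t^n\mathbf 1\|_\infty^{1/n}$. The plan is to find a compact set $K$ with nonempty interior and a constant $c>0$ such that $Q_t(\mathsf x,K)\ge c$ for every $\mathsf x\in K$. Once this is done, induction gives $Q_t^n\mathbf 1_K\ge c^n\mathbf 1_K$. Indeed,
$$Q_t^{n+1}\mathbf 1_K(\mathsf x)\ge\int_K Q_t(\mathsf x,d\mathsf y)\,Q_t^n\mathbf 1_K(\mathsf y)\ge c^n Q_t(\mathsf x,K)\ge c^{n+1}$$
for $\mathsf x\in K$. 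Hence $\|Q_t^n\|\ge c^n$ and $\mathsf r_{sp}\ge c>0$.

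To construct $K$, fix $t>0$, a point $\mathsf x_0\in\mathscr M$, and a closed ball $K=\bar{\mathscr B}_{\mathbb R^{2d}}(\mathsf x_0,r)\Subset\mathscr M$. Let $\mathscr V$ be the open ball of radius $r$ with the same center, so $\mathscr V\subset K$. Topological irreducibility gives $Q_t(\mathsf x,\mathscr V)>0$ for every $\mathsf x\in\mathscr M$, in particular for every $\mathsf x\in K$. The strong Feller property says that $\mathsf x\mapsto Q_t\mathbf 1_{\mathscr V}(\mathsf x)$ is continuous on $\mathscr M$. A continuous positive function on the compact set $K$ attains a positive minimum $c$, so $Q_t(\mathsf x,K)\ge Q_t(\mathsf x,\mathscr V)\ge c>0$ on $K$.

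One small point needs care: the spectral radius formula in $bB(\mathscr M)$ with the sup norm uses $\|Q_t^n\|=\|Q_t^n\mathbf 1\|_\infty\ge\|Q_t^n\mathbf 1_K\|_\infty$, which holds by positivity. So the key step is the use of strong Feller together with irreducibility to obtain a uniform lower bound on a compact set. I do not expect a real obstacle. The only subtlety is that the strong Feller property is defined through $Q_t(bB(\mathscr M))\subset C_b(\mathscr M)$, which directly gives continuity of $Q_t\mathbf 1_{\mathscr V}$, so no lower semicontinuity argument is needed.
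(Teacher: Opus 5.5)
Your proof is correct, and it takes a more direct route than the paper's.

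The paper goes through Nummelin's theory. It introduces the measure $\phi=\int_0^\infty e^{-s}Q_s(\mathsf x_*,\cdot)\,ds$. Using strong Feller, irreducibility and the semigroup identity $Q_n=Q_uQ_{s_{\mathscr A}}$, it shows that $Q_1$ is $\phi$-irreducible. It then takes a small pair $(\mathbf s,\nu)$ with $Q_m(\mathsf x,\cdot)\ge\beta\,\mathbf s(\mathsf x)\,\nu(\cdot)$ and modifies it so that $\nu(\mathbf s)>0$. This gives $\mathsf r_{sp}(Q_m)\ge\beta\,\nu(\mathbf s)$, and finally $\mathsf r_{sp}(Q_1)=\mathsf r_{sp}(Q_m)^{1/m}>0$.

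You instead combine the two hypotheses at the single time $t$ on a compact ball. This gives the bound $\inf_{\mathsf x\in K}Q_t(\mathsf x,K)\ge c>0$, and hence $\|Q_t^n\|\ge c^n$ at once.

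What your route buys: it needs neither the minorization theorem nor the semigroup property, since it only uses properties of the single kernel $Q_t$. It also gives an explicit bound, $\mathsf r_{sp}(Q_t|_{bB(\mathscr M)})\ge\min_{K}Q_t\mathbf 1_{\mathscr V}$.

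What the paper's route buys: the small-pair argument proves positivity of the spectral radius for any $\psi$-irreducible kernel. The topological hypotheses are used only to obtain $\phi$-irreducibility. Your argument, by contrast, relies on them directly, through compactness and continuity of $Q_t\mathbf 1_{\mathscr V}$, to produce the uniform lower bound.
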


  \noindent
   \textbf{Note}. The killed semigroup $(P_t^{\mathscr D},t> 0)$ satisfies the assumptions of Proposition~\ref{pr.Rsp}  when 
 {\rm\textbf{[{\footnotesize H$^{\alpha\in (0,2)}_{\text{smooth}}$}]}} or  {\rm\textbf{[{\footnotesize H$^{\alpha\in (1,2)}_{\text{LG}}$}]}}, or  {\rm\textbf{[{\footnotesize H$^{\alpha\in (1,2)}_{\text{P-Grad}}$}]}}    holds, see indeed Theorem~\ref{th.SFD}, Theorem   \ref{th.C5-infty}  (and their notes) and the proof of Theorem~\ref{co.main}.   Thus, in these cases, $\mathsf r_{sp}(P_t^{\mathscr D}|_{bB (\mathscr D)})>0$. 

\begin{proof}
    Let   us consider  for some   arbitrary point  $\mathsf x_*\in \mathscr M$, the measure 
    $$\phi(\cdot):=\int_{0}^{+\infty} e^{-s} Q_s(\mathsf x_*,\cdot)\, ds.$$
   If $\phi(\mathscr A)>0$,  $\exists s_{\mathscr A}>0$ such that $Q_{  s_{\mathscr A}}(x_*,\mathscr A)>0$. Since 
  $Q_{ s_{\mathscr A}}$ is strongly Feller, there exists a  neighborhood $\mathscr V_*\neq \varnothing$ of $\mathsf x_*$ in $\mathscr M$ and  $c_*>0$ such that $Q_{ s_{\mathscr A}}(\mathsf y ,\mathscr A)\ge c_*$ for all $\mathsf y\in \mathscr V_*$. 
   We assume that $t=1$ (the case $t\neq 0$ is treated the same way).   
   Let $n>  s_{\mathscr A}$ and write $n =u+{s_{\mathscr A}}$ with $u>0$. Let $\mathsf x\in \mathscr M$. We have by assumption, 
   $$Q_n(\mathsf x,\mathscr A)= \int_{\mathscr M} Q_u(\mathsf x,d\mathsf y)Q_{ s_{\mathscr A}} (\mathsf y,\mathscr A)\ge c_* Q_u (\mathsf x, \mathscr V_*) >0$$
     In particular,   $Q_1$ is $\phi$-irreducible~\cite[Definition 2.2]{Nummelin}. 
 Let $\psi$ be    a maximal irreducibility measure for  $Q_1$. By~\cite[Theorem 2.1]{Nummelin}, there exist $m\ge 1$, $\beta>0$, $\mathbf s: \mathscr M\to \mathscr R_+$ measurable  with $\psi(\mathbf s)>0$ and a   measure $\nu\ge 0$ over $(\mathscr M,  B(\mathscr M))$ with  $\nu(\mathscr M)>0$ such that    
\begin{equation}\label{eq.Lpetite}
Q_m (\mathsf x, \mathscr V)\ge \beta \, \mathbf s(x)\,  \nu(\mathscr V),\ \ \forall \mathsf  x\in \mathscr M,\, \forall  \, \mathscr V \in  B(\mathscr M).
\end{equation}
Note that in particular $\nu(\mathscr M)<+\infty$ and $\mathbf s\in bB(\mathscr M)$. 
Note that we can assume    $\nu(\mathbf s)>0$. Indeed, $Q_n \mathbf s$ is small by  multiplying  both side of~\eqref{eq.Lpetite} by $Q_n\mathscr M$. Let $\mathscr A=\{\mathbf s>0\}$. Since $\psi(\mathscr A)>0$, it holds by the previous analysis, for all $\mathsf x\in \mathscr M$ and $n> s_{\mathscr A}$,  $Q_n(x,\mathscr A)>0$  and thus $Q_n \mathbf s(x)>0$. In particular,  
$$\nu(Q_n \mathbf s)>0.$$
We can thus  assume    $\nu(\mathbf s)>0$ up to considering $Q_n \mathbf s$ instead of $\mathbf s$. We now conclude the proof of the proposition, whose arguments are rather standard. We write them for the sake of completeness.  
Set   $J=Q_m$ where $m$ is as in~\eqref{eq.Lpetite}. 
 We  have by~\eqref{eq.Lpetite}, for every $f\in b B (\mathscr M)$, $f\ge 0$,  
 $$\nu (Jf)\ge \beta \nu(\mathbf s)   \nu(f). $$
 Hence, $\nu (J^qf)\ge \nu(f)\beta^q\nu (\mathbf s)^q $, $\forall q\ge 1$. Let $\mathsf x\in \mathscr O$ such that $\mathbf s(\mathsf x)>0$. Since  $\Vert \mathbf 1\Vert_{b B(\mathscr M)} \le 1$:
\begin{align*}
\Vert J^k\Vert_{\mathcal L(b  B(\mathscr M))}\ge (J^k \mathbf 1)(\mathsf x) \ge  \beta \mathbf s(x)  \nu( J^{k-1} \mathbf 1)  \ge \beta^k \mathbf s(x) \,    \nu(\mathbf 1)\nu (\mathbf s)^{k-1}, \ \forall k\ge 1.
\end{align*} 
Then, by Gelfand's formula, $\mathsf r_{sp}(J|_{bB(\mathscr M)}) \ge \beta \nu(\mathbf s)>0$. Since  $\mathsf r_{sp}(Q_1|_{b (\mathscr M)})^m=\mathsf r_{sp}(J |_{b (\mathscr M)}) $, one deduces that $\mathsf r_{sp}(Q_1|_{b (\mathscr M)})>0$.  The proof of the proposition is complete. 
\end{proof}

\section{Stationary and quasi-stationary distributions}
\label{sec.QSD-Sd}
In this section, we study the existence and uniqueness of   quasi-stationary and stationary distributions under \textbf{[{\footnotesize H$^{\alpha\in (1,2)}_{\text{LG}}$}]} and \textbf{[{\footnotesize{H$^{\alpha\in (1,2)}_{\text{P-Grad}}$}}]} respectively,  establishing exponential convergence to the quasi-stationary distribution for the conditioned process and  exponential ergodicity for the non-killed process.

\subsection{Quasi-stationary distributions under {\rm\textbf{[{\footnotesize H$^{\alpha\in (0,2)}_{\text{smooth}}$}]}} and \textbf{[{\footnotesize H$^{\alpha\in (1,2)}_{\text{LG}}$}]}} 
\label{sec.QSD}
We recall the   definition of quasi-stationary distribution.
\begin{definition}
\label{de.QSD}
 A quasi-stationary distribution     for  the process  $(X_t,t\ge 0)$ (see ~\eqref{eq.Lan}) in  $\mathscr   D$ is a probability measure  $\mu$ on $\mathscr    D$
such that
$$
\mu(A)=\mathbb P_{\mu}[X_t\in A| t<\sigma_{\mathscr D}],\ \forall t>0, A\in B(\mathscr D),
$$
  where $B(\mathscr D)$ is the Borel $\sigma$-algebra of ${\mathscr D}$.
\end{definition}

As an application of the previous  results, we deduce existence and uniqueness of a quasi-stationary distribution for   kinetic Langevin processes killed when exiting $\mathscr D=\mathscr O\times \mathbb R^d$.

\begin{theorem}\label{th.main}
 Assume  that either {\rm\textbf{[{\footnotesize H$^{\alpha\in (0,2)}_{\text{smooth}}$}]}} or {\rm\textbf{[{\footnotesize H$^{\alpha\in (1,2)}_{\text{LG}}$}]}}  holds. Consider a bounded subdomain  $\mathscr O$ of $\mathbb R^d$ such that $\mathbb R^{d}\setminus \bar{\mathscr O}$ is nonempty. Let $\mathscr D= \mathscr O\times \mathbb R^d$, see~\eqref{eq.DD}. Then, the process~\eqref{eq.Lan} admits a  unique quasi-stationary distribution $\mu$ in $\mathscr D$. Moreover, 
 \begin{enumerate}
 \item[$\bullet$]    There exists $ \lambda>0$ such that for all $t\ge0$, the spectral radius of $P_t^{\mathscr D}$ on $bB(\mathscr D)$ is given by 
$$\mathsf r(P_t^{\mathscr D}|_{bB(\mathscr D)})= e^{-\lambda t}.$$ 
 In addition, $\mu P_t^{\mathscr D}=e^{-\lambda t}\mu$, for all $t\ge 0$, and  $\mu(O)>0$ for all nonempty open subsets $O$ of $\mathscr D$. 
In addition, there is a unique bounded and continuous function $\varphi:\mathscr D\to \mathbb R$  such that  $\mu(\varphi)=1$ and
$$
      P_t^{\mathscr D} \varphi= e^{-\lambda t} \varphi \text{ on } \mathscr D, \forall t\ge0.
$$
 Moreover, $\varphi>0$ everywhere on $\mathscr  D$.

 \item[$\bullet$] There exist   $\delta>0$ and $C\ge 1$ such that for all $\nu \in \mathcal P(\mathscr D)$,
$$
 \big |\mathbb P_\nu[X_t\in \mathscr A| t<\sigma_{\mathscr D}]-\mu( \mathscr A)\big |\le   \frac{C e^{-\delta t}}{\nu(\varphi)}, \ \forall  \mathscr  A\in B(\mathscr D), t>0.
$$
  \item[$\bullet$]  $\mathbb P_{\mathsf x}[\sigma_{\mathscr D}<+\infty]=1$ for all ${\mathsf x}\in \mathscr D$.

  \item[$\bullet$]  The operator $P_t^{\mathscr D}: bB(\mathscr D)\to bB(\mathscr D)$ is compact for $t>0$. 
 
\end{enumerate}
  
  \end{theorem}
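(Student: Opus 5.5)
The plan is to verify, for $P_t^{\mathscr D}$, the hypotheses of the abstract quasi-stationarity criterion for killed semigroups from~\cite{guillinqsd}, which rests on the Gelfand–Nussbaum formula~\eqref{eq.Gelfand} of~\cite{Wu2004}. That criterion asks for four things. First, $P_t^{\mathscr D}$ is strongly Feller on $\mathscr D$ for $t>0$. Second, it is topologically irreducible on $\mathscr D$. Third, it has a spectral gap $\mathsf r_{ess}(P_t^{\mathscr D}|_{bB(\mathscr D)})<\mathsf r_{sp}(P_t^{\mathscr D}|_{bB(\mathscr D)})$. Fourth, the process exits $\mathscr D$ with positive probability. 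Under these conditions it delivers the unique QSD $\mu$, the eigen-pair $(\lambda,\varphi)$ with $\varphi$ bounded, continuous and positive, the full support of $\mu$, and the exponential convergence with rate $Ce^{-\delta t}/\nu(\varphi)$.

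Under either hypothesis, \textbf{[{\footnotesize H$^{\alpha\in (0,2)}_{\text{smooth}}$}]} or \textbf{[{\footnotesize H$^{\alpha\in (1,2)}_{\text{LG}}$}]}, the ingredients are already in hand.
\begin{enumerate}
\item[(i)] Strong Feller for $P_t$ is Theorem~\ref{th.FF}, and \textbf{[{\footnotesize W$_{\text{LG}}$}]}/\textbf{[{\footnotesize C$_{\text{LG}}$}]} hold by Theorem~\ref{th.ex2} in the LG case. Hence $P_t^{\mathscr D}$ is strongly Feller by Theorem~\ref{th.SFD}.
\item[(ii)] The RI$\alpha$S Lévy measure satisfies~\eqref{eq.supp}–\eqref{eq.supp2}, and $\mathscr O$ is a subdomain. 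So Theorem~\ref{th.C5-infty} gives topological irreducibility on $\mathscr D$, and, since $\mathbb R^d\setminus\bar{\mathscr O}\neq\varnothing$, also $\mathbb P_{\mathsf x}[\sigma_{\mathscr D}<\infty]>0$ for all $\mathsf x$.
\item[(iii)] Since $\mathscr O$ is bounded, Theorem~\ref{th.bw} gives $\mathsf r_{ess}(P_t^{\mathscr D})=0$ and compactness of $P_t^{\mathscr D}$, which is the last bullet. Proposition~\ref{pr.Rsp} gives $\mathsf r_{sp}>0$, so the spectral gap holds.
\end{enumerate}

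With the spectral gap in place, the semigroup property gives $\mathsf r_{sp}(P_t^{\mathscr D})=e^{-\lambda t}$ with $\lambda\in\mathbb R$. Because the essential radius is $0$, $e^{-\lambda t}$ is an isolated eigenvalue of finite multiplicity. Krein–Rutman/Perron–Frobenius arguments for strongly Feller irreducible kernels, as in~\cite{guillinqsd}, then give the following.
\begin{enumerate}
\item[-] A nonnegative eigenfunction $\varphi$ exists, and it is continuous by strong Feller.
\item[-] $\varphi$ is strictly positive on $\mathscr D$ by irreducibility combined with lower semicontinuity (Proposition~\ref{pr.Co}).
\item[-] A dual eigenmeasure $\mu$ exists and charges every open set.
\item[-] The spectrum has no other point on the circle $|z|=e^{-\lambda t}$, which yields the uniform exponential convergence.
\end{enumerate}

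It remains to show $\lambda>0$, and then that $\mathbb P_{\mathsf x}[\sigma_{\mathscr D}<\infty]=1$. For $\lambda>0$ I would use irreducibility and strong Feller to get a uniform lower bound $\inf_{\mathsf x\in K}\mathbb P_{\mathsf x}[\sigma_{\mathscr D}\le 1]>0$ on compacts $K$, taking $K$ to carry most of $\mu$'s mass. Alternatively, $\lambda=0$ would force $\mu P_t^{\mathscr D}(\mathscr D)=1$, contradicting positive exit probability from every point after integrating against $\mu$. Once $\lambda>0$ is known, $\mathbb P_{\mathsf x}[t<\sigma_{\mathscr D}]=P_t^{\mathscr D}\mathbf 1(\mathsf x)\le Ce^{-\lambda t}$ by the convergence estimate, which gives almost sure exit.

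The main obstacle I expect is checking that the abstract theorem of~\cite{guillinqsd} really applies without a Lyapunov function. That theorem's assumption (C3) is normally phrased through $\mathcal L W/W\to-\infty$. I would replace it by the direct conclusion $\beta_w(P_t^{\mathscr D})=0$ from the proof of Theorem~\ref{th.bw}. This quantity is exactly what the formula~\eqref{eq.Gelfand} consumes to produce $\mathsf r_{ess}<\mathsf r_{sp}$, so the rest of that paper's argument should go through unchanged.
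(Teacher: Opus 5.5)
Your proposal is correct and is essentially the paper's proof. Like you, the paper checks the hypotheses of the Krein--Rutman-type theorem \cite[Theorem 4.1]{guillinqsd}: strong Feller of $P_t^{\mathscr D}$ by Theorem~\ref{th.SFD}, irreducibility and positive exit probability by Theorem~\ref{th.C5-infty}, $\mathsf r_{sp}>0$ by Proposition~\ref{pr.Rsp}, and $\mathsf r_{ess}=0$ plus compactness by Theorem~\ref{th.bw}, with the spectral gap coming from the direct $\beta_w$ computation rather than a Lyapunov function, exactly as you anticipate. Your extra arguments for $\lambda>0$ and almost-sure exit are sound, whereas the paper takes these from \cite{guillinqsd}; the decay of $P_t^{\mathscr D}\mathbf 1$ is most cleanly read off from $\mathsf r_{sp}(P_1^{\mathscr D})=e^{-\lambda}<1$ via Gelfand's formula rather than from the conditional convergence estimate.
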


%

  \begin{proof}
 The proof  of  Theorem~\ref{th.main} relies  on the Krein–Rutman type theorem~\cite[Theorem 4.1]{guillinqsd} and the proof of Theorem 5.3 there. More precisely,  according to this theorem,  the assertion of Theorem~\ref{th.main} holds if   the following    properties  are satisfied:
 \begin{enumerate}
 \item[\textbf{C1}.]   $(P_t^{\mathscr D},t> 0)$  is  Feller for $t>0$. 
 \vspace{0.1cm}
  \item[\textbf{C2}.]   $(P_t^{\mathscr D},t>0)$ is topologically irreducible over $\mathscr D$.
     \vspace{0.1cm}
  \item[\textbf{C3}.] There is $\mathsf x\in \mathscr D$ such that  $\mathbb P_{\mathsf x}[\sigma_{\mathscr D}<+\infty]>0$.  
   \vspace{0.1cm}
   \item[\textbf{C4}.] The operator $P_t^{\mathscr D}$ has a  spectral gap  in $b B(\mathscr D)$, that is to say:
   $$   \mathsf r_{ess}(P_t^{\mathscr D}|_{b B(\mathscr D)})< \mathsf r_{sp}(P_t^{\mathscr D}|_{b B(\mathscr D)}),$$ 
 \end{enumerate}
 where $ \mathsf r_{ess}(P_t^{\mathscr D}|_{b B(\mathscr D)})$ is the  essential spectrum radius of $P_t^{\mathscr D}$   and $\mathsf r_{sp}(P_t^{\mathscr D}|_{b B(\mathscr D)})$ its spectral radius. 
 We now check this four conditions. 
 
 Let us first recall  that under {\rm\textbf{[{\footnotesize H$^{\alpha\in (1,2)}_{\text{LG}}$}]}}, the assumption \textbf{[{\footnotesize C$_{\text{LG}}$}]}  is  satisfied (see Theorem~\ref{th.ex2}). Recall also that when {\rm\textbf{[{\footnotesize H$^{\alpha\in (0,2)}_{\text{smooth}}$}]}} or {\rm\textbf{[{\footnotesize H$^{\alpha\in (1,2)}_{\text{LG}}$}]}} holds, 
   $(P_t,t> 0)$ is strongly Feller for $t>0$ (see Theorem~\ref{th.FF}). 
In addition, when either {\rm\textbf{[{\footnotesize H$^{\alpha\in (0,2)}_{\text{smooth}}$}]}} or {\rm\textbf{[{\footnotesize H$^{\alpha\in (1,2)}_{\text{LG}}$}]}}  holds, the  conditions   \textbf{C1}$\to$\textbf{C4} are satisfied. Indeed:
   \begin{enumerate}
   \item[-] By Theorem~\ref{th.SFD} (and its note), \textbf{C1} is satisfied. 
       \item[-] By  Theorem~\ref{th.C5-infty} (and its note), \textbf{C2} and \textbf{C3} are satisfied.
             \item[-] By Proposition~\ref{pr.Rsp} (and its note), for all $t>0$,  $\mathsf r_{sp}(P_t^{\mathscr D}|_{bB (\mathscr D)})>0$. Moreover, because $\mathscr O$ is bounded,  by Theorem~\ref{th.bw} (and the discussion below), $\mathsf r_{ess}(P_t^{\mathscr D}|_{bB (\mathscr D)})=0$, and   $P_t^{\mathscr D}: bB(\mathscr D)\to bB(\mathscr D)$ is compact, for any $t>0$. 
            \end{enumerate}     
 The proof of the theorem is complete. 
  \end{proof}

The literature on quasi-stationary distributions is very large and we refer to~\cite{collet2011quasi,meleard2012quasi} for  a  comprehensive account of the general theory of quasi-stationary distributions. 
Existence and uniqueness of  quasi-stationary distributions for kinetic processes driven by a Brownian motion have been obtained in~\cite{guillinqsd,ramilarxiv2,champagnat2024quasi} and in~\cite{guillinqsd2,guillinqsd3} for singular potentials including  Coulomb and Lennard-Jones potentials (see also~\cite{guillinFK}). 
We also refer in this direction to the works~\cite{BenaimChampagnatEtAl2022,benaim2025degenerate}. 
The quasi-stationarity of elliptic SDEs driven by rotationally invariant $\alpha$-stable processes has been investigated in \cite{guillin2024large}.  
 Let us also mention~\cite{champagnat2016exponential,champagnat2018criteria,bansaye2022non,del2023stability,villemonais2025quasi} for other general criteria to prove existence and uniqueness of a quasi-stationary distribution, were the strong Feller property is not (or partially) required (see also~\cite{ferrari-kesten-martinez-picco-95,gong1988killed,pinsky1985convergence,takeda1,zhang2014quasi,he2019some,champagnat2021lyapunov} and~\cite{kolb2012quasilimiting,steinsaltz-evans-07} in the one dimensional setting).


\subsection{Perturbed gradient fields: stationary and quasi-stationary distributions}
In this section, we study the existence and uniqueness of stationary and quasi-stationary distributions under \textbf{[{\footnotesize{H$^{\alpha\in (1,2)}_{\text{P-Grad}}$}}]}.

\subsubsection{Stationary   distributions and ergodicity of the non-killed process in the  perturbed gradient field setting}  
We recall that a classical ergodicity result due to Doob and Khasminskii (see~\cite[Theorem 1.1]{peszat1995strong}) states that if the semigroup $(P_t,t>0)$ is strongly Feller   and topologically irreducible, then the process $(X_t,t\ge 0)$ admits at most one invariant probability distribution $\mu$. Moreover, if such a distribution exists, then for any $\mathsf x\in \mathbb R^{2d}$ and every $\mathscr A \in B(\mathbb R^{2d})$, $
P_t(\mathsf x,\mathscr A) \to \mu(\mathscr A)
$ as $ t \to \infty$.   
Assume in addition that  there exists a  function $W: \mathbb{R}^{2d} \to [1, +\infty)$ in the extended domain $D_e(\mathcal L)$ of $\mathcal L$ (see~\cite{davis1993markov} for a definition of $D_e(\mathcal L)$) such that $\lim_{|\mathsf{x}| \to +\infty} W(\mathsf{x}) = +\infty$ and, for some constant $c > 0$, $K\Subset \mathbb R^{2d}$, and some $b\in \mathbb R$:
\begin{equation}\label{eq.Ha2}
\frac{\mathcal L  W}{W} \le -c  \mathbf 1_{K^c} + b \mathbf 1_K \quad \text{ quasi-everywhere on } \mathbb{R}^{2d},
 \end{equation}
 where $\mathcal{L}$ denotes the infinitesimal generator of \eqref{eq.Lan}.  Then by~\cite{down1995exponential} (see also~\cite[Theorem 2.4]{Wu2001}), 
 $(P_t,t\ge 0)$ is \textit{$W$-uniformly exponential ergodic}, i.e. 
 there is a unique invariant probability measure $\mu$ for the process $(X_t,t\ge 0)$ and   moreover:  $\mu(W)<+\infty$, and 
 there exist some $C>0$ and $\lambda >0$ such that for all $t\ge 0$,
$$
\sup_{|f|\le W}
\left|
P_t f(\mathsf x) - \mu(f)
\right|
\le C\, W(\mathsf x) \, e^{-\lambda t},
\qquad \forall \mathsf x \in \mathbb R^{2d}.
$$
In addition,   $P_t$ (for $t>0$) has a spectral gap near its largest eigenvalue $1$ in $bB_W(\mathbb R^{2d})= \{f:\mathbb R^{2d}\to \mathbb R, f \text{ is measurable and } f/W \text{ is bounded}\}$.

\begin{theorem}\label{th.sta}
 Assume     {\rm\textbf{[{\footnotesize{H$^{\alpha\in (1,2)}_{\text{P-Grad}}$}}]}}  holds.    
  Then,   
  \begin{enumerate}
  \item[a.] For any $\mu\in \mathcal P(\mathbb R^{2d})$, there is a unique weak solution to \eqref{eq.Lan} with initial law $\mu$. In addition, for all $\mathsf x \in \mathbb R^{2d}$ and $t>0$, $X_t(\mathsf x)$   has a density   w.r.t. the Lebesgue measure $d\mathsf x'$.
    \item[b.]   The mapping  $\mathsf x\in \mathbb R^{2d}  \to  \mathbb P_{\mathsf x}[X   \in \cdot ] \in \mathcal P(D([0,+\infty), \mathbb{R}^{2d}))$ is  weakly continuous (as is, for all $t\ge 0$,  the mapping $\mathsf x\in \mathbb R^{2d}  \to  \mathbb P_{\mathsf x}[X_t   \in \cdot ]  $ w.r.t. the weak convergence in $ \mathcal P(  \mathbb{R}^{2d})$). 
    \item[c.]  In addition, $(P_t,t>0)$ is strongly Feller and   topologically irreducible over $\mathbb R^{2d}$. 
    \item[d.] Finally, for every $p\in (0, \alpha)$,  $(P_t,t\ge 0)$ is $W_p$-uniformly exponential ergodic where $W_p$ is defined in \eqref{eq.cond-Wp} (see also the condition {\rm\textbf{[ab]}} below on the parameters $a,b>0$)  and where    for some $M>0$  (see \eqref{eq.cond-F}), 
  $$W_p(\mathsf x)\le M \big[ \mathbf U(x) +|v|^2\big]^{p/2}, \ \mathsf x=(x,v)\in \mathbb R^{2d}.$$
  \end{enumerate}
  If   in addition  $\mathbf U$ and $\Theta$  are $C^\infty$, then, for any $\mathsf x\in \mathbb R^{2d}$,  there is a unique pathwise solution to \eqref{eq.Lan} with $ X_0=\mathsf x$,  and  the assertions of Items b, c and d   remain valid  verbatim for every stability index $\alpha\in (0,1]$.
  \end{theorem}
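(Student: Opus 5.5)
The plan is to reduce everything to the locally bounded (truncated) case already treated in Theorems~\ref{th.ex}, \ref{th.ex2}, \ref{th.FF} and \ref{th.C5-infty}. The one new ingredient is a Lyapunov function $W_p$ that replaces the Grönwall bound \eqref{eq.Gg0}, which is no longer available because $\nabla\mathbf U$ may grow superlinearly. I would take
$$W_p(\mathsf x)=\big[\mathbf U(x)+\tfrac12|v|^2+a\,x\cdot v+b\big]^{p/2},$$
with $a>0$ small and $b>0$ large (condition \textbf{[ab]}), so that the bracket is comparable to $\mathbf U(x)+|v|^2\ge 1$. The first and central step is to compute $\mathcal L W_p$ for $p\in(0,\alpha)$; for $p\in(1,\alpha)$ one could equally work with $\mathcal L W_p$ directly, so the argument is uniform in $p$.

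\emph{Transport and drift part.} Write $H=\mathbf U+\frac12|v|^2+a x\cdot v+b$. The Hamiltonian terms cancel, leaving
$$(v\cdot\nabla_x+\mathbf B\cdot\nabla_v)H=\boldsymbol\Theta\cdot v+a|v|^2-a\nabla\mathbf U\cdot x+a\,\boldsymbol\Theta\cdot x .$$
By \eqref{eq.cond-U}, \eqref{eq.cond-gammaA} and either \eqref{eq.cond-gammaB1} or \eqref{eq.cond-gammaB2} (using Young's inequality together with $\ell_1,\ell_2<q$ and $\mathbf U\gtrsim|x|^q$), this is bounded by $-c H+C$ once $a$ is small compared with $\Gamma$ and $m_1$.

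\emph{Nonlocal part.} Since $W_p$ grows like $|v|^p$ in $v$ with $p<\alpha$, the fractional Laplacian $\Delta_v^{\alpha/2}W_p$ is finite. Splitting into $|z|\le 1$ (second-order Taylor expansion) and $|z|>1$ (using $|v+z|^p\lesssim|v|^p+|z|^p$) gives $|\mathcal S_v^{\nu_\alpha}W_p|\le C\,H^{p/2-1}(1+|v|)+C\le \epsilon W_p+C_\epsilon$.

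Combining the two parts yields $\mathcal LW_p\le -cW_p+C$, hence \eqref{eq.Ha2} with $K$ a large ball. Because $\boldsymbol\Theta$ is only measurable, I would apply Itô's formula to $W_p$, which is $C^1$ in $x$ and smooth in $v$, and prove the drift inequality along the process rather than pointwise.

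\emph{Item a.} Truncate with $\mathbf B_R=\mathbf B\mathbf 1_{|\mathsf y|\le R}$, as in Section~\ref{sec.SMP}. Theorem~\ref{th.ex} gives well-posedness of the stopped martingale problems, and the Lyapunov bound gives $\mathbb E_{\mathsf x}[W_p(X^R_{t\wedge\tau_R})]\le e^{Ct}W_p(\mathsf x)$, so $\mathbb P_{\mathsf x}[\tau_R\le t]\le e^{Ct}W_p(\mathsf x)/\inf_{|\mathsf y|\ge R}W_p\to0$ uniformly for $\mathsf x$ in compacts. This replaces Proposition~\ref{pr.rt}, and \cite[Theorem 6.3 in Chapter 4]{EK} then yields existence and uniqueness. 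The density is obtained by localization: on $\{t<\theta_R\}$ the process coincides in law with $X^R$ (cf. \eqref{eq.=law}), whose law has a density by Theorem~\ref{th.uni}; letting $R\to\infty$ shows that $P_t(\mathsf x,\cdot)$ charges no Lebesgue-null set.

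\emph{Items b and c.} Repeat Lemma~\ref{le.tight}, Proposition~\ref{pr.C} and Section~\ref{sec.SFH2} verbatim, with the uniform-in-compacts bound on $\mathbb P[\tau_R\le t]$ above in place of $\Psi_r(R)$. Tightness follows because on $\{T<\theta_R\}$ the drift is bounded by a constant depending only on $R$. Topological irreducibility comes from Theorem~\ref{th.C5-infty} with $\mathscr O=\mathbb R^d$. Its proof only uses the localization to a bounded $\mathscr O^*$ together with \eqref{eq.Hgp}, and that bound holds for $\mathbf 1_{x\in\bar{\mathscr O}^*}\mathbf B$ thanks to the second line of \eqref{eq.cond-gammaB1}, or thanks to \eqref{eq.cond-gammaB2}, since $\nabla\mathbf U$ is continuous and hence bounded on compacts. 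It also uses \textbf{[C$_{\text{LG}}$]}, which is replaced here by item b and well-posedness of the localized problems.

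\emph{Item d.} Combine the Doob--Khasminskii result with \eqref{eq.Ha2} and \cite{down1995exponential}, as recalled above.

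\emph{Smooth case, $\alpha\in(0,1]$.} With $\mathbf U,\boldsymbol\Theta\in C^\infty$ the coefficients are locally Lipschitz, so there is a unique pathwise solution up to explosion, and non-explosion follows from $W_p$ with $p<\alpha$. The nonlocal estimate above still holds for $p<\alpha\le1$, where one uses the concavity bound $|v+z|^p\le|v|^p+|z|^p$. The strong Feller property for each localized smooth drift comes from \cite{dong2016strong}, after which the same localization argument applies.

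The main obstacle is the Lyapunov computation: producing the cross term $a\,x\cdot v$ and controlling $a\,\boldsymbol\Theta\cdot x$ through \eqref{eq.cond-gammaB1}/\eqref{eq.cond-gammaB2}, while keeping the nonlocal term subordinate when $p<\alpha$ is close to $0$ and the power $p/2<1$ makes $W_p$ neither convex nor globally $C^2$ near its minimum.
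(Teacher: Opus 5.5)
Your outline follows the paper's proof. It uses the same Lyapunov function up to rescaling (the paper takes $W_p=F^{p/2}$ with $F=a[\mathbf U+\frac12|v|^2]+b\,x\cdot v+\mathrm{const}$), the same truncation and stopped-martingale localization for Item a, the same reuse of Lemma~\ref{le.tight}, Proposition~\ref{pr.C}, Section~\ref{sec.SFH2} and the localized proof of Proposition~\ref{pr.C5-infty} for Items b and c, and \cite{down1995exponential} for Item d. Two steps, however, do not follow as you wrote them.

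\textbf{The exit-time bound.} To use the stopped martingale problem, $\tau_R$ must be defined through left limits as well ($|X^R_t|\ge R$ or $|X^R_{t-}|\ge R$). So on $\{\tau_R\le t\}$ you do not know that $|X^R_{\tau_R}|\ge R$. Markov's inequality applied to $\mathbb E_{\mathsf x}[W_p(X^R_{t\wedge\tau_R})]$ therefore does not bound $\mathbb P_{\mathsf x}[\tau_R\le t]$, and the paper flags exactly this. Its fix has three parts:
\begin{itemize}
\item show that $e^{-Ct}W_p(X^R_{t\wedge\tau_R})$ is a nonnegative supermartingale, the compensated Poisson integral being a true martingale on the localized region;
\item observe that on $\{\tau_R\le t\}$ one has $\sup_{s\le t}W_p(X^R_{s\wedge\tau_R})\ge\max\{W_p(X^R_{\tau_R}),W_p(X^R_{\tau_R-})\}\ge\inf_{|\mathsf y|\ge R}W_p$;
\item apply Doob's maximal inequality.
\end{itemize}
The paper does this with sublevel sets of $W_p$ instead of balls. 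Alternatively, on the bad event $\tau_R$ is announced by the hitting times of the balls of radius $R-1/n$. Quasi-left-continuity of $L$ then shows that $X^R$ a.s.\ does not jump at $\tau_R$ there, so the bad event is null. Either way, an argument is needed.

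\textbf{The large-jump part of $\mathcal S_v^{\nu_\alpha}W_p$.} Subadditivity $|v+z|^p\lesssim|v|^p+|z|^p$ only yields $W_p(x,v+z)\le C\,W_p(x,v)+C|z|^p$. That gives the bound $C\,W_p+C$ for $\int_{|z|>1}[W_p(x,v+z)-W_p(x,v)]\,\nu_\alpha(dz)$, with a constant that need not be smaller than the dissipation rate produced by the drift. It does not give your $C\,H^{p/2-1}(1+|v|)+C$.

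What you need is a sublinear bound on the increment. The paper uses $|A^{p/2}-B^{p/2}|\le|A-B|^{p/2}$ together with the fact that $H(x,v+z)-H(x,v)=(v+ax)\cdot z+\frac12|z|^2$ is affine in $(x,v)$. This gives $|W_p(x,v+z)-W_p(x,v)|\le(1+|v+ax|^{p/2})|z|^p$ for $|z|>1$. The right-hand side is $\nu_\alpha$-integrable because $p<\alpha$, and the resulting term is $O(W_p^{1/2})$, hence $o(W_p)$. Alternatively, your subadditive bound can serve as the dominating function in a dominated-convergence argument, since $W_p(x,v+z)/W_p(x,v)\to1$ for fixed $z$ as $|\mathsf x|\to\infty$.

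With these two repairs, your argument coincides with the paper's.
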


  \noindent
  \textbf{Note}. The invariant probability distribution $\mu_p$  for the process $(X_t,t\ge 0)$  is independent of $p\in(0,\alpha\wedge 1)$.   In addition, the unique weak (or strong when $\mathbf U$ and $\Theta$  are $C^\infty$ are smooth) solutions to~\eqref{eq.Lan}  form a strong Markov process. This can obtained from the theory of  Markov semigroups associated with  càdlàg processes and  preserving $C_b(\mathbb R^{2d})$  (see~\cite[Theorem 6.17]{le2016brownian} and its proof) or via   the theory of martingale problems  in $D([0,+\infty),\mathbb R^{2d})$, see e.g.~\cite[Item (c) in Theorem 4.2 in Chapter 4]{EK} or~\cite[Théorème II$_{12}$]{Lepeltier}.

  \medskip
The study of exponential ergodicity for Lévy-driven SDEs has been developed through several important contributions. Zhang et al.~\cite{zhangAIHP2020} analyzed the exponential ergodicity of elliptic Lévy-driven SDEs with non-smooth drift using the Zvonkin transformation; see also~\cite{liang2021exponential}, where a coupling approach is employed. Kulik~\cite{kulik2009} established general criteria for SDEs with $C^1$ drifts, with particular emphasis on one-dimensional settings; see also~\cite{masuda2007ergodicity}. Additional results can be found in~\cite{sandric2016ergodicity}.
More recently, Bao et al.~\cite{bao2024exponential} studied $\alpha$-stable kinetic Langevin dynamics with singular interaction potentials $\mathbf{V}(x^1, \dots, x^N)$. Their framework includes Lennard–Jones and Coulomb interactions which, although singular, are assumed to be $C^\infty$ on the non-collision set ${x_i \neq x_j}$.

\begin{proof}  
We only consider the low-regularity framework  {\rm\textbf{[{\footnotesize{H$^{\alpha\in (1,2)}_{\text{P-Grad}}$}}]}} where  $\alpha \in (1, 2)$,  the potential $\mathbf{U}$ is assumed to be $C^1$, and the vector field $\Theta$ is merely measurable and locally bounded. The case where $\mathbf{U}$ and $\Theta$ are $C^\infty$ is  treated using the same arguments. In this $C^\infty$ setting, one considers strong solutions and the strong Feller property of the non-killed process is satisfied for the entire range $\alpha \in (0, 2)$. It is important to emphasize that the Lyapunov function $W_p$ constructed in the following step is specifically designed to satisfy the dissipativity condition \eqref{eq.Ha2} for every stability index $\alpha \in (0, 2)$.
\medskip

\noindent
\textbf{Step 1}.  In this step, we construct a Lyapunov function designed to satisfy the dissipativity condition \eqref{eq.Ha2}. For this step only, we allow the stability index $\alpha$ to span its full  range $(0, 2)$.     Recall $\mathbf U$ is $C^1$ and $\mathbf U\ge 1$. 
Let $a>0$. Consider also  $b>0$  to be chosen and  fixed later. Consider the following   function defined by,  for $\mathsf x=(x,v)\in \mathbb R^{2d}$, 
$$F_0(x, v) =   a \big[\mathbf U(x) +  \frac 12 |v|^2\big] + b\,  x \cdot v.$$
The function $\mathbf U(x) +  \frac 12 |v|^2$ is the Hamiltonian of the system. 
The coupling of position and velocity $x \cdot v$ is   a standard  way  in kinetic modeling to incorporate dissipative effects into the spatial evolution of the system. 
We have that 
$$F_0(x, v)\ge  a\mathbf U(x)\Big[ 1- \frac{b}{2a}\frac{|x|^2}{\mathbf U(x)} \Big]  \, +\, \frac 12 (a-b) |v|^2.$$   
By the second condition in \eqref{eq.cond-U}, we have that, for every $|x|$ sufficiently large  (recall $q\ge 2$),  
\begin{align*}
\frac{|x|^2}{\mathbf U(x)}  \le    \frac{2 }{m_2 |x|^{q-2}}, 
\end{align*}
which,  if $q>2$,  goes to $0$ when $|x|\to +\infty$,  and equal $2/m_2$ if $q=2$.
In the case $q=2$, we take $b>0$ such that $b<a$ and $\frac{b}{am_2}\le 1/2$. In conclusion, we take 
\begin{equation}\label{eq.cond-b}
0<b < a\wedge   \frac{m_2 a}2 \ \text{ if } q=2 \ \text{ and } \ 0<b<a \  \text{ if } q>2.
\end{equation} 
Then, with such constants $a,b>0$,    $F_0(\mathsf x)\to +\infty$ as $|\mathsf x|\to +\infty$  and in particular $F_0$ is lower bounded on $\mathbb R^{2d}$.
 In addition, for any  $a,b>0$  satisfying  \eqref{eq.cond-b}, there exists $m,c_m >0$ such that 
$$
F_0(\mathsf x)\ge m[\mathbf U(x) + |v|^2]-C_m, \ \ \forall \mathsf x\in \mathbb R^{2d}. 
$$
Then, set  for $\mathsf x=(x,v)\in \mathbb R^{2d}$:
\begin{equation}\label{eq.F2}
F (\mathsf x) = F_0(\mathsf x)+C_m+1\ge m[\mathbf U(x) + |v|^2]+1.
 \end{equation}
Note that  $F\in C^{1,\infty}(\mathbb R^{2d})$. Here $C^{1,\infty}(\mathbb R^{2d})$ is the space of $C^1$ functions $\psi: \mathbb R^{d}\times \mathbb R^d\to \mathbb R$ such that  for any multi-index $\beta \in \mathbb N^d$,  $(x,v)\mapsto \partial_v ^\beta\psi(x,v)$ exists and is continuous.
Note also that by the second condition in \eqref{eq.cond-U}, there exist   $M>0$ (which depends on $a,b>0$) such that for all $\mathsf x=(x,v)\in \mathbb R^{2d}$:
\begin{equation}\label{eq.cond-F}
  F(\mathsf x)\le M \big[   \mathbf U(x) +|v|^2\big].
\end{equation}   
Then, the Lyapunov candidate $W_p$ is defined by, for   $\mathsf x \in \mathbb R^{2d}$:
\begin{equation}\label{eq.cond-Wp}
W_p(\mathsf x) = F(\mathsf x)^{p/2} \text{ with } 0 < p <   \alpha.
\end{equation}  
The goal is to check that   $W_p$ satisfies \eqref{eq.Ha2}. 
Recall that 
$$\mathcal L^\alpha W_p=  \mathcal S_v^{\nu_\alpha} W_p + v\cdot \nabla_xW_p - \nabla \mathbf U\cdot \nabla_ vW_p+ \boldsymbol \Theta \cdot \nabla_ vW_p.
$$
We estimate both terms above, starting with the drift term. 
We have  for $\mathsf x=(x,v)\in \mathbb R^{2d}$: 
$$\nabla_x F(\mathsf x) = a \nabla \mathbf{U}(x) + b \,   v \text{ and } \nabla_v F(\mathsf x) = av + b \, x.$$
Set $\mathcal{L}_0= v \cdot \nabla_x + (-\nabla \mathbf{U}(x) + \boldsymbol \Theta( \mathsf x)) \cdot \nabla_v$. Then, for $\mathsf x=(x,v)\in \mathbb R^{2d}$, 
$$\mathcal{L}_0 W_p  (\mathsf x) =  \frac{p}{2} F^{\frac{p}{2}-1}(\mathsf x)  \, Q (\mathsf x),$$
where,
\begin{align*}
Q(\mathsf x) &= -b \nabla \mathbf{U}(x) \cdot x + a \boldsymbol \Theta( \mathsf x) \cdot v +  b |v|^2 + b \, \boldsymbol \Theta( \mathsf x) \cdot x.
\end{align*}
We now treat the term $Q$ depending on whether \eqref{eq.cond-gammaB1} or \eqref{eq.cond-gammaB2} holds. 
\medskip

\noindent
\underline{Case I}:  \eqref{eq.cond-gammaA} and \eqref{eq.cond-gammaB1} hold.
\medskip

\noindent 
By \eqref{eq.cond-gammaA} and \eqref{eq.cond-gammaB1}, we have that   $a \boldsymbol \Theta(\mathsf x) \cdot v \le -a \Gamma |v|^2 + a C^1_\Theta (1 + |x|^{\ell_1})$ and $b\, \boldsymbol \Theta(\mathsf x) \cdot x  \le   b C^2_\Theta   + b C^2_\Theta |v|^2 + b C^2_\Theta |x|^{\ell_2}$. 
In addition, by the first condition in \eqref{eq.cond-U}, $-b \nabla \mathbf{U}(x) \cdot x \le -b m_1 \mathbf{U}(x) + b C^1_{\mathbf{U}}$. We then have  for $\mathsf x=(x,v)\in \mathbb R^{2d}$,
\begin{align*}
Q(\mathsf x)  
&\le -b m_1 \mathbf{U}(x) + a C^1_\Theta|x|^{\ell_1}   + b C^2_\Theta |x|^{\ell_2}    \\
&\quad - (a \Gamma - b[1+ C^2_\Theta]) |v|^2  + b C^1_{\mathbf U} + a C^1_\Theta + b C^2_\Theta.
\end{align*} 
Recall that  $\ell_1,\ell_2 \in [1,q)$ and $q\ge 2$. 
 In particular $|x|^{\ell_i}/ \mathbf U(x)\to 0$ as $|x|\to +\infty$ (by the second condition in \eqref{eq.cond-U}). 
 We then only need to  pick $b>0$   such that: 
\begin{equation}\label{eq.cond-b1}
b<  \frac{a \Gamma}{1+ C^2_\Theta}. 
\end{equation}
\medskip

\noindent
\underline{Case II}: \eqref{eq.cond-gammaA} and \eqref{eq.cond-gammaB2} hold. 
\medskip

\noindent
By \eqref{eq.cond-gammaA} and \eqref{eq.cond-gammaB2}, we have that 
 $a \boldsymbol \Theta(\mathsf x) \cdot v \le -a \Gamma |v|^2 + a C^1_\Theta (1 + |x|^{\ell_1})$ 
and 
 $b |\boldsymbol \Theta(\mathsf x) \cdot x| \le   b C^2_\Theta |x| + b C^2_\Theta |v||x| + b C^2_\Theta |x|^{\ell_2}$. 
Recall that by  \eqref{eq.cond-U}, $-b \nabla \mathbf{U}(x) \cdot x \le -b m_1 \mathbf{U}(x) + b C^1_{\mathbf{U}}$. We then have  for $\mathsf x=(x,v)\in \mathbb R^{2d}$, 
\begin{align*}
Q(\mathsf x) &\le -b m_1 \mathbf{U}(x) + b C^2_\Theta |x|  + a C^1_\Theta |x|^{\ell_1} + b C^2_\Theta |x|^{\ell_2}  \\
&\quad - (a \Gamma - b) |v|^2 + b C^2_\Theta |v||x|   + b C^1_{\mathbf U} + a C^1_\Theta.
\end{align*} 
For the cross term $b C^2_\Theta |v||x|$ we have for every $\epsilon>0$:
$$b C^2_\Theta |v||x| \le \epsilon |v|^2 + \frac{(b C^2_\Theta)^2}{4\epsilon} |x|^2.$$
Hence, we obtain that:
\begin{align*}
Q(\mathsf x) &\le -b m_1 \mathbf{U}(x) + \frac{(b C^2_\Theta)^2}{4\epsilon} |x|^2 + b C^2_\Theta |x|  + a C^1_\Theta |x|^{\ell_1} + b C^2_\Theta |x|^{\ell_2} \\
&\quad - (a \Gamma - b -\epsilon) |v|^2   + b C^1_{\mathbf U} + a C^1_\Theta.
\end{align*} 
Recall that    $|x|^{\ell_i}/ \mathbf U(x)\to 0$ as $|x|\to +\infty$. 
 If $q>2$ (in this case $|x|^2/ \mathbf U(x)\to 0$ as $|x|\to +\infty$), we take $\epsilon=b$ and  we only need to pick $b>0$  small enough such that: 
\begin{equation}\label{eq.cond-b2}
2b<  {a \Gamma}. 
\end{equation}
If $q=2$, we    take $\epsilon=\sqrt b$, so that 
\begin{align*}
Q(\mathsf x) &\le -b \Big[m_1 \mathbf{U}(x) - \sqrt b\,  \frac{ |C^2_\Theta|^2}{4} |x|^2\Big] + b C^2_\Theta |x|  + (a C^1_\Theta + b C^2_\Theta) |x|^\ell  \\
&\quad - (a \Gamma - b -\sqrt b) |v|^2    + b C^1_{\mathbf U} + a C^1_\Theta. 
\end{align*}
Thus, if $q=2$,  by the second condition  in \eqref{eq.cond-U}, for $|x|$ large enough, $\mathbf U(x)\ge m_2|x|^2/2$, and we have 
\begin{align*} 
  \mathbf{U}(x) - \sqrt b\,  \frac{ |C^2_\Theta|^2}{4} |x|^2 &=  m_1\mathbf U(x) \Big[ 1-   \sqrt b\,  \frac{ |C^2_\Theta|^2}{4 m_1} \frac{|x|^2}{ \mathbf U(x)}\Big] \ge m_1\mathbf U(x) \Big[ 1-   \sqrt b\,  \frac{ |C^2_\Theta|^2}{2 m_1m_2}  \Big].
\end{align*} 
We then pick $b>0$ small enough such that: 
\begin{equation}\label{eq.cond-b3}
\sqrt b < \frac{m_1m_2}{ |C^2_\Theta|^2} \ \,  \text{ and }  \ \, \sqrt b + b<  a \Gamma.
\end{equation} 
To sum up all these conditions on $a,b>0$, we  say that condition \textbf{[ab]} is satisfied  if  \eqref{eq.cond-b} holds and:
\begin{enumerate}
\item[-] When  \eqref{eq.cond-gammaA} and \eqref{eq.cond-gammaB1} are satisfied, the condition  \eqref{eq.cond-b1} holds. 
\vspace{0.1cm}
\item[-] When  \eqref{eq.cond-gammaA} and \eqref{eq.cond-gammaB2} are satisfied, the condition   \eqref{eq.cond-b2} holds if $q>2$, whereas,  if $q=2$, the condition \eqref{eq.cond-b3} holds. 
\end{enumerate} 
From now on, we assume \textbf{[ab]}. 
With such a choice of constants $a,b>0$,  by \eqref{eq.cond-F}, there exist $C,c>0$ such that 
\begin{align*}
 Q(\mathsf x) &\le -   c\big[\mathbf U(x)   + |v|^2\big]  +C \le -\frac cM F (\mathsf x)+ C. 
\end{align*} 
Thus, there exist  $c_p,M_p>0$   such that  
$$\frac{ \mathcal{L}_0 W_p(\mathsf x)}{W_p(\mathsf x)}\le -c_p  +  \frac {M_p}{F(\mathsf x)} \to -c_p \  \text{ as }\ |\mathsf x|\to +\infty.$$
Let us now estimate  
$$\mathcal S_v^{\nu_\alpha} W_p(x, v) = \int_{\mathbb{R}^d} \left[ W_p(x, v+z) - W_p(x, v) - \nabla_v W_p(x, v) \cdot z \mathbf{1}_{|z|\le 1} \right] \nu_\alpha(dz),$$
where we recall that the Lévy measure is $\nu_\alpha(dz) = c_{\alpha,d} |z|^{-(d+\alpha)} dz$.
We claim that $\nabla_v^2 W_p$  is uniformly bounded over $\mathbb R^{2d}$. Indeed, we have for all $\mathsf x=(x,v)\in \mathbb R^{2d}$,
\begin{equation}\label{eq.nabla_v}
\nabla_v W_p(\mathsf{x}) = \frac{p}{2} F(\mathsf{x})^{\frac{p}{2}-1} [av + bx].
\end{equation}
As a consequence, for every $\mathsf x=(x,v)\in \mathbb R^{2d}$:
$$\nabla^2_v W_p(\mathsf x) = \frac{p}{2} \left[ \nabla_v F^{\frac{p}{2}-1}(\mathsf x)\,  \otimes \,  [av + bx] +a \,   F^{\frac{p}{2}-1}(\mathsf x)  \,  \mathbb{I}_d \right],$$
where 
 $$\nabla_v F^{\frac{p}{2}-1} (\mathsf x) =\Big ( \frac{p}{2}-1\Big) F^{\frac{p}{2}-2} (\mathsf x)[av + bx].$$
The function $F^{p/2-1}$ is bounded (because $0<p/2<1$ and $F\ge 1$ is coercive). Using in addition   \eqref{eq.F2}, there is a constant $L_p>0$ such that for all $\mathsf x=(x,v)\in \mathbb R^{2d}$,
 $$|\nabla^2_v W_p (\mathsf x)|\le L_p \Big[1+  \frac{ |x|^2+ |v|^2}{  \Big( m[\mathbf U(x) + |v|^2]+1 \Big)^{2-\frac{p}{2}} }\, \Big ]. $$
 Recall the second condition \eqref{eq.cond-U}. Then, since $4-p>2$ and $q(2 -\frac p2)>2$ (recall $q\ge 2$ and $0< p< 2$), by   the second condition in \eqref{eq.cond-U}, we deduce that: 
\begin{equation}\label{eq.cond-borne}
  \Vert\nabla_v^2 W_p\Vert_\infty<+\infty,
\end{equation}  
proving the claim. With this regularity established, we return to the estimation of the non-local term  $\mathcal S_v^{\nu_\alpha} W_p$. We split the integral over $z\in \mathbb R^d$ appearing  in $\mathcal S_v^{\nu_\alpha} W_p$   into the two regions $|z|\le 1$ and $|z|>1$. 
In the region $|z|\le 1$, we use the second-order Taylor expansion of $W_p$ in the $v$-variable to get that  (recall  \eqref{eq.cond-borne}):
$$\big | W_p(x,v+z) - W_p(x,v) - \nabla_v W_p(x,v) \cdot z \big | \le  \frac{1}{2}  \Vert\nabla_v^2 W_p\Vert_\infty  \, |z|^2.$$
Hence, we have for all $\mathsf x=(x,v)\in \mathbb R^{2d}$:
\begin{align*}
&\int_{|z|\le 1} \left[ W_p(x, v+z) - W_p(x, v) - \nabla_v W_p(x, v) \cdot z \mathbf{1}_{|z|\le 1} \right] \nu_\alpha(dz)\\
&\le  \Vert\nabla_v^2 W_p\Vert_\infty  \int_{|z|< 1} |z|^2\,  \nu_\alpha(dz)\\ 
&\le K_p,
\end{align*}
for some $K_p>0$. 
Let us now consider the region $|z|>1$.
Set  $$\mathcal H(x,v):=\int_{|z|> 1} \left[ W_p(x, v+z) - W_p(x,v) \right] \nu_\alpha(dz),$$ 
The key observation  is that the term 
$$F(x, v+z) - F(x, v) = \frac{a}{2}\left(|v+z|^2 - |v|^2\right) + bx \cdot (v+z - v) = (av + bx) \cdot z + \frac{a}{2}|z|^2$$
is linear in $\mathsf x=(x,v)\in \mathbb R^{2d}$. 
Since $0\le p/2\le 1$, we have for every $A,B\ge 0$,  $|A^{p/2}-B^{p/2}|\le |A-B|^{p/2}$ and $|A+B|^{p/2} \le |A|^{p/2}+  |B|^{p/2}$. Hence,  
we obtain that for every $\mathsf x=(x,v)\in \mathbb R^{2d}$ and $|z|>1$:
\begin{align*}
 \big|W_p(x, v+z) - W_p(x, v)\big| &\le \big|F(x, v+z) - F(x, v)\big|^{p/2}\\
&\le  \Big | (av + bx) \cdot z + \frac{a}{2}|z|^2\, \Big | ^{p/2}\\
&\le   |av + bx|^{p/2} |z|^{p/2} + \left(\frac{a}{2}\right)^{p/2} |z|^p\\
&= \Big[|av + bx|^{p/2} +  \left(\frac{a}{2}\right)^{p/2}\Big] |z|^p. 
\end {align*}
As a result,  we have that for every $\mathsf x=(x,v)\in \mathbb R^{2d}$:
\begin{align*}
\int_{|z|>1} \, \frac{\big|W_p(x, v+z) - W_p(x, v)\big|}{W_p(x,v)}\,   \nu_\alpha(dz)    &\le  \frac{a|v|^{p/2} + b|x|^{p/2} +  \left(\frac{a}{2}\right)^{p/2}}{\Big( m[\mathbf U(x) + |v|^2]+1 \Big)^{ \frac{p}{2}} }\  \int_{|z|>1} \,  |z|^p\, \nu_\alpha(dz). 
\end{align*} 
Note that $\int_{|z|>1} \,  |z|^p\, \nu_\alpha(dz)<+\infty$ because $0<p<\alpha$. 
Hence, we have shown that for $\mathsf x=(x,v)\in \mathbb R^{2d}$:
\begin{align*}
\frac{|\mathcal H(x,v)|}{W_p(x,v)} &\le \int_{|z|>1} \,  \, \frac{\big|W_p(x, v+z) - W_p(x, v)\big|}{W_p(x,v)}\,   \,   \nu_\alpha(dz)      \to 0 \text{ as } |\mathsf x|\to +\infty. 
\end{align*}  
In conclusion,     for all $\mathsf x\in \mathbb R^{2d}$, it holds:
$$\frac{ \mathcal{L}^\alpha W_p(\mathsf x)}{W_p(\mathsf x)}\le -c_p  +  \frac {M_p}{F(\mathsf x)} +  \frac {K_p}{W_p(\mathsf x)}  + \frac{\mathcal H(\mathsf x)}{W_p(\mathsf x)}   \to -c_p \  \text{ as }\ |\mathsf x|\to +\infty.$$
Note also that the function $ x\mapsto {\mathcal H(\mathsf x)}/{W_p(\mathsf x)}$ is continuous, therefore locally bounded. 
This shows that $W_p$ satisfies \eqref{eq.Ha2}.
In particular, there exists $D_p>0$ such that 
\begin{equation}\label{eq.Dp}
 \mathcal{L}^\alpha W_p \le  D_pW_p \text{ over } \mathbb R^{2d}.
 \end{equation}

\noindent
\textbf{Step 2} (Item a). In this step, we prove the weak well-posedness of \eqref{eq.Lan} when \textbf{[{\footnotesize{H$^{\alpha\in (1,2)}_{\text{P-Grad}}$}}]} holds.  Recall the discussion before Proposition~\ref{pr.rt}. 
Let $\mathsf x\in \mathbb R^{2d}$ and  $R>|\mathsf x|$. 
For $R>0$, consider the open and bounded set
\begin{equation}\label{eq.WpR}
\mathscr W_{p,R}:=\{ \mathsf y\in \mathbb R^{2d}, W_p(\mathsf x)<R\}.
\end{equation} 
Define:
\begin{equation}\label{eq.BR2}
\mathbf B_R(\mathsf y):= \mathbf 1_{ \mathsf y \in \bar{\mathscr W}_{p,R} }[- \nabla \mathbf U(\mathsf y) -\Theta(\mathsf y)], \ \mathsf y \in \mathbb R^{2d},
\end{equation} 
 which is measurable and bounded.   
  Introduce   $X^R(\mathsf y)$    the weak solution   to    \eqref{eq.exXtr} with $X^R_0=\mathsf y$ when $\mathbf B_R$ is given by \eqref{eq.BR2} and 
  \begin{equation}\label{eq.zeta_R}
 \zeta_R(\mathsf y)  = \inf \big \{t \ge 0 : W_p( {X}^R_t(\mathsf y)) \ge R \text{ or }W_p({X}^R_{t_-}(\mathsf y)) \ge R\big \}.
\end{equation}  
 Following the reasoning established just before Proposition~\ref{pr.rt},   $X^R_{\cdot \wedge   \zeta_R}(\mathsf x)$  is the unique solution to  $(\mathcal L^\alpha,\mathsf x,\mathscr W_{p,R})$. Moreover, the weak well-posedness of \eqref{eq.Lan} under \textbf{[{\footnotesize{H$^{\alpha\in (1,2)}_{\text{P-Grad}}$}}]}
holds if   we prove that 
 \begin{equation}\label{eq.Ptau}
 \lim_{R\to +\infty} \mathbb P_{\mathsf x}[\zeta_R\le t]=0.
 \end{equation}
To prove \eqref{eq.Ptau}, we first show that
$$
 (Z^R_t (W_p,\mathsf x)= W_p(X^R_{t \wedge \zeta_R}(\mathsf x))e^{-D_pt},t\ge 0)
$$
    is a non-negative $(\mathcal F_t^{X^R(\mathsf x)})_{t\ge 0}$   supermartingale, where  $D_p>0$ is the constant appearing in \eqref{eq.Dp}.
      Applying Itô's formula~\cite[Theorem 4.4.7]{applebaum2009levy}, we have that  
$$ 
W_p(X^R_{t\wedge   {\zeta}_R}(\mathsf x)) - W_p(\mathsf x) -  \int_0^{t\wedge   {\zeta}_R} \mathcal L^\alpha W_p(X^R_s) \,  ds  =  M_{t\wedge   {\zeta}_R}(W_p,\mathsf x),
$$
where   
\begin{align*}
M_{t\wedge   {\zeta}_R}(W_p,\mathsf x) &= \int_0^{t\wedge   {\zeta}_R} \int_{\mathbb R^d\setminus \{0\}} \left[ W_p(x^R_{s-}, v^R_{s-} + z) - W_p(x^R_{s-}, v^R_{s-}) \right] \tilde{N}_\alpha(ds, dz)  \\
&= \int_0^t  \int_{\mathbb R^d\setminus \{0\}} \mathbf 1_{s<    {\zeta}_R} \left[ W_p(x^R_{s-}, v^R_{s-} + z) - W_p(x^R_{s-}, v^R_{s-}) \right] \tilde{N}_\alpha(ds, dz)
\end{align*}
The process $(M_{t\wedge   {\zeta}_R}(W_p,\mathsf x),t\ge 0)$ is a $L^2$-martingale   
if for all $t\ge 0$, 
$$\mathbb{E} \left[ \int_0^{t} \int_{\mathbb{R}^d \setminus \{0\}} \mathbf 1_{s<    {\zeta}_R}  \, \big | W_p(x^R_{s-}, v^R_{s-} + z) - W_p(x^R_{s-}, v^R_{s-})\big   |^2 \, \nu_\alpha(dz) ds \right] < \infty.$$
Recall that when $s<\zeta_R(\mathsf x)$, it holds $W_p(X^R_{s})\le R$. 
Let $\mathscr W^1_{p,R}=\{\mathsf y\in \mathbb R^{d}, \mathbf d_{\mathbb R^{2d}}(\mathsf y,\mathscr W_{p,R})\le 1\}$ be the $1$-neighborhood of $\mathscr W_{p,R}$. 
Thus, the previous  condition is fulfilled  because for all   $(x,v)\in \mathbb R^{2d}$ such that $W_p(X^R_{s})\le R$, one has: 
 $$ \int_{|z|\ge1}  \, \big | W_p(x,v + z) - W_p(x,v)\big   |^2  \, \nu_\alpha(dz)  \le 4 \sup_{\mathsf y \in \mathscr W^1_{p,R}} |W_p(\mathsf y)|^2\  \int_{|z|\ge1} \nu_\alpha(dz),$$
 and by the mean value theorem, 
\begin{align*}
 \int_{|z|<1} \, \big | W_p(x,v + z) - W_p(x,v)\big   |^2 \, \nu_\alpha(dz)  \le  \sup_{\mathsf y \in \mathscr W^1_{p,R}} |\nabla_v W_p(\mathsf y) |^2  \int_{|z|<1} |z|^2 \nu_\alpha(dz). 
 \end{align*} 
 Hence, $(M_{t\wedge   {\zeta}_R}(W_p,\mathsf x),t\ge 0)$ is a $L^2$-martingale   
if for all $t\ge 0$. Recalling that
$$\int_0^{t\wedge   {\zeta}_R} \mathcal L^\alpha W_p(X^R_s) \,  ds=\int_0^{t } \mathbf 1_{s<    {\zeta}_R} \mathcal L^\alpha W_p(X^R_s) \,  ds,$$
we deduce that:
\begin{align*}
d \big(W_p(X^R_{t \wedge {\zeta}_R}(\mathsf x))e^{-D_pt}\big)&= e^{-D_p t} dM_{t\wedge {\zeta}_R}(W_p,\mathsf x) \\
&\quad + e^{-D_p t} \Big[\mathbf 1_{t < {\zeta}_R} \mathcal L^\alpha W_p(X^R_t) - C_p W_p(X^R_{t\wedge \zeta_R})\Big]dt,
\end{align*}
where $(\int_0^t e^{-D_p s} dM_{s\wedge {\zeta}_R}(W_p,\mathsf x),t\ge 0)$ is a true martingale. Hence, using \eqref{eq.Dp},  it follows that  $( W_p(X^R_{t \wedge \zeta_R}(\mathsf x))e^{-D_pt},t\ge 0)$   is a non-negative   true  supermartingale, proving the claim.

We now  conclude the proof of  \eqref{eq.Ptau}.
Note that $\mathbb E_{\mathsf x} [V_p(X^R_{t\wedge \zeta_R})]\le e^{-C_pt} V_p(\mathsf x)$,  but we cannot directly conclude the proof of \eqref{eq.Ptau} with such a bound  because $W_p(X^R_{\zeta_R}(\mathsf x))$ can be strictly less that $R$. 
We  claim that 
$$
\{\zeta_R(\mathsf x) \leq t\}\subset \Big \{\sup_{s\in[0,t]} W_p(X^R_{s\wedge\vartheta_R }(\mathsf x)) \geq R \Big \}. 
$$  
Indeed, on the event  $\{\zeta_R(\mathsf x) <+\infty\}$, by definition of $\zeta_R(\mathsf x)$, it holds:
$$W_p(X^R_{\zeta_R}(\mathsf x)) \geq R \, \text{ (case $1$)} \qquad \text{or} \qquad W_p(X^R_{\zeta_R^-}(\mathsf x)) \geq R \, \text{ (case $2$)}.$$ 
Case $1$.    Since $\zeta_R(\mathsf x) \leq t$, we have:
$$\sup_{s\in[0,t]} W_p(X^R_{s\wedge\zeta_R }(\mathsf x)) 
\geq W_p(X^R_{\zeta_R }(\mathsf x))\ge R.$$
Case $2$. Let  $s_n \nearrow \zeta_R(\mathsf x)$ with $s_n < \zeta_R(\mathsf x)$ for $n\ge 0$  (note that $\zeta_R(\mathsf x)>0$ because $R>|\mathsf x|$ and the process has càdlàg sample paths). Then,    
$$W_p(X^R_{s_n}(\mathsf x) ) \to W_p(X^R_{\zeta_R^-}(\mathsf x)) \geq R.$$
 For $n\ge 0$, 
$s_n \in [0,t]$ (because $\zeta_R(\mathsf x) \leq t$) and $s_n \wedge \zeta_R(\mathsf x) = s_n$. Therefore:
\begin{align*}
\sup_{s\in[0,t]} W_p(X^R_{s\wedge\zeta_R}(\mathsf x)) 
 \geq W_p(X^R_{s_n}(\mathsf x))   \to W_p(X^R_{\zeta_R^-}(\mathsf x))\ge R.  
\end{align*}
This proves the claim.  
Then, by  Doob's maximal inequality with the non-negative supermartingale $Z^R (W_p,\mathsf x)$,   
 we deduce  that 
\begin{equation}\label{eq.lmz}
\mathbb{P}_{\mathsf x}[\zeta_R(\mathsf x) \leq t] \le \mathbb{P}_{\mathsf x}\!\left[\sup_{0\leq s\leq t} W_p(X^R_{t \wedge \zeta_R}) \geq R \right] 
\leq   \frac{W_p(\mathsf x)\, e^{D_pt}}{R}.
\end{equation}
This completes the proof of \eqref{eq.Ptau} and the proof of the weak well-posedness under the assumptions \textbf{[{\footnotesize{H$^{\alpha\in (1,2)}_{\text{P-Grad}}$}}]}.

Let us now prove  the existence of a density for the process   $X_t(\mathsf x)$  ($t>0$) under the assumption \textbf{[{\footnotesize{H$^{\alpha\in (1,2)}_{\text{P-Grad}}$}}]}. Set  for $\mathsf y\in \mathbb R^{2d}$,
\begin{equation}\label{eq.vars_R}
\varsigma_R(\mathsf y)  =   \inf \big \{t \ge 0 : W_p( {X}_t(\mathsf y)) \ge R \text{ or }W_p({X}_{t_-}(\mathsf y)) \ge R\big \},
\end{equation}
where $X(\mathsf x)$ is the unique weak solution to \eqref{eq.Lan}. For $\mathsf x \in \mathscr W_{p,R}$ and $f\in bB(\mathbb R^{2d})$,  it holds:
$
 \mathbb E_{\mathsf x}[f(X_t)\mathbf 1_{t<\varsigma_R}  ]  =  \mathbb E_{\mathsf x}[f(X^R_t)\mathbf 1_{t<{\zeta}_R}  ]$. 
 By Theorem~\ref{th.uni}, $X^R_t(\mathsf x)$ admits a density   and thus the killed (outside of $\mathscr W_{p,R}$) process  also possesses a density i.e.  for every $\mathscr A\in B(\mathscr W_{p,R})$,   $\mathbb P_{\mathsf x}[X_t^R\in \mathscr A, t<{\tau}_R]= \int_{\mathscr A} q^R_t(\mathsf x,\mathsf x') d\mathsf x'$, for some  function $q^R_t(\mathsf x,\mathsf x')$. 
  Let $\mathscr U\in B( \mathbb R^{2d})$.  We have that 
 \begin{align*}
 \mathbb P_{\mathsf x}[X_t\in \mathscr U  ] = \mathbb P_{\mathsf x}[X_t\in \mathscr U, \varsigma_R\le t    ]  +  \int_{\mathscr U\cap \mathscr W_{p,R}} q^R_t(\mathsf x,\mathsf x') d\mathsf x'.
 \end{align*}
  Hence, if  $\mathscr U$ has zero Lebesgue measure, $\mathbb P_{\mathsf x}[X_t\in \mathscr U  ]  \le \mathbb P_{\mathsf x}[ \varsigma_R\le t    ] \to 0$ as $R\to +\infty$ (indeed this is a consequence of the fact that $\varsigma_R =\zeta_R$ in law  together with \eqref{eq.lmz}). Consequently $\mathbb P_{\mathsf x}[X_t\in \mathscr U  ]=0$, proving the desired result.
 
  \medskip

\noindent
\textbf{Step 3} (Item b). Let us now prove  the weak continuity of the trajectories w.r.t. initial conditions   of \eqref{eq.Lan} under the assumption  \textbf{[{\footnotesize{H$^{\alpha\in (1,2)}_{\text{P-Grad}}$}}]}. 
Let $\mathsf x_n\to \mathsf x\in \mathbb R^{2d}$. We want to show that 
 $\mathbb P_{\mathsf x_n}[X\in \cdot ]\to \mathbb P_{\mathsf x}[X\in \cdot ]$
weakly in $\mathcal P(D([0,+\infty),\mathbb R^{2d}))$. 
Let $R_0>0$ such that $|\mathsf x_n|+|\mathsf x|\le R_0$. In what follows $R>R_0$. 
In view of the proof     made when $\mathbf B$ was measurable with at most  linear growth, see more precisely Section~\ref{sec.Cx}, this continuity property    holds under \textbf{[{\footnotesize{H$^{\alpha\in (1,2)}_{\text{P-Grad}}$}}]}  if one proves that  
\begin{enumerate}
\item[\textbf 1.]  For all $K\Subset \mathbb R^{2d}$ and $T\ge 0$, $\lim_{R\to +\infty} \sup_{\mathsf x\in K} \mathbb P_{\mathsf x}[\varsigma_R\le T]=0$. 
\item[\textbf 2.]      For all $T\ge 0$, the sequence $(\mathbb P_{\mathsf x_n}[X_{[0,T]}   \in \cdot ])_{n\ge 0}$   is tight in $\mathcal P(D([0,T], \mathbb{R}^{2d})$. 

\item[\textbf 3.] For all  $0\le t_1\le \ldots \le t_k \le T$ and $g: (\mathbb{R}^{2d})^k\to \mathbb R$,  
$$
\lim_{n\to+\infty }\mathbb E[g(X_{t_1}(\mathsf x_n),\ldots, X_{t_k}(\mathsf x_n)) ]= \mathbb E[g(X_{t_1}(\mathsf x),\ldots, X_{t_k}(\mathsf x)) ].
$$

\end{enumerate}
 Recall that Item $\textbf 1$ is a consequence of  the equality $\varsigma_R\overset{\text{law}}{=} \zeta_R$ and \eqref{eq.lmz}.  The proof of Item $\textbf 3$ is proved exactly as we proved Proposition~\ref{pr.C}  with the help of Item $\textbf 1$. We now establish  Item $\textbf 2$. 
Applying Theorem 13.2 of Billingsley~\cite{billingsley2013} in a manner analogous to the proof of Lemma~\ref{le.tight}, one can show that for any fixed $R > R_0$, the sequence of laws $(\mathbb{P}_{\mathsf{x}_n}[X_{\cdot \wedge \varsigma_R} \in \cdot])_{n \ge 0}$ is tight in   $\mathcal P(D([0, T], \mathbb{R}^{2d}))$. This follows from the fact that the sequence  $(\mathsf x_n)_{n\ge 0}$ is bounded and  $\mathbf{B}$ is bounded by a constant $C_R$ on the set $\bar{\mathscr W}_{p,R}$. 
  Let $K$ be a compact subset of $D(\mathbb R_+,\mathbb R^{2d})$.
 Then, we have:
 \begin{align*}
\mathbb P_{\mathsf x_n}[( X_{s\wedge  \varsigma_R},s\in [0,T]) \in  K]&= \mathbb P_{\mathsf x_n}[( X_{s\wedge  \varsigma_R},s\in [0,T])\in  K, T<  \varsigma_R ]\\
&\quad +\mathbb P_{\mathsf x_n}[( X_{s\wedge  \varsigma_R},s\in [0,T])\in  K, T\ge \varsigma_R ]\\
&=  \mathbb P_{\mathsf x_n}[( X_s,s\in [0,T])\in  K, T<  \varsigma_R ]\\ 
&\quad +\mathbb P_{\mathsf x_n}[( X_{s\wedge  \varsigma_R},s\in [0,T])\in  K, T\ge \varsigma_R ]\\
&=  \mathbb P_{\mathsf x_n}[( X_s,s\in [0,T])\in  K  ]\\ 
& \quad -  \mathbb P_{\mathsf x_n}[( X_s,s\in [0,T])\in  K,   \varsigma_R\le T ]\\
&\quad +\mathbb P_{\mathsf x_n}[( X_{s\wedge  \varsigma_R},s\in [0,T])\in  K, T\ge \varsigma_R ].
\end{align*}
Hence, it holds:
\begin{equation}\label{eq.uj}
\mathbb P_{\mathsf x_n}[( X_s,s\in [0,T])\in  K  ]\ge \mathbb P_{\mathsf x_n}[( X_{s\wedge  \varsigma_R},s\in [0,T]) \in  K] -   \mathbb P_{\mathsf x_n}[T\ge \varsigma_R].
\end{equation} 
 Let $\epsilon>0$. By Item $\textbf 1$ above, we can take $R_\epsilon\ge R_0+1$ such that for all $n\ge 0$, $\mathbb P_{\mathsf x_n}[T\ge \varsigma_{R_\epsilon}]\le\epsilon/2$. For this fixed $R_\epsilon>0$,  let  $K_\epsilon\subset D([0,T],\mathbb R^{2d})$ be a compact set such that for all $n\ge 0$, $\inf_{n\ge 0}\mathbb P_{\mathsf x_n}[( X_{s\wedge  \varsigma_{R_\epsilon}},s\in [0,T]) \in  K_\epsilon]     >1-\epsilon/2$ (this is possible because $(\mathbb P_{\mathsf x_n}[X_{\cdot \wedge  \varsigma_R}  \in \cdot ])_{n\ge 0}$ is tight for all $R>0$).  By~\eqref{eq.uj}, we have that 
$$
\mathbb P_{\mathsf x_n}[( X_s,s\in [0,T])\in  K_\epsilon  ]\ge  1-\epsilon.
$$
Hence  the sequence of probability measures  
$$(\mathbb P_{\mathsf x_n}[( X_s,s\in [0,T])\in \cdot ])_{n\ge 0} \text{ is tight in $\mathcal P(D([0,T], \mathbb{R}^{2d}))$}.$$ 
This proves  Item $\textbf 2$ and ends the proof of the continuity property   under \textbf{[{\footnotesize{H$^{\alpha\in (1,2)}_{\text{P-Grad}}$}}]}.
 \medskip

\noindent
\textbf{Step 4} (Item c). Strong Feller property and topological irreducibility of $P_t$, $t>0$, when  \textbf{[{\footnotesize{H$^{\alpha\in (1,2)}_{\text{P-Grad}}$}}]} holds.  
 \medskip

\noindent
\textbf{Step 4a}. Topological irreducibility.
In this step, we actually prove that, when \textbf{[{\footnotesize{H$^{\alpha\in (1,2)}_{\text{P-Grad}}$}}]} holds,  for all $\mathscr D=\mathscr O\times \mathbb R^d$, where $\mathscr O$ is a subdomain of $\mathbb R^d$, $P_t^{\mathscr D}$ is 
topologically irreducible for every $t>0$. To do so, the arguments are the same as those used to prove Theorem~\ref{th.C5-infty} except one little point that should be checked. Let us recall that the  proof of Theorem~\ref{th.C5-infty} is based on two intermediate results, Proposition \ref{pr.Co}  and Proposition \ref{pr.C5-infty}.
Recall also that we have proved in the previous steps  the weak well-posedness of \eqref{eq.Lan} and   the weak continuity of their trajectories w.r.t. initial conditions $\mathsf x\in \mathbb R^{2d}$ when \textbf{[{\footnotesize{H$^{\alpha\in (1,2)}_{\text{P-Grad}}$}}]} holds. Therefore,   the assertion of Proposition \ref{pr.Co} holds verbatim in this setting.   For that reason,   we thus only  have to show that the assertion of Proposition~\ref{pr.C5-infty} remains valid when \textbf{[{\footnotesize{H$^{\alpha\in (1,2)}_{\text{P-Grad}}$}}]} holds.  
In light of the proof of Proposition~\ref{pr.C5-infty}, the only remaining point to verify is that the localization argument used at the beginning of \textbf{Step 2}  in this proof can be applied again. Let us recall that this  argument justifies that, without loss of generality, one may restrict to the case  when \eqref{eq.Hgp} holds. Let us check this argument. First of all, because $\mathbf B$ satisfies \textbf{[{\footnotesize{B$_{\text{P-Grad}}$}}]}, it holds:
\begin{equation}\label{eq.Cpm}
\forall r>0, \exists C_r>0,  \forall \mathsf x=(x,v)\in \mathbb R^{2d}, \  \mathbf 1_{|x|\le r} | \mathbf B(\mathsf x)|\le C_r(1+|v|). 
\end{equation}
The property  \eqref{eq.Cpm} is  straightforward to derive because $\nabla \mathbf U$ is continuous  and $\boldsymbol \Theta$ satisfies   \eqref{eq.cond-gammaB2} or the second condition in  \eqref{eq.cond-gammaB1}. We now detail the localization argument and conclude the proof of the topological irreducibility property of $P_t^{\mathscr D}$ in the current setting, namely \textbf{[{\footnotesize{H$^{\alpha\in (1,2)}_{\text{P-Grad}}$}}]}.    

Let $ \mathsf x_0=(x_0,v_0)$ and $\mathsf x_F=(x_F,v_F)$  in $\mathscr D=\mathscr O\times \mathbb R^{d}$. Consider   a bounded subdomain $\mathscr O^\star$  of    $\mathscr O$ containing  $ x_0$ and $ x_F$, with $\bar{\mathscr O}^\star \subset \mathscr O$. Set for $\mathsf x=(x,v)\in  \mathbb R^{2d}$, 
$$\mathbf B^*(\mathsf x)=   \mathbf 1_{x\in \bar{\mathscr O}^\star} \mathbf B(\mathsf x).$$ 
By \eqref{eq.Cpm}, the drift  $\mathbf B^\star $ satisfies \eqref{eq.Hgp}  and in particular, it  satisfies   {\rm\textbf{[{\footnotesize B$_{\text{LG}}$}]}}. 
 Then, by   Theorem~\ref{th.ex2}, one can consider the unique  weak solution $(X^\star_t=(x^\star_t,v^\star_t),t\ge 0)\in \mathbb R^d \times \mathbb R^d $ of: 
 $$
  dx^\star_t=v^\star_tdt, \ dv^\star_t=\mathbf B^\star(x^\star_t,v^\star_t)dt   +   dL^\alpha_t.
 $$
 Set $\mathscr D^\star=\mathscr O^\star \times \mathbb R^d$ and $\sigma^\star_{\mathscr D^\star}:=\inf\{t\ge 0, x^\star_t\notin \mathscr O^\star\}$. For $\mathsf x\in \mathbb R^{2d}$ with $x\in \mathscr O^\star$ both $X_{\cdot \wedge \sigma_{\mathscr D}}(\mathsf x)$ and $X^\star_{\cdot\wedge \sigma_{\mathscr D}^\star}(\mathsf x)$ solve the stopped martingale problem  $(\mathcal L_\star^\alpha, \mathsf x, \mathscr D^\star)$ where $\mathcal L_\star^\alpha$ is the generator of $X^\star$. This stopped martingale problem  is well-posed by~\cite[Theorem 6.1 in Chapter 4]{EK},  Theorem~\ref{th.ex2}, and Proposition~\ref{pr.Pre}. Thus, the two processes  $
X_{\cdot \wedge \sigma_{\mathscr D}}(\mathsf x)$ and $X^\star_{t\wedge \sigma_{\mathscr D}^\star}(\mathsf x)$ are equal in law.  
 Hence,    for all $\boldsymbol\epsilon>0$ and every $\mathsf x=(x,v)$ with  $x\in \mathscr O^\star$,  we have that:
\begin{align*}
P_{t}^{\mathscr D} \big (\mathsf x, \mathscr B_{\mathbb R^{2d}}(\mathsf x_F,\boldsymbol\epsilon)  \big )&\ge P_{t}^{\mathscr D^\star} \big (\mathsf x, \mathscr B_{\mathbb R^{2d}}(\mathsf x_F,\boldsymbol\epsilon)  \big ) =\mathbb P_{\mathsf x}\big [t<\sigma^\star_{\mathscr D^\star}, X_t^\star\in \mathscr B_{\mathbb R^{2d}}(\mathsf x_F,\boldsymbol\epsilon)\big ]. 
\end{align*}
As a consequence,  by substituting the process $(X^\star_t,t\ge 0)$ for the original one (namely for $(X_t,t\ge 0)$) and working over  $\mathscr D^\star$ instead of $\mathscr D$, we may henceforth assume without loss of generality that     \eqref{eq.Hgp}   is satisfied.  Consequently, and as explained previously,   we can apply verbatim all the arguments used in  the proof of
 Proposition~\ref{pr.C5-infty} to deduce that for every $\mathsf x_0,\mathsf x_F\in \mathscr D$, and   $\boldsymbol\epsilon>0$, there exists $t_{\boldsymbol\epsilon}>0$ such that for all $t\in (0,t_{\boldsymbol\epsilon}]$ and every  $\mathsf x \in \{\mathsf x_0\}\cup \bar{\mathscr B}_{\mathbb R^{2d}}(\mathsf x_F,\boldsymbol\epsilon) $,
 $$ P_{t}^{\mathscr D} \big (\mathsf x, \mathscr B_{\mathbb R^{2d}}(\mathsf x_F,\boldsymbol\epsilon)  \big )>0,$$
 which is precisely the statement of Proposition~\ref{pr.C5-infty}. 
   In conclusion,   when \textbf{[{\footnotesize{H$^{\alpha\in (1,2)}_{\text{P-Grad}}$}}]} holds, $P_t^{\mathscr D}$ is topologically irreducible for all $t>0$.   
 \medskip

\noindent
\textbf{Step 4b}. The strong Feller property of $P_t$. 
We have for $\mathsf x \in \mathscr W_{p,R}$ and $f\in bB(\mathbb R^{2d})$ (see \eqref{eq.BR2} and the lines below):
$$
\mathbb E_{\mathsf x}[f(X_t)]=  \mathbb E_{\mathsf x}[f(X_t)\mathbf 1_{\varsigma_R} \le t ]  +  \mathbb E_{\mathsf x}[f(X^R_t)\mathbf 1_{t<{\zeta}_R}  ].$$ 
By Theorem~\ref{th.FF},  for every $R>0$, $\mathsf x\mapsto \mathbb E_{\mathsf x}[f(X^R_t)]$ is continuous. 
Therefore, the strong Feller property  of $P_t$ for $t>0$ is proved exactly as in Section~\ref{sec.SFH2} using Item $\textbf 1$ above, the details are left to the reade.  This completes the proof of Theorem~\ref{th.sta}.  
\end{proof}


 We  extend  Theorem~\ref{th.main} to the   perturbed gradient field setting.   

\begin{theorem}\label{co.main}
Assume     {\rm\textbf{[{\footnotesize{H$^{\alpha\in (1,2)}_{\text{P-Grad}}$}}]}}. 
Consider a bounded subdomain $\mathscr O$ of $\mathbb R^d$ such that $\mathbb R^{d}\setminus \overline{\mathscr O}$ is nonempty, and set $\mathscr D = \mathscr O \times \mathbb R^d$. Then all the assertions  of Theorem~\ref{th.main} remain valid, verbatim, for the process \eqref{eq.Lan} killed upon exiting $\mathscr D$. Moreover, if $\mathbf{U}$ and $\Theta$ are $C^\infty$, these results extend to all $\alpha \in (0,1]$.
\end{theorem}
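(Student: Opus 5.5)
The plan is to reproduce the proof of Theorem~\ref{th.main}. That proof applies the Krein--Rutman type theorem \cite[Theorem 4.1]{guillinqsd}, so all assertions follow once conditions \textbf{C1}--\textbf{C4} are verified for $(P_t^{\mathscr D},t>0)$ under {\rm\textbf{[{\footnotesize{H$^{\alpha\in (1,2)}_{\text{P-Grad}}$}}]}}.

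Theorem~\ref{th.sta} supplies the structural inputs. It gives weak well-posedness, so $P^{\mathscr D}_t$ is well defined, and it gives weak continuity of trajectories with respect to the initial condition. It also shows that $P_t$ is strongly Feller for $t>0$, and Step 4a of its proof shows that $P_t^{\mathscr D}$ is topologically irreducible for every subdomain $\mathscr O$.

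\textbf{C2} is exactly Step 4a. For \textbf{C3}, note that the same localization turns the problem into one for $X^\star$ with drift satisfying \eqref{eq.Hgp}. The Note after Proposition~\ref{pr.C5-infty} then gives $\mathbb P_{\mathsf x}[\sigma_{\mathscr D}<\infty]>0$: steer the position through a broken line to a point of $\mathbb R^d\setminus\bar{\mathscr O}$, which is nonempty by assumption.

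For \textbf{C1}, I would rerun the proof of Theorem~\ref{th.SFD} with strong Feller $P_t$. The only new ingredient is \eqref{eq.Lpop}, uniformly for $\mathsf x$ in a compact $K$. I would argue by localization with $\varsigma_R$. On $\{s<\varsigma_R\}$ the process stays in $\bar{\mathscr W}_{p,R}$, which is bounded, so $|v_u|\le C_R$ and $|x_u-x|\le sC_R$ there. Hence
\[
\sup_K\mathbb P_{\mathsf x}[\sigma_{\mathscr B(x,\delta)}\le s]\le \mathbf 1_{sC_R\ge\delta}+\sup_K\mathbb P_{\mathsf x}[\varsigma_R\le s].
\]
The last term is at most $\sup_K W_p(\mathsf x)e^{D_p}/R$ by \eqref{eq.lmz} and $\varsigma_R\overset{\text{law}}{=}\zeta_R$. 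Choosing $R$ large and then $s$ small gives \eqref{eq.Lpop}.

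\textbf{C4} is the main step. Proposition~\ref{pr.Rsp} applies to $P_t^{\mathscr D}$, since it is strongly Feller by \textbf{C1} and irreducible by \textbf{C2}, so $\mathsf r_{sp}>0$. It therefore suffices to show $\mathsf r_{ess}(P_t^{\mathscr D})=0$ and compactness, by re-proving Theorem~\ref{th.bw}. Its hypotheses ask for {\rm\textbf{[{\footnotesize W$_{\text{LG}}$}]}}, but the proof only uses three things: well-posedness, strong Feller $P_t$ (to get \textbf{(A1)}), and the bound \eqref{eq.LGr} restricted to $\bar{\mathscr D}$, i.e.\ $|\mathbf B(x,v)|\le C(1+|v|)$ for $x\in\bar{\mathscr O}$.

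The last bound is \eqref{eq.Cpm}, which holds with $r$ such that $\bar{\mathscr O}\subset\{|x|\le r\}$ since $\mathscr O$ is bounded. The Grönwall estimates \eqref{eq.Gr} and \eqref{eq.jhl} hold on the event $\mathscr A_t(\mathsf x)$, where the position stays in $\mathscr O$. So the two-regime argument ($|v|<\sqrt R$ and $|v|\ge\sqrt R$) gives $\beta_w(P_t^{\mathscr D})=0$ for small $t$, and then for all $t$. Compactness and $\mathsf r_{ess}=0$ then follow as before. The point to check carefully is that Grönwall is only applied on $\mathscr A_t$ and up to time $t<\sigma_{\mathscr D}$, where $x_s\in\mathscr O$, so the superlinear growth of $\nabla\mathbf U$ outside $\mathscr O$ never enters.

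For $\alpha\in(0,1]$ with $\mathbf U,\Theta\in C^\infty$, Theorem~\ref{th.sta} gives strong solutions, weak continuity, strong Feller $P_t$ and irreducibility for all $\alpha\in(0,2)$. The same four checks then go through verbatim.
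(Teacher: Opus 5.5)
Your proposal is correct and follows the paper's proof: you verify \textbf{C1}--\textbf{C4} of the Krein--Rutman criterion using Theorem~\ref{th.sta}, Step 4a, Proposition~\ref{pr.Rsp} and a rerun of Theorem~\ref{th.bw}. The differences are only refinements. For \eqref{eq.Lpop} you bound the event $\{s<\varsigma_R\}$ deterministically via $|v_u|\le C_R$ on $\bar{\mathscr W}_{p,R}$, instead of passing to $X^R$ and the fixed-time estimate \eqref{eq.Gnj}; for \textbf{C4} you make explicit that only the local bound \eqref{eq.Cpm} on $\bar{\mathscr O}$ enters the proof of Theorem~\ref{th.bw}, which is what the paper's appeal to ``\eqref{eq.Hgp} without loss of generality'' amounts to.
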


\begin{proof} 
We restrict ourselves to the low-regularity framework when  {\rm\textbf{[{\footnotesize{H$^{\alpha\in (1,2)}_{\text{P-Grad}}$}}]}} holds, the other case is treated with the same arguments.  
In view of the proof of  Theorem~\ref{th.main}, it is sufficient to check that all the conditions \textbf{C1}$\to$\textbf{C4} are fulfilled. 

We start with \textbf{C1}. 
Let us  prove that  $P_t^{\mathscr D}$ is   strongly Feller for $t>0$. 
Since $P_t$ is   strongly Feller for $t>0$ (see Theorem \ref{th.sta}), in view of the proof of Theorem~\ref{th.SFD}, it suffices to show \eqref{eq.Lpop}. Recall \eqref{eq.vars_R}. Let  $K\Subset \mathbb R^{2d}$ and $R>0$ such that $K\Subset \mathscr W_{p,R}$ (see \eqref{eq.WpR}). 
We have for every $\mathsf{x}=(x,v) \in K$, 
\begin{align*} 
&\mathbb{P}_{\mathsf{x}}[\sigma_{\mathscr{B}_{\mathbb{R}^d}(\mathsf{x}, \delta)} \le s]  \le \mathbb{P}_{\mathsf{x}}[\sigma_{\mathscr{B}_{\mathbb{R}^d}(\mathsf{x}, \delta)} \le s, s<\varsigma_R] + \mathbb{P}_{\mathsf{x}}[   \varsigma_R\le s]. 
\end{align*}
Recall  $X_{\cdot \wedge \varsigma_R}(\mathsf x)\overset{\text{law}}{=}X^R_{\cdot \wedge \zeta_R}(\mathsf x)$ and $\varsigma_R\overset{\text{law}}{=} \zeta_R$ (see \eqref{eq.zeta_R}). 
By \eqref{eq.lmz}, we thus get that 
$$ \lim_{R\to +\infty}\  \sup_{s\in [0,1]} \ \sup_{\mathsf x\in K} \  \mathbb{P}_{\mathsf{x}}[   \varsigma_R\le s] =0.$$
 Moreover, for $\mathsf x=(x,v)\in K$, 
 \begin{align*} 
 \mathbb{P}_{\mathsf{x}}[\sigma_{\mathscr{B}_{\mathbb{R}^d}(\mathsf{x}, \delta)} \le s, s<\varsigma_R]   &\le  \mathbb{P}_{\mathsf{x}}[ \sup_{u\in [0,s]} |x_u-x| \ge \delta, s<\varsigma_R]  \\
 &=    \mathbb{P}_{\mathsf{x}}[ \sup_{u\in [0,s]} |x^R_u-x| \ge \delta, s<\zeta_R]\\
 &\le  \mathbb{P}_{\mathsf{x}}[ \sup_{u\in [0,s]} |x^R_u-x| \ge \delta].
\end{align*}
Because $\mathbf B_R$ is bounded, we can use all the estimates we derived  in the proof of Theorem~\ref{th.SFD} and in particular by \eqref{eq.Gnj}, we have for all $R>0$,  
$$
\lim_{s\to 0}  \ \sup_{\mathsf x\in K} \ \mathbb{P}_{\mathsf{x}}\big [|x^R_s - x| \ge \delta/2\big ]=0.
 $$
 This achieves the proof of \eqref{eq.Lpop}. The operator $P_t^{\mathscr D}$ is  thus  strongly Feller. 
 This implies in particular \textbf{C1}. Let us now consider the other conditions.

 We have already shown in \textbf{Step 4a} in the proof of Theorem~\ref{th.sta} that  \textbf{C2} is satisfied  under  {\rm\textbf{[{\footnotesize{H$^{\alpha\in (1,2)}_{\text{P-Grad}}$}}]}}. It is also straightforward to check that  \textbf{C3} is satisfied because $\mathbb R^{d}\setminus \overline{\mathscr O}$ is nonempty (see the proof of Theorem~\ref{th.C5-infty}). 
By Proposition~\ref{pr.Rsp}, $\mathsf r_{sp}(P_t^{\mathscr D}|_{bB (\mathscr D)})>0$. Finally, with the same arguments as those employed in the proof of Theorem~\ref{th.bw} and since we can assume without loss of generality that \eqref{eq.Hgp} holds (see indeed \textbf{Step 4a} in the proof of Theorem~\ref{th.sta}),   for $t>0$,  $\mathsf r_{ess}(P_t^{\mathscr D}|_{b B(\mathscr D)})=0$. 
The proof of the theorem is complete. 
\end{proof}

 \bigskip

\noindent
{\small \textbf{Acknowledgement.} This work has been (partially) supported by the Project CONVIVIALITY ANR-23-CE40-0003 of the French National Research Agency. AG has benefited from a government grant managed by the Agence Nationale de la
Recherche under the France 2030 investment plan ANR-23-EXMA-0001.
 This research was also partially funded by the French National Research Agency (ANR) via Project ANR-25-CE40-6875-01 (ANR DySLoS).}

{\small
 \bibliography{alpha-stable-langevin} 

\begin{thebibliography}{10}

\bibitem{applebaum2009levy}
D.~Applebaum.
\newblock {\em L{\'e}vy {P}rocesses and {S}tochastic {C}alculus}.
\newblock Cambridge University Press, 2009.

\bibitem{bansaye2022non}
V.~Bansaye, B.~Cloez, P.~Gabriel, and A.~Marguet.
\newblock A non-conservative {H}arris ergodic theorem.
\newblock {\em Journal of the London Mathematical Society}, 106(3):2459--2510,
  2022.

\bibitem{bao2024exponential}
J.~Bao, R.~Fang, and J.~Wang.
\newblock Exponential ergodicity of {L}{\'e}vy driven {L}angevin dynamics with
  singular potentials.
\newblock {\em Stochastic Processes and their Applications}, 172:104341, 2024.

\bibitem{barczy2015yamada}
M.~Barczy, Z.~Li, and G.~Pap.
\newblock Yamada-{W}atanabe results for stochastic differential equations with
  jumps.
\newblock {\em International Journal of Stochastic Analysis}, 2015(1):460472,
  2015.

\bibitem{BenaimChampagnatEtAl2022}
M.~Bena{\"\i}m, N.~Champagnat, W.~O{\c{c}}afrain, and D.~Villemonais.
\newblock Quasi-compactness criterion for strong {F}eller kernels with an
  application to quasi-stationary distributions.
\newblock {\em Preprint hal-03640205v1}, 2022.

\bibitem{benaim2025degenerate}
M.~Bena{\"\i}m, N.~Champagnat, W.~O{\c{c}}afrain, and D.~Villemonais.
\newblock Degenerate processes killed at the boundary of a domain.
\newblock {\em The Annals of Probability}, 53(2):720--752, 2025.

\bibitem{billingsley2013}
P.~Billingsley.
\newblock {\em Convergence of {P}robability {M}easures}.
\newblock John Wiley \& Sons, 2013.

\bibitem{bouchaud}
J-P. Bouchaud and A.~Georges.
\newblock Anomalous {D}iffusion in {D}isordered {M}edia: {S}tatistical
  {M}echanisms, {M}odels and {P}hysical {A}pplications.
\newblock {\em Physics Reports}, 195(4-5):127--293, 1990.

\bibitem{brockmann2002levy}
D.~Brockmann and I.M. Sokolov.
\newblock L{\'e}vy flights in external force fields: from models to equations.
\newblock {\em Chemical Physics}, 284(1-2):409--421, 2002.

\bibitem{champagnat2018criteria}
N.~Champagnat, K.~A. Coulibaly-Pasquier, and D.~Villemonais.
\newblock Criteria for exponential convergence to quasi-stationary
  distributions and applications to multi-dimensional diffusions.
\newblock In {\em S{\'e}minaire de Probabilit{\'e}s XLIX}, pages 165--182.
  Springer, 2018.

\bibitem{champagnat2024quasi}
N.~Champagnat, T.~Leli{\`e}vre, M.~Ramil, J.~Reygner, and D.~Villemonais.
\newblock Quasi-stationary distribution for kinetic {SDE}s with low regularity
  coefficients.
\newblock {\em Preprint arXiv:2410.01042}, October 2024.

\bibitem{champagnat2016exponential}
N.~Champagnat and D.~Villemonais.
\newblock Exponential convergence to quasi-stationary distribution and
  {Q}-process.
\newblock {\em Probability Theory and Related Fields}, 164(1-2):243--283, 2016.

\bibitem{champagnat2021lyapunov}
N.~Champagnat and D.~Villemonais.
\newblock Lyapunov criteria for uniform convergence of conditional
  distributions of absorbed {M}arkov processes.
\newblock {\em Stochastic Processes and their Applications}, 135:51--74, 2021.

\bibitem{de2023multidimensional}
P-E. Chaudru De~Raynal, J-F. Jabir, and S.~Menozzi.
\newblock Multidimensional stable driven {M}ckean-{V}lasov {SDE}s with
  distributional interaction kernel: critical thresholds and related models.
\newblock {\em Preprint arXiv:2302.09900}, 2023.

\bibitem{chaudru}
P-E. Chaudru~de Raynal and S.~Menozzi.
\newblock Regularization effects of a noise propagating through a chain of
  differential equations: an almost sharp result.
\newblock {\em Transactions of the American Mathematical Society},
  375(1):1--45, 2022.

\bibitem{de2023heat}
P-E. Chaudru~de Raynal, S.~Menozzi, A.~Pesce, and X.~Zhang.
\newblock Heat kernel and gradient estimates for kinetic {SDE}s with low
  regularity coefficients.
\newblock {\em Bulletin des Sciences Math{\'e}matiques}, 183:103229, 2023.

\bibitem{chechkin2002stationary}
A.~Chechkin, V.~Gonchar, J.~Klafter, R.~Metzler, and L.~Tanatarov.
\newblock Stationary states of non-linear oscillators driven by {L}{\'e}vy
  noise.
\newblock {\em Chemical Physics}, 284(1-2):233--251, 2002.

\bibitem{chechkin2006fundamentals}
A.~V. Chechkin, V.~Y. Gonchar, J.~Klafter, and R.~Metzler.
\newblock Fundamentals of {L}{\'e}vy flight processes.
\newblock {\em Fractals, Diffusion, and Relaxation in Disordered Complex
  Systems: Advances in Chemical Physics, Part B}, pages 439--496, 2006.

\bibitem{chen2016uniqueness}
Z-Q. Chen and L.~Wang.
\newblock Uniqueness of stable processes with drift.
\newblock {\em Proceedings of the American Mathematical Society},
  144(6):2661--2675, 2016.

\bibitem{chung2001brownian}
K.L. Chung and Z.~Zhao.
\newblock {\em From Brownian {M}otion to Schr{\"o}dinger’s {E}quation},
  volume 312.
\newblock Springer Science \& Business Media, 2001.

\bibitem{coleman1992fractal}
P.~H. Coleman and L.~Pietronero.
\newblock The fractal structure of the universe.
\newblock {\em Physics Reports}, 213(6):311--389, 1992.

\bibitem{collet2011quasi}
P.~Collet, S.~Mart{\'\i}nez, S.~M{\'e}l{\'e}ard, and J.~San~Mart{\'\i}n.
\newblock Quasi-stationary distributions for structured birth and death
  processes with mutations.
\newblock {\em Probability Theory and Related Fields}, 151(1-2):191--231, 2011.

\bibitem{davis1993markov}
M.H.A Davis.
\newblock {\em Markov {M}odels \& {O}ptimization}, volume~49.
\newblock CRC Press, 1993.

\bibitem{del2023stability}
P.~Del~Moral, E.~Horton, and A.~Jasra.
\newblock On the stability of positive semigroups.
\newblock {\em The Annals of Applied Probability}, 33(6A):4424--4490, 2023.

\bibitem{di-gesu-lelievre-le-peutrec-nectoux-17}
G.~Di~Ges{\`u}, T.~Leli{\`e}vre, D.~Le~Peutrec, and B.~Nectoux.
\newblock Jump {M}arkov models and transition state theory: the
  quasi-stationary distribution approach.
\newblock {\em Faraday Discussions}, 195:469--495, 2017.

\bibitem{DLLN}
G.~Di~Ges\`u, T.~Leli\`evre, D.~Le~Peutrec, and B.~Nectoux.
\newblock Sharp asymptotics of the first exit point density.
\newblock {\em Annals of PDE}, 5(2), 2019.

\bibitem{dong2016strong}
Z.~Dong, X.~Peng, Y.~Song, and X.~Zhang.
\newblock Strong {F}eller properties for degenerate {SDE}s with jumps.
\newblock {\em Annales de l'IHP Probabilit{\'e}s et statistiques},
  52(2):888–--897, 2016.

\bibitem{down1995exponential}
D.~Down, S.~P. Meyn, and R.~L. Tweedie.
\newblock Exponential and uniform ergodicity of {M}arkov processes.
\newblock {\em The Annals of Probability}, 23(4):1671--1691, 1995.

\bibitem{eliazar}
I.~Eliazar and J.~Klafter.
\newblock L{\'e}vy-driven {L}angevin systems: {T}argeted stochasticity.
\newblock {\em Journal of statistical physics}, 111(3):739--768, 2003.

\bibitem{EK}
S.~N. Ethier and T.G. Kurtz.
\newblock {\em Markov {P}rocesses: {C}haracterization and {C}onvergence}.
\newblock John Wiley \& Sons, 1986.

\bibitem{ferrari-kesten-martinez-picco-95}
P.A. Ferrari, H.~Kesten, S.~Martinez, and P.~Picco.
\newblock Existence of quasi-stationary distributions. {A} renewal dynamical
  approach.
\newblock {\em The Annals of Probability}, 23(2):511--521, 1995.

\bibitem{Girsanov}
I.~V. Girsanov.
\newblock Strongly-{F}eller processes {I}. {G}eneral properties.
\newblock {\em Theory of Probability \& Its Applications}, 5(1):5--24, 1960.

\bibitem{gong1988killed}
G.~Gong, M.~Qian, and Z.~Zhao.
\newblock Killed diffusions and their conditioning.
\newblock {\em Probability Theory and Related Fields}, 80(1):151--167, 1988.

\bibitem{guillinFK}
A.~Guillin, D.~Lu, B.~Nectoux, and L.~Wu.
\newblock Long time behavior of killed {F}eynman-{K}ac semigroups with singular
  {S}chr\"{o}dinger potentials.
\newblock {\em Preprint Hal-04790621}, 2024.

\bibitem{guillinqsd3}
A.~Guillin, D.~Lu, B.~Nectoux, and L.~Wu.
\newblock Generalized {L}angevin and {N}os{\'e}-{H}oover processes absorbed at
  the boundary of a metastable domain.
\newblock {\em Preprint arXiv:2403.17471}, March 2024.

\bibitem{guillinqsd}
A.~Guillin, B.~Nectoux, and L.~Wu.
\newblock Quasi-stationary distribution for strongly {F}eller {M}arkov
  processes by {L}yapunov functions and applications to hypoelliptic
  {H}amiltonian systems.
\newblock {\em Journal of the European Mathematical Society}, 26(8):3047--3090,
  2022.

\bibitem{guillinqsd2}
A.~Guillin, B.~Nectoux, and L.~Wu.
\newblock Quasi-stationary distribution for {H}amiltonian dynamics with
  singular potentials.
\newblock {\em Probability Theory and Related Fields}, 185(3-4):921--959, 2023.

\bibitem{guillin2024large}
A.~Guillin, B.~Nectoux, and L.~Wu.
\newblock Large deviations of the empirical measures of a strong-{F}eller
  {M}arkov process inside a subset and quasi-ergodic distribution.
\newblock {\em Preprint arXiv:2411.17216}, 2024.

\bibitem{RocknerZhang}
Z.~Hao, M.~R{\"o}ckner, and X.~Zhang.
\newblock Second order fractional mean-field {SDE}s with singular kernels and
  measure initial data.
\newblock {\em ArXiv:2302.04392}, 2023.

\bibitem{hao2020schauder}
Z.~Hao, M.~Wu, and X.~Zhang.
\newblock Schauder estimates for nonlocal kinetic equations and applications.
\newblock {\em Journal de Math{\'e}matiques Pures et Appliqu{\'e}es},
  140:139--184, 2020.

\bibitem{he2019some}
G.~He, G.~Yang, and Y.~Zhu.
\newblock Some conditional limiting theorems for symmetric {M}arkov processes
  with tightness property.
\newblock {\em Electronic {C}ommunications in {P}robability}, 24(60):1--11,
  2019.

\bibitem{jacod2013limit}
J.~Jacod and A.~Shiryaev.
\newblock {\em Limit theorems for stochastic processes}, volume 288.
\newblock Springer Science \& Business Media, 2013.

\bibitem{kallenberg}
O.~Kallenberg.
\newblock {\em Foundations of {M}odern {P}robability}, volume~2.
\newblock Springer, 1997.

\bibitem{kolb2012quasilimiting}
M.~Kolb and D.~Steinsaltz.
\newblock Quasilimiting behavior for one-dimensional diffusions with killing.
\newblock {\em The Annals of Probability}, 40(1):162--212, 2012.

\bibitem{krylov2005strong}
N.~V. Krylov and M.~R{\"o}ckner.
\newblock Strong solutions of stochastic equations with singular time dependent
  drift.
\newblock {\em Probability theory and related fields}, 131(2):154--196, 2005.

\bibitem{krylov1969ito}
N.V. Krylov.
\newblock On {I}to's stochastic integral equations.
\newblock {\em Theory of Probability \& Its Applications}, 14(2):330--336,
  1969.

\bibitem{kulik2009}
A.M. Kulik.
\newblock Exponential ergodicity of the solutions to {SDE} with a jump noise.
\newblock {\em Stochastic Processes and their Applications}, 119(2):602--632,
  2009.

\bibitem{kulinich2014strong}
G~Kulinich and S.~Kushnirenko.
\newblock Strong uniqueness of solutions of stochastic differential equations
  with jumps and non-{L}ipschitz random coefficients.
\newblock {\em Modern Stochastics: Theory and Applications}, 1(1):65--72, 2014.

\bibitem{kulyk22}
O.~Kulyk.
\newblock Support theorem for {L}{\'e}vy-driven stochastic differential
  equations.
\newblock {\em Journal of Theoretical Probability}, pages 1--23, 2022.

\bibitem{kurtz2010equivalence}
T.~G. Kurtz.
\newblock Equivalence of stochastic equations and martingale problems.
\newblock In {\em Stochastic analysis 2010}, pages 113--130. Springer, 2010.

\bibitem{le2016brownian}
J-F. Le~Gall.
\newblock {\em Brownian {M}otion, {M}artingales, and {S}tochastic {C}alculus}.
\newblock Springer, 2016.

\bibitem{IHPLLN}
T.~Leli\`evre, D.~Le~Peutrec, and B.~Nectoux.
\newblock Exit event from a metastable state and {E}yring-{K}ramers law for the
  overdamped {L}angevin dynamics.
\newblock {\em In: Stochastic Dynamics out of Equilibrium, G. Giacomin, S.
  Olla, E. Saada, H. Spohn and G. Stoltz (Eds), {S}pringer {P}roceedings in
  {M}athematics \& {S}tatistics}, 2018.

\bibitem{ramilarxiv2}
T.~Leli{\`e}vre, M.~Ramil, and J.~Reygner.
\newblock Quasi-stationary distribution for the {L}angevin process in
  cylindrical domains, part {I}: existence, uniqueness and long-time
  convergence.
\newblock {\em Stochastic Processes and their Applications}, 144:173--201,
  2022.

\bibitem{lelievre2016partial}
T.~Leli{\`e}vre and G.~Stoltz.
\newblock Partial differential equations and stochastic methods in molecular
  dynamics.
\newblock {\em Acta Numerica}, 25:681--880, 2016.

\bibitem{Lepeltier}
J-P. Lepeltier and B.~Marchal.
\newblock Probl{\`e}me des martingales et {\'e}quations diff{\'e}rentielles
  stochastiques associ{\'e}es {\`a} un op{\'e}rateur
  int{\'e}gro-diff{\'e}rentiel.
\newblock In {\em Annales de l'institut Henri Poincar{\'e}. Section B.
  Probabilit{\'e}s et statistiques}, volume~12, pages 43--103, 1976.

\bibitem{liang2021exponential}
M.~Liang, M.~B. Majka, and J.~Wang.
\newblock Exponential ergodicity for {SDE}s and mckean--vlasov processes with
  {L}{\'e}vy noise.
\newblock In {\em Annales de l'Institut Henri Poincare (B) Probabilites et
  statistiques}, volume~57, pages 1665--1701. Institut Henri Poincar{\'e},
  2021.

\bibitem{marino}
L.~Marino and S.~Menozzi.
\newblock Weak well-posedness for a class of degenerate {L}{\'e}vy-driven
  {SDE}s with {H}{\"o}lder continuous coefficients.
\newblock {\em Stochastic Processes and their Applications}, 162:106--170,
  2023.

\bibitem{masuda2007ergodicity}
H.~Masuda.
\newblock Ergodicity and exponential $\beta$-mixing bounds for multidimensional
  diffusions with jumps.
\newblock {\em Stochastic processes and their applications}, 117(1):35--56,
  2007.

\bibitem{meleard2012quasi}
S.~M{\'e}l{\'e}ard and D.~Villemonais.
\newblock Quasi-stationary distributions and population processes.
\newblock {\em Probability Surveys}, 9:340--410, 2012.

\bibitem{mel1983stochastic}
A.~V. Mel'nikov.
\newblock Stochastic equations and {K}rylov's estimates for semimartingales.
\newblock {\em Stochastics: An International Journal of Probability and
  Stochastic Processes}, 10(2):81--102, 1983.

\bibitem{metzler2000random}
R.~Metzler and J.~Klafter.
\newblock The random walk's guide to anomalous diffusion: a fractional dynamics
  approach.
\newblock {\em Physics reports}, 339(1):1--77, 2000.

\bibitem{Nummelin}
E.~Nummelin.
\newblock {\em General {I}rreducible {M}arkov {C}hains and {N}on-{N}egative
  {O}perators}, volume~83.
\newblock Cambridge University Press, 2004.

\bibitem{peszat1995strong}
S.~Peszat and J.~Zabczyk.
\newblock Strong feller property and irreducibility for diffusions on {H}ilbert
  spaces.
\newblock {\em The Annals of Probability}, pages 157--172, 1995.

\bibitem{pinsky1985convergence}
R.G. Pinsky.
\newblock On {T}he {C}onvergence of {D}iffusion {P}rocesses {C}onditioned to
  {R}emain in a {B}ounded {R}egion for {L}arge {T}ime to {L}imiting {P}ositive
  {R}ecurrent {D}iffusion {P}rocesses.
\newblock {\em The Annals of Probability}, 13(2):363--378, 1985.

\bibitem{priola2009densities}
E.~Priola and J.~Zabczyk.
\newblock Densities for {O}rnstein--{U}hlenbeck processes with jumps.
\newblock {\em Bulletin of the London Mathematical Society}, 41(1):41--50,
  2009.

\bibitem{priola2012pathwise}
Enrico Priola.
\newblock Pathwise uniqueness for singular sdes driven by stable processes.
\newblock {\em Osaka Journal of Mathematics}, 49(2):421--447, 2012.

\bibitem{saichev1997fractional}
A.~I. Saichev and G.~. Zaslavsky.
\newblock Fractional kinetic equations: solutions and applications.
\newblock {\em Chaos: an Interdisciplinary Journal of Nonlinear Science},
  7(4):753--764, 1997.

\bibitem{sandric2016ergodicity}
N.~Sandri{\'c}.
\newblock Ergodicity of {L}{\'e}vy-type processes.
\newblock {\em ESAIM: Probability and Statistics}, 20:154--177, 2016.

\bibitem{schilling2012strong}
R.~L. Schilling and J.~Wang.
\newblock Strong {F}eller continuity of {F}eller processes and semigroups.
\newblock {\em Infinite Dimensional Analysis, Quantum Probability and Related
  Topics}, 15(02):1250010, 2012.

\bibitem{situ2005theory}
R.~Situ.
\newblock {\em Theory of {S}tochastic {D}ifferential {E}quations with {J}umps
  and {A}pplications: {M}athematical and {A}nalytical {T}echniques with
  {A}pplications to {E}ngineering}.
\newblock Springer, 2005.

\bibitem{steinsaltz-evans-07}
D.~Steinsaltz and S.N. Evans.
\newblock Quasi-stationary distributions for one-dimensional diffusions with
  killing.
\newblock {\em Transactions of the {A}merican {M}athematical {S}ociety},
  359(3):1285--1324, 2007.

\bibitem{stroock1975diffusion}
D.~W. Stroock.
\newblock Diffusion processes associated with {L}{\'e}vy generators.
\newblock {\em journal={Probability theory and related fields}},
  32(3):209--244, 1975.

\bibitem{stroock2007multidimensional}
D.~W. Stroock and S.R.~S. Varadhan.
\newblock {\em Multidimensional {D}iffusion {P}rocesses}.
\newblock Springer, 2007.

\bibitem{takeda1}
M.~Takeda.
\newblock Existence and uniqueness of quasi-stationary distributions for
  symmetric {M}arkov processes with tightness property.
\newblock {\em Journal of Theoretical Probability}, 32(4):2006--2019, 2019.

\bibitem{tanaka1974perturbation}
H.~Tanaka, M.~Tsuchiya, and S.~Watanabe.
\newblock Perturbation of drift-type for {L}{\'e}vy processes.
\newblock {\em Journal of Mathematics of Kyoto University}, 14(1):73--92, 1974.

\bibitem{veretennikov1981strong}
A.~J. Veretennikov.
\newblock On strong solutions and explicit formulas forsolutions of stochastic
  integral equations.
\newblock {\em Mathematics of the USSR-Sbornik}, 39(3):387, 1981.

\bibitem{villemonais2025quasi}
Denis Villemonais.
\newblock Quasi-compactness for dominated kernels with application to
  quasi-stationary distribution theory.
\newblock {\em Preprint arXiv:2510.19573}, 2025.

\bibitem{wu1999}
L.~Wu.
\newblock Uniqueness of {N}elsons diffusions.
\newblock {\em Probability {T}heory and {R}elated {F}ields}, 114(4):549--585,
  1999.

\bibitem{Wu2001}
L.~Wu.
\newblock Large and moderate deviations and exponential convergence for
  stochastic damping {H}amiltonian systems.
\newblock {\em Stochastic Processes and their Applications}, 91(2):205--238,
  2001.

\bibitem{Wu2004}
L.~Wu.
\newblock Essential spectral radius for {M}arkov semigroups. {I}. {D}iscrete
  time case.
\newblock {\em Probability Theory and Related Fields}, 128(2):255--321, 2004.

\bibitem{zhangAIHP2020}
L.~Xie and X.~Zhang.
\newblock Ergodicity of stochastic differential equations with jumps and
  singular coefficients.
\newblock {\em Annales de l'IHP Probabilit{\'e}s et statistiques},
  56(1):175--229, 2020.

\bibitem{yanovsky2000levy}
V.V. Yanovsky, D.~Chechkin, A.V.and~Schertzer, and A.V. Tur.
\newblock L{\'e}vy anomalous diffusion and fractional {F}okker--{P}lanck
  equation.
\newblock {\em Physica A: Statistical Mechanics and its Applications},
  282(1-2):13--34, 2000.

\bibitem{zhang2014quasi}
J.~Zhang, S.~Li, and R.~Song.
\newblock Quasi-stationarity and quasi-ergodicity of general {M}arkov
  processes.
\newblock {\em Science China Mathematics}, 57:2013--2024, 2014.

\bibitem{zhang2014densities}
X.~Zhang.
\newblock Densities for {SDE}s driven by degenerate $\alpha$-stable processes.
\newblock {\em The Annals of Probability}, 42(5):1885--1910, 2014.

\bibitem{zhang2014sima}
X.~Zhang.
\newblock Fundamental solution of kinetic {F}okker--{P}lanck operator with
  anisotropic nonlocal dissipativity.
\newblock {\em SIAM Journal on Mathematical Analysis}, 46(3):2254--2280, 2014.

\bibitem{zhang2017fundamental}
X.~Zhang.
\newblock Fundamental solutions of nonlocal {H}ormander's operators {II}.
\newblock {\em The Annals of Probability}, 45(3):1799--1841, 2017.

\bibitem{zvonkin1974transformation}
A.K. Zvonkin.
\newblock A transformation of the phase space of adiffusion process that
  removes the drift.
\newblock {\em Mathematics of the USSR-Sbornik}, 22(1):129, 1974.

\end{thebibliography}

\bibliographystyle{plain}

}

\end{document}